\renewcommand{\cref}{\Cref}
\theoremstyle{plain}
\newtheorem{thm}{Theorem}[section]
\newtheorem{cor}[thm]{Corollary}
\newtheorem{lem}[thm]{Lemma}
\newtheorem{obs}[thm]{Observation}
\newtheorem{clm}{Claim}[thm]
\newtheorem*{seth}{Strong Exponential-Time Hypothesis}
\newtheorem*{scc}{Set Cover Conjecture}
\newtheorem*{maxsathyp}{Max 3-Sat Hypothesis}
\crefname{thm}{Theorem}{Theorems}
\Crefname{thm}{Theorem}{Theorems}
\crefname{clm}{Claim}{Claims}
\crefname{clm}{Claim}{Claims}
\crefname{lem}{Lemma}{Lemmas}
\crefname{obs}{Observation}{Oberservations}
\theoremstyle{definition}
\newtheorem{defn}[thm]{Definition}
\newenvironment{claimproof}[1][\unskip]{\noindent {\emph{Proof of Claim #1.\space}}}{\hfill$\triangleleft$ \smallskip}
\DeclareMathOperator*{\ar}{arity}
\newcommand{\N}{\mathbb{N}}
\newcommand{\bigO}{\mathcal{O}}
\newcommand{\calI}{\mathcal{I}}
\newcommand{\calJ}{\mathcal{J}}
\newcommand{\calS}{\mathcal{S}}
\newcommand{\calV}{\mathcal{V}}
\newcommand{\calT}{\mathcal{T}}
\newcommand{\calA}{\mathcal{A}}
\newcommand{\calB}{\mathcal{B}}
\newcommand{\calC}{\mathcal{C}}
\newcommand{\calD}{\mathcal{D}}
\newcommand{\calE}{\mathcal{E}}
\newcommand{\calF}{\mathcal{F}}
\newcommand{\calG}{\mathcal{G}}
\newcommand{\calZ}{\mathcal{Z}}
\newcommand{\boldq}{\mathbf{q}}
\newcommand{\boldr}{\mathbf{r}}
\newcommand{\bolds}{\mathbf{s}}
\newcommand{\boldx}{\mathbf{x}}
\newcommand{\boldf}{\mathbf{f}}
\newcommand{\boldc}{\mathbf{c}}
\newcommand{\boldd}{\mathbf{d}}
\newcommand{\boldz}{\mathbf{z}}
\newcommand{\boldp}{\mathbf{p}}
\newcommand{\boldI}{\mathbf{I}}
\newcommand{\rom}[1]{%
	\textup{\uppercase\expandafter{\romannumeral#1}}%
}
\newcommand{\NEQ}{\mathrm{NEQ}}
\newcommand{\OR}[1]{\mathrm{OR}_#1}
\newcommand{\NP}{\textsf{NP}}
\newcommand{\yes}{\texttt{YES}\xspace}
\newcommand{\no}{\texttt{NO}\xspace}
\newcommand{\coloring}[1]{\ensuremath{#1\textsc{-Coloring}}\xspace}
\newcommand{\listcoloring}[1]{\ensuremath{\textsc{List-}#1\textsc{-Coloring}}\xspace}
\newcommand{\coloringVD}[1]{\ensuremath{#1\textsc{-ColoringVD}}\xspace}
\newcommand{\coloringED}[1]{\ensuremath{#1\textsc{-ColoringED}}\xspace}
\newcommand{\listcoloringVD}[1]{\ensuremath{\textsc{List-}#1\textsc{-ColoringVD}}\xspace}
\newcommand{\listcoloringED}[1]{\ensuremath{\textsc{List-}#1\textsc{-ColoringED}}\xspace}
\newcommand{\maxcut}{\textsc{Max Cut}\xspace}
\newcommand{\maxsat}{\textsc{Max $3$-Sat}\xspace}
\newcommand{\MaxSat}{\maxsat}
\newcommand{\MaxCSP}[2]{\textsc{Max\,(#1,#2)-CSP}\xspace}
\newcommand{\SetCov}{\textsc{Set Cover}\xspace}
\newcommand{\msh}{M3SH\xspace}
\newcommand{\hard}{2^n\text{-hard}}
\newcommand{\hardness}{2^n\text{-hardness}}
\mathchardef\hyph="2D
\newcommand{\abs}[1]{|#1|}
\newcommand{\ceil}[1]{\lceil #1 \rceil}
\newcommand{\eps}{\varepsilon}
\renewcommand{\epsilon}{\varepsilon}
\renewcommand{\phi}{\varphi}
\renewcommand{\geq}{\geqslant}
\renewcommand{\leq}{\leqslant}
\renewcommand{\ge}{\geqslant}
\renewcommand{\le}{\leqslant}
\renewcommand{\tilde}{\widetilde}
\newcommand{\from}{\colon}
\newenvironment{myitemize}
{ \begin{itemize}
		\setlength{\itemsep}{0pt}
		\setlength{\parskip}{0pt}
		\setlength{\parsep}{0pt}     }
	{ \end{itemize}                  }
\newenvironment{myenumerate}[1][1.]
{  \begin{enumerate}[#1]
		\setlength{\itemsep}{0pt}
		\setlength{\parskip}{0pt}
		\setlength{\parsep}{0pt}     }
{ \end{enumerate} 				 }
\newcommand{\executeiffilenewer}[3]{%
\ifnum\pdfstrcmp{\pdffilemoddate{#1}}%
{\pdffilemoddate{#2}}>0%
{\immediate\write18{#3}}\fi%
} 
\newcommand{%
\executeiffilenewer{figures/.svg}{figures/.pdf}%
{inkscape -z -D --file=figures/.svg %
--export-pdf=figures/.pdf --export-latex}%
{\input{figures/.pdf_tex}}}[1]{%
\executeiffilenewer{figures/#1.svg}{figures/#1.pdf}%
{inkscape -z -D --file=figures/#1.svg %
--export-pdf=figures/#1.pdf --export-latex}%
{\input{figures/#1.pdf_tex}}}%
\newcommand{\del}{\bm{\times}}
\newcommand{\nh}{\Gamma}
\newcommand{\edcount}{\mathsf{cost}_\text{ed}}
\newcommand{\cost}{\mathsf{cost}}
\newcommand{\cdel}{\mathsf{c}_{\del}}
\newcommand{\cl}{\mathsf{clique}}
\newcommand{\gd}{\mathsf{gadget}}
\newcommand{\core}[2]{$(#1,#2)$-hub}
\newcommand{\sdhub}{(\sigma,\delta)\text{-hub}}
\newcommand{\cpar}{p}
\newcommand{\setf}[1]{\mathcal{F}_{#1, q}^{\del}}
\newcommand{\coreword}{hub\xspace}
\newcommand{\eqsetcover}[1]{\ensuremath{\mathord{=}{#1}}\textsc{-Set Cover}\xspace}
\newcommand{\leqsetcover}[1]{\ensuremath{\mathord{\leq}{#1}}\textsc{-Set Cover}\xspace}
\newcommand{\eqdsetcover}{\eqsetcover{d}}
\newcommand{\leqdsetcover}{\leqsetcover{d}}
\newcommand{\eqsetpartition}[1]{\ensuremath{\mathord{=}{#1}}\textsc{-Set Partition}\xspace}
\newcommand{\leqsetpartition}[1]{\ensuremath{\mathord{\leq}{#1}}\textsc{-Set Partition}\xspace}
\newcommand{\leqsetpartitionsets}[1]{\ensuremath{\mathord{\leq}{#1}}\textsc{-Set Partition (\#Sets)}\xspace}
\newcommand{\eqdsetpartition}{\eqsetpartition{d}}
\newcommand{\leqdsetpartition}{\leqsetpartition{d}}
\newcommand{\leqdsetpartitionsets}{\leqsetpartitionsets{d}}
\newcommand{\eqsetpackingsets}[1]{\ensuremath{{\mathord{=}}{#1}}\textsc{-Set Packing (\#Sets)}\xspace}
\newcommand{\leqsetpackingsets}[1]{\ensuremath{{\mathord{\leq}}{#1}\textsc{-Set Packing (\#Sets)}}\xspace}
\newcommand{\eqdsetpackingsets}{\eqsetpackingsets{d}}
\newcommand{\leqdsetpackingsets}{\leqsetpackingsets{d}}
\newcommand{\leqsetpackingunion}[1]{\ensuremath{\mathord{\leq}{#1}}\textsc{-Set Packing (Union)}\xspace}
\newcommand{\leqdsetpackingunion}{\leqsetpackingunion{d}}
\newcommand{\classymb}{\ast}
\newcommand{\eqsetcoverclass}{\eqsetcover{\classymb}}
\newcommand{\leqsetcoverclass}{\leqsetcover{\classymb}}
\newcommand{\eqsetpartitionclass}{\eqsetpartition{\classymb}}
\newcommand{\leqsetpartitionclass}{\leqsetpartition{\classymb}}
\newcommand{\leqsetpartitionsetsclass}{\leqsetpartitionsets{\classymb}}
\newcommand{\eqsetpackingsetsclass}{\eqsetpackingsets{\classymb}}
\newcommand{\leqsetpackingsetsclass}{\leqsetpackingsets{\classymb}}
\newcommand{\leqsetpackingunionclass}{\leqsetpackingunion{\classymb}}
\newcommand{\packing}[1]{\ensuremath{#1\textsc{-Packing}}\xspace}
\newcommand{\partition}[1]{\ensuremath{#1\textsc{-Partition}}\xspace}
\newcommand{\prepacking}{c\textsc{-Precolored-}\ensuremath{\triangle\textsc{-Packing}}\xspace}
\newcommand{\trieq}{$\triangle$\textsf{-eq}\xspace}
\newcommand{\trieqS}{$\triangle$\textsf{-eq}$(S)$\xspace}
\newcommand{\signature}[1]{\ensuremath{#1\textrm{-signature}}}
\newcommand{\signatures}[1]{\ensuremath{#1\textrm{-signatures}}}
\newcommand{\joininstance}[1]{\ensuremath{(#1)\textrm{-join}}}
\newcommand{\defproblem}[3]{
  \vspace{1mm}
  \noindent\fbox{
  \begin{minipage}{0.96\textwidth}
  \begin{tabular*}{\textwidth}{@{\extracolsep{\fill}}lr} #1 \\ \end{tabular*}
  {\bf{Input:}} #2  \\
  {\bf{Question:}} #3
  \end{minipage}
  }
  \vspace{1mm}
}
\newcounter{problemcounter}
\newcommand{\Oh}{\bigO}
\begin{document}
\title{\vspace*{-0.4cm}Fundamental Problems on Bounded-Treewidth Graphs:\\The Real Source of Hardness}

\date{}
\author{
Bar\i\c{s} Can Esmer\thanks{CISPA Helmholtz Center for Information Security. The first author is also affiliated with the Saarbr\"ucken Graduate School of Computer Science, Saarland Informatics Campus, Germany. Research of the third author is supported by the European Research Council (ERC) consolidator grant No.~725978 SYSTEMATICGRAPH.} \and 
Jacob Focke$^\ast$ 
\and 
D\'{a}niel Marx$^\ast$ \and
Paweł Rzążewski\thanks{Warsaw University of Technology, Faculty of Mathematics and Information Science and University of Warsaw, Institute of Informatics, \texttt{pawel.rzazewski@pw.edu.pl}.}
}

\maketitle

\begin{abstract}
It is known for many algorithmic problems that if a tree decomposition of width $t$ is given in the input, then the problem can be solved with exponential dependence on $t$. 
A line of research initiated by Lokshtanov, Marx, and Saurabh [SODA 2011] produced lower bounds showing that in many cases known algorithms already achieve the best possible exponential dependence on $t$, assuming the Strong Exponential-Time Hypothesis (SETH). The main message of this paper is showing that the same lower bounds can already be obtained in a much more restricted setting: informally, a graph consisting of a block of $t$ vertices connected to components of constant size already has the same hardness as a general tree decomposition of width $t$.

Formally, a {\em $(\sigma,\delta)$-hub} is a set $Q$ of vertices such that every component of $Q$ has size at most $\sigma$ and is adjacent to at most $\delta$ vertices of $Q$.
We explore if the known tight lower bounds parameterized by the width of the given tree decomposition remain valid if we parameterize by the size of the given hub.
\begin{itemize}
\item For every $\epsilon>0$, there are $\sigma,\delta>0$ such that \textsc{Independent Set} (equivalently \textsc{Vertex Cover}) cannot be solved in time $(2-\epsilon)^p\cdot n$, even if a $(\sigma, \delta)$-hub of size $p$ is given in the input, assuming the SETH.  This matches the earlier tight lower bounds parameterized by width of the tree decomposition. Similar tight bounds are obtained for \textsc{Odd Cycle Transversal}, \textsc{Max Cut}, \textsc{$q$-Coloring}, and edge/vertex deletions versions of \textsc{$q$-Coloring}. 
\item For every $\epsilon>0$, there are $\sigma,\delta>0$ such that \partition{\triangle} cannot be solved in time $(2-\epsilon)^p\cdot n$, even if a $(\sigma, \delta)$-hub of size $p$ is given in the input, assuming the Set Cover Conjecture (SCC). In fact, we prove that this statement is {\em equivalent} to the SCC, thus it is unlikely that this could be proved assuming the SETH.

\item For \textsc{Dominating Set}, we can prove a non-tight lower bound ruling out $(2-\epsilon)^p\cdot n^{\bigO(1)}$ algorithms, assuming \emph{either} the SETH or the SCC, but this does not match the $3^p\cdot n^{\bigO(1)}$ upper bound.
\end{itemize}
Thus our results reveal that, for many problems, the research on lower bounds on the dependence on tree width was never really about tree decompositions, but the real source of hardness comes from a much \emph{simpler} structure.

Additionally, we study if the same lower bounds can be obtained if $\sigma$ and $\delta$ are fixed universal constants (not depending on $\epsilon$). We show that lower bounds of this form are possible for \textsc{Max Cut} and the edge-deletion version of \textsc{$q$-Coloring}, under the Max 3-Sat Hypothesis (M3SH). However, no such lower bounds are possible for \textsc{Independent Set}, \textsc{Odd Cycle Transversal}, and the vertex-deletion version of \textsc{$q$-Coloring}: better than brute force algorithms are possible for every fixed $(\sigma,\delta)$.
\end{abstract}

\newpage
\pagestyle{empty}
\tableofcontents
\newpage
\setcounter{page}{1}
\pagestyle{plain}

\section{Introduction}\label{sec:intro}
Starting with the work of Lokshtanov, Marx, and Saurabh \cite{DBLP:journals/talg/LokshtanovMS18}, there is a line of research devoted to giving lower bounds on how the running time of parameterized algorithms can depend on treewidth (or more precisely, on the width of a given tree decomposition) \cite{DBLP:journals/siamcomp/OkrasaR21,DBLP:conf/esa/OkrasaPR20,DBLP:conf/stacs/EgriMR18,DBLP:conf/soda/CurticapeanLN18,
  DBLP:conf/iwpec/BorradaileL16,DBLP:journals/dam/KatsikarelisLP19, DBLP:conf/icalp/MarxSS21, DBLP:conf/soda/CurticapeanM16,DBLP:conf/soda/FockeMR22,DBLP:conf/iwpec/MarxSS22,DBLP:conf/soda/FockeMINSSW23}. The goal of this paper is to revisit the fundamental results from \cite{DBLP:journals/talg/LokshtanovMS18} to point out that previous work could have considered a \emph{simpler} parameter to obtain \emph{stronger} lower bounds in a more uniform way. Thus, in a sense, this line of research was never really about treewidth; a fact that future work should take into account.

Suppose we want to solve some algorithmic problem on a graph $G$ given with a tree decomposition of width $t$. For many NP-hard problems, standard dynamic program techniques or meta theorems such as Courcelle's Theorem~\cite{DBLP:journals/iandc/Courcelle90} show that the problem can be solved in time $f(t)\cdot n^{\bigO(1)}$ for some computable function $f$ \cite[Chapter 7]{DBLP:books/sp/CyganFKLMPPS15}. In many cases, the running time is actually $c^t\cdot n^{\bigO(1)}$ for some constant $c>1$, where it is an obvious goal to make the constant as small as possible. A line of work started by Lokshtanov, Marx, and Saurabh \cite{DBLP:journals/talg/LokshtanovMS18} provides tight conditional lower bounds for many problems with known $c^t\cdot n^{\bigO(1)}$-time algorithms. The lower bounds are based on the Strong Exponential-Time Hypothesis, formulated by Impagliazzo, Paturi, and Zane \cite{DBLP:journals/jcss/ImpagliazzoP01,DBLP:journals/jcss/ImpagliazzoPZ01}.

\begin{seth}[\textbf{SETH}]
There is no $\epsilon>0$ such that for every $k$, every $n$-variable instance of $k$-\textsc{Sat} can be solved in time $(2-\epsilon)^n \cdot n^{\bigO(1)}$.
\end{seth}
The goal of these results is to provide evidence that the base $c$ of the exponent in the best known $c^t\cdot n^{\bigO(1)}$-time algorithm is optimal: if a $(c-\epsilon)^t\cdot n^{\bigO(1)}$-time algorithm exists for any $\epsilon>0$, then SETH fails. The following theorem summarizes the basic results obtained by Lokshtanov, Marx, and Saurabh~\cite{DBLP:journals/talg/LokshtanovMS18}.

\begin{thm}[\cite{DBLP:journals/talg/LokshtanovMS18}]\label{LMS}
If there exists an $\epsilon > 0$ such that \vspace{-1mm}
\begin{enumerate}\setlength\itemsep{-.9mm}
\item {\sc Independent Set} can be solved in time $(2-\epsilon)^t\cdot n^{\bigO(1)}$, or 
\item {\sc Dominating Set} can be solved in time $(3-\epsilon)^t\cdot n^{\bigO(1)}$, or 
\item {\sc Max Cut} can be solved in time $(2-\epsilon)^t\cdot n^{\bigO(1)}$, or 
\item {\sc Odd Cycle Transversal} can be solved in time $(3-\epsilon)^t\cdot n^{\bigO(1)}$, or 
\item $q$-{\sc Coloring} can be solved in time $(q-\epsilon)^t\cdot n^{\bigO(1)}$ for some $q \geq 3$, or
\item \textsc{Triangle Partition} can be solved in time $(2-\epsilon)^t\cdot n^{\bigO(1)}$,
\end{enumerate}
on input an $n$-vertex graph $G$ together with a tree decomposition of width at most $t$, then the SETH fails.
\end{thm}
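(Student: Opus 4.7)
The plan is a unified SETH-based reduction from CNF-SAT to each of the six listed problems. Fix a tightness parameter $\epsilon > 0$ and let $c$ be the targeted base ($c=2$, $c=3$, or $c=q$). Given an $n$-variable, $m$-clause SAT instance, partition the variables into $\ell$ consecutive blocks, where the block size is chosen as a sufficiently large constant depending on $\epsilon$ so that $\ell$ is essentially $n/\log_2 c$ (with a negligible rounding loss absorbed by the block size). Build a graph $G$ whose vertex set is organised as an $\ell \times m$ grid of constant-size gadgets: row $i$ is a chain of identical \emph{state gadgets} $P_{i,1},\dots,P_{i,m}$ each having $c$ canonical locally-optimal configurations, one per partial assignment of the $i$th block. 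Consecutive gadgets $P_{i,j}$ and $P_{i,j+1}$ share an interface that strictly penalises any change of configuration, so any global (near-)optimum must keep the same configuration throughout a row, effectively encoding one fixed assignment per block. Column $j$ is augmented with a \emph{clause gadget} attached to $P_{1,j},\dots,P_{\ell,j}$, engineered so that the prescribed cost threshold in column $j$ is attainable if and only if the tuple of configurations chosen in that column, interpreted as block assignments, satisfies the $j$th clause.

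The natural path decomposition of $G$ scans column-by-column: bag $X_j$ holds the $j$th state gadget of every row together with the $j$th clause gadget, so its width is $t = O(\ell)$. By the choice of $\ell$, a hypothetical $(c-\epsilon)^t \cdot n^{O(1)}$-time algorithm on input $(G,(X_j)_{j \le m})$ would decide the SAT instance in time $(2-\epsilon')^n \cdot n^{O(1)}$ for some $\epsilon' = \epsilon'(\epsilon) > 0$, contradicting the SETH via $k$-SAT for $k$ sufficiently large.

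The main obstacle, and the bulk of the work, is the problem-specific design of the state gadget, of the row-consistency interface, and of the clause gadget. For \textsc{Independent Set} (and thus \textsc{Vertex Cover}), the state gadget uses pairs of vertices as bit slots and row-consistency is enforced via matching-like edges that force any near-maximum independent set to select a fixed bit pattern along the row; only two canonical states are needed since $c=2$. \textsc{Max Cut} uses an analogous slot structure but with duplicated edges whose contribution to the cut is tuned so that only aligned bipartitions reach the target cut value. For $q$-\textsc{Coloring}, the interface colours themselves serve as the $c=q$ states, and propagation gadgets based on $K_q$-like structures enforce row consistency. \textsc{Dominating Set} and \textsc{Odd Cycle Transversal} realise $c=3$ via gadgets whose three canonical configurations capture the three relevant local statuses (\emph{in the set / externally dominated / not yet dominated} and \emph{in OCT / side $A$ / side $B$}, respectively). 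Finally, \partition{\triangle} requires a bespoke gadget whose triangle-partitionability at each row-position depends on a two-state choice, with clause gadgets realising the satisfaction check by an analogous parity/coverage condition. In each case, the key step is a delicate cost-counting argument showing that any row inconsistency or clause violation strictly worsens the global objective beyond the threshold; it is this problem-specific accounting, rather than the overall grid-reduction framework, that carries the weight of the proof.
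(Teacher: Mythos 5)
Your high-level framework (group variables into blocks, build a column-by-column grid of constant-size state gadgets with a path decomposition whose bags are columns, clause gadgets per column) is indeed the Lokshtanov--Marx--Saurabh template; note that the paper you are compared against does not reprove this theorem at all — it cites it, and its own new hub-parameterized lower bounds only imply items 1, 3, 4, 5 via the observation that a \core{\sigma}{\delta} of size $p$ yields a tree decomposition of width less than $p+\sigma$ (item 2 with base $3$ and item 6 under the SETH are not recovered there). Measured as a proof of the stated theorem, however, your sketch has two concrete gaps beyond the wholesale deferral of the gadget constructions, which are the actual content of the result. First, the width accounting: you only claim $t=O(\ell)$ with $\ell\approx n/\log_2 c$, but the reduction is worthless unless $t\le \ell+O(1)$ (or at least $(1+o(1))\ell$). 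If each bag contains one \emph{constant-size} state gadget per row — and any gadget realizing $c$ canonical states has at least two, typically many, vertices — then $t\ge K\ell$ for some constant $K\ge 2$, and $(c-\epsilon)^{t}=(c-\epsilon)^{K\ell}$ is far larger than $(2-\epsilon')^{n}$; tightness survives only if $K<\log c/\log(c-\epsilon)$, which forces $K$ essentially equal to $1$. The genuine constructions take pains so that each row contributes essentially a single vertex to each bag and the gadget internals are confined to $O(1)$ consecutive bags, and none of that is visible in your description.

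Second, the row-consistency step is asserted, not achieved: you claim the interface between $P_{i,j}$ and $P_{i,j+1}$ ``strictly penalises any change of configuration,'' so a near-optimal solution keeps one configuration per row. For \textsc{Independent Set}, \textsc{Dominating Set}, \textsc{Odd Cycle Transversal} and \maxcut{} this is exactly what constant-size local gadgets under a tight global budget cannot enforce; the standard (and, in the original proofs, essential) workaround is to design the interfaces so that the row state can only drift \emph{monotonically} through its $c$ possible values, and then to repeat the sequence of clause columns polynomially many times, arguing that some long window must be state-constant and already contains a full pass over all clauses, from which the satisfying assignment is read off. Without this repetition-plus-monotonicity argument (or a substitute), the inference ``cost threshold met $\Rightarrow$ each row encodes a fixed block assignment'' is unsupported, and with it the satisfaction of every clause. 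Since the problem-specific cost-counting you explicitly postpone is where both of these issues must be resolved (separately for bases $2$, $3$ and $q$, and with a quite different, degree-constrained construction for \partition{\triangle}), the proposal as written does not yet constitute a proof of the theorem.
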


Already in \cite{DBLP:journals/talg/LokshtanovMS18} it is pointed out that many of the lower bounds remain true even in the more restricted setting where the input is not a tree decomposition, but a path decomposition. This raises the following natural questions:
\begin{itemize}
\item	How much further can we restrict the input and still obtain the same lower bounds?
\item	What is the \emph{real} structural source of hardness in these results?
\end{itemize}
In this paper, we show that many of these lower bounds remain true in a much more restricted setting where a block of $p$ vertices is connected to constant-size components. Additionally, we demonstrate that our results are very close to being best possible, as further restrictions of the structure of the graphs allow better algorithms.

We say that a set $Q$ of vertices is a \emph{\core
  {\sigma}{\delta}} of $G$ if every component of $G-Q$ has at most $\sigma$ vertices and each such component is adjacent to at most $\delta$ vertices of $Q$ in $G$\footnote{
  This notion is related to \textit{component order connectivity}, which is the size of the smallest set $Q$ of vertices such that deleting $Q$ leaves components of size not larger than some predefined constant $\sigma$ \cite{MR3308560,MR4589432,MR4415120,MR1676481,MR2704605,DBLP:conf/iwpec/KumarL16,DBLP:journals/algorithmica/Bang-JensenEGWY22,DBLP:journals/arscom/GrossKSSS13,TSUR2023112,CRESPELLE2023100556,DBLP:conf/soda/LokshtanovMRSZ21}. Our definition has the additional constraint on the neighborhood size of each component. As we often refer to the set $Q$ itself (not only its smallest possible size) and we want to make the constants $\sigma$, $\delta$ explicit, the terminology \core{\sigma}{\delta} is grammatically more convenient than trying to express the same using component order connectivity.}.
Our goal is to prove lower bounds parameterized by the size of 
a \core
  {\sigma}{\delta} given in the input, where $\sigma$ and $\delta$ are treated as constants. 
  One can observe that a \core{\sigma}{\delta} of size $p$ in $G$ can be easily turned into a tree decomposition of width less than $p+\sigma$, hence the treewidth  of $G$ is at most $p+\sigma$. Therefore, any lower bound parameterized by the size $p$ of \coreword\ immediately implies a lower bound parameterized by the width of the given tree decomposition. We systematically go through the list of problems investigated by Lokshtanov, Marx, and Saurabh \cite{DBLP:journals/talg/LokshtanovMS18}, to see if the same lower bound can be obtained with this parameterization. Our results show that, in most cases, the results remain valid under parameterization by hub size. However, new insights, techniques and arguments are needed; in particular, we require different complexity assumptions for some of the statements.

  \paragraph{Coloring problems and relatives.}
  Let us first consider the \coloring{q} problem: given a graph $G$, the task is to find a coloring of the vertices of $G$ with $q$ colors such that adjacent vertices receive different colors. Given a \core{\sigma}{\delta} $Q$ of size $p$, we can try all possible $q$-colorings on $Q$ and check if they can be extented to every component of $G-Q$. Assuming $\sigma$ and $\delta$ are constants, this leads to a $q^p\cdot n^{\bigO(1)}$ algorithm. Our first result shows that this is essentially best possible, assuming the SETH; note that this result immediately implies Theorem~\ref{LMS}(5).
  
\begin{thm}\label{thm:coloringcombined} Let $q\ge 3$ be an integer.
  \begin{enumerate}
\item For every $\sigma, \delta \geq 1$, \coloring{q} on $n$-vertex graphs can be solved in time $q^{p} \cdot n^{\bigO(1)}$ if a \core{\sigma}{\delta} of size $p$ is given in the input.

\item  For every $\epsilon > 0$, there exist integers $\sigma,\delta\ge 1$ such that if there is an algorithm solving in time $(q  - \epsilon)^{\cpar} \cdot n^{\bigO(1)}$ every $n$-vertex instance of \coloring{q} given with a \core{\sigma}{\delta} of size at most $\cpar$, then the SETH fails.
\end{enumerate}
\end{thm}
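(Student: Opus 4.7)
The algorithm is a direct enumeration. Iterate over all $q^{p}$ colorings $c$ of $Q$; for each of them, check that every component $C$ of $G-Q$ admits a proper $q$-coloring extending $c|_{N(C) \cap Q}$. Since each component has size at most $\sigma$ and at most $\delta$ hub neighbors, we can precompute once per component, for each of the at most $q^{\delta}$ colorings of $N(C)\cap Q$, whether an extension exists; this costs $\bigO(q^{\sigma+\delta}) = \bigO(1)$ time per component. The total running time is $q^{p} \cdot n^{\bigO(1)}$.

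\textbf{Lower bound.} The plan is to adapt the Lokshtanov--Marx--Saurabh reduction from $k$-SAT to \coloring{q} on bounded pathwidth so that the resulting instance admits a \core{\sigma}{\delta} whose size essentially matches the pathwidth, for constants $\sigma, \delta$ depending only on $\epsilon$ and $q$. Given $\epsilon > 0$ and an $n$-variable $k$-SAT formula $\phi$, pick a block size $\beta = \beta(\epsilon, q)$ sufficiently large and partition the variables into $t = \lceil n/\beta \rceil$ blocks. For each block $i$, introduce a group $S_i$ of $\gamma := \lceil \beta \log_q 2 \rceil$ ``state'' vertices so that the $q^{\gamma} \ge 2^{\beta}$ colorings of $S_i$ encode (a superset of) the $2^{|B_i|}$ truth assignments of that block. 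Attach a constant-size ``filter'' gadget to each $S_i$ to rule out spurious colorings, and a constant-size clause gadget to each clause of $\phi$, touching only the at most $k\gamma$ state vertices of the (at most $k$) blocks the clause depends on.

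Declare $Q := \bigcup_{i=1}^{t} S_i$; then $|Q| = t\gamma = n\log_q 2 + \bigO(n/\beta)$. Each gadget is pairwise vertex-disjoint from every other gadget outside $Q$ and is attached only to $Q$, so each forms a component of $G-Q$ of constant size $\sigma = \sigma(\epsilon)$ with at most $\delta := k\gamma$ neighbors in $Q$; hence $Q$ is a \core{\sigma}{\delta}. A hypothetical $(q-\epsilon)^{p} \cdot n^{\bigO(1)}$ algorithm would then solve $\phi$ in time $2^{n\log_2(q-\epsilon)/\log_2 q + \bigO(n/\beta)} \cdot n^{\bigO(1)}$, which is at most $2^{(1-\epsilon')n}$ for $\beta$ large enough and some $\epsilon' > 0$, contradicting the SETH.

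\textbf{Main obstacle.} The essential departure from the classical pathwidth-based lower bound is that every gadget must internalize all of its auxiliary vertices: if helper vertices were shared across consecutive ``bags'' in the path-decomposition style, those shared vertices would merge many gadgets into a single giant non-hub component, destroying the bounds on $\sigma$ and $\delta$. Designing filter and clause gadgets that are self-contained components of $G-Q$ with only a constant number of hub neighbors, while still faithfully enforcing ``this coloring of $S_i$ encodes a valid assignment'' and ``these assignments jointly satisfy the clause'', is the technical heart of the reduction; once such gadgets are in place, the state-counting calculation above proceeds exactly as in the pathwidth setting.
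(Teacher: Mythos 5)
Your algorithmic part is fine and matches the paper. Your lower-bound plan (reduce directly from $k$-SAT, group variables into blocks, encode each block by $\gamma=\lceil \beta\log_q 2\rceil$ hub vertices, and do the base-conversion calculation) is a legitimate alternative to the paper's route, which instead starts from Lampis's SETH-hardness of $(q,r)$-CSP so that the hub has exactly one vertex per CSP variable and no base conversion is needed. But as written your proof has a genuine gap exactly where you yourself locate the ``technical heart'': you never construct the filter and clause gadgets, you only postulate their existence. What is needed is a proof that for every $r$ and every relation $R\subseteq[q]^r$ there is a \emph{constant-size} graph gadget, attached to $r$ prescribed portal vertices of the hub, whose proper $q$-colorings extend a coloring of the portals if and only if that coloring lies in $R$. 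This is not a routine adaptation of the Lokshtanov--Marx--Saurabh gadgets (which are threaded along a path decomposition and share vertices between consecutive bags, precisely the structure you must avoid); it is the main content of the paper's proof, obtained by combining a lemma of Jaffke and Jansen (a list-coloring gadget forbidding one prescribed tuple) into a gadget realizing an arbitrary relation, for $q\ge 3$.

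A second, related omission: those gadgets inherently use lists, and removing lists requires a $q$-clique adjacent to gadget vertices. A private clique per gadget does not work here (the relations are not permutation-invariant in general), and a shared global clique threatens exactly the ``giant component'' problem you flag. The paper's fix is simple but essential: put the single global $q$-clique \emph{into the hub} $Q$, which increases $p$ by the constant $q$ and increases $\delta$ by $q$, while keeping every gadget a separate constant-size component of $G-Q$. Without supplying both the relation-realizing gadgets and this list-removal step (or some substitute), your reduction is a plan rather than a proof; with them, your grouping/state-counting calculation does go through.
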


 The      \coloringED{q} problem is an edge-deletion optimization version of \coloring{q}: given a graph G, the task is to find a set $X$ of edges of minimum size such that $G\setminus X$ has a $q$-coloring. We show that $q^p\cdot n^{\bigO(1)}$ running time is essentially optimal for this problem as well.
\begin{thm}\label{thm:coloringEDcombined} Let $q\ge 2$ be an integer.
  \begin{enumerate}
\item For every $\sigma, \delta \geq 1$, \coloringED{q} on $n$-vertex graphs can be solved in time $q^{p} \cdot n^{\bigO(1)}$ if a \core{\sigma}{\delta} of size $p$ is given in the input.

\item  For every $\epsilon > 0$, there exist integers $\sigma,\delta\ge 1$ such that if there is an algorithm solving in time $(q  - \epsilon)^{\cpar} \cdot n^{\bigO(1)}$ every $n$-vertex instance of \coloringED{q} given with a \core{\sigma}{\delta} of size at most $\cpar$, then the SETH fails.
\end{enumerate}
\end{thm}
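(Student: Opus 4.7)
For the algorithmic part (1), the plan is to decouple the hub from the components. Given a $(\sigma,\delta)$-hub $Q$ of size $p$, I would first preprocess each component $C$ of $G-Q$ independently: for each of the at most $q^\delta$ colorings $\psi$ of $N(C)\cap Q$, compute the minimum number of edges with at least one endpoint in $C$ whose deletion allows $\psi$ to extend to a proper $q$-coloring of $G[C\cup (N(C)\cap Q)]$. Since $|C|\le \sigma$ and $|N(C)\cap Q|\le \delta$ are constants, brute-forcing the $q^\sigma$ colorings of $C$ solves each subproblem in constant time, giving total preprocessing cost $n^{\bigO(1)}$. Then iterate over all $q^p$ colorings $\phi$ of $Q$: add the number of edges inside $G[Q]$ that $\phi$ makes monochromatic, plus, for each component $C$, the precomputed value corresponding to the restriction of $\phi$ to $N(C)\cap Q$. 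Taking the minimum over all $\phi$ yields the answer in time $q^p\cdot n^{\bigO(1)}$.

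For the lower bound (part 2), I would split on $q$. For $q\ge 3$, the plan is an immediate reduction from \coloring{q}: the input graph $G$ is $q$-colorable if and only if the \coloringED{q} optimum on $G$ equals $0$, and the given $(\sigma,\delta)$-hub of $G$ is unchanged. Thus a $(q-\epsilon)^{p}\cdot n^{\bigO(1)}$ algorithm for \coloringED{q} would give a $(q-\epsilon)^{p}\cdot n^{\bigO(1)}$ algorithm for \coloring{q} with a $(\sigma,\delta)$-hub of size $p$ in the input, contradicting \cref{thm:coloringcombined}(2) under the SETH. This case is essentially free.

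The real work is the case $q=2$, since \cref{thm:coloringcombined} assumes $q\ge 3$ and \coloring{2} is polynomial. Here I would use the well-known equivalence between $2$-\coloringED and \maxcut: an edge set $X$ makes $G-X$ bipartite iff $E(G)\setminus X$ is a cut, so the minimum deletion size is $|E(G)|$ minus the \maxcut value, and the correspondence preserves the graph and hence its $(\sigma,\delta)$-hub. Therefore, if \coloringED{2} admits a $(2-\epsilon)^p\cdot n^{\bigO(1)}$ algorithm parameterized by hub size, so does \maxcut, and the matching SETH-hardness for \maxcut parameterized by a $(\sigma,\delta)$-hub finishes the proof. The main obstacle is thus not really in this theorem at all but in establishing that tight SETH lower bound for \maxcut with a hub of size $p$, which is the analogue of \cref{thm:coloringcombined}(2) for \maxcut and is handled separately in the paper; once that result is in hand, the reduction above is a one-line observation.
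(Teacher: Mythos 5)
Your proposal is correct and follows essentially the same route as the paper: the same hub-enumeration algorithm for part (1), the immediate reduction from \coloring{q} (Theorem~\ref{thm:coloringcombined}) for $q\ge 3$, and for $q=2$ the standard equivalence with \maxcut. Deferring the $q=2$ case to a SETH lower bound for \maxcut parameterized by hub size matches the paper's own structure, since that bound is stated and proved separately there (Theorem~\ref{thm:LBmaxcutSETH}, via gadgets $1$-realizing binary relations and a reduction from \MaxCSP{$2$}{$r$} through the list version), and Theorem~\ref{thm:coloringEDcombined} is then obtained exactly by the assembly you describe.
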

      For $q\ge 3$, the lower bound of Theorem~\ref{thm:coloringcombined} for \coloring{q} immediately implies the same lower bound for the more general problem \coloringED{q}. Observe that for $q=2$, the \coloringED{q} problem is equivalent to the \maxcut problem: deleting the minimum number of edges to make the graph bipartite is equivalent to finding a bipartition with the maximum number of edges going between the two classes. Thus the lower bound for \maxcut is needed to complete the proof of Theorem~\ref{thm:coloringEDcombined}.

      Let us consider now the vertex-deletion version \coloringVD{q}, where given a graph G, the task is to find a set $X$ of vertices of minimum size such that $G-X$ has a $q$-coloring (equivalently, we want to find a partial $q$-coloring on the maximum number of vertices). For this problem, a brute force approach would need to consider $(q+1)^p$ possibilities on a \core{\sigma}{\delta} of size $p$: each vertex can receive either one of the $q$ colors, or be deleted.
\begin{thm}\label{thm:coloringVDcombined} Let $q\ge 1$ be an integer.
  \begin{enumerate}
\item For every $\sigma, \delta \geq 1$, \coloringVD{q} on $n$-vertex graphs can be solved in time $(q+1)^{p} \cdot n^{\bigO(1)}$ if a \core{\sigma}{\delta} of size $p$ is given in the input.

\item  For every $\epsilon > 0$, there exist integers $\sigma,\delta\ge 1$ such that if there is an algorithm solving in time $(q  +1 - \epsilon)^{\cpar} \cdot n^{\bigO(1)}$ every $n$-vertex instance of \coloringVD{q} given with a \core{\sigma}{\delta} of size at most $\cpar$, then the SETH fails.
\end{enumerate}
\end{thm}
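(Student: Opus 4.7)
\medskip\noindent\textbf{Proof proposal.} For the upper bound (Part 1), the plan is the direct enumeration algorithm: cycle through all $(q+1)^{p}$ assignments of the hub $Q$ (each hub vertex is either colored with one of $q$ colors, or marked for deletion), and for each such assignment $\alpha$ compute the minimum number of deletions needed to extend $\alpha$ to each component $C$ of $G-Q$ independently. Since $|V(C)|\le \sigma$ and $|N_G(V(C))\cap Q|\le\delta$ are constants, the extension can be found in $\Oh(1)$ time per component by brute force over the $(q+1)^{\sigma}$ possible state assignments on $C$, and the optimum over $\alpha$ is obtained in time $(q+1)^{p}\cdot n^{\Oh(1)}$.

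For the lower bound (Part 2), the plan is to adapt the SAT-to-\coloring{q} reduction behind Theorem~\ref{thm:coloringcombined}, now using the enlarged alphabet of size $q+1$ per hub vertex. Given an $n$-variable SAT instance, I would partition the variables into blocks of size $B$ where $B$ is chosen so that $(q+1)^{k}\ge 2^{B}$ for some small integer $k=k(\epsilon)$, and I would use a small group of $k$ hub vertices to represent each block: the $(q+1)^{k}$ joint states (colors $1,\dots,q$ or ``deleted'' at each hub vertex) are mapped injectively to assignments of the $B$ block variables. Each clause becomes a constant-size component attached to the at-most-three blocks touching the clause; this component plays the role of a clause gadget. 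Counting: the total hub size is $p=\Oh(n/\log_{2}(q+1))$, so an algorithm of running time $(q+1-\epsilon)^{p}\cdot n^{\Oh(1)}$ would beat $(2-\epsilon')^{n}$ for SAT.

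The main technical obstacle is that, unlike in Theorem~\ref{thm:coloringcombined}, one of the $q+1$ hub states is ``deleted'', and a deleted vertex is silent: it imposes no constraint on its neighbors, so a clause gadget cannot directly ``read'' a deletion. I would resolve this by two interacting ingredients. First, a \emph{normalizer} attached to each hub vertex $v$ so that every one of the $q+1$ states of $v$ contributes exactly the same fixed cost $\mu$ to the optimum; this is achieved by pairing $v$ with a small attached cluster (e.g.\ a few vertices forming a near-$K_{q+1}$ with carefully placed edges to $v$) whose minimum deletion number drops by $1$ exactly when $v$ is colored, compensating the $+1$ that appears when $v$ is deleted. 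Second, for each clause gadget I would attach, for every incident hub vertex $v$, a small ``sensor'' cluster (again of bounded size) whose cheapest extension is indexed by the state of $v$; this lets the clause gadget behave like a classical coloring constraint even when $v$ is deleted. The sensors and clause-core are then fused so that the gadget admits a $0$-cost extension iff the encoded variable assignment satisfies the clause, and costs at least $1$ extra deletion otherwise.

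Finally I would verify the arithmetic: the hub $Q$ consists only of the $p$ block-encoding vertices; every normalizer, sensor, and clause gadget is a constant-size component attached to $Q$ through a constant number of edges, so $\sigma$ and $\delta$ can be bounded by constants depending only on $q$ and $\epsilon$. Setting the target total deletion count to $\mu\cdot p + (\text{base cost of clause gadgets})$, we obtain a \yes-instance iff the SAT formula is satisfiable, which together with the $(q+1-\epsilon)^{p}$ hypothesis yields a sub-$2^{n}$ algorithm for SAT and contradicts the SETH. The anticipated hardest step is the simultaneous design of the normalizer and the clause-sensor combination so that the ``deleted'' state interacts cleanly with the remaining $q$ color states and so that all components have size and hub-neighborhood bounded by universal constants; once that gadget is in place, the block-and-reduce scheme follows the template of Theorem~\ref{thm:coloringcombined} almost verbatim.
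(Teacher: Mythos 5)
Your Part 1 matches the paper's algorithm. Part 2, however, has a genuine gap at its core. Any gadget attached to a hub vertex $v$ is \emph{monotone under deletion}: if a partial assignment of the portals extends to the gadget with $t$ internal deletions, then the assignment obtained by additionally deleting some portals extends with at most $t$ deletions, since deleting $v$ only removes constraints. Hence the set of portal states that a gadget accepts at its base cost is always upward-closed under turning colors into deletions, and your ``sensor'' cluster, ``whose cheapest extension is indexed by the state of $v$'', cannot exist in the direction you need: a gadget can make colored states cost \emph{more} than the deleted state, never less, so it cannot penalize a deletion. Worse, your ``normalizer'' (e.g.\ a $K_q$ fully attached to $v$, which indeed equalizes the cost of every state of $v$ at exactly $1$) is counterproductive: once all $q+1$ states of a hub vertex have equal cost, the global deletion budget no longer controls how many hub vertices are deleted. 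In your construction the assignment that deletes \emph{every} hub vertex meets the budget $\mu p$ exactly, and by monotonicity every clause gadget then extends at base cost (take the $0$-cost extension of any satisfying encoding and note it remains proper after the extra deletions). So the instance you build is a yes-instance regardless of whether the formula is satisfiable, and the reduction does not distinguish anything.

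The paper's proof circumvents exactly this obstacle in the opposite way: it keeps the cost asymmetry between deletion and coloring and uses it. Starting from a block-structured CSP with domain $q+1$ (Lemma~\ref{lem:csp-structured}), one \emph{guesses} the number $f_i$ of deletions in each block, sets the budget to exactly $k=\sum_i f_i$, and builds gadgets that only enforce \emph{monotone} conditions -- namely ``at least $f_i$ of the vertices of block $i$ are deleted'' (realized for $q\ge3$ by Rainbow-type relations via \cref{prop:realize-coloring}, where a vertex whose $q$ gadget-neighbors carry all $q$ colors is forced to be deleted, and by clique gadgets for $q\in\{1,2\}$) together with forcing the colors of the non-deleted block vertices. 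The tight budget then converts ``at least $f_i$'' into ``exactly $f_i$'' per block, which is what lets the gadgets pin down the colors of the surviving vertices despite monotonicity; in addition, $N+1$ copies of each gadget make its internal vertices effectively undeletable, and for $q\ge3$ extra care (private cliques exploiting the permutation-invariance of the block constraints) is needed to remove lists. Without this signature-guessing and exact-budget mechanism -- or some replacement for it -- your normalizer-plus-sensor scheme cannot be made to work.
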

Observe that \textsc{Vertex Cover} is equivalent to \coloringVD{1} and \textsc{Odd Cycle Transversal} is equivalent to \coloringVD{2}. Furthermore, \textsc{Independent Set} and \textsc{Vertex Cover} have the same time complexity (due to the well-known fact that minimum size of a vertex cover plus the maximum size of an independent set is always equal to the number of vertices). Thus the definition of \coloringVD{q} gives a convenient unified formulation that includes these fundamental problems.
    
\paragraph{Packing problems.} Given a graph $G$, the \textsc{Triangle Partition} (denoted by \partition{\triangle} for short) problem asks for a partition of the vertex set into triangles. \textsc{Triangle Packing} (denoted by \packing{\triangle}) is the more general problem where the task is to find a maximum-size collection of vertex-disjoint triangles. Given a tree decomposition of width $t$, Theorem~\ref{LMS}(6) shows that $2^t\cdot n^{\bigO(1)}$ is essentially the best possible running time. It seems that the same lower bound holds when parameterizing by the size of a \coreword, but the source of hardness is somehow different. Instead of assuming the SETH, we prove this lower bound under the \emph{Set Cover Conjecture (SCC)} \cite{cyganProblemsHardCNFSAT2016,DBLP:books/sp/CyganFKLMPPS15}. In the  \textsc{$d$-Set Cover} problem, we are given a universe $U$ of size $n$ and a collection $\calF$ of subsets of $U$, each with size at most $d$. The task is to find a minimum-size collection of sets whose union covers the universe. 
\begin{scc}[\textbf{SCC}]
	For all $\varepsilon > 0$, there exists $d \geq 1$ such that there is no algorithm that solves every $\leqdsetcover$ instance $(U,\mathcal{F})$ in time $(2 - \varepsilon)^{n} \cdot n^{\bigO(1)}$ where $n = \abs{U}$.
\end{scc}

We actually show that the lower bounds for \partition{\triangle}/\packing{\triangle} are \emph{equivalent} to the SCC.
\begin{thm}\label{thm:mainpacking}
  The following three statements are equivalent:
  \begin{itemize}
  \item The SCC is true.
    \item For every $\epsilon>0$, there are $\sigma,\delta>0$ such that \partition{\triangle} on an $n$-vertex graph cannot be solved in time $(2-\epsilon)^p\cdot n^{\bigO(1)}$, even if the input contains a \core{\sigma}{\delta} of size $p$.
    \item For every $\epsilon>0$, there are $\sigma,\delta>0$ such that \packing{\triangle} on an $n$-vertex graph cannot be solved in time $(2-\epsilon)^p\cdot n^{\bigO(1)}$, even if the input contains a \core{\sigma}{\delta} of size $p$.
  \end{itemize}
\end{thm}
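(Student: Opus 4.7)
The plan is to prove the three-way equivalence via the cycle $(1) \Rightarrow (2) \Rightarrow (3) \Rightarrow (1)$. The direction $(2) \Rightarrow (3)$ is immediate: on an $n$-vertex graph, $\partition{\triangle}$ is the question of whether the maximum triangle packing has size exactly $n/3$, so any $(2-\epsilon)^p\cdot n^{\bigO(1)}$-algorithm for $\packing{\triangle}$ yields one for $\partition{\triangle}$ in the same asymptotic time.

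For $(1) \Rightarrow (2)$, I first invoke the known SCC-equivalence (via standard techniques in the spirit of Cygan, Dell, Lokshtanov, Marx, Nederlof, Okamoto, Paturi, Saurabh, Wahlstr\"om) between $\leqdsetcover$ and $\leqdsetpartition$, so that it suffices to reduce the latter to $\partition{\triangle}$. Given a $\leqdsetpartition$ instance $(U,\mathcal{F})$ with $|U|=n$, I build a graph $G$ whose hub is $Q = \{v_u : u \in U\}$ and, for each $S \in \mathcal{F}$, a constant-size component $C_S$ is attached to $\{v_u : u \in S\}$. The crux of the construction is a \emph{bi-modal gadget} $C_S$ that admits a triangle partition in exactly two ways: using only internal vertices of $C_S$ (``$S$ not selected'') or together with all of $\{v_u : u \in S\}$ (``$S$ selected''). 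This forces the selected sets to form a partition of $U$, so $G$ is triangle-partitionable iff the Set Partition instance is feasible. Since $|V(C_S)| = \bigO(d)$ and $|N(C_S) \cap Q| \leq d$, the hub is a $(\sigma,\delta)$-hub with $\sigma,\delta = \bigO(d)$ of size $p = n$, and any $(2-\epsilon)^p\cdot n^{\bigO(1)}$-algorithm for $\partition{\triangle}$ would refute the SCC.

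For $(3) \Rightarrow (1)$, I argue contrapositively: if the SCC fails then, via the SCC-equivalences extended to weighted bounded-size Set Packing, there is a $(2-\epsilon)^n\cdot n^{\bigO(1)}$-algorithm for weighted $\leq D$-Set Packing on an $n$-element universe for every constant $D$. Given a $\packing{\triangle}$ instance with a $(\sigma,\delta)$-hub of size $p$, I reduce to weighted $\leq D$-Set Packing on the universe $Q$, with $D = \max(\delta,3)$: for each component $C$ of $G-Q$, enumerate the $2^{\bigO(\sigma+\delta)}$ configurations of vertex-disjoint triangles in $G[Q\cup C]$ that use vertices of $C$, adding each as a set on its used vertices of $Q\cap N(C)$ with weight equal to the number of triangles contained; and add each of the $\bigO(p^3)$ triangles internal to $Q$ as a weight-$1$ three-element set. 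A maximum-weight packing of these sets is exactly a maximum triangle packing of $G$, so the hypothesized algorithm solves $\packing{\triangle}$ in $(2-\epsilon)^p\cdot n^{\bigO(1)}$, contradicting $(3)$.

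The main obstacle is the bi-modal gadget $C_S$: a vertex-count check forces both $|V(C_S)|$ and $|V(C_S)|+|S|$ to be divisible by $3$, so $|S| \equiv 0\pmod 3$. For general $|S|$, this will be handled by padding each set with one or two dummy universe elements realised as \emph{private} vertices in $C_S$ (so they contribute to the gadget's internal triangle structure without affecting the hub); care is needed to guarantee that no ``spurious'' partition of $C_S$ uses only a strict subset of $\{v_u : u \in S\}$, which would degrade the reduction from Set Partition to Exact Cover. A secondary task, largely delegated to the Cygan-et-al.\ line of work, is invoking the SCC-equivalences between $\leqdsetcover$, $\leqdsetpartition$, and weighted bounded-size Set Packing in a way that preserves both the set-size bound and the base $2-\epsilon$ of the exponent.
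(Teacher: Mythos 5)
Your cycle $(1)\Rightarrow(2)\Rightarrow(3)\Rightarrow(1)$ matches the paper's architecture, and the $(1)\Rightarrow(2)$ and $(2)\Rightarrow(3)$ parts are essentially the paper's: the ``bi-modal gadget'' is exactly the paper's \trieq gadget, and $(2)\Rightarrow(3)$ is the trivial reduction from \partition{\triangle} to \packing{\triangle}. But the direction $(3)\Rightarrow(1)$ has a genuine gap, and it is precisely where the paper has to work hardest. First, you invoke a $(2-\eps)^n$ algorithm for \emph{weighted} bounded-size Set Packing as a consequence of the failure of the SCC; the established equivalences (the paper's \cref{thm:SCCequivalent}) cover only unweighted objectives (number of sets, or size of the union), and your configuration weights (triangle counts) are not tracked by either, so this step is unsupported and not routine: the obvious weight-to-dummy-element translation adds private elements per set and blows the universe up from $p$ to $\mathrm{poly}(n)$, destroying the $(2-\eps)^p$ budget. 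Second, and more fatally, your Set Packing instance on universe $Q$ does not enforce that at most one configuration per component of $G-Q$ is selected: two configurations of the same component using disjoint subsets of $Q$ but overlapping vertices inside the component can be packed simultaneously, yielding an infeasible/overcounted ``solution''. Fixing this naively requires one private universe element per component, i.e.\ $\Theta(n)\gg p$ extra elements, again ruining the bound. The paper's solution to exactly this pair of problems is the content of \cref{lem:SetPackingToPrecolored,lem:PrecoloredToTriangles}: color-code the components (and the triangles inside $Q$) into $p/c$ classes via splitters (\cref{cor:splitter}), add only one extra universe element per class, guess a constant-range \emph{offset} of each class's contribution against its internal optimum (absorbing the weights into a feasibility condition on ``$i$-valid'' sets), and only then call the unweighted \leqdsetpackingsets algorithm. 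None of this machinery, or a substitute for it, appears in your plan.

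There is also a smaller but real flaw in your $(1)\Rightarrow(2)$ fix for the mod-$3$ issue. A simple counting argument shows that no gadget of the kind you describe can exist when $|S|\not\equiv 0 \pmod 3$: if the ``unselected'' mode covers all private vertices (say $m$ of them) and the ``selected'' mode covers all private vertices plus all $|S|$ portals, then $3\mid m$ and $3\mid m+|S|$, forcing $3\mid |S|$; realizing dummy elements as private vertices of $C_S$ does not change this arithmetic (if they are covered in both modes) and makes the unselected mode infeasible (if they are covered only in the selected mode, since they have no neighbors outside the gadget). The paper resolves this at the set-system level, not inside the gadget: \cref{lem:set_part_mod_3} reduces \eqdsetpartition to \eqsetpartition{(3d)} by adding $O(d)$ dummy elements and dummy sets globally and replacing the family by unions of three pairwise disjoint sets, after which every gadget has a number of portals divisible by $3$. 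Your $(1)\Rightarrow(2)$ plan is salvageable with that change, but as written the padding step would fail.
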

Ideally, one would like to prove lower bounds under the more established conjecture: the SETH. However, Theorem~\ref{thm:mainpacking} shows that it is no shortcoming of our technique that we prove the lower bound based on the SCC instead. If we proved statement 2 or 3 under the SETH, then this would prove that the SETH implies the SCC, resolving a longstanding open question. 

\paragraph{Dominating Set.}
Given an $n$-vertex graph with a tree decomposition of width $t$, a minimum dominating set can be computed in time $3^t\cdot n^{\bigO(1)}$ using an algorithm based on fast subset convolution \cite{DBLP:conf/birthday/Rooij20,DBLP:conf/esa/RooijBR09}. By Theorem~\ref{LMS}\,(2), this running time cannot be improved to $(3-\epsilon)^t\cdot n^{\bigO(1)}$ for any $\epsilon>0$, assuming the SETH. Can we get a $(3-\epsilon)^p\cdot n^{\bigO(1)}$ algorithm if a \coreword of size $p$ is given in the input? We currently have no answer to this question. In fact, we do not even have a good guess whether or not such algorithms should be possible. What we do have are two very simple weaker results that rule out $(2-\epsilon)^p\cdot n^{\bigO(1)}$ algorithms, with one proof based on the SETH and the other proof based on the SCC.

\begin{thm}\label{thm:domsetcombined}
 For every $\epsilon>0$, there are $\sigma,\delta>0$ such that \textsc{Dominating Set} on an $n$-vertex graph with a \core{\sigma}{\delta} of size $p$ given in the input cannot be solved in time $(2-\epsilon)^p\cdot n^{\bigO(1)}$, unless \emph{both} the SETH and the SCC fail.
\end{thm}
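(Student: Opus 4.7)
The plan is to provide two independent reductions, each refuting one of the two conjectures; any $(2-\epsilon)^p\cdot n^{\bigO(1)}$ algorithm for \textsc{Dominating Set} will then break both assumptions simultaneously, as the theorem requires. For the SETH side I will leverage the lower bound for \textsc{Vertex Cover} ($=\coloringVD{1}$) already given by \cref{thm:coloringVDcombined}, and for the SCC side I will reduce directly from $\leqdsetcover$.

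For the SETH part, I would start from a \textsc{Vertex Cover} instance $G$ together with a \core{\sigma}{\delta} $Q$ of size $p$ (w.l.o.g.\ no isolated vertex, else attach a pendant), and build $G'$ by adding, for each edge $uv\in E(G)$, a new vertex $w_{uv}$ adjacent to $u$ and $v$. A standard ``triangle-subdivision'' argument yields $\mathrm{VC}(G)=\mathrm{DS}(G')$: every vertex cover dominates all $w_e$'s, and any dominating set of $G'$ can be shifted off the $w_e$'s by replacing each $w_{uv}\in D$ with $u$, without increasing the size. The key check is that $Q$ is still a hub of $G'$ with constants depending only on $\sigma,\delta$: each component $C$ of $G-Q$ absorbs the at most $\binom{\sigma}{2}+\sigma\delta$ vertices $w_e$ for edges incident to $C$, yielding size $\bigO(\sigma(\sigma+\delta))$ and the same $Q$-neighborhood, while edges lying entirely inside $Q$ become isolated $w_e$'s of $Q$-degree $2$. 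Hence a $(2-\epsilon)^p\cdot n^{\bigO(1)}$ algorithm for \textsc{Dominating Set} would transport back to an algorithm of the same running time for \textsc{Vertex Cover}, contradicting \cref{thm:coloringVDcombined} and therefore the SETH.

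For the SCC part, I would start from a $\leqdsetcover$ instance $(U,\mathcal{F})$ with $|U|=n$ and $\bigcup\mathcal{F}=U$, set $Q=\{u_x:x\in U\}$ (so $p=n$), and for each $S\in\mathcal{F}$ add a private path $b_S\adj a_S\adj s_S$ together with the crossing edges $u_x s_S$ for $x\in S$. The resulting graph has a \core{3}{d}. The main analysis is a gadget-by-gadget cost computation. Writing $A=\{S:s_S\in D\}$ and $B=D\cap Q$, every gadget must contribute at least one vertex to $D$ because $b_S$'s only neighbor is $a_S$; if $s_S\notin D$, taking $a_S\in D$ dominates the entire gadget for local cost $1$ but leaves $\{u_x:x\in S\}$ untouched, and if $s_S\in D$ one still needs $a_S$ or $b_S$ for $b_S$, so the local cost is $2$ while $s_S$ now dominates $\{u_x:x\in S\}$. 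Summing, $|D|=2|A|+(|\mathcal{F}|-|A|)+|B|=|\mathcal{F}|+|A|+|B|$ under the sole constraint $B\cup\bigcup A\supseteq U$. Since any singleton $\{x\}$ used through $B$ can be exchanged for an $\mathcal{F}$-set containing $x$ without loss, this minimum equals $|\mathcal{F}|+k^*$, where $k^*$ is the minimum set cover size. A $(2-\epsilon)^p\cdot n^{\bigO(1)}$ algorithm for \textsc{Dominating Set} would thus compute $k^*$ in time $(2-\epsilon)^n\cdot n^{\bigO(1)}$ (using $|\mathcal{F}|=n^{\bigO(d)}$), contradicting the SCC.

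The only real obstacle will be engineering the SCC gadget so that this cost equation is clean. The three-vertex path $b_S\adj a_S\adj s_S$ is designed precisely for that: the cheap option (cost $1$ via $a_S$) dominates the gadget internally but contributes nothing to $Q$, whereas the expensive option (cost $2$ via $s_S$ plus $a_S$ or $b_S$) pays one extra unit in exchange for dominating $\{u_x:x\in S\}$. This asymmetry is what turns the dominating-set optimum into $|\mathcal{F}|$ plus the minimum set cover, and the remainder of the argument is a direct comparison of running times.
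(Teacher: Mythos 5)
Your proposal is correct, and it matches the paper on one half while diverging on the other. The SCC half is essentially the paper's own proof: the paper also reduces from \leqdsetcover using exactly your three-vertex path gadget (its $a_F\adj b_F\adj c_F$ with $a_F$ attached to the elements is your $s_S\adj a_S\adj b_S$ relabelled), obtains a \core{3}{d} hub $Q=\{u_x\}$, and proves the same identity ``minimum dominating set $=|\mathcal{F}|+$ minimum cover''; your accounting via $|A|,|B|$ is a slightly different bookkeeping of the same exchange argument (note that $|D|=2|A|+(|\mathcal{F}|-|A|)+|B|$ should strictly be ``$\geq$'' in the lower-bound direction, but your conclusion is unaffected). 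The SETH half is genuinely different: the paper reduces from $\leq d$-\textsc{Hitting Set}, whose $(2-\epsilon)^n$ SETH-hardness it imports from Cygan et al., building per-element paths $y_i\adj a_i\adj b_i$ and per-set vertices $z_F$, which directly yields a \core{2}{d} hub with constants tied only to $\epsilon$; you instead reuse the paper's hub-parameterized \textsc{Vertex Cover} lower bound (\cref{thm:coloringVDcombined} with $q=1$) and push it through the classical edge-gadget reduction $\mathrm{VC}(G)=\mathrm{DS}(G')$, checking that the hub survives with constants $\bigO(\sigma(\sigma+\delta))$ and $\max(\delta,2)$. Both routes are sound: the paper's is self-contained and lightweight (it does not depend on the CSP machinery behind \cref{thm:coloringVDcombined} and gives smaller hub parameters), whereas yours trades that for a shorter argument that recycles an already-proved internal result, at the cost of larger constants and the (correctly identified) need to verify hub preservation and handle isolated vertices.
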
  

Theorem~\ref{thm:domsetcombined} suggests that, if there is no $(3-\epsilon)^p\cdot n^{\bigO(1)}$ time algorithm for \textsc{Dominating Set}, then perhaps the matching lower bound needs a complexity assumption that is stronger than both the SETH and the SCC.

\paragraph{\boldmath Universal constants for $\sigma$ and $\delta$?}
The lower bounds in Theorems~\ref{thm:coloringcombined}--\ref{thm:domsetcombined} are stated in a somewhat technical form: ``for every $\epsilon>0$, there are $\sigma$ and $\delta$ such that$\ldots$''. The statements would be simpler and more intuitive if they were formulated in a setting where $\sigma$ and $\delta$ are universal constants, say, 100. Can we prove statements that show, for example, that there is  no $(2-\epsilon)^p\cdot n^{\bigO(1)}$ algorithm, where $p$ is the size of a \core{100}{100} given in the input?

The answer to this question is complicated. For the vertex-deletion problem \coloringVD{q} (which includes \textsc{Vertex Cover} and \textsc{Odd Cycle Transversal}) there are actually better than brute force algorithms for fixed constant values of $\sigma$ and $\delta$.
\begin{thm} \label{thm:nonlist-vd-better-alg}
For every $q \geq 3$ and $\sigma,\delta>0$, there exists $\epsilon>0$ with the following property: 
 every instance $(G,L)$ of \coloringVD{q} with $n$ vertices, given with a \core{\sigma}{\delta} of size $\cpar$, can be solved in time $(q+1-\epsilon)^\cpar \cdot n^{\bigO(1)}$.
\end{thm}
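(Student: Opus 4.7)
The plan is to reduce the problem to a structured optimisation on the hub $Q$, and then to exploit the constant bounds on component size ($\sigma$) and interface size ($\delta$) to beat the brute-force $(q+1)^\cpar$ factor.

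First I would preprocess each component of $G - Q$ into a cost function on its interface. For every connected component $C$ of $G - Q$, with interface $I_C := N(V(C)) \cap Q$ satisfying $|I_C| \le \delta$, compute the function $f_C \colon ([q]\cup\{\bot\})^{I_C} \to \N$ that returns, for each hub-side assignment $\tau$ (with $\bot$ denoting a deleted hub vertex, imposing no colour constraint on $C$), the minimum number of deletions inside $V(C)$ needed so that the surviving subgraph is properly $q$-colourable and agrees with $\tau$. Since $|V(C)| \le \sigma$ and $|I_C| \le \delta$ are constants, each $f_C$ has at most $(q+1)^\delta$ entries and is computed by brute force in constant time per component, yielding polynomial preprocessing overall. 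The problem reduces to minimising $\Phi(\phi) := |\phi^{-1}(\bot)| + \sum_C f_C(\phi|_{I_C})$ over $\phi \colon Q \to [q]\cup\{\bot\}$, subject to the constraint that no edge of $G[Q]$ is monochromatic with both endpoints assigned a colour from $[q]$.

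Next I would branch over the hub vertices in a carefully chosen order, exploiting the assumption $q \ge 3$. Process $v_1, \dots, v_\cpar$ one by one. If $v_i$ has an already-processed hub neighbour $v_j$ with $\phi(v_j) \in [q]$, then properness forbids $\phi(v_i) = \phi(v_j)$, so $v_i$ has at most $q$ valid options ($q-1$ remaining colours plus $\bot$) instead of $q+1$. Similarly, if $v_i$ shares the interface of some component $C$ with already-processed vertices, the partially-evaluated $f_C$ may render some options for $v_i$ strictly dominated by $\bot$ or by a specific colour, allowing further pruning. Call such vertices \emph{constrained}. Choose the ordering of $Q$ to maximise the fraction of constrained vertices; if at least a fraction $\alpha > 0$ of the vertices become constrained, the branching tree has at most $(q+1)^{(1-\alpha)\cpar} \cdot q^{\alpha\cpar} \le (q+1-\epsilon')^\cpar$ leaves for an explicit $\epsilon' > 0$ depending on $\alpha$ and $q$.

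The main obstacle, and the genuinely combinatorial part of the argument, is to show that for every graph $G$ and every \core{\sigma}{\delta} $Q$ there exists an ordering in which a constant fraction $\alpha = \alpha(q, \sigma, \delta) > 0$ of the hub vertices is constrained. I would separate two regimes: if $G[Q]$ has an $\Omega(\cpar)$-sized matching, then placing the two endpoints of each matching edge in consecutive positions already produces many back-constrained vertices; otherwise, by classical matching--independence bounds, $G[Q]$ has a large independent set, and I would use the component structure to find many interface-constrained vertices, or else argue that a constant fraction of hub vertices are truly ``free''---isolated in $G[Q]$ and touching only components whose other interface vertices are also isolated---and can be optimised locally in polynomial time without contributing to the branching at all. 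The subtle point is establishing a uniform positive lower bound on $\alpha$ across all regimes; this is where the constancy of $\sigma$ and $\delta$, together with the assumption $q \ge 3$ (which guarantees at least two non-$\bot$ options after a single neighbour forbids one colour, and hence that back-constrainedness yields a strict reduction in the branching factor), combine to give the desired $(q+1-\epsilon)^\cpar \cdot n^{\bigO(1)}$ bound.
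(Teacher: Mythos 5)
Your first step (compressing each component into a cost table $f_C$ on its at-most-$\delta$ interface vertices, with $\bot$ for deletion, and reducing to minimizing $|\phi^{-1}(\bot)|+\sum_C f_C(\phi|_{I_C})$ over assignments of the hub) matches the paper's reduction and is fine. The gap is in the branching scheme. Your plan needs, for every input, an ordering of the hub in which a constant fraction of vertices is individually ``constrained'' (loses at least one of its $q+1$ options), and your fallback for the remaining regime --- that ``free'' vertices, isolated in $G[Q]$ and meeting only components with isolated interfaces, ``can be optimised locally in polynomial time without contributing to the branching at all'' --- is false. When $Q$ is an independent set and all interaction happens through constant-size components (exactly the situation engineered in the lower-bound constructions of this very problem), every hub vertex is free in your sense, yet the residual problem is an NP-hard CSP with arity-$\delta$ constraints on the hub: the components can realize essentially arbitrary cost functions on $\delta$-tuples, so no local polynomial-time elimination is possible, and a priori no single vertex has fewer than $q+1$ viable values. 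Per-vertex pruning cannot rescue this: a component may exclude (or penalize) only one \emph{joint} assignment of its interface out of $(q+1)^{|I_C|}$, which does not remove any option from any individual vertex, so your $\,(q+1)^{(1-\alpha)\cpar}q^{\alpha\cpar}\,$ accounting never gets off the ground in the hard regime. This is precisely where your argument needs a new idea and where the ``subtle point'' you flag remains unproved.

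The paper's route avoids any case analysis on $G[Q]$ by exploiting a monotonicity (``wildcard'') property your proposal does not use: from a component's point of view, deleting a hub vertex is never worse than colouring it, i.e.\ $f_C$ is non-increasing under replacing colours by $\bot$. From this one shows that every component constraint is either trivial (all interface assignments cost the same as the all-deleted one, so the constraint can be discarded) or admits a joint interface assignment $f'$ that is \emph{dominated} by another assignment $f$ obtained from it by one more deletion, in the sense $\mathrm{cost}(f)+\|f\|\le \mathrm{cost}(f')+\|f'\|$; such an $f'$ can then be excluded from consideration. The algorithm therefore branches on \emph{all but one} of the joint assignments of an entire constraint scope of size $r'\le\delta$ at once, giving $(q+1)^{r'}-1$ branches while eliminating $r'$ hub variables, hence a uniform per-vertex branching factor $\bigl((q+1)^{\delta}-1\bigr)^{1/\delta}<q+1$ and running time $\bigl((q+1)^{\delta}-1\bigr)^{\cpar/\delta}\cdot n^{\bigO(1)}$, independent of the structure of $G[Q]$ (edges inside $Q$ can be folded into the cost functions, as you note). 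Without this block-level domination argument, or some substitute for it, your matching/independent-set dichotomy does not yield a uniform $\alpha>0$, so the proposal as written does not establish the theorem.
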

Thus Theorem~\ref{thm:nonlist-vd-better-alg} explains why the formulation of Theorem~\ref{thm:coloringVDcombined} needs to quantify over $\sigma$ and $\delta$, and cannot be stated for a fixed pair $(\sigma,\delta)$.

On the other hand, for the edge-deletion problem \coloringED{q} (which includes \textsc{Max Cut}), we can prove stronger lower bounds where $\sigma$ and $\delta$ are universal constants. However, we need a complexity assumption different from the SETH. 

An instance of \MaxSat is a CNF formula $\phi$ with at most three literals in each clause.
We ask for the minimum number of clauses that need to be deleted in order to obtain a satisfiable formula.
Equivalently, we look for a valuation of the variables which violates the minimum number of clauses.
Clearly, an instance of \MaxSat with $n$ variables can be solved in time $2^n \cdot n^{\bigO(1)}$ by exhaustive search.
It is a notorious problem whether this running time can be significantly improved, i.e., whether there exists an $\epsilon>0$
such that every $n$-variable instance of \MaxSat can be solved in time $(2-\epsilon)^n$.

\begin{maxsathyp}[\textbf{M3SH}]
There is no $\epsilon>0$ such that every $n$-variable instance of \MaxSat can be solved in time $(2-\epsilon)^n \cdot n^{\bigO(1)}$.
\end{maxsathyp}

Under this assumption, we can prove a lower bound where $\delta=6$ and $\sigma$ is a constant (depending only on $q$). 

\begin{restatable}{thm}{LBcolEDMSH}\label{thm:LBcolEDMSHintro}
	For every $q\ge 2$ there is an integer $\sigma$ such that the following holds.
	For every $\eps>0$, no algorithm solves every $n$-vertex instance of \coloringED{q} that is given with a \core{\sigma}{6} of size $p$, in time $(q-\eps)^p\cdot n^{\bigO(1)}$, unless the \msh fails.
\end{restatable}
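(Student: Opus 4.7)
My plan is to reduce \MaxSat on an $n$-variable, $m$-clause 3-CNF formula $\phi$ to \coloringED{q} on a graph $G$ equipped with a $(\sigma(q), 6)$-hub. I partition the variables of $\phi$ into $p = \lceil n/\log_2 q\rceil$ groups of size $\log_2 q$; each group has $q = 2^{\log_2 q}$ possible Boolean assignments, which I identify with the $q$ colors. For every group $F_i$ I introduce a constant-size \emph{variable gadget} whose interface with the rest of the graph consists of a bounded number of hub vertices; the gadget forces, up to a fixed additive base cost, that any optimal $q$-coloring assigns colors to those vertices encoding a valid assignment for $F_i$. With a careful encoding the interface can be one hub vertex per group, so that the hub $Q$ has size exactly $p$.

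For each clause $C_j = \ell_1 \vee \ell_2 \vee \ell_3$ I build a constant-size \emph{clause gadget} $H_j$ as a single component of $G - Q$. The gadget attaches only to the (at most three) groups holding the literals' variables, using up to two connections per literal for ``bit-reading'' and polarity handling, so it has at most $6$ neighbors in $Q$, giving $\delta \le 6$. I design $H_j$ so that the minimum number of edges one must delete inside $H_j$, subject to the fixed colors on its hub neighbors, equals a fixed base $B = B(q)$ if the encoded assignment satisfies $C_j$, and $B + 1$ otherwise. Each literal is checked by a small subgadget that reads one bit of the $\log_2 q$-bit assignment encoded by the associated group's hub vertex, and the three resulting outputs are combined through an internal OR-style subgadget that charges the extra unit exactly when all three literals evaluate to false.

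Summing over the gadgets, the minimum number of edges that must be deleted to make $G$ properly $q$-colorable equals $mB + s(\phi)$, where $s(\phi)$ is the minimum number of clauses any assignment can violate. Consequently, a hypothetical $(q-\eps)^p \cdot n^{\bigO(1)}$ algorithm for \coloringED{q} would solve \MaxSat in time $(q-\eps)^{n/\log_2 q} \cdot n^{\bigO(1)} = (2 - \eps')^n \cdot n^{\bigO(1)}$ for some $\eps' > 0$ depending only on $\eps$ and $q$ (using $q^{1/\log_2 q} = 2$), contradicting \msh. The key point that yields $\delta \le 6$ rather than an $\eps$-dependent constant is that every \maxsat clause touches only three groups, regardless of $q$ or $\eps$; in the SETH-based reduction behind Theorem~\ref{thm:coloringEDcombined} the analogous constraints span unboundedly many groups, forcing $\delta$ to grow.

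The main obstacle will be designing the clause gadget so that its minimum internal edge-deletion cost is \emph{exactly} base-plus-indicator, with no spurious $\pm 1$ fluctuations depending on which particular satisfying color pattern appears on the boundary. Edge-deletion costs are notoriously delicate to equalize across boundary patterns, so I expect to need a symmetric construction in the style of the \maxcut gadgets from Lokshtanov--Marx--Saurabh, possibly padded with dummy parallel edges to equalize counts across all $q$ colors per hub vertex. Once such a gadget is in place, taking $\sigma$ to be its vertex count (a function of $q$ alone, absorbing also the variable-gadget size) completes the reduction.
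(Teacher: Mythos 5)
There is a genuine gap, and it is exactly the step your sketch passes over in one clause: the grouping ``$p=\lceil n/\log_2 q\rceil$ groups of size $\log_2 q$, each group has $q=2^{\log_2 q}$ assignments'' only makes sense when $q$ is a power of two. For general $q$ (say $q=3$ or $q=15$), $\log_2 q$ is not an integer; if you instead use groups of $b=\lfloor\log_2 q\rfloor$ Boolean variables, a hub vertex with $q$ colors encodes only $2^b<q$ meaningful states, and the hypothetical $(q-\eps)^p$ algorithm yields running time $(q-\eps)^{n/b}$, which for $q=3$, $b=1$ is $(3-\eps)^n\gg 2^n$ — no contradiction with the M3SH. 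So the identity $q^{1/\log_2 q}=2$ that your runtime analysis leans on cannot be realized combinatorially for non-powers of two. The paper's proof has to work precisely here: it first establishes M3SH-hardness of \MaxCSP{$q$}{$3$} for \emph{all} $q\ge 2$ (Theorem~\ref{thm:M3SHtoMaxCSP}) by, for non-powers of two, randomly restricting each domain of size $2^{\lceil\log_2 q\rceil}$ down to $q$ values and paying a $(2^{\lceil\log_2 q\rceil}/q)^n$ overhead, then derandomizing via the logarithmic integrality gap for \SetCov (Lemmas~\ref{lem:randomized-reduction}--\ref{lem:setcoverefficient}). Your proposal has no substitute for this step, and without it the claimed bound fails for every $q$ that is not a power of two (which the theorem requires).

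A secondary, but still real, issue is that your variable/clause gadgets are asserted rather than constructed, and the hard part is not only equalizing the deletion costs but breaking color symmetry \emph{without lists}: a constant-size clause component cannot ``read a bit'' of a hub vertex's color unless the meaning of the $q$ colors is synchronized across components, which in the paper is achieved by first proving the lower bound for \listcoloringED{q} (where constraint gadgets $1$-realize arbitrary relations, portals forming an independent set so copies can be overlaid without parallel edges) and then removing lists with a global $q$-clique stabilized by inequality gadgets; it is exactly this clique, joining the hub, that makes each component see $3+3=6$ hub neighbors and gives $\delta=6$. Your accounting of ``two connections per literal'' does not address where the color reference comes from, and ``dummy parallel edges'' are not available in a simple graph. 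For $q=2$ your route essentially coincides with the paper's (one hub vertex per variable, OR-type gadgets, one extra hub vertex to simulate lists, $\delta=4$), but for $q\ge 3$ both the domain-size mismatch and the list-removal/synchronization machinery are missing.
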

  
For the case $q=2$ we even show a slight improvement over \cref{thm:LBcolEDMSH} --- in this case it suffices to consider instances with a constant $\sigma$ and $\delta=4$.

For \partition{\triangle}, we do not know if the lower bound of Theorem~\ref{thm:mainpacking}, ruling out $(2-\epsilon)^p\cdot n^{O(1)}$ running time under the SCC, remains valid for some fixed universal $\sigma$ and $\delta$ independent of $\epsilon$. Note that the proof of Theorem~\ref{thm:mainpacking} provides a reduction from \partition{\triangle} to \textsc{$d$-Set Packing} for some $d$. It is known that \textsc{$d$-Set Packing} over a universe of size $n$ can be solved in time $(2-\epsilon)^n\cdot (n+m)^{\bigO(1)}$ with some $\epsilon>0$ depending on $d$ \cite{nederlof:LIPIcs.ESA.2016.69,DBLP:conf/iwpec/Koivisto09,bjorklund:LIPIcs.STACS.2010.2447}. However, our reduction from \partition{\triangle} to \textsc{$d$-Set Packing} chooses $d$ in a way that it cannot be used to reduce the case of a fixed $\sigma$ and $\delta$ to a \textsc{$d$-Set Packing} problem with fixed $d$. It seems that we would need to understand if certain generalizations of 
\textsc{$d$-Set Packing} can also be solved in time $(2-\epsilon)^n\cdot (n+m)^{\bigO(1)}$ for fixed $d$. The simplest such problem would be the generalization of \textsc{$d$-Set Packing} where the sets in the input are partitioned into pairs and the solution is allowed to use at most one set from each pair.

\paragraph{Discussion.}

  Given the amount of attention to algorithms on tree decompositions and the number of nontrivial techniques that were developed to achieve the best known algorithms, it is a natural question to ask if these algorithms are optimal. Even though understanding treewidth is a very natural motivation for this line of research, the actual results turned out to be less related to treewidth than one would assume initially: the lower bounds remain valid even under more restricted conditions. Already the first paper on this topic \cite{DBLP:journals/talg/LokshtanovMS18} states the lower bounds in a stronger form, as parameterized by pathwidth or by feedback vertex set number (both of which are bounded below by treewidth). Some other results considered parameters such as the size of a set $Q$ where every component of $G-Q$ is a path \cite{DBLP:conf/ciac/JaffkeJ17} or has bounded treewidth \cite{DBLP:conf/iwpec/HegerfeldK22}. However, our results show that none of these lower bounds got to the fundamental reason why known algorithms on bounded-treewidth graphs cannot be improved: Theorems~\ref{thm:coloringcombined}--\ref{thm:mainpacking} highlight that these algorithms are best possible already if we consider a much more restricted problem setting where constant-sized gadgets are attached to a set of \coreword\ vertices. Moreover, Theorems~\ref{thm:coloringcombined} and \ref{thm:coloringVDcombined} are likely to be best possible: as we have seen, for coloring and its vertex-deletion generalizations, $\sigma$ and $\delta$ cannot be made a constant independent from $\eps$ (Theorem~\ref{thm:nonlist-vd-better-alg}). Therefore, one additional conceptual message of our results is understanding where the hardness of solving problems on bounded-treewidth graphs really stems from, by reaching the arguably most restricted setting in which the lower bounds hold.%
  
  The success of Theorems~\ref{thm:coloringcombined}--\ref{thm:coloringVDcombined} (for coloring problems and relatives) suggests that possibly all the treewidth optimality results could be revisited and the same methodology could be used to strengthen to parameterization by \coreword\ size. But the story is more complicated than that. For example, for \packing{\triangle}, the ground truth appears to be that the lower bound parameterized by width of the tree decomposition can be strengthened to a lower bound parameterized by \coreword\ size. However, proving the lower bound parameterized by \coreword\ size requires a different proof technique, and we can do it only by assuming the SCC --- and for all we know this is an assumption orthogonal to the SETH. In fact, we showed that the lower bound for \packing{\triangle} is equivalent to the SCC, making it unlikely that a simple proof based on the SETH exists. For \textsc{Dominating Set}, we currently do not know how to obtain tight bounds, highlighting that it is far from granted that all results parameterized by width of tree decomposition can be easily turned into lower bounds parameterized by \coreword\ size.

  Another important aspect of our results is the delicate way they have to be formulated, with the values of $\sigma$ and $\delta$ depending on $\epsilon$. Theorem~\ref{thm:LBcolEDMSHintro} shows that in some cases it is possible to prove a stronger bound where $\sigma$ and $\delta$ are universal constants, but this comes at a cost of choosing a different complexity assumption (M3SH). Thus, there is a tradeoff between the choice of the complexity assumption and the strength of the lower bound. In general, it seems that the choice of complexity assumption can play a crucial role in these kind of lower bounds parameterized by \coreword\ size. This has to be contrasted with the case of parameterization by the width of the tree decomposition, where the known lower bounds are obtained from the SETH (or its counting version).

  It would be natural to try to obtain lower bounds parameterized by \coreword\ size for other algorithmic problems as well. The lower bounds obtained in this paper for various fundamental problems can serve as a starting point for such further results. Concerning the problems studied in this paper, we leave two main open questions:
  \begin{itemize}
  \item For \textsc{Dominating Set}, can we improve the lower bound of Theorem~\ref{thm:domsetcombined} to rule out $(3-\epsilon)^p\cdot n^{\bigO(1)}$ algorithms, under some reasonable assumption? Or is there perhaps an algorithm beating this bound?
    
    \item For \partition{\triangle}/\packing{\triangle}, can we improve the lower bound of Theorem~\ref{thm:mainpacking} such that $\sigma$ and $\delta$ are universal constants? Or is it true perhaps that for every fixed $\sigma$ and $\delta$, there is an algorithm solving these problems in time $(2-\epsilon)^p\cdot n^{\bigO(1)}$ for some $\epsilon>0$? 
\end{itemize}

\section{Technical Overview}\label{sec:overview}

In this section, we overview some of the most important technical ideas in our results. 

\subsection{$q$-Coloring}

The algorithmic statement in \cref{thm:coloringcombined} is easily obtained via a simple branching procedure. For the hardness part, we use a lower bound of Lampis~\cite{DBLP:journals/siamdm/Lampis20} for constraint satisfaction problems (CSP) as a starting point: for any $\epsilon>0$ and integer $d$, there is an integer $r$ such that there is no algorithm solving
CSP on $n$ variables of domain size $d$ and $r$-ary constraints in time $(d-\epsilon)^n$.
Therefore, to prove \cref{thm:coloringcombined}\,(2), we give a reduction that, given an $n$-variable CSP instance where the variables are over $[q]$ and the arity of constraints is some constant $r$, creates an instance of \coloring{q} having a \coreword{} of size roughly $n$.

First, we introduce a set of $n$ main vertices in the \coreword, representing the variables of the CSP instance. We would like to represent each $r$-ary constraint with a gadget that is attached to a set $S$ of $r$ vertices. We will first allow our gadgets to use lists that specify to which colors certain vertices are allowed to be mapped. In a second step we then remove these lists. 

A bit more formally, we say that an \emph{$r$-ary $q$-gadget} is a graph $J$ together with a list assignment $L\from V(J)\to 2^{[q]}$ and $r$ distinguished vertices $\boldx=(z_1,\ldots, z_r)$ from $J$. The vertices $z_1,\ldots,z_r$ are called \emph{portals}. A \emph{list coloring} of $(J,L)$ is an assignment $\phi\from V(J)\to [q]$ that respects the lists $L$, i.e., with $\phi(v) \in L(v)$ for all $v \in V(J)$.

A construction by Jaffke and Jansen~\cite{DBLP:conf/ciac/JaffkeJ17} gives a gadget that enforces that a set of vertices forbids one prescribed coloring. We use this statement to construct a gadget extends precisely the set of colorings that are allowed according to some relation. 

\begin{restatable}{prop}{realizecoloring}\label{prop:realize-coloring}
	Let $q \geq 3$ and $r \geq 1$ be integers, and let $R \subseteq [q]^r$ be a relation.
	Then there exists an $r$-ary $q$-gadget $\calF=(F,L,(z_1,\ldots,z_r))$ such that
	\begin{myitemize}
		\item the list of every vertex is contained in $[q]$,
		\item for each $i \in r$, it holds that $L(z_i)=[q]$,
		\item $\{z_1,\ldots,z_r\}$ is an independent set,
		\item for any $\psi : \{z_1,\ldots,z_r\} \to [q]$, coloring vertices $z_1,\ldots,z_r$ according to $\psi$ can be extended to a list coloring of $(F,L)$ if and only if $(\psi(z_1),\ldots,\psi(z_r)) \in R$.
	\end{myitemize}
\end{restatable}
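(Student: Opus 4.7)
The plan is to realize $R$ as the intersection of the ``forbid exactly one tuple'' relations, $R=\bigcap_{t\in[q]^r\setminus R}\bigl([q]^r\setminus\{t\}\bigr)$, and to implement this intersection by vertex-gluing at a common portal tuple. For each forbidden tuple $t\in[q]^r\setminus R$, I would invoke the Jaffke--Jansen construction mentioned immediately before the proposition statement to obtain an $r$-ary $q$-gadget $\calF_t=(F_t,L_t,(z_1^t,\ldots,z_r^t))$ whose portals have list $[q]$ and whose extendable portal-colorings are exactly $[q]^r\setminus\{t\}$. I would then form $F$ as the disjoint union of the interiors of all the $F_t$ together with one fresh portal tuple $(z_1,\ldots,z_r)$ of list $[q]$, identifying $z_i^t$ with $z_i$ for every $i\in[r]$ and every $t$. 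All non-portal vertices keep the lists they had in $L_t\subseteq 2^{[q]}$, so every list lies in $[q]$ and the portals have list exactly $[q]$.

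For correctness, after the identification the graphs $F_t$ share only the portal vertices, so a portal assignment $\psi\colon\{z_1,\ldots,z_r\}\to[q]$ extends to a list coloring of $(F,L)$ if and only if it extends inside every $F_t$ separately. By the defining property of the Jaffke--Jansen gadget this is equivalent to $(\psi(z_1),\ldots,\psi(z_r))\ne t$ for every $t\in[q]^r\setminus R$, i.e.\ exactly to $(\psi(z_1),\ldots,\psi(z_r))\in R$.

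The delicate condition, and the place where I expect to spend most of the effort, is the independence of the portal set $\{z_1,\ldots,z_r\}$. If the Jaffke--Jansen primitive already has pairwise non-adjacent portals with full list $[q]$, then the gluing step introduces no new portal--portal edges and the independence is automatic. If it does not, I would preprocess each $\calF_t$ by adding, for every $i\in[r]$, a fresh ``external'' vertex $\widetilde z_i^t$ with list $[q]$ and wiring it to $z_i^t$ through an equality-enforcing sub-gadget; such a sub-gadget is itself realisable by composing $q(q-1)$ single-tuple forbidders, one for each pair $(a,b)\in[q]^2$ with $a\ne b$, applied to the two vertices. After this preprocessing the external portals have neighbours only inside the interior of their own copy, so the subsequent identification across $t$ yields an independent portal tuple while preserving the ``forbid exactly $\{t\}$'' semantics. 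Verifying that the cited primitive does have (or can be augmented to have) this interface is the main conceptual hurdle; the rest is the standard ``conjunction via vertex-gluing'' argument.
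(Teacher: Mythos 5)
There is a genuine gap, and it lies exactly where you suspected: the interface of the Jaffke--Jansen primitive. \cref{lem:jaffke} does \emph{not} give a gadget whose extendable portal colorings are exactly $[q]^r\setminus\{\boldc\}$. Its guarantee is an \emph{avoidance} statement: for every $\boldd$ there is a proper list coloring $\phi$ of $F_{\boldc}$ with $\phi(x_i)\neq d_i$ for all $i$ if and only if $\boldd\neq\boldc$. In particular, taking $\boldd=\boldc$ shows that \emph{every} list coloring of $F_{\boldc}$ agrees with $\boldc$ in at least one coordinate, so any portal tuple that differs from $\boldc$ in all coordinates is \emph{not} extendable (and the portal lists are only promised to be contained in $[q]$, not equal to it). Hence your primary plan --- identifying the portals of the copies $\calF_t$ with a common tuple $(z_1,\ldots,z_r)$ --- would exclude many tuples of $R$, not just the forbidden ones. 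The gadget is designed to be attached to the external vertices by \emph{edges}: making $x_i$ adjacent to $z_i$ forces $\phi(x_i)\neq\psi(z_i)$, and then the lemma says this is satisfiable inside $F_{\boldc}$ precisely when $(\psi(z_1),\ldots,\psi(z_r))\neq\boldc$. This is the paper's proof: keep $z_1,\ldots,z_r$ as fresh independent vertices with lists $[q]$ and, for each forbidden tuple $\bar\boldc$, attach one copy of $\calF_{\bar\boldc}$ via the edges $x_iz_i$; since the copies meet only in $\{z_1,\ldots,z_r\}$, the conjunction over forbidden tuples is immediate.

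Your fallback does not repair this. Wiring a fresh vertex $\widetilde z_i^t$ to the old portal through an \emph{equality}-enforcing sub-gadget merely copies the (uncharacterized, and as argued above wrong) set of extendable portal tuples to the new portals; what is needed at that interface is an \emph{inequality} constraint, i.e.\ a single edge, which is what converts the avoidance semantics into ``extendable iff $\neq\boldc$.'' Moreover, your proposed construction of the equality sub-gadget out of ``single-tuple forbidders applied to the two vertices'' presupposes the very direct-forbid interface that is in question, so it is circular; if instead you build those binary forbidders by edge-attachment (the correct use of \cref{lem:jaffke}), you may as well drop the detour and attach the $r$-ary gadgets by edges directly, which is exactly the paper's argument.
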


Then, by introducing one gadget per constraint and attaching it to the vertices of the \coreword, from the $q^n$ possible behaviors of the \coreword{} vertices, only those can be extended to the gadgets that correspond to a satisfying assignment of the CSP instance.
Note that gadgets are allowed to use lists and they model the relational constraints using list colorings. So the final step to obtain a reduction to \coloring{q} is to remove these lists. This can be done using a standard construction, where a central clique of size $q$ is used to model the $q$ colors, and a vertex $v$ of the graph is adjacent to the $i$th vertex of the clique, whenever $i\notin L(v)$.

\subsection{Vertex Deletion to $q$-Coloring.}

Similarly to \coloring{q}, the algorithmic statement in \cref{thm:coloringVDcombined} is easily obtained via a simple branching procedure. However, for \coloringVD{q}, we need to consider $q+1$ possibilities at each vertex: assigning to it one of the $q$ colors, or deleting it. This leads to the running time $(q+1)^p\cdot n^{\bigO(1)}$. The hardness proof is also similar, but this time we have to give a reduction that, given an $n$-variable CSP instance where the variables are over $[q+1]$ and the arity of constraints is some constant $r$, creates an instance of \coloringVD{q} having a \coreword\ of size roughly $n$.  Intuitively, we are using deletion as the $(q+1)$-st color: the $(q+1)^n$ possibilities for these vertices in the \coloringVD{q} problem (coloring with $q$ colors $+$ deletion) correspond to the $(q+1)^n$ possible assignments of the CSP instance. To enforce this interpretation, we attach to these vertices small gadgets representing each constraint. We attach a large number of copies of each such gadget, which means that it makes no sense for an optimum solution to delete vertices from these gadgets and hence deletions occur only in the \coreword. This means that we can treat the vertices of the gadgets as ``undeletable''.

We would like to use again the construction from \cref{prop:realize-coloring} to create gadgets that enforce that a set of vertices has one of the prescribed colorings/deletions. A gadget can force the deletion of a vertex if its neighbors are colored using all $q$ colors. However, there is a fundamental limitation of this technique: deleting a vertex is always better than coloring it. That is, a gadget cannot really force a set $S$ of vertices to the color ``red'': from the viewpoint of the gadget, deleting some of them and coloring the rest red is equally good. In other words, it is not true that every relation $R\subseteq [q+1]^r$ can be represented by a gadget that allows only these combinations of $q$ colors + deletion on a set $S$ of $r$ vertices.

To get around this limitation, we use a grouping technique to have control over how many vertices are deleted. Let us divide the $n$ variables into $M=n/b$ blocks $B_1$, $\dots$, $B_M$ of size $b$ each. Let us guess the number $f_i$ of variables in $B_i$ that receive the value $q+1$ in a hypothetical solution; that is, we expect $f_i$ deletions in block $B_i$ of central vertices. Instead of just attaching a gadget to a set $S$ of at most $r$ vertices, now each gadget is attached to the at most $r$ blocks containing $S$. Besides ensuring a combination of values on $S$ that satisfies the constraint, the gadget also ensures that each block $B_i$ it is attached to has at least the guessed number $f_i$ of deletions. This way, if we have a solution with exactly $\sum_{i=1}^Mf_i$ deletions, then we know that it has exactly $f_i$ deletions in the $i$-th block. Therefore, if a gadget forces the deletion of $f_i$ vertices of $B_i$ and forces a coloring on the remaining vertices of $B_i$, then we know that that block has exactly this behavior in the solution.

\subsection{Edge Deletion to $q$-Coloring}
Let us turn our attention to the edge-deletion version now. Similarly to the vertex-deletion version, the algorithmic results are simple, thus we discuss only the hardness proofs here.
As starting point for all our reductions, we use a CSP problem with domain size $q$ that naturally generalizes \maxsat: the task is to find an assignment of variables that satisfies the maximum number of constraints. For $q=2$, the hardness of this problem follows from the SETH and the \msh. For $q\ge 3$, we prove a new tight lower bound based on \msh. 

For $q\ge 3$, the lower bound of \cref{thm:coloringEDcombined} (hardness of \coloringED{q} under the SETH) already follows from our result for \emph{finding} a $q$-coloring \emph{without deletions} (\cref{thm:coloringcombined}).
So, in order to complete the proof of \cref{thm:coloringEDcombined}, we give a reduction from the CSP problem with $q=2$ to \coloringED{2} (i.e, \maxcut), which shows hardness under SETH. As the gadgets of \cref{prop:realize-coloring} work only for $q\ge 3$, we need to design new gadgets using only 2 colors for this case.

The same reduction can be used to establish the lower bound from \cref{thm:LBcolEDMSHintro} (hardness of \coloringED{q} under the \msh) in the $q=2$ case.
For the $q\ge 3$ case, we present a reduction from the CSP problem with domain size $q$ to \coloringED{q}. Here we can once again use the gadgets from \cref{prop:realize-coloring}.

In all cases, as the gadgets we design may use lists, we establish respective lower bounds for the list coloring problem on the way. In a second step, we then show how to remove the lists.

\paragraph*{Max\,CSP --- Hardness under the \msh}
For some positive integers $d$ and $r$, we define \MaxCSP{$d$}{$r$}: Given $v$ variables over a $q$-element domain and a set of $n$ relational constraints of arity $3$, the task is to find an assignment of the variables such that the maximum number of constraints are satisfied. The problem can be solved in time $q^v\cdot n^{\bigO(1)}$ by brute force. For $q=2$, the problem is clearly a generalization of \maxsat, hence the M3SH immediately implies that there is no $(q-\eps)^v\cdot n^{\bigO(1)}$ algorithm for any $\eps>0$. We show that the M3SH actually implies this for any $q\ge 2$. This might also be a helpful tool for future work.

\begin{restatable}{thm}{MTSHtoMaxCSP}\label{thm:M3SHtoMaxCSP}
	For $d \geq 2$ and any $r \geq 3$, there is no algorithm solving every $n$-variable instance of \MaxCSP{$d$}{$r$} in time $(d-\epsilon)^n\cdot n^{\bigO(1)}$ for $\epsilon>0$, unless the M3SH fails.
\end{restatable}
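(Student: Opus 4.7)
Proof plan. The plan is to reduce \maxsat (which is hard under the M3SH) to \MaxCSP{d}{r}: given a \maxsat instance $\phi$ on $n$ variables, I construct an equivalent \MaxCSP{d}{r} instance $\Psi$ with $N$ variables, so that a hypothetical $(d-\epsilon)^N \cdot N^{\bigO(1)}$-time algorithm for \MaxCSP{d}{r} applied to $\Psi$ would yield a $(2-\delta)^n \cdot n^{\bigO(1)}$-time algorithm for $\phi$ for some $\delta > 0$, contradicting the M3SH.

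The core tool is a block encoding. Fix positive integers $M$ and $k$ with $d^M \ge 2^k$ and an injection $\iota\colon \{0,1\}^k \hookrightarrow [d]^M$. I would partition the $n$ Boolean variables of $\phi$ into $\lceil n/k\rceil$ blocks of size $k$ and introduce $M$ CSP variables per block, for a total of $N = nM/k$ CSP variables. Each clause $C$ of $\phi$ involves at most three Booleans, lying in at most three blocks, so it translates into a single CSP constraint of arity at most $3M$ whose satisfying tuples are the joint configurations of the three relevant blocks that (i) lie in $\iota(\{0,1\}^k)$ for each block and (ii) decode to a Boolean assignment satisfying $C$. Per-block constraints of arity $M$ award a large bonus to in-image configurations, so that every optimal solution of $\Psi$ keeps every block in the image of $\iota$. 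Running the algorithm on $\Psi$ takes time $(d-\epsilon)^{nM/k}$, which is $(2-\delta)^n$ for some $\delta > 0$ precisely when $M/k < 1/\log_2(d-\epsilon)$. By choosing $M, k$ with $d^M/2^k$ close to $1$, the ratio $M/k$ can be driven arbitrarily close to $1/\log_2 d$, so the inequality holds for every $\epsilon > 0$ once $M, k$ are large enough. When $d$ is a power of $2$, the choice $M = 1$, $k = \log_2 d$ already makes the encoding tight with arity $3$, so the reduction fits within any $r \ge 3$ and every $\epsilon > 0$.

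The main obstacle I expect is the interaction between the arity bound $r$ and non-power-of-$2$ $d$. For such $d$, making the block encoding tight as $\epsilon \to 0$ requires $M \to \infty$ (otherwise the gap $\epsilon > d - 2^{\lfloor \log_2 d \rfloor}$ is left uncovered), so the constraint arity $3M$ exceeds any fixed $r$ and in particular $r = 3$. To establish the theorem for every fixed $r \ge 3$ independently of $\epsilon$, the block encoding must be combined with an arity-reduction step that replaces each high-arity constraint by a shallow tree of arity-$r$ constraints joined through auxiliary CSP variables. The delicate point is to keep the number of auxiliary variables $o(N)$: each auxiliary variable over $[d]$ contributes a factor of $d$ to the search space, and too many auxiliaries would destroy the tight inequality $(d-\epsilon)^{N'} < 2^n$. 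The plan is to first reduce $\phi$ to an instance with a linear number of clauses via sparsification, then design balanced arity-reduction gadgets whose auxiliaries have strongly restricted admissible sets, so that the algorithm's effective search space grows by at most a $1 + o(1)$ factor. Verifying that the combined construction maintains tightness for every fixed $r \ge 3$ and every $\epsilon > 0$ is the most delicate part of the proof.
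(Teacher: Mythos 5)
Your reduction for the case where $d$ is a power of $2$ (one CSP variable per block of $\log_2 d$ Booleans, arity~$3$ constraints) is exactly what the paper does in its first step. The problem is the case where $d$ is not a power of $2$, and here your plan has a genuine gap that you yourself flag but do not close. Your block encoding needs $M\to\infty$ to make $M/k$ approach $1/\log_2 d$, which forces constraints of arity $3M>r$, and the arity-reduction step you propose cannot work within the stated budget. A hypothetical $(d-\epsilon)^{N'}\cdot (N')^{\bigO(1)}$ algorithm is a black box whose running time depends only on the \emph{number} of variables $N'$; ``strongly restricting the admissible sets'' of auxiliary variables does not shrink $N'$, and in a \textsc{Max}-CSP there are no hard domain restrictions anyway (every constraint can be bought off for cost $1$, which also makes it delicate to keep the violation cost of a decomposed constraint equal to exactly $1$). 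Counting: you have $\Theta(n/k)$ blocks and at least $\Theta(n)$ clause constraints of arity $3M$; splitting a general arity-$3M$ relation over $[d]$ into arity-$r$ pieces requires auxiliary variables that carry the projection of the relation across a separator, i.e.\ $\Omega(M)$ domain-$[d]$ auxiliaries per constraint, for $\Omega(nM)=\Omega(kN)$ auxiliaries in total --- not $o(N)$ but a $\Theta(k)$-factor blow-up. Even a single auxiliary per clause already gives $\Theta(n)=\Theta(kN/M)=\Theta(N\log_2 d)$ extra variables. Finally, the sparsification you invoke is the Sparsification Lemma for \emph{decision} $k$-SAT; no analogue is known for \maxsat that preserves the number of violated clauses, so you cannot assume a linear number of clauses to begin with.

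The paper avoids all of this by going in the opposite direction. It first proves hardness of \MaxCSP{$2^p$}{$3$} for the smallest power of two $d'=2^p\geq d$ (your power-of-two case), and then reduces \MaxCSP{$d'$}{$3$} to \MaxCSP{$d$}{$3$} by \emph{restricting domains} rather than enlarging blocks: for each variable one picks a $d$-element subset of $[d']$ as its new domain; an optimal assignment survives a random choice with probability $(d/d')^n$, so $(d'/d)^n$ repetitions succeed with constant probability, and the total time $(d'/d)^n\cdot(d-\epsilon)^n$ is $(d'-\epsilon')^n$, contradicting the power-of-two case. The arity never exceeds $3$, no auxiliary variables are needed, and the construction is derandomized via the logarithmic integrality gap for \textsc{Set Cover} (covering $[d']^n$ by products of $d$-element subsets). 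If you want to salvage your write-up, replacing the arity-reduction step by this domain-restriction argument is the missing idea.
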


In order to show \cref{thm:M3SHtoMaxCSP},
if $q$ is a power of 2, then a simple grouping argument works: for example, if $q=2^4=16$, then each variable of the CSP instance can represent 4 variables of the \maxsat instance, and hence it is clear that a $(q-\eps)^v$ algorithm would imply a $(2-\eps)^{4v}$ algorithm for a \maxsat instance with $4v$ variables.

      The argument is not that simple if $q$ is not a power of 2, say $q=15$. Then a variable of that CSP instance cannot represent all 16 possibilities of 4 variables of the \maxsat instance, and using it to represent only 3 variables would be wasteful. We cannot use the usual trick of grouping the CSP variables such that each group together represents a group of \maxsat variables: then each constraint representing a clause would need to involve not only 3 variables, but 3 blocks of variables, making $\delta$ larger than 3. Instead, for each block of 4 variables of the \maxsat instance, we randomly choose 15 out of the 16 possible assignments, and use a single variable of the CSP instance to represent these possibilities. An optimum solution of a $4v$-variable \maxsat instance
      ``survives'' this random selection with probability $(15/16)^v$. Thus a  $(15-\eps)^v\cdot n^{\bigO(1)}$ time algorithm for the CSP problem would give a randomized $(16/15)^v\cdot (15-\eps)^v\cdot n^{\bigO(1)}= (16-\eps')^v\cdot n^{\bigO(1)}= (2-\eps'')^{4v}\cdot n^{\bigO(1)}$ time algorithm for \maxsat, violating (a randomized version of) the M3SH. Furthermore, we show in Section~\ref{sec:maxCSPMSH} that the argument can be derandomized using the logarithmic integrality gap between integer and fractional covers in hypergraphs.
      
 \paragraph*{Realizing Relations using Lists}
   Recall that an $r$-ary $q$-gadget is a graph with lists in $[q]$ and $r$ specified portal vertices. For our hardness proofs, we reduce from \MaxCSP{$q$}{$r$}, and we use gadgets to ``model'' the relations in $[q]^r$. We say that an $r$-ary $q$-gadget \emph{realizes} a relation $R\in [q]^r$ if there is an integer $k$ such that (1) for each $\boldd\in R$, if the portals are colored according to $\boldd$, then it requires precisely $k$ edge deletions to extend this to a full list coloring of the gadget, and (2) extending a state that is \emph{not} in $R$ requires strictly more than $k$ edge deletions.
  We say that such a gadget $1$-realizes $R$, if for each state outside of $R$ it takes precisely $k+1$ edge deletions to extend this state. So, this is a stronger notion in the sense that now the violation cost is the same for all tuples outside of $R$. Moreover, with a $1$-realizer in hand, by identifying copies of this gadget with the same portal vertices one can freely adjust the precise violation cost --- this works as long as the portals form an independent set and therefore no multiedges are introduced in the copying process.
  
  For our treatment of the case $q\ge3$, we again use \cref{prop:realize-coloring} to show that arbitrary relations over a domain of size $q$ can be realized. As \cref{prop:realize-coloring} is for the decision problem without deletions, it does not help for the case $q=2$, i.e., for \maxcut/\coloring{2}.
  In this case, we need a different approach to show that every relation over a domain of size $2$ can be realized. For \coloring{2}, a single edge is essentially a ``Not Equals''-gadget as the endpoints have to take different colors or otherwise the edge needs to be deleted. Starting from this, we show how to model OR-relations of any arity. With these building blocks we then obtain the following result.
 
  \begin{restatable}{thm}{maxcutRelations}\label{lem:NEQtoRelstrong}\label{cor:K2relations}
  	For each $r\ge 1$, and $R\subseteq [2]^r$, there is an $r$-ary $2$-gadget that $1$-realizes $R$.
  \end{restatable}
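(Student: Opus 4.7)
The plan is to build $1$-realizers for arbitrary $R\subseteq[2]^r$ in three stages, moving from simple to complex. First, observe that a single edge $uv$ with $L(u)=L(v)=[2]$ is a $1$-realizer of the binary NEQ-relation $\{(1,2),(2,1)\}$: identical colors force one deletion, differing colors cost nothing. This gives us a basic ``NEQ-inverter'': attaching to a portal $z$ a helper $z'$ of list $[2]$ via a single edge forces $z'=3-z$ in the optimum at no cost, so $z'$ plays the role of ``NOT $z$'' when consumed by a subsequent sub-gadget.

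Second, I construct for every $s\geq1$ a $1$-realizer of the OR-relation $\mathrm{OR}_s \coloneqq [2]^s \setminus \{(1,\ldots,1)\}$. For $s=1$, a single edge between the portal and a vertex of list $\{2\}$ works. For $s\geq 2$, the plan is to use a small number of auxiliary vertices with carefully chosen lists in $[2]$ and singleton lists $\{1\},\{2\}$, joined to the portals by NEQ-edges and to each other in a symmetric way. The gadget is designed so that, after minimising over the auxiliary vertices, the cost is $0$ on every tuple in which at least one portal equals $2$ and is exactly $1$ on the all-$1$ tuple. Crucially, the gadget must not accumulate penalty when several portals take value $2$; the idea is to have a single shared ``bias'' structure that is forced to flip its state exactly once over the all-$1$ configuration, and that is satisfied uniformly by any deviation.

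Third, I combine these pieces to realize any $R\subseteq[2]^r$. Writing $R = \bigcap_{\mathbf{a}\notin R} ([2]^r \setminus \{\mathbf{a}\})$, each relation $[2]^r \setminus \{\mathbf{a}\}$ is obtained from $\mathrm{OR}_r$ by inserting a NEQ-inverter on every coordinate $i$ with $a_i=2$; this converts an $\mathrm{OR}_r$ $1$-realizer into a $1$-realizer of ``forbid $\mathbf{a}$''. I then identify the portals of one copy of each such gadget (taking disjoint helper sets), which introduces no multi-edges because portals form an independent set in every sub-gadget. The total cost on a portal assignment $\mathbf{d}$ is the sum of the sub-gadget costs: for $\mathbf{d}\in R$, every sub-gadget is at its base cost; for $\mathbf{d}\notin R$ there is precisely one sub-gadget (the one with $\mathbf{a}=\mathbf{d}$) that is forbidden and contributes the $+1$, while all other sub-gadgets see $\mathbf{d}\neq \mathbf{a}$ and remain at base cost. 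Hence the gap is exactly $1$ and the combined gadget $1$-realizes $R$.

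The main obstacle will be Stage 2, the OR-gadget. Standard realizers of OR built additively (e.g.\ a clique plus unary biases, or chains of NEQ-edges) tend to yield a gap of $2$ or more on tuples with several portals equal to $2$, because per-portal penalties accumulate. Achieving the uniform gap $1$ requires a more subtle construction in which the ``defect'' is concentrated in a single shared piece of the gadget rather than distributed over the portals; verifying the uniform-cost property will need a case analysis covering all subsets of portals set to $2$ and showing that the optimal helper assignment always yields the same cost regardless of how many portals deviate. Once Stage 2 is in place, Stages 1 and 3 are essentially bookkeeping.
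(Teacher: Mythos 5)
Your Stage~3 bookkeeping is sound: if for every single tuple $\mathbf{a}\notin R$ you had a $1$-realizer of $[2]^r\setminus\{\mathbf{a}\}$ with independent portals, then identifying portals across one copy per forbidden tuple adds the costs, exactly one sub-gadget pays $+1$ on any $\boldd\notin R$, and you get a $1$-realizer of $R$; the inverter trick also checks out, because the OR sub-gadget never costs less than its base value, so spending a deletion on an inverter edge never beats the honest assignment. The problem is that everything therefore rests on Stage~2, a $1$-realizer of $\OR{s}$, and you never construct it --- you only list the properties it should have and correctly observe that the natural constructions give a violation gap of $2$ or more. That is precisely where the difficulty of the theorem lives, so the proof is incomplete at its crucial point.

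For comparison, the paper does not make the OR gadget uniform at all: its $\OR{2}$ gadget is only a $2$-realizer and its $\OR{p}$ gadget is merely a realizer with no control over the size of the violation gap. These are used only to prove the weaker statement that every $R\subseteq[2]^r$ admits \emph{some} realizer (the analogue of your Stage~3, with possibly non-uniform violation costs). The uniform $+1$ gap is then obtained by a separate, global trick: given $R$ with $\bar R=\{\bar\boldd_1,\dots,\bar\boldd_{|\bar R|}\}$, one realizes the auxiliary relation $R'\subseteq[2]^{r+|\bar R|}$ whose tuples are either (a state of $R$, all indicators $y_i=1$) or ($\bar\boldd_i$, $y_i=2$, all other indicators $1$), and then attaches to each indicator portal a pendant vertex with list $\{2\}$. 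Every portal state of $R$ --- satisfying or not --- corresponds to a tuple \emph{inside} $R'$ and hence costs exactly the base value $\alpha$ inside the realizer; a violation additionally forces exactly one indicator to take color $2$, costing exactly one pendant-edge deletion, so the gap is uniformly $1$ even though the inner realizer's own violation costs are uncontrolled. Your closing intuition about ``concentrating the defect in a single shared piece'' gestures at this, but without a concrete mechanism of this kind (or an explicit uniform OR gadget with a verified case analysis), Stage~2 remains an unproven claim rather than a proof step.
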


\paragraph*{Removing the Lists} 
Note that gadgets may use lists and therefore, on the way, we first obtain the following lower bounds for the respective list coloring problems.

\begin{restatable}{thm}{LBLHomEDKqSETH}\label{thm:LBLHomEDKqSETH}
	For every $q\ge 2$ and $\eps>0$, there are integers $\sigma$ and $\delta$ such that if an algorithm solves in time $(q-\eps)^p\cdot n^{\bigO(1)}$ every $n$-vertex instance of \listcoloringED{q} that is given with a \core{\sigma}{\delta} of size $p$, then the SETH fails.
\end{restatable}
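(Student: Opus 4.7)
The plan is to reduce from an SETH-hard CSP to \listcoloringED{q} while keeping the hub size equal to the number of CSP variables. The starting point is the theorem of Lampis~\cite{DBLP:journals/siamdm/Lampis20}: for every $\eps>0$ and every $q\ge 2$ there exists $r=r(q,\eps)$ such that, assuming the SETH, no algorithm solves every $n$-variable $r$-ary CSP over $[q]$ in time $(q-\eps)^n\cdot n^{\bigO(1)}$. From such a CSP instance $\calI$ with variables $x_1,\ldots,x_n$ and constraints $C_1,\ldots,C_m$, I would construct an equivalent \listcoloringED{q} instance $(G,L,\tau)$ whose hub consists of exactly $n$ vertices.

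In the construction, introduce fresh vertices $v_1,\ldots,v_n$ with $L(v_i)=[q]$; these will form the hub $Q$. For each constraint $C_j$ with scope $(x_{i_1},\ldots,x_{i_r})$ and allowed relation $R_j\subseteq [q]^r$, attach a private copy $J_j$ of a gadget for $R_j$, identifying the $r$ portals of $J_j$ with $v_{i_1},\ldots,v_{i_r}$ while keeping the remaining vertices of $J_j$ fresh across gadgets. For $q\ge 3$, use the $r$-ary $q$-gadget of \cref{prop:realize-coloring}: a portal coloring in $R_j$ extends to a list coloring of $J_j$ (requiring $0$ edge deletions), while any portal coloring outside $R_j$ requires at least one deletion inside $J_j$. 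For $q=2$, use the $1$-realizer from \cref{cor:K2relations}, which imposes an exact cost of $k_j$ on portal tuples in $R_j$ and exactly $k_j+1$ on tuples outside $R_j$. Finally, set the edge-deletion budget $\tau:=\sum_{j=1}^m k_j$ (with $k_j=0$ in the $q\ge 3$ case). By construction, a coloring $\phi$ of $Q$ satisfies all constraints of $\calI$ if and only if $\phi$ extends to a list coloring of $G$ after deleting at most $\tau$ edges, so the two instances are equivalent.

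It remains to verify the hub parameters. Since distinct gadgets share only their portals, and the portals all lie in $Q$, each component of $G-Q$ is contained in a single $J_j$; hence its size is at most $\sigma:=\max_j|V(J_j)|$ and it has at most the $r$ portals of $J_j$ as neighbors in $Q$, so $\delta\le r$. Both $\sigma$ and $\delta$ are constants depending only on $q$ and $\eps$ through the arity $r$ given by Lampis's theorem. The hub size is $p=n$, and the total number of vertices is $N=n+\bigO(m)$. A hypothetical $(q-\eps)^p\cdot N^{\bigO(1)}$-time algorithm for \listcoloringED{q} equipped with a \core{\sigma}{\delta} would therefore decide $\calI$ in time $(q-\eps)^n\cdot (n+m)^{\bigO(1)}$, contradicting the SETH.

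The main technical work --- building gadgets that realize arbitrary relations --- is already encapsulated in \cref{prop:realize-coloring} (for $q\ge 3$) and \cref{cor:K2relations} (for $q=2$), so the remaining difficulty is purely organizational: I need to check carefully that distinct gadgets meet only at their portals so that the \core{\sigma}{\delta} structure is preserved, and, in the $q=2$ case, that the per-constraint exact costs $k_j$ provided by the $1$-realizer aggregate cleanly so that the threshold $\tau$ separates satisfying from non-satisfying assignments by exactly one unit per violated constraint.
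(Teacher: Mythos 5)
Your proposal is correct, and the core construction (hub of variable vertices, one constant-size gadget per constraint attached at its portals, $\sigma,\delta$ determined by the arity $r(q,\eps)$) is the same skeleton the paper uses. The difference is the starting problem and the bookkeeping of deletion costs. The paper proves \cref{thm:LBLHomEDKqSETH} and \cref{thm:LBLHomEDKqMSH} in one unified reduction from the \emph{decision version of} \MaxCSP{$q$}{$r$} with a constraint-deletion budget $z$: for $q\ge 3$ it introduces, per constraint, one gadget $\calJ_{\boldd}$ for every forbidden tuple $\boldd\notin R$ and takes $P/\gamma_{\boldd}$ copies to equalize violation costs, setting the edge budget to $P\cdot z$; the SETH contradiction is then obtained by observing that \MaxCSP{} with zero budget generalizes the decision CSP ($r$-\textsc{Sat} for $q=2$, Lampis's CSP for $q\ge 3$). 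You instead reduce directly from the decision CSP, which lets you use a single realizer per constraint (with budget $0$ for $q\ge 3$, and $\sum_j k_j$ for $q=2$ via the $1$-realizers of \cref{cor:K2relations}); since the gadgets pairwise share only hub portals, costs decompose additively and your threshold argument is sound. Your route is simpler and fully sufficient for the SETH statement; what it gives up is exactly what the paper's heavier cost-balancing machinery buys, namely the simultaneous M3SH-based bound with universal $\delta=3$ (\cref{thm:LBLHomEDKqMSH}), which needs to count the number of violated constraints rather than merely detect one. One small point to make explicit: merge constraints with identical scopes (as the paper does) so that $m\le n^{r}$ and the constructed graph has size $n^{\bigO(1)}$; otherwise your stated runtime $(q-\eps)^n\cdot (n+m)^{\bigO(1)}$ does not formally match the $(q-\eps)^n\cdot n^{\bigO(1)}$ bound excluded by \cref{thm:csp}.
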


\begin{restatable}{thm}{LBLHomEDKqMSH}\label{thm:LBLHomEDKqMSH}
	For every $q\ge 2$, there is a constant $\sigma_q$ such that, for every $\eps>0$, if an algorithm solves in time $(q-\eps)^p\cdot n^{\bigO(1)}$ every $n$-vertex instance of \listcoloringED{q} that is given with a \core{\sigma_q}{3} of size $p$, then the \msh fails.
\end{restatable}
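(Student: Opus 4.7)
The approach will be to reduce from \MaxCSP{$q$}{$3$}, which by \cref{thm:M3SHtoMaxCSP} admits no $(q-\eps)^n \cdot n^{\bigO(1)}$ algorithm assuming the \msh. Given an $n$-variable, $m$-constraint instance, I will introduce one hub vertex per CSP variable with list $[q]$; for each constraint $C_j$ with relation $R_j \subseteq [q]^3$, I attach a constant-size gadget $\calG_j$ whose three portals are identified with the hub vertices scoped by $C_j$. Each $\calG_j$ becomes a separate component of $G - Q$, so the hub $Q$ has size $p = n$, every component is adjacent to exactly $\delta = 3$ hub vertices, and each component has at most $\sigma_q$ vertices, where $\sigma_q$ will depend only on $q$.

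The crux of the reduction is exhibiting a constant-size $1$-realizer for every relation $R \subseteq [q]^3$. For $q = 2$ this is provided directly by \cref{cor:K2relations}. For $q \ge 3$, I would build it from the decision gadgets of \cref{prop:realize-coloring}: for every forbidden tuple $\boldd \in [q]^3 \setminus R$, the proposition yields a constant-size gadget $F_{\boldd}$ that admits a proper list coloring extending any tuple of $[q]^3$ except $\boldd$. Overlaying one copy of $F_{\boldd}$ per forbidden tuple (safely identifying portals, which by \cref{prop:realize-coloring} form an independent set and hence avoid multi-edges) forces at least one edge deletion whenever the portals take a value outside $R$. A padding step, either by taking identical copies and absorbing slack into a baseline offset, or by ensuring that a single deletion inside $F_{\boldd}$ always unblocks the exceptional coloring $\boldd$, will make the deletion cost exactly one per violated constraint.

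Summing over constraints, the optimal deletion count in the constructed \listcoloringED{q} instance equals the number of violated CSP constraints plus a fixed additive offset, and the claimed reduction follows: a $(q-\eps)^p \cdot n^{\bigO(1)}$ algorithm for \listcoloringED{q} on graphs equipped with a \core{\sigma_q}{3} would yield a $(q-\eps)^n \cdot n^{\bigO(1)}$ algorithm for \MaxCSP{$q$}{$3$}, refuting the \msh via \cref{thm:M3SHtoMaxCSP}.

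The main obstacle will be engineering the $1$-realizer in the $q \ge 3$ case. Converting the decision gadget from \cref{prop:realize-coloring} into an edge-deletion gadget in which \emph{every} violating tuple costs the same number of deletions, and not merely at least that many, is the delicate part. It requires either designing $F_{\boldd}$ so that one edge deletion inside $F_{\boldd}$ always suffices to accommodate $\boldd$, or a uniform compensation mechanism across the composite gadget, all while keeping the portals independent and the total gadget size bounded by a constant $\sigma_q$ independent of the instance.
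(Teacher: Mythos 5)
Your reduction skeleton is the same as the paper's: start from \MaxCSP{$q$}{$3$} via \cref{thm:M3SHtoMaxCSP}, put one list-$[q]$ hub vertex per variable, attach one constant-size constraint gadget per constraint on its three scoped hub vertices, and read off a \core{\sigma_q}{3} of size $n$. For $q=2$ your invocation of \cref{cor:K2relations} is exactly what the paper does, and that part is complete.

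For $q\ge 3$, however, the step you yourself flag as the ``main obstacle'' is a genuine gap, and neither of your two proposed fixes closes it as stated. Overlaying, for each forbidden tuple $\boldd\notin R$, the gadget $F_{\boldd}$ of \cref{prop:realize-coloring} yields a composite gadget in which a tuple in $R$ costs $0$ deletions, while the violating tuple $\boldd$ costs some $\gamma_{\boldd}\ge 1$ deletions that \emph{depends on $\boldd$} (and on $R$). Taking \emph{identical} numbers of copies and ``absorbing slack into a baseline offset'' cannot repair this: the baseline only shifts the cost of tuples in $R$, whereas here different violating tuples of the same constraint have different costs, so the total deletion count no longer determines the number of violated constraints. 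Your alternative, arranging that a single deletion inside $F_{\boldd}$ always unblocks $\boldd$ (i.e.\ $\gamma_{\boldd}=1$), is not established by \cref{prop:realize-coloring} and is not proved in your sketch. The paper's resolution is different and simpler than building a $1$-realizer: it computes each $\gamma_{\boldd}$ (constant time, constant-size gadget), sets $P$ to be a common multiple of all the $\gamma$-values over all tuples of arity at most $3$ (namely their product), and attaches $P/\gamma_{\boldd}$ copies of $F_{\boldd}$ per forbidden tuple, so that \emph{every} violated constraint costs exactly $P$ and satisfied constraints cost $0$; the budget is then $z'=P\cdot z$. Note that a $1$-realizer is never needed, only a violation cost that is uniform across all forbidden tuples of all constraints, and the proportional-multiplicity trick (copies are safe because the portals of the gadgets from \cref{prop:realize-coloring} form an independent set) achieves this while keeping each component of constant size with exactly $3$ hub neighbours. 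Without this (or an equivalent) uniformization argument, your accounting ``optimal deletions $=$ number of violated constraints plus a fixed offset'' does not hold, so the $q\ge 3$ case of the proof is incomplete.
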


In a second step, we show how to remove the lists by adding some additional object of size roughly $q$ (a central vertex or a $q$-clique for $q=2$ or $q\ge3$, respectively). This addition is then considered to be part of the \coreword{}, thereby increasing the size of the \coreword{} by some constant. However, this modification means that for the other gadgets the number of neighbors in the \coreword{} increases slightly. This is irrelevant for the SETH-based lower bound, but it leads to a slight increase in the universal constant $\delta$ that we obtain for our \msh-based lower bounds for the coloring problems without lists.

\subsection{Covering, Packing, and Partitioning}\label{sec:overview_SCCequiv}
      Theorem~\ref{thm:mainpacking} gives lower bounds for \partition{\triangle} and \packing{\triangle} based on the Set Cover Conjecture. This hypothesis was formulated in terms of the \textsc{$d$-Set Cover} problem. For our purposes, it is convenient to consider slightly different covering/partitioning problems. To facilitate our reductions and as a tool for future reductions of this type, we establish equivalences between eight different covering type problems. 
      Before we make this more formal in \cref{thm:SCCequivalent}, let us briefly introduce the corresponding problems.
      
      First, we use \eqdsetcover and \leqdsetcover to distinguish between the problem for which the sets have size exactly $d$ or at most $d$, respectively. For \partition{\triangle}, it is more natural to start a reduction from the partitioning problems \eqdsetpartition or \leqdsetpartition, in which the task is to find pairwise disjoint sets that cover the universe.
      The \leqdsetpartition problem can be considered as a decision problem. However, we can also consider the corresponding optimization problem in which the task is to minimize the number of selected sets, and we use \leqdsetpartitionsets to denote this problem. Further variants are the optimization problems \eqdsetpackingsets and \leqdsetpackingsets, in which we need to select the maximum number of pairwise disjoint sets. For \textsc{$\le d$-Set Packing}, an equally natural goal is to maximize the total size of the selected sets (for \textsc{$=d$-Set Packing}, this is of course equivalent to maximizing the number of selected sets). So we use \leqdsetpackingunion to denote the packing problem in which the union/total size of the selected sets is maximized.

      Given the large number of variants of \textsc{$d$-Set Cover}, one may wonder how they are related to each other. In particular, does the SCC imply lower bounds for these variants? There are obvious reductions between some of these problems (e.g., from \textsc{$=d$-Set Cover} to \textsc{$\le d$-Set Cover}) and there are also reductions that are not so straightforward. We fully clarify this question by showing that choosing \emph{any} of these problems in the definition of the SCC leads to an equivalent statement. Thus in our proofs to follow we can choose whichever form is most convenient for us. Knowing this equivalence could prove useful for future work as well.
      
      \begin{restatable}{thm}{SCCequivalent}\label{thm:SCCequivalent}
        Suppose that for one of the problems below, it is true that for every $\epsilon>0$, there is an integer $d$ such that the problem cannot be solved in time $(2-\epsilon)^n\cdot n^{\bigO(1)}$, where $n$ is the size of the universe. Then this holds for all the other problems as well. In particular, any of these statements is equivalent to the SCC.
        \begin{myenumerate}
        \item \eqdsetcover
        \item \eqdsetpartition
        \item \eqdsetpackingsets
        \item \leqdsetcover
        \item \leqdsetpartition
        \item \leqdsetpartitionsets
        \item \leqdsetpackingsets
        \item \leqdsetpackingunion
        \end{myenumerate}
      \end{restatable}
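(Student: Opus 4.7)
The plan is to prove the equivalence by constructing a cycle of polynomial-time reductions among the eight problems in which every reduction takes an instance with universe of size $n$ to an instance of the next problem with universe of size $n+\bigO(1)$ (the hidden constant depending only on $d$), and correctly translates the relevant objective. Composing any constant number of such reductions preserves the $n+\bigO(1)$ bound, so a hypothetical $(2-\varepsilon)^{n'}\cdot n'^{\bigO(1)}$ algorithm for one problem would yield a $(2-\varepsilon')^n\cdot n^{\bigO(1)}$ algorithm for every other, with $\varepsilon'>0$ and a possibly larger $d$.

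Several directions come for free from the definitions. Every $=d$-instance is literally a $\le d$-instance of the same type, giving $1\to 4$, $2\to 5$, and $3\to 7$; the decision version $5$ reduces trivially to the optimization version $6$; on $=d$-instances the two packing objectives differ only by a factor of $d$, so $3$ connects to $8$; and on $=d$-instances the identity $|\bigcup P|=d\,|P|$ yields the identity $(\text{min partition } \#\text{sets})=n-(d-1)(\text{max packing }\#\text{sets})$, which closes $2\leftrightarrow 3$.

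The two workhorse nontrivial constructions I would use are: (i) a \emph{padding} turning a $\le d$-Set Cover instance into a $=d$-Set Cover instance --- introduce $d$ fresh dummies $e_1,\dots,e_d$, replace every $S\in\mathcal{F}$ of size $k<d$ by $S\cup\{e_1,\dots,e_{d-k}\}$, and include one extra forced set $\{e_1,\dots,e_d\}$, so that the universe grows by $d$ and the optimum by exactly $1$; and (ii) the \emph{subset-closure} construction $\mathcal{F}':=\{S'\subseteq S:S\in\mathcal{F},\, S'\ne\emptyset\}$, which preserves the universe, multiplies $|\mathcal{F}|$ by at most $2^d=\bigO(1)$, and via a greedy overlap-elimination argument yields $(\text{min cover of }(U,\mathcal{F}))=(\text{min partition of }(U,\mathcal{F}'))$, bridging Set Cover and Set Partition. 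Composing (i) and (ii) with the easy reductions above closes the cycle through $\{1,4,6\}$ and, combined with the packing identities of the previous paragraph, extends it to all eight problems.

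The main obstacle is that one cannot pad $\le d$-Set Partition or $\le d$-Set Packing instances directly: preserving disjointness would force every small set to be padded with a \emph{private} block of dummies, inflating the universe by $\Omega(md)$ elements --- polynomial in $n$ --- and destroying the $(2-\varepsilon)^n$ regime. My fix is to never pad on the partition/packing side: I always detour through Set Cover, where padding costs only $O(d)$ extra elements, and use subset-closure to move into and out of the cover world (augmented, for the decision variants, by a small constant-size gadget that encodes the target number of chosen sets into the instance). Verifying that each edge of the cycle stays within the $n+\bigO(1)$ universe budget and that the correct objective (decision, min-\#sets, max-\#sets, max-Union) is preserved is the bookkeeping core of the proof.
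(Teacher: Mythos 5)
There is a genuine gap. Your reductions only cover the ``easy half'' of the equivalence: the edges you list (special case $\to$ generalization, decision $\to$ optimization, $=d$ packing count vs.\ union, padding \leqdsetcover to \eqdsetcover, and subset-closure from \leqdsetcover to \leqdsetpartitionsets) all point \emph{into} the bounded-size partition/packing problems, but you give no reduction \emph{out of} problems 5--8. Concretely, your digraph of hardness implications has no edge leaving \leqdsetpartition, \leqdsetpartitionsets, \leqdsetpackingsets, or \leqdsetpackingunion, so hardness of any of these implies nothing; worse, even starting from the SCC (problem 4) you only reach problems 1 and 6, so you cannot conclude hardness of, e.g., \eqdsetpartition, which is the direction the paper actually needs. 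Your proposed fix --- ``never pad on the partition/packing side, detour through Set Cover'' --- does not exist in the required direction: there is no easy way to use a Set Cover oracle to solve a $\le d$ partition or packing instance (a cover need not be disjoint, and max-packing is not a covering quantity), and the ``small constant-size gadget that encodes the target number of chosen sets'' is exactly the step you cannot do in $O(1)$ universe elements: constraining how many of $\Theta(n)$ sets are selected needs roughly one fresh ``token'' element per selected group. This is precisely why the paper introduces the signature-guessing/join-instance machinery (\cref{lem:leqPTstoPT,lem:leqPTtoeqPT,lem:leqPCtoeqPC}) and the block-splitting reduction \cref{lem:leqPCtoleqPTs}: these merge sets into uniform size $c\cdot d!+1$ and control the count via new elements $E_i$, at the cost of $\Theta(n/(c\,(d-1)!))$ (not $O(1)$) extra universe elements and an $\eps$-dependent choice of $c$ --- which is also why the theorem must let $d$ grow with $\eps$. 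Your blanket requirement that every reduction increase the universe by only $n+O(1)$ is both unachieved by your plan for these steps and unnecessary.

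A secondary error: the claimed identity on $=d$ instances, $(\text{min partition \#sets})=n-(d-1)(\text{max packing \#sets})$, is false in general --- on an $=d$ instance any feasible partition has exactly $n/d$ sets, so the partition problem only reveals whether the maximum packing equals $n/d$, not its value. Hence only the direction \eqdsetpartition hard $\Rightarrow$ \eqdsetpackingsets hard (the paper's \cref{obs:eqPTtoeqPC}) is valid; your ``$2\leftrightarrow 3$'' does not give the reverse, and the genuine reverse-type edges the paper uses (\cref{lem:leqPCtoeqPC,lem:leqPCtoleqPTs}) again need the structured-family machinery you omit. On the positive side, your subset-closure argument matches the paper's \cref{lem:leqSCtoleqPTs}, and your $O(d)$-dummy padding from \leqdsetcover to \eqdsetcover is a correct (and not-in-the-paper) shortcut, but neither substitutes for the missing reductions out of the $\le d$ partition/packing problems.
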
        

	To make the statements about relationships between the problems from the list in \cref{thm:SCCequivalent} more concise, it will be convenient to introduce some shorthand notation.
	Let $\mathbb{A} = \{A_d\}_{d \geq 1}$ and $\mathbb{B} =
	\{B_d\}_{d \geq 1}$ be two families of problems where $A_d$ and
	$B_d$ belong to the list in \cref{thm:SCCequivalent}. To shorten notation, we speak of an \emph{$n$-element instance} if the universe $U$ of an instance has size $n$.
	We say that $\mathbb{A}$ is $\hard$ if the following lower bound holds
	\begin{quote}
		For each $\eps > 0$ there is some $d\geq 1$ such that no algorithm solves $A_d$ on all $n$-element instances in time $(2 - \eps)^{n} \cdot n^{\Oh(1)}$.
	\end{quote}

	Using this language, the SCC states that $\{\leqdsetcover\}_{d\ge1}$ is $\hard$.
	To establish \cref{thm:SCCequivalent}, we show reductions, stating that if $\calA$ is $\hard$ then $\calB$ is $\hard$ as well. Spelled out this means:
	
	\begin{quote}
		Suppose for each $\eps > 0$ there is some $d\geq 1$ such that no algorithm solves $A_d$ on all $n$-element instances in time $(2 - \eps)^{n} \cdot n^{\Oh(1)}$.\\[.5em]
		Then, for each $\eps > 0$ there is some $d'\geq 1$ such that no algorithm solves $B_{d'}$ on all $n$-element instances in time $(2 - \eps)^{n} \cdot n^{\Oh(1)}$. 
	\end{quote}
	
	This shows that this is really a relationship between two classes of problems, and not necessarily a relationship between $A_d$ and $B_d$ for the same value $d$.
	To make this distinction explicit, we write \eqsetcoverclass if we refer to the class of problems $\{\eqdsetcover\}_{d\ge 1}$. We use analogous notation for the other problems on the list.
	For example, a simple observation is that if \eqsetcoverclass is $\hard$ then so is \leqsetcoverclass as the latter is a generalization of the former.
	The reductions we use to prove \cref{thm:SCCequivalent} are illustrated in \cref{fig:reductionoverview}.
	
	\begin{figure}[t]
		\centering
		\begin{tikzpicture}[ipe import]
  \filldraw[-latex, fill=white]
    (374.2074, 743.4706)
     -- (374.2074, 695.4706);
  \filldraw[-latex, fill=white]
    (68.3115, 695.4706)
     -- (68.3115, 743.4706);
  \node[ipe node]
     at (80.284, 790.974) {Obs.~\ref{obs:eqPCtoleqPC2}};
  \node[ipe node]
     at (9.347, 790.974) {Lem.~\ref{lem:leqPCtoeqPC}};
  \node[ipe node]
     at (203.586, 742.536) {Obs.~\ref{obs:eqPCtoleqPC1}};
  \node[ipe node]
     at (192.731, 605.936) {Obs.~\ref{obs:eqSCtoleqSC}};
  \node[ipe node]
     at (103.63, 643.05) {Obs.~\ref{lem:eqPTtoeqSC}};
  \node[ipe node]
     at (289.401, 643.05) {Lem.~\ref{lem:leqSCtoleqPTs}};
  \node[ipe node]
     at (18.249, 713.02) {Obs.~\ref{obs:eqPTtoeqPC}};
  \node[ipe node]
     at (386.417, 713.02) {Lem.~\ref{lem:leqPCtoleqPTs}};
  \node[ipe node]
     at (257.199, 690.154) {Lem.~\ref{lem:leqPTstoPT}};
  \node[ipe node]
     at (115.705, 690.154) {Lem.~\ref{lem:leqPTtoeqPT}};
  \draw[shift={(59.955, 814.833)}, xscale=0.0087, yscale=-1.0165, rotate=30.4186, -latex]
    (0, 0)
     -- (80, 0);
  \draw[shift={(74.993, 815.083)}, xscale=0.0076, yscale=-1.021, rotate=-149.4649, latex-]
    (0, 0)
     -- (-80, 0);
  \begin{scope}[shift={(-176.704, -237.513)}, xscale=1.0742, yscale=1.3403]
    \node[ipe node, anchor=center]
       at (512.871, 683.019) {$\leqsetpartitionsetsclass$};
    \draw
      (579.674, 676.566) rectangle (446.024, 689.825);
  \end{scope}
  \begin{scope}[shift={(-266.284, -338.87)}, xscale=1.1339, yscale=1.3686]
    \node[ipe node, anchor=center]
       at (482.803, 683.863) {$\leqsetcoverclass$};
    \draw
      (519.581, 677.902) rectangle (446.024, 689.825);
  \end{scope}
  \begin{scope}[shift={(-174.489, -80.5331)}, xscale=1.0798, yscale=1.2264]
    \node[ipe node, anchor=center]
       at (508.128, 683.019) {$\leqsetpackingunionclass$};
    \draw
      (570.232, 676.212) rectangle (446.024, 689.825);
  \end{scope}
  \begin{scope}[shift={(-416.42, -268.319)}, xscale=1.1466, yscale=1.2642]
    \node[ipe node, anchor=center]
       at (482.802, 684.541) {$\eqsetcoverclass$};
    \draw
      (519.581, 679.257) rectangle (446.024, 689.825);
  \end{scope}
  \begin{scope}[shift={(-477.403, -233.319)}, xscale=1.1088, yscale=1.3302]
    \node[ipe node, anchor=center]
       at (491.1, 684.541) {$\eqsetpartitionclass$};
    \draw
      (536.177, 679.257) rectangle (446.024, 689.825);
  \end{scope}
  \begin{scope}[shift={(-475.846, -100.425)}, xscale=1.0671, yscale=1.2555]
    \node[ipe node, anchor=center]
       at (508.836, 683.018) {$\eqsetpackingsetsclass$};
    \draw
      (571.649, 676.212) rectangle (446.024, 689.825);
  \end{scope}
  \begin{scope}[shift={(-476.201, -3.9739)}, xscale=1.0677, yscale=1.2264]
    \node[ipe node, anchor=center]
       at (508.837, 683.017) {$\leqsetpackingsetsclass$};
    \draw
      (571.649, 676.212) rectangle (446.024, 689.825);
  \end{scope}
  \draw[shift={(67.177, 661.094)}, xscale=1.1046, yscale=1.4751, rotate=16.2086, -latex]
    (0, 0)
     -- (51.1158, -51.116);
  \draw[shift={(279.783, 612.866)}, xscale=0.8504, yscale=0.975, -latex]
    (0, 0)
     -- (111.4197, 49.413);
  \begin{scope}[shift={(-334.286, -207.862)}, xscale=1.1078, yscale=1.2943]
    \node[ipe node, anchor=center]
       at (491.101, 683.863) {$\leqsetpartitionclass$};
    \draw
      (536.177, 677.902) rectangle (446.024, 689.825);
  \end{scope}
  \filldraw[shift={(146.227, 677.27)}, xscale=-0.194, yscale=-1, -latex, fill=white]
    (0, 0)
     -- (80, 0);
  \filldraw[shift={(288.823, 677.27)}, xscale=-0.194, yscale=-1, -latex, fill=white]
    (0, 0)
     -- (80, 0);
  \filldraw[shift={(186.224, 596.263)}, xscale=0.5794, yscale=1.2811, -latex, fill=white]
    (0, 0)
     -- (80, 0);
  \draw[shift={(298.474, 757.013)}, xscale=4.8642, yscale=0, latex-]
    (0, 0)
     -- (-32, 32);
\end{tikzpicture}
		\caption{An overview of the proof of \cref{thm:SCCequivalent}. An arrow from $\calA$ to $\calB$ indicates an implication stating that if $A$ is $\hard$ then $\calB$ is $\hard$ as well.}
		\label{fig:reductionoverview}
	\end{figure}

	\subsection{Triangle Partition and Triangle Packing}

	\begin{figure}
	    \centering
	    \includegraphics{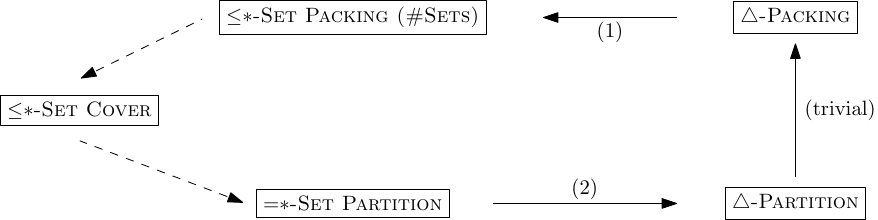}
	\caption{An overview of the reductions in the proof of
		\cref{thm:mainpacking}. The two dashed arrows refer to $\hardness$ reductions
		from \cref{thm:SCCequivalent}. To establish these two connections, note that we actually utilize all reductions shown in \cref{fig:reductionoverview}, except for
		the simple \cref{obs:eqSCtoleqSC,obs:eqPTtoeqPC}. The arrows
		annotated with (1) and (2) denote the reductions proved in
	\cref{sec:clique-packing-partitioning}.}
	    \label{fig:packing_partitioning_lb}
	\end{figure}
	
	Now let us discuss the proof of \cref{thm:mainpacking} that can be found in 	\cref{sec:clique-packing-partitioning}. 
	The proof consists of two main steps: (1) a reduction from \eqsetpartitionclass to $\partition{\triangle}$, and (2) a reduction from $\packing{\triangle}$ to \leqsetpackingsetsclass (see \cref{fig:packing_partitioning_lb}).
	Recall that by \cref{thm:SCCequivalent}, assuming the SCC, all of \eqsetpartitionclass, \leqsetpackingsetsclass, and \leqsetcoverclass are $\hard$. Finally, $\partition{\triangle}$ trivially reduces to $\packing{\triangle}$, so indeed,  the statements in  \cref{thm:mainpacking} are equivalent.

	\paragraph{Reducing \textsc{Set Partition} to $\partition{\triangle}$.}
	We start with step (1), i.e., reducing an instance $(U,\calF)$ of \eqdsetpartition to an equivalent instance $G$ of $\partition{\triangle}$.
	With a simple technical trick we can ensure that $d$ is divisible by 3.
	
The main building block used in the reduction is the so-called \trieq gadget.
For fixed $d$, it is a graph with $d$ designated vertices called \emph{portals}. The gadget essentially has exactly two triangle packings that cover all non-portal vertices:
\begin{itemize}
\item one that also covers \emph{all portals} (i.e., is actually a triangle partition), and
\item one that covers \emph{no portal}.
\end{itemize}
Now the construction of $G$ is simple: we introduce the set $Q$ containing one vertex for each element of $U$, and for each set $S \in \mathcal{F}$ we introduce a copy of the \trieq gadget whose portals represent elements of $S$ and are identified with corresponding vertices from $Q$.
It is straightforward to verify that there is $\mathcal{F}' \subseteq \mathcal{F}$ that partitions $U$ if and only if $G$ has a triangle partition: the sets from $\mathcal{F}'$ correspond to \trieq gadgets whose non-portal vertices are covered in the first way. Note that $Q$ is a \core{\sigma}{d} of $G$, where $\sigma$ is the number of vertices of the \trieq gadget, i.e., is a constant that depends only on $d$.
	
\paragraph{Reducing $\packing{\triangle}$ to \textsc{Set Packing}.}
Now let us consider a graph $G$ given with a \core{\sigma}{\delta} $Q$ of size $p$, and an integer $t$. We will show that a hypothetical fast algorithm for \leqdsetpackingsets can be used to determine whether $G$ has a triangle packing of size at least $t$.

For simplicity of exposition, assume that $G$ has no triangles contained in $Q$; dealing with such triangles is not difficult but would complicate the notation. We say that a component $C$ of $G - Q$ is \emph{active} in some triangle packing $\Pi$ if there is a triangle in $\Pi$ that intersects both $C$ and $Q$. Note that for any triangle packing there are at most $p$ active components.

We would like to guess components that are active for some (unknown) solution $\Pi$. However, this results in too many branches.
We deal with it by employing color-coding and reducing the problem to its auxiliary \emph{precolored variant}.
Suppose for a moment that we are given a coloring $\psi$ of components of $G-Q$ into $p/c$ colors, where $c$ is a large constant, with a promise that at most $c$ components in each color are active in $\Pi$.

For a color $i \in [p/c]$, let $\calC_i$ denote the set of components of $G-Q$ colored $i$ by $\psi$. The \emph{contribution} of the color $i$ to $\Pi$ is the number of triangles that intersect vertices in components from $\calC_i$.
Note that the size of $\Pi$ is the sum of contributions of all color (since we assumed that there are no triangles contained in $Q$).
What can be said about the contribution of $i$?
Certainly picking a maximum triangle packing in the graph consisting only of components from $\calC_i$ is a lower bound. Let $X_i$ denote the number of triangles in such a triangle packing and note that $X_i$ can be computed in polynomial time as each component of $G-Q$ is of constant size.
Moreover, for each active component $C \in \calC_i$, there are at most $\sigma$ triangles that intersect both $C$ and $Q$ (as each of them has to use a distinct vertex from $C$). As, by the promise on $\psi$, there are at most $c$ active components in $\calC_i$, we observe that the contribution of $i$ is at most $X_i + c\sigma$.
We exhaustively guess the contribution of each color by guessing the offset $q_i$ against $X_i$; it gives a constant number of options per color.
We reject guesses where the total contribution of all colors, i.e., the number of all triangles packed, is less than $t$.

For each color $i$, we enumerate all sets $S \subseteq Q$ that are candidates for these vertices of $Q$ that form triangles with vertices from components of $\calC_i$; call such sets \emph{$i$-valid}.
An $i$-valid set $S$ must satisfy the following two conditions.
First, the size of $S$ is at most $2c\sigma$, as there are at most $c\sigma$ vertices in active components from $\calC_i$ and each such vertex belongs to a triangle with at most two vertices from $Q$. Second, there exists a triangle packing $\Pi_S$ in the graph induced by $S$ together with components of $\calC_i$ such that
\begin{itemize}
\item at most $c$ elements from $\calC_i$ are active in $\Pi_S$ (this follows from the promise on $\psi$), and
\item the number of triangles in $\Pi_S$ is at least $X_i + q_i$ (by our guess of $q_i$).
\end{itemize}
It is not difficult to verify that $i$-valid sets can be enumerated in polynomial time, where the degree of the polynomial depends on $c$ and $\sigma$.

Now we are ready to construct an instance $(U, \calF, p/c)$ of \leqdsetpackingsets.
The universe $U$ is $Q \cup \{a_i ~|~ i\in [p/c]\}$, i.e., it consists of the hub of $G$ and one extra vertex per color.
For each $i$-valid set $S$, we include in $\calF$ the set $S \cup \{a_i\}$.
Again, one can verify that $\calF$ contains $p/c$ pairwise disjoint sets if and only if $G$ has a packing of $t$ triangles that agree both with $\psi$ and with the guessed values of $q_i$'s.

By adjusting $c$, we can ensure that the whole algorithm works in time $(2-\epsilon')^p \cdot |V(G)|^{\bigO(1)}$, for some $\epsilon' >0$, provided that we have a fast algorithm for \leqdsetpackingsets.

The only thing left is to argue how we obtain the coloring $\psi$ satisfying the promise. Here we use \emph{splitters} introduced by Naor, Schulman, and Srinivasan~\cite{NaorSS95}.
Informally, a splitter is a family of colorings of a ``large set'' $\mathcal{X}$, such that for each ``small subset'' $\mathcal{Y} \subseteq \mathcal{X}$ there is a coloring that splits $\mathcal{Y}$ evenly.
In our setting, the ``large set'' $\mathcal{X}$ is the set of all components of $G-Q$ and the ``small subset'' $\mathcal{Y}$ is the set of all active components with respect to some fixed (but unknown) solution; recall that there are at most $p$ such active components.
Since our colorings use $p/c$ colors, we are sure that there is some $\psi$ for which at most $\frac{p}{p/c}=c$ components in each color are active.
Calling the result of Naor, Schulman, and Srinivasan~\cite{NaorSS95}, we can find a small splitter $\Psi$, and then just exhaustively try every coloring $\psi \in \Psi$.
Again, carefully adjusting the constants, we can ensure that the overall running time is  $(2-\epsilon)^p \cdot |V(G)|^{\bigO(1)}$, for some $\epsilon >0$.

\section{Preliminaries}\label{sec:prelims} \label{sec:problem-definitions}
For an integer $k$, by $[k]$ we denote $\{1,\ldots,k\}$. For a set $X$, by $2^X$ we denote the family of all subsets of $X$.

\paragraph{Graph theory.}
Let $G$ be a graph. By $V(G)$ and $E(G)$ we denote, respectively, the vertex set and the edge set of $G$.
Let $X \subseteq V(G)$, by $G[X]$ we denote the graph induced by $X$. By $G-X$ we denote the graph obtained form $G$ by removing all vertices in $X$ along with incident edges, i.e., $G[V(G) \setminus X]$. For a set $X \subseteq E(G)$, by $G \setminus X$ we denote the graph obtained by removing all edges in $X$, i.e., $(V(G),E(G) \setminus X)$.
A vertex is \emph{isolated} if its neighborhood is empty.

\paragraph{\boldmath Treewidth and \core{\sigma}{\delta}s.}
Consider a graph with a \core{\sigma}{\delta} $Q$ of size $p$. Introducing a bag that contains $Q$ that is the center of a star whose leaves are $Q\cup C_i$ for each connected component $C_i$ of $G-Q$. Then this is a tree decomposition of $G$ of width at most $p+\sigma-1$. We state this observation formally.
\begin{obs}\label{obs:coretotw}
	For some $\sigma, \delta\ge 1$, let $G$ be a graph given with a \core{\sigma}{\delta} of size $p$. One can obtain a tree decomposition of width less than $p+\sigma$ in time polynomial in the size of $G$.
\end{obs}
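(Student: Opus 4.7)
The plan is to make explicit the star decomposition sketched in the paragraph preceding the observation, and then verify the three tree-decomposition axioms together with the width and runtime bounds. Let $C_1, \ldots, C_k$ be the connected components of $G-Q$, which can be enumerated in linear time via BFS/DFS. I will build a tree $T$ on $k+1$ nodes: a central node $r$ with bag $B_r = Q$, and for each $i \in [k]$ a leaf $\ell_i$ adjacent to $r$ with bag $B_{\ell_i} = Q \cup V(C_i)$.

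Next, I verify that $(T, \{B_x\}_x)$ is a tree decomposition of $G$. Coverage of vertices is immediate: every $v \in Q$ lies in $B_r$, and every $v \in V(C_i)$ lies in $B_{\ell_i}$. For edge coverage, an edge $uv \in E(G)$ falls into exactly one of three cases: both endpoints in $Q$ (covered by $B_r \subseteq B_{\ell_1}$), both endpoints in some component $C_i$ (covered by $B_{\ell_i}$), or one endpoint in $Q$ and the other in some $C_i$ (again covered by $B_{\ell_i}$); here I use the fact that there are no edges between distinct components $C_i$ and $C_j$ of $G - Q$. For the connectivity (subtree) condition, each $v \in Q$ belongs to every bag so its set of bags is the whole tree $T$, and each $v \in V(C_i)$ belongs only to $B_{\ell_i}$, which is trivially connected.

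For the width bound, note that $|B_r| = p$ and $|B_{\ell_i}| = p + |V(C_i)| \le p + \sigma$, so the maximum bag size is at most $p + \sigma$ and the width is at most $p + \sigma - 1 < p + \sigma$, as claimed. The construction runs in linear time (compute components of $G-Q$, then write down $k+1$ bags), which is polynomial in the size of $G$.

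There is essentially no technical obstacle here; the statement simply formalizes the elementary observation that a hub gives a star-shaped tree decomposition, and it is used later only to transfer lower bounds parameterized by hub size to lower bounds parameterized by treewidth. The only point worth being careful about is that the $\delta$ parameter does not enter the width bound at all: only the component-size bound $\sigma$ matters, and $\delta$ plays no role in the construction.
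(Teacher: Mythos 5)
Your construction is exactly the one the paper sketches in the paragraph preceding the observation: a star with center bag $Q$ and a leaf bag $Q \cup V(C_i)$ for each component $C_i$ of $G-Q$, giving width at most $p+\sigma-1$ in polynomial (indeed linear) time. The verification of the three axioms and the remark that $\delta$ plays no role are correct, so the proposal matches the paper's proof.
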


\paragraph{Variants of (list) colorings.}
In \cref{sec:intro}, we defined the \coloring{q} problem as well as its vertex and edge deletion variant.
We also use a generalization of vertex colorings that include lists.
Formally, the \listcoloring{q} problem takes as input a graph $G$ together with a \emph{list} function $L : V(G) \to  2^{[q]}$. The task is then to compute a $q$-coloring $\phi\from V(G)\to [q]$ that respects lists $L$, i.e., with $\phi(v) \in L(v)$ for all $v \in V(G)$.
We say that such an assignment $\phi$ is a \emph{proper list coloring} of $(G,L)$.

We also use the corresponding vertex and edge deletion variant.
In the \listcoloringVD{q} (resp. \listcoloringED{q}) problem we ask for a smallest set $X$ of \emph{vertices} (resp. \emph{edges}) such that $G-X$ (resp. $G \setminus X$) admits a proper list coloring that respects the lists $L$.

Note that \listcoloringVD{q} and \listcoloringED{q} are optimization problems. Sometimes it will be convenient to consider their corresponding decisions versions, when we are additionally given an integer $k$ and we ask whether the instance graph can be modified into a yes-instance of by removing at most $k$ vertices/edges. In general, we will show algorithms for the optimization version, and lower bounds for the decision version.

\paragraph{Gadgets.}
Let $J$ be a graph together with a list assignment $L\from V(J)\to 2^{[q]}$ and $r$ distinguished vertices $\boldx=(x_1,\ldots, x_r)$ from $J$. Then we refer to the tuple $\calJ=(J, L, \boldx)$ as an \emph{$r$-ary $q$-gadget}.
We might not specify $r$ nor $q$ in case they are clear from the context.
The vertices $x_1,\ldots,x_r$ are called \emph{portals}.

\paragraph{Definitions for Set Covering, Partitioning, and Packing Problems.}
For reference and to be clear about the distinctions between the different problems that appear in \cref{sec:equivHypo}, we now list all their definitions.

\defproblem{\eqdsetcover}
{A set $U$ of elements, a set family $\calF \subseteq 2^{U}$ where each set has size exactly $d$ such that $U = \bigcup \calF$, an integer $t$.}
{Is there a collection of at most $t$ sets from $\calF$ whose union is $U$?}

\defproblem{\leqdsetcover}
{A set $U$ of elements, a set family $\calF \subseteq 2^{U}$ where each set has size at most $d$ such that $U = \bigcup \calF$, an integer $t$.}
{Is there a collection of at most $t$ sets from $\calF$ whose union is $U$?}

\defproblem{\eqdsetpartition}
{A set $U$ of elements, a set family $\calF \subseteq 2^{U}$ where each set has size exactly $d$ such that $U = \bigcup \calF$.}
{Is there a collection of ($\abs{U}/d$) sets from $\calF$ that is a partition of $U$?}

\defproblem{\leqdsetpartition}
{A set $U$ of elements, a set family $\calF \subseteq 2^{U}$ where each set has size at most $d$ such that $U = \bigcup \calF$.}
{Is there a collection of sets from $\calF$ that is a partition of $U$?}

\defproblem{\leqdsetpartitionsets}
{A set $U$ of elements, a set family $\calF \subseteq 2^{U}$ where each set has size at most $d$ such that $U = \bigcup \calF$, an integer $t$.}
{Is there a collection of at most $t$ sets from $\calF$ that is a partition of $U$?}

\defproblem{\eqdsetpackingsets}
{A set $U$ of elements, a set family $\calF \subseteq 2^{U}$ where each set has size exactly $d$, an integer $t$}
{Is there a collection of at least $t$ sets from $\calF$ that are pairwise disjoint?}

\defproblem{\leqdsetpackingsets}
{A set $U$ of elements, a set family $\calF \subseteq 2^{U}$ where each set has size at most $d$, an integer $t$.}
{Is there a collection of at least $t$ sets from $\calF$ that are pairwise disjoint?}

\defproblem{\leqdsetpackingunion}
{A set $U$ of elements, a set family $\calF \subseteq 2^{U}$ where each set has size at most $d$, an integer $t$.}
{Is there a collection of sets from $\calF$ that are pairwise disjoint and whose union has size at least $t$?}

\section{\boldmath  \coloring{q}}\label{sec:qColSETH}

The goal of this section is to show \cref{thm:coloringcombined}. Its algorithmic part is very simple:
Exhaustively enumerate all possible $q$-colorings of the \coreword\ $Q$; this results in $q^p$ branches.
Then, for each component $C$ of $G-Q$, we check whether the coloring of the neighbors in $C$ can be extended to a proper coloring of $C$. Note that this can be performed by brute-force, as the number of vertices in $C$ is at most $\sigma$, i.e., a constant. As the number of such components $C$ is at most $n$, the overall running time is $q^p \cdot n^{\bigO(1)}$.

The hardness part of \cref{thm:coloringcombined} is restated in the following theorem.
\begin{thm}\label{thm:LBqColSETH}
	For every $q\ge 3$ and $\eps>0$, there are integers $\sigma$ and $\delta$ such that no algorithm solves every $n$-vertex instance of \coloring{q}, given with a \core{\sigma}{\delta} of size $p$, in time $(q-\eps)^p\cdot n^{\bigO(1)}$, unless the SETH fails.
\end{thm}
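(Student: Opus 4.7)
The plan is to reduce from the CSP lower bound of Lampis: for every $\epsilon>0$ and integer $d\ge 2$, there exists an integer $r=r(d,\epsilon)$ such that, assuming the SETH, no algorithm solves $n$-variable CSP instances with domain $[d]$ and relational constraints of arity at most $r$ in time $(d-\epsilon)^n\cdot n^{\bigO(1)}$. I instantiate this with $d=q$, obtain the corresponding arity $r$, and from such a CSP instance $\calI$ build a $q$-Coloring instance whose \core{\sigma}{\delta} has size essentially equal to the number of CSP variables.

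First I would produce an intermediate \listcoloring{q} instance $(G',L)$. Put one vertex $v_i$ into the hub $Q$ for each CSP variable, intending $\phi(v_i)\in[q]$ to encode the value of the $i$-th variable. For each constraint with relation $R\subseteq[q]^r$ over variables $i_1,\ldots,i_r$, invoke \cref{prop:realize-coloring} to obtain an $r$-ary $q$-gadget $(F,L_F,(z_1,\ldots,z_r))$ that extends an assignment $\psi$ on its portals to a proper list coloring iff $(\psi(z_1),\ldots,\psi(z_r))\in R$. Glue one fresh copy of this gadget to $Q$ by identifying $z_j$ with $v_{i_j}$. Since \cref{prop:realize-coloring} ensures $L_F(z_j)=[q]$ and that the portals form an independent set, identifying portals with hub vertices introduces neither list conflicts nor multi-edges. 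The construction is correct by the gadget's defining property: colorings of $Q$ correspond to variable assignments, and the gadgets can be extended simultaneously iff every constraint is satisfied.

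Next I would remove the lists via the standard trick: add a fresh clique $K=\{k_1,\ldots,k_q\}$ on $q$ new vertices and, for every $v\in V(G')$ and every $i\in[q]\setminus L(v)$, add the edge $vk_i$. Any proper $q$-coloring of the resulting graph $G$ assigns all $q$ colors to $K$ (since $K$ is a clique), hence forces $\phi(v)\in L(v)$ for all $v$, so $G$ is $q$-colorable iff $(G',L)$ admits a proper list coloring iff $\calI$ is satisfiable. I declare the final hub to be $Q^\star:=Q\cup K$, of size $p=n+q$.

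It remains to check the hub parameters. Each component of $G-Q^\star$ is contained in a single gadget copy, whose size depends only on $q$ and $r$ (via \cref{prop:realize-coloring}); call this bound $\sigma_0(q,r)$. Each such component has at most $r$ neighbors among the original hub vertices in $Q$ plus at most $q$ neighbors in $K$, so at most $r+q$ neighbors in $Q^\star$. Setting $\sigma:=\sigma_0(q,r)$ and $\delta:=r+q$ yields a \core{\sigma}{\delta} of size $p=n+q$. A hypothetical $(q-\epsilon)^p\cdot|V(G)|^{\bigO(1)}$ algorithm for \coloring{q} would then decide $\calI$ in time $(q-\epsilon)^{n+q}\cdot n^{\bigO(1)}=(q-\epsilon)^n\cdot n^{\bigO(1)}$, contradicting Lampis's bound and hence the SETH.

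The main obstacle is essentially bookkeeping: verifying that the portals of the gadget from \cref{prop:realize-coloring} really have full list $[q]$ and form an independent set, so that several gadgets can safely share hub vertices without introducing spurious edges or list restrictions; and confirming that the list-removal step only enlarges $\delta$ by the additive constant $q$ and leaves $\sigma$ unchanged. All remaining checks --- the correspondence between CSP assignments and hub colorings, and the fact that the arity $r$ and gadget size $\sigma_0(q,r)$ are constants depending solely on $q$ and $\epsilon$ --- follow directly from \cref{prop:realize-coloring} and Lampis's CSP theorem.
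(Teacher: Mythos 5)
Your proposal is correct and follows essentially the same route as the paper's proof: reduce from Lampis's $(q,r)$-CSP lower bound with $d=q$, encode variables as hub vertices, attach one gadget from \cref{prop:realize-coloring} per constraint (exploiting that portals have full lists and form an independent set), and then remove lists with a global $q$-clique added to the hub, giving $\sigma=\sigma_0(q,r)$, $\delta=r+q$, and hub size $n+q$. The only detail the paper makes explicit that you leave implicit is merging constraints with identical variable sets so that the number of gadgets, and hence $|V(G)|$, is bounded by $n^{r}$, which is what justifies the $n^{\bigO(1)}$ factor when transferring the running time back to the CSP instance.
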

The proof uses the following constraint satisfaction problem as a starting point. 
For integers $d$ and $r$, an instance of $(d,r)$-\textsc{CSP} is a pair $(\calV,\calC)$, where $\calV$ is the set of \emph{variables} and $\calC$ is the set of \emph{constraints}.
Each constraint $C$ involves a sequence of variables, whose length, called the \emph{arity} of $C$, is at most $r$.
The constraint $C$ enforces some relation $R_C \subseteq [d]^{\ar(C)}$ on the variables involved.
We ask for a mapping $\calV \to [d]$ which \emph{satisfies} every constraint, i.e., the sequence of images of variables involved in $C \in \calC$
belongs to $R_C$.

Lampis showed the following lower bound for  $(d,r)$-\textsc{CSP}.

\begin{thm}[Lampis~\cite{DBLP:journals/siamdm/Lampis20}]\label{thm:csp}
For every $d \geq 2$ and $\epsilon$ there exists $r$, such that there is no algorithm solving every $N$-variable instance of $(d,r)$-\textsc{CSP} in time $(d-\epsilon)^N \cdot N^{\bigO(1)}$, unless the SETH fails.
\end{thm}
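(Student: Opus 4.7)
The plan is to reduce from $(q,r)$-CSP via \cref{thm:csp}, using \cref{prop:realize-coloring} to encode each constraint as a gadget attached to a block of ``variable vertices'' that will form (most of) the \coreword. Given $\eps>0$, I would fix $\eps':=\eps/2$ and invoke \cref{thm:csp} with $d=q$ and $\eps'$ to obtain an arity $r$. I then set $\delta:=r+q$ and take $\sigma$ to be an upper bound on the number of vertices in any of the (finitely many) gadgets that \cref{prop:realize-coloring} can produce for relations $R\subseteq[q]^s$ with $s\le r$. Both $\sigma$ and $\delta$ depend only on $q$ and $\eps$.

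The construction takes an instance $(\calV,\calC)$ of $(q,r)$-CSP with $N=|\calV|$ variables and builds a graph $G$ as follows. Introduce a variable vertex $x_v$ for each $v\in\calV$ together with a $q$-clique $K=\{k_1,\dots,k_q\}$, and declare $Q:=\{x_v:v\in\calV\}\cup K$, so $|Q|=N+q$. For each constraint $C\in\calC$ on variables $v_1,\dots,v_s$ with relation $R_C\subseteq[q]^s$, invoke \cref{prop:realize-coloring} to obtain a gadget $(F_C,L_C,(z_1,\dots,z_s))$ realizing $R_C$, and attach it to $G$ by identifying $z_j$ with $x_{v_j}$. This is safe since \cref{prop:realize-coloring} guarantees that the portals form an independent set with full list $[q]$, so no two gadgets clash on edges and the $x_v$ inherit no constraints outside their constraints' gadgets. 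To remove the internal lists, for every non-portal vertex $u$ of every $F_C$ and every color $i\notin L_C(u)$, add the edge $u k_i$; this is the standard clique trick.

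Three things remain to verify. \emph{Correctness:} In any proper $q$-coloring of $G$, $K$ uses all $q$ colors, and by permuting colors we may assume $k_i\mapsto i$; then the edges to $K$ force each $u$ to respect $L_C(u)$, and by \cref{prop:realize-coloring} the colorings of the $x_v$'s extend to all gadgets iff they satisfy every $R_C$, i.e., iff they encode a satisfying assignment of $(\calV,\calC)$. \emph{Hub parameters:} Distinct gadgets meet only inside $Q$, so every component of $G-Q$ sits inside a single $F_C$ and thus has at most $\sigma$ vertices; its neighborhood in $Q$ is contained in $\{x_{v_1},\dots,x_{v_s}\}\cup K$, of size at most $s+q\le r+q=\delta$. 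Hence $Q$ is a \core{\sigma}{\delta} of size $p=N+q$. \emph{Running time:} A hypothetical $(q-\eps)^p\cdot n^{\bigO(1)}$ algorithm would solve $(q,r)$-CSP in time $(q-\eps)^{N+q}\cdot(N+\sigma|\calC|)^{\bigO(1)}=(q-\eps)^N\cdot N^{\bigO(1)}$ (absorbing the multiplicative constant $(q-\eps)^q$), contradicting \cref{thm:csp} since $\eps>\eps'$.

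The main subtlety, which I would be careful to spell out rather than gloss over, is the bookkeeping around the clique trick: $K$ must be added \emph{into} $Q$ (otherwise components of $G-Q$ fail to be of bounded size, since many gadgets attach to $K$), and this inflates $p$ by the additive constant $q$ and inflates the per-component neighborhood size by $q$. Both adjustments are harmless because $q$ is part of the setup and does not depend on $N$, but they are precisely why $\delta$ is chosen as $r+q$ instead of $r$. Beyond this, the proof is a straightforward composition of \cref{thm:csp} with the relational gadgetry of \cref{prop:realize-coloring}.
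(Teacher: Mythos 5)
Your proposal does not prove the stated theorem; it proves a different one. The statement to be established is the Lampis lower bound for $(d,r)$-\textsc{CSP} itself: assuming the SETH, for every $d\ge 2$ and $\epsilon>0$ there is an arity $r$ such that $N$-variable $(d,r)$-\textsc{CSP} cannot be solved in time $(d-\epsilon)^N\cdot N^{\bigO(1)}$. Your very first sentence \emph{invokes} this statement (``invoke \cref{thm:csp} with $d=q$ and $\eps'$ to obtain an arity $r$'') and then uses it, together with \cref{prop:realize-coloring}, to derive the hardness of \coloring{q} with a given \core{\sigma}{\delta}. That is a (correct, and essentially the paper's) proof of \cref{thm:LBqColSETH}, but as an argument for \cref{thm:csp} it is circular: you assume the conclusion as a black box and never connect $(d,r)$-\textsc{CSP} back to $k$-\textsc{Sat}.

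A genuine proof of \cref{thm:csp} must go in the opposite direction: starting from an $n$-variable instance of $k$-\textsc{Sat}, one builds an equivalent $(d,r)$-\textsc{CSP} instance on roughly $n/\log_2 d$ variables, so that a $(d-\epsilon)^N$ algorithm translates into a $(2-\epsilon')^n$ algorithm for $k$-\textsc{Sat}. When $d$ is a power of $2$ this is a direct grouping argument (each domain-$d$ variable encodes $\log_2 d$ boolean variables, and each width-$k$ clause becomes a constraint of arity at most $k$); when $d$ is not a power of $2$ one cannot pack blocks losslessly, and Lampis's argument needs an additional covering/enumeration step over restricted domains --- closely analogous to what this paper does for \MaxCSP{$d$}{$r$} in Section~\ref{sec:maxCSPMSH} via \cref{lem:randomized-reduction,lem:setcovergap,lem:setcoverefficient}. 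None of this appears in your write-up, so the target statement remains unproved. (The paper itself does not reprove it either; it cites Lampis. But a blind proof attempt of this statement has to supply that reduction, not a reduction that consumes it.)
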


Before we proceed to the proof of \cref{thm:LBqColSETH}, let us recall following lemma by Jaffke and Jansen~\cite[Lemma 14]{DBLP:conf/ciac/JaffkeJ17} which will be a crucial tool used to build gadgets (defined in \cref{sec:prelims}).

\begin{lem}[Jaffke, Jansen~\cite{DBLP:conf/ciac/JaffkeJ17}]\label{lem:jaffke}
Let $q \geq 3$ and $r \geq 1$ be integers,
For any $\boldc \in [q]^r$ there exists an $r$-ary gadget $\calF_{\boldc}=(F_{\boldc},L,(x_1,x_2,\ldots,x_r))$, where lists $L$ are contained in $[q]$, such that for every $\boldd = (d_1,\ldots,d_r) \in [q]^r$ there exists a proper list coloring $\phi$ of $F_{\boldc}$ in which for all $i$ it holds that $\phi(x_i) \neq d_i $ if and only if $\boldd \neq \boldc$.
\end{lem}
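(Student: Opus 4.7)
The plan is to construct the gadget $\calF_\boldc$ explicitly by induction on $r$. The cleanest reformulation is: letting $\Phi \subseteq [q]^r$ denote the set of portal colorings realizable by some proper list coloring of $\calF_\boldc$, and writing $\overline{\boldc} := \{\boldd \in [q]^r : d_i \neq c_i \text{ for all } i\}$, the lemma is equivalent to producing $\calF_\boldc$ satisfying $\Phi \cap \overline{\boldc} = \emptyset$ and $\Phi \cap \overline{\boldd} \neq \emptyset$ for every $\boldd \neq \boldc$. A natural target is $\Phi = [q]^r \setminus \overline{\boldc}$, i.e., the set of tuples agreeing with $\boldc$ in at least one coordinate.

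The base case $r=1$ is trivial: take a single vertex $x_1$ with $L(x_1) = \{c_1\}$, giving $\Phi = \{(c_1)\}$. For the inductive step, starting from $\calF_{\boldc'}$ on $(x_2, \ldots, x_r)$ with $\boldc' = (c_2, \ldots, c_r)$, I would build $\calF_\boldc$ by attaching a new portal $x_1$ (with $L(x_1) \subseteq [q]$ chosen appropriately) together with auxiliary structure implementing the disjunction ``$\phi(x_1) = c_1$, \emph{or} the constraint of $\calF_{\boldc'}$ is enforced on $(x_2, \ldots, x_r)$''. Concretely, at each essential auxiliary vertex $v$ of $\calF_{\boldc'}$ whose list restriction is crucial to the constraint, attach a helper vertex adjacent to both $v$ and $x_1$, whose list is designed so that coloring $x_1$ with $c_1$ frees $v$ from its restriction while any other color at $x_1$ keeps $v$'s restriction intact. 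The assumption $q \geq 3$ is crucial here, since the helpers need a ``slack'' color distinct from $c_1$ to enable the bypass.

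The inductive verification then splits naturally into the two directions of the biconditional. For the ``no coloring avoids $\boldc$ coordinate-wise'' direction: if $\phi(x_1) = c_1$ we are done; otherwise the bypass is inactive, the inner gadget applies, and by induction $\phi(x_i) = c_i$ for some $i \geq 2$. For the ``some coloring avoids $\boldd$ coordinate-wise'' direction, given $\boldd \neq \boldc$: if $d_1 \neq c_1$, set $\phi(x_1) = c_1$ (activating the bypass) and independently pick $\phi(x_i) \in [q] \setminus \{d_i\}$ for $i \geq 2$ (nonempty since $q \geq 2$); if $d_1 = c_1$, choose $\phi(x_1) \in [q] \setminus \{c_1\}$ (nonempty since $q \geq 2$) so the inner gadget is active, and apply the inductive hypothesis to $(d_2, \ldots, d_r) \neq \boldc'$ to color the remaining portals.

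The main obstacle will be engineering the bypass mechanism correctly: ensuring that activating it via $\phi(x_1) = c_1$ truly releases every relevant constraint of $\calF_{\boldc'}$ without silently introducing spurious restrictions among the helpers or back on the portals $(x_2, \ldots, x_r)$. This requires a delicate list-design argument with case analysis on how each helper interacts with the underlying lists of the inner gadget, and it fundamentally leverages the freedom offered by $q \geq 3$ colors. With only two colors, the helpers would have no slack to perform the bypass, which explains why the lemma is restricted to $q \geq 3$.
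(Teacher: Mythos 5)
First, note that the paper does not prove this statement at all: it is imported verbatim as Lemma~14 of Jaffke and Jansen, so there is no internal proof to compare against, and your attempt has to stand on its own as a reconstruction of that external result.

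Your bookkeeping is sound: the reformulation via the set $\Phi$ of extendable portal colorings, the choice of target $\Phi=[q]^r\setminus\overline{\boldc}=\{\bolde : \exists i,\ e_i=c_i\}$, the base case $r=1$, and the two-way case analysis in the inductive step are all correct, and you correctly locate where $q\ge 3$ must enter. The genuine gap is that the entire content of the lemma lives in the one step you leave as a black box, and the mechanism you sketch for it cannot work as literally described. You propose a helper adjacent to an auxiliary vertex $v$ and to $x_1$ ``whose list is designed so that coloring $x_1$ with $c_1$ frees $v$ from its restriction.'' In list coloring, constraints are monotone: coloring a neighbor of $v$ can only \emph{remove} colors from $v$'s effective palette, never restore ones excluded by $L(v)$ or by other neighbors. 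So a restriction that is hard-coded into the inner gadget's lists cannot be ``released'' by any choice at $x_1$; a disjunction of the form ``$x_1=c_1$ \textbf{or} the inner constraint holds'' cannot be obtained by relaxing the inner gadget conditionally. The standard (and, as far as I can tell, necessary) route is the opposite one: keep all lists fixed and introduce \emph{indicator} vertices $w_i$ with two-element lists such that $w_i$ can take its ``satisfied'' value only if $x_i=c_i$ (implemented, for each $a\neq c_i$, by a helper adjacent to $w_i$ and $x_i$ that propagates ``$w_i$ satisfied $\Rightarrow x_i\neq a$''), followed by a propagation path over $w_1,\dots,w_r$ forcing at least one indicator to be satisfied. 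That construction is where the delicate list design and the third color are actually consumed, and it is entirely absent from your write-up; until it is supplied and verified (in particular, that the helpers introduce no spurious constraints among themselves or back onto $x_2,\dots,x_r$, which is exactly the failure mode you flag but do not resolve), the induction has no engine.

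A smaller point: once one has such an indicator/OR mechanism, the induction on $r$ buys you nothing — the same construction handles all $r$ coordinates simultaneously in one layer, which is how the cited source proceeds. The inductive framing is not wrong, but it adds a layer of verification (that the composed gadget realizes exactly the intended $\Phi_r$, including full freedom of $x_2,\dots,x_r$ when $x_1=c_1$) without simplifying the part that is actually hard.
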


The following statement is a consequence of \cref{lem:jaffke}.

\realizecoloring*
\begin{proof}
We start the construction with introducing an independent set $Z = \{z_1,\ldots,z_r\}$ and we set $L(z_i):=[q]$ for every $i \in [r]$.

Let $\bar R := [q]^r \setminus R$ and consider $\bar \boldc \in \bar R$.
We call \cref{lem:jaffke} to obtain a gadget $\calF_{\bar \boldc}=(F_{\bar \boldc},L,(x_1\ldots,x_r))$.
For each $i \in [r]$, we make $x_i$ adjacent to $z_i$.
We repeat this for every $\bar \boldc \in \bar R$. This concludes the construction of $\calF$.

Consider a coloring $\psi: Z \to [q]$ such that $(\psi(z_1),\ldots,\psi(z_r)) \in R$.
Consider $\bar \boldc \in \bar R$. We need to show that  $\psi$ can be extended to a coloring of $F_{\bar \boldc}$.
Note that $(\psi(z_1),\ldots,\psi(z_r)) \neq \bar \boldc$ as $\bar \boldc  \notin R$.
Thus by \cref{lem:jaffke} called for $(d_1\ldots,d_r) = (\psi(z_1),\ldots,\psi(z_r))$ we note that that there exists a proper list coloring of $F_{\bar \boldc}$ such that the color of each $x_i$ is different than $\psi(z_i)$. This is the sought-for coloring of $F_{\bar \boldc}$.

Consider a coloring $\psi: Z \to [q]$ such that $\bar \boldc := (\psi(z_1),\ldots,\psi(z_r)) \notin R$.
For contradiction, suppose that there exists a proper list coloring $\phi$ of $F$ that extends $\psi$.
Consider the gadget $\calF_{\bar \boldc}=(F_{\bar \boldc},L,(x_1,\ldots,x_r))$.
Clearly, for all $i \in [r]$, it holds that $\phi(x_i) \neq \psi(z_i)$.
However, the existence of such a coloring contradicts \cref{lem:jaffke}.
\end{proof}

Now we are ready to prove \cref{thm:LBqColSETH}.

\begin{proof}[Proof of \cref{thm:LBqColSETH}.]
Let $q \geq 3$ and $\epsilon >0$.
For contradiction, suppose that for every $\sigma,\delta$ there is an algorithm that solves every $n$-vertex instance of \coloring{q}, given with a \core{\sigma}{\delta} of size $p$, in time $(q-\eps)^p\cdot n^{\bigO(1)}$.

Let $r$ be the value given by \cref{thm:csp} for $d=q$ and $\epsilon$.
Consider an instance $\calI = (\calV,\calC)$ of $(q,r)$-\textsc{CSP} and let $n \coloneqq |\calV|$.
Note that without loss of generality we may assume that every constraint in $\calC$ has at least one satisfying assignment.
Furthermore, we may assume that there are no two constraints with exactly the same set of vertices: otherwise we can 
replace all such constraints $C_1,C_2,\ldots,C_m$ with a single one whose set of satisfying assignments is the intersection of the sets  of satisfying assignments of $C_1,C_2,\ldots,C_m$. Note that this means that $|\calC| \leq n^r$.

\paragraph{Defining an equivalent instance of \listcoloring{q}.}
As the first step, we will define an instance $(G',L)$ of \listcoloring{q}.

We start with introducing an independent set $Y = \{y(v) ~|~ v \in \calV\}$ and each vertex from $Y$ has list $[q]$.
The vertex $y(v)$ is meant to represent $v$, i.e., the coloring of $y(v)$ corresponds to the valuation of $v$.

Now consider a constraint $C \in \calC$.
Let $v_1,v_2,\ldots,v_\ell$ be the variables of $C$, where $\ell \coloneqq \ar(C) \leq r$.
Let $R_C$ be the relation enforced by $C$ on $V_C$.
We call \cref{prop:realize-coloring} for $R_C$ to obtain an $\ell$-ary gadget 
\[
\calF(C) = (F(C), L, (z_1,\ldots,z_{\ell})).
\]
The list of every vertex is contained in $[q]$, and the interface vertices are pairwise non-adjacent and have lists $[q]$.
For each $i \in [\ell]$, we identify $z_i$ with $y(v_i)$.
We repeat this for every $C \in \calC$. This completes the construction of $(G',L)$.

It is straightforward to verify that $(G',L)$ is a yes-instance of \listcoloring{q} if and only if $\calI$ is satisfiable.
Indeed, the coloring of vertices from $Y$ exactly corresponds to the valuation of variables of $\calI$.
For each $C \in \calC$, a coloring of the interfaces of the gadget $\calF(C)$ introduced for $C$ can be extended to a list coloring of the whole gadget if and only if the coloring of interfaces corresponds to an assignment of variables that satisfies $C$.

\paragraph{Defining an equivalent instance of \coloring{q}.}
The modification of $(G',L)$ into an equivalent instance of \coloring{q} is standard.
We introduce a clique on vertices $a_1,a_2,\ldots,a_q$.
For each $u \in V(G')$ and each $i \in [q]$, we add an edge $ua_i$ if and only if $i \notin L(u)$.

Again, it is straightforward to verify that this construction simulates the lists in $L$.
Indeed, if $\phi$ is a proper $q$-coloring of $G$, then by symmetry we may assume that $\phi(a_i)=i$ for every $i \in [q]$.
Thus, for any vertex $u \in V(G') \cap V(G)$ such that $i \notin L(i)$, we have that $\phi(u) \neq i$.

\paragraph{Structure of $G$.}
The graph $G'$ has $n + q + |\calC| \cdot h(q,r) = \bigO(n^r)$ vertices, where $h(q,r)$ is the size of the largest gadget given by \cref{prop:realize-coloring}. Note that $q$ is a constant and $r$ is also a constant (depending on $q$ and $\epsilon$).

Now consider the set $Q = Y \cup \{a_1,a_2,\ldots,a_q\}$, clearly $|Y| = n+q$.
Note that every component of $G-Y$ is (a subgraph of) one of the gadgets given by \cref{prop:realize-coloring},
i.e., its size is at most $h(q,r)$. Furthermore, each such component is adjacent to at most $r+q$ vertices from $Q$.
Thus $Q$ is a \core{\sigma}{\delta}, where $\sigma \coloneqq h(q,r)$ and $\delta \coloneqq r+q$ are constants depending only on $q$ and $\epsilon$.

\paragraph{Running time.}
Note that calling our hypothetical algorithm for \coloring{q} on $G'$ we can solve $\calI$ in time
\[
(q-\epsilon)^{|Q|} \cdot |V(G')|^{\bigO(1)} = (q-\epsilon)^n \cdot n^{\bigO(1)},
\]
which, by \cref{thm:csp}, contradicts the SETH.
\end{proof}

\section{\boldmath Vertex Deletion to \coloring{q}}\label{sec:vd-basic}
In this section we prove \cref{thm:coloringVDcombined}. The algorithmic part is simple and it is covered in \cref{sec:vd-basic-algo}. The main work is showing the lower bound, and this is done in form of \cref{thm:vd-coloring-intro} in \cref{sec:vd-basic-hardness}.

\subsection{\boldmath Hardness for for \coloringVD{q}}\label{sec:vd-basic-hardness}
We prove the following lower bound for \coloringVD{q}, which is identical to \cref{thm:coloringVDcombined}\,(2)

\begin{thm}
	\label{thm:vd-coloring-intro}
	For every $q\ge 1$ and $\epsilon > 0$, there exist integers $\sigma,\delta\ge 1$ such that if there is an algorithm solving in time $(q + 1 - \epsilon)^{\cpar} \cdot n^{\bigO(1)}$ every $n$-vertex instance of \coloringVD{q} given with a \core{\sigma}{\delta} of size at most $\cpar$, then the SETH fails.
\end{thm}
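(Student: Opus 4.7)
The plan is to reduce from $(q{+}1,r)$-CSP, which is SETH-hard by Theorem~\ref{thm:csp} applied with $d = q+1$ and a suitable constant $r = r(q,\eps)$. The encoding interprets CSP value $q+1$ as ``the corresponding hub vertex is deleted in the \coloringVD{q} solution'' and values $1,\dots,q$ as the $q$ colors. For each of the $N$ CSP variables we introduce a hub vertex $y(v)$; together with a $q$-clique used to eliminate lists via the same trick as in the proof of Theorem~\ref{thm:LBqColSETH}, these form the hub $Q$ of size $N+q$. A reduction in which each component of $G-Q$ has constant size $\sigma$ and constant neighborhood $\delta$ in $Q$, combined with the hypothetical $(q+1-\eps)^p\cdot n^{\bigO(1)}$ algorithm, would solve the CSP in time $(q+1-\eps)^{N+q}\cdot n^{\bigO(1)}$, contradicting Theorem~\ref{thm:csp}.

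The principal obstacle is that constraint gadgets built from Proposition~\ref{prop:realize-coloring} only realize relations over $[q]^r$; when interpreted inside a \coloringVD{q} instance, such a gadget automatically extends to an upward-closed relation over $([q]\cup\{\star\})^r$ (where $\star$ denotes deletion), since deleting a portal only relaxes the gadget's local constraints. This upward-closure is too restrictive to express arbitrary relations over $[q+1]^r$. We circumvent it with the grouping trick outlined in Section~\ref{sec:overview}: partition the $N$ variables into $M = N/b$ blocks $B_1,\dots,B_M$ of size $b$ for a constant $b$ depending on $\eps$, and enumerate all deletion profiles $\mathbf{f} = (f_1,\dots,f_M) \in \{0,\dots,b\}^M$, the intended meaning being that exactly $f_i$ variables in $B_i$ are assigned $q+1$ in the target solution. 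There are $(b+1)^M = (b+1)^{N/b}$ profiles, and we choose $b$ large enough that $(b+1)^{1/b} < (q+1)/(q+1-\eps/2)$, so this multiplicative overhead is dominated by the savings of the assumed algorithm.

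For each profile $\mathbf{f}$, the graph is built as follows. For each CSP constraint $C$ involving variables lying in blocks $B_{j_1},\dots,B_{j_r}$, we attach one gadget to the $rb$ hub vertices of these blocks. The gadget fuses two behaviors. First, using a ``force deletion'' construction (a portal is linked to $q$ gadget-internal vertices that are pinned by list-coloring subgadgets to the $q$ distinct colors, so the portal cannot be properly colored and must be deleted), the gadget ensures that a specific subset of $f_{j_\ell}$ vertices of each attached block $B_{j_\ell}$ is deleted. Second, conditional on which $f_{j_\ell}$ vertices are deleted in each block, Proposition~\ref{prop:realize-coloring} is used on the remaining $b-f_{j_\ell}$ positions to enforce exactly those color patterns that correspond to a satisfying assignment of $C$ (with the deleted positions interpreted as value $q+1$). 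Since the gadget must be agnostic to which $f_{j_\ell}$ positions of each block are deleted, we take an ``OR'' over all $\binom{b}{f_{j_1}}\cdots\binom{b}{f_{j_r}}$ such choices, itself realized via Proposition~\ref{prop:realize-coloring}. The resulting gadget has constant size $\sigma = \sigma(q,r,b)$ and neighborhood of size at most $\delta = rb + q$ in $Q$; many copies are attached so that no optimum solution ever deletes a gadget-internal vertex.

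Setting the deletion budget to $k = \sum_i f_i$, any feasible solution must delete at least $f_i$ vertices from each block by the gadget guarantee; summing forces equality, so each block contains exactly $f_i$ deletions, matching $\mathbf{f}$. The colorings of the remaining block vertices then correspond to satisfying CSP assignments, and conversely any satisfying CSP assignment with profile $\mathbf{f}$ lifts to a valid \coloringVD{q} solution of cost $k$. Iterating over all $(b+1)^M$ profiles and invoking the assumed algorithm yields total running time $(b+1)^{N/b}\cdot (q+1-\eps)^{N+q}\cdot n^{\bigO(1)} = (q+1-\eps')^N\cdot n^{\bigO(1)}$ for some $\eps'>0$, contradicting Theorem~\ref{thm:csp}. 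The hardest step is the gadget construction in the third paragraph: carefully blending deletion-forcing subgadgets with an OR over choices of deletion positions inside a single gadget realized through Proposition~\ref{prop:realize-coloring}, while keeping both $\sigma$ and $\delta$ constants that depend only on $q$ and $\eps$.
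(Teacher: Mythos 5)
Your overall skeleton for $q\ge 3$ (treating deletion as a $(q{+}1)$-st value, enumerating per-block deletion profiles so that a budget of $k=\sum_i f_i$ turns ``at least $f_i$ deletions per block'' into ``exactly $f_i$'', forcing a deletion by surrounding a hub vertex with all $q$ colors, and using \cref{prop:realize-coloring} to coordinate which rainbow/non-rainbow patterns are allowed) is essentially the paper's route (\cref{lem:csp-structured} plus \cref{lem:qcolVD-2}). However, there are two genuine gaps. First, the theorem is claimed for all $q\ge 1$, but your entire gadget machinery presupposes $q\ge 3$: \cref{prop:realize-coloring} is only available for $q\ge 3$, and for $q\in\{1,2\}$ list colorings without deletions cannot realize arbitrary relations (for $q=2$ only conjunctions of unary and binary (in)equality constraints are expressible, and for $q=1$ a gadget whose internal vertices are never deleted can only force the deletion of \emph{all} its hub neighbors, so it expresses no disjunction at all). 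Moreover, your ``attach many copies so no optimum solution deletes a gadget-internal vertex'' trick is exactly what must be abandoned in these cases: the paper's proof for \textsc{Vertex Cover} and \textsc{Odd Cycle Transversal} (\cref{lem:qcolVD-1}) uses clique gadgets in which $\ell_C-1$ \emph{internal} deletions per gadget are part of the budget. So the cases $q=1,2$ need a genuinely different construction that your proposal does not provide.

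Second, for $q\ge 3$ your list-removal step is unsound as described. You simulate all lists by a single global $q$-clique placed in the hub, ``as in the proof of \cref{thm:LBqColSETH}''. But in the vertex-deletion setting that clique is itself deletable within the budget $k=\sum_i f_i$ (which can be as large as $\Theta(N)$): deleting even one clique vertex voids the pinning of the internal vertices that create the rainbow neighborhoods, so the ``at least $f_i$ deletions per block'' guarantee—which is the very fact you use to conclude that all deletions lie in $Y$ and hence that the clique is intact—collapses, and a solution can trade one clique deletion for many avoided forced deletions. You cannot protect the clique by replicating it (that destroys the \core{\sigma}{\delta} structure), and you cannot give each gadget a private clique either, because your fused gadget also performs the satisfaction check, which is \emph{not} invariant under permuting the colors. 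The paper escapes this chicken-and-egg problem precisely by splitting the constraints: the deletion-counting (signature) gadgets realize the permutation-invariant relations $\textsf{Rb}_{f_i}$ and get private cliques whose integrity follows from the $N{+}1$-copies argument, and only the satisfaction gadgets use the shared clique, whose integrity is then derived from the budget accounting. Your single fused gadget per constraint removes exactly the structural separation that makes this argument go through, so as written the backward direction of your equivalence fails.
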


To that end, we start by showing that solving $(d,r)$-\textsc{CSP} for certain structured instances is still hard. Let us define what we mean by structured.

\begin{defn}
	For all $d,r,b,N \in \mathbb{N}$ and $\boldf \colon \big[\frac{N}{b}\big] \to \{0, \ldots,b\}$, an $N$-variable instance $(\calV,\calC)$ of $(d,r)$-\textsc{CSP} is called \emph{($b, \boldf$)-structured} if the following holds
	\begin{myitemize}
		\item $N$ is divisible by $b$ and $\calV$ is partitioned into $\frac{N}{b}$ \emph{blocks} $V_1,\ldots,V_{N/b}$, each of size $b$.
		\item the scope of each constraint $C \in \calC$ includes, for each $i$, either all or none of the variables from the block $V_i$. 
		
		\item There are two types of constraints, i.e., $\calC = \calC^{\textrm{sign}} \cup \calC^{\textrm{sat}}$ such that
		\begin{myitemize}
			\item $\calC^{\textrm{sign}}$ contains a constraint $C_i$ for each $i \in \big[\frac{N}{b}\big]$. The constraint 				$C_i$ makes sure that exactly $f_i$ variables from $V_i$ are set to $d$.
			\item For each $C \in \calC^{\textrm{sat}}$, if the scope of $C$ contains $V_i$, then exactly $f_i$					variables of $V_i$ are set to $d$. Furthermore, no two constraints $C,C' \in \calC^{\textrm{sat}}$ have exactly the same set of variables.
		\end{myitemize}
	\end{myitemize}
\end{defn}
Note that the last property of $\calC^{\textrm{sat}}$ implies that $|\calC^{\textrm{sat}}| \leq N^{r}$ and thus $|\calC| \leq N/b + N^r$.
The astute reader might wonder why we need $\calC^{\textrm{sign}}$, as these constraints are already implied by $\calC^{\textrm{sat}}$. However, their special structure will be exploited in our reduction.

\begin{lem}\label{lem:csp-structured}
	For all $N,d \in \mathbb{N}$ and $\epsilon > 0$, there exists $b,r \in \mathbb{N}$ such that there is no algorithm solving every ($b, \boldf$)-structured $N$-variable $(d,b \cdot r)$-\textsc{CSP} instance where $\boldf \colon \big[\frac{N}{b}\big] \to \{0, \ldots,b\}$, in time $(d-\epsilon)^{N}\cdot N^{\bigO(1)}$, unless the SETH fails.
\end{lem}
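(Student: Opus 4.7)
The plan is to reduce a general $(d,r')$-\textsc{CSP} instance to a collection of $(b,\boldf)$-structured $(d,b\cdot r)$-\textsc{CSP} instances by exhaustively guessing, for each block of size $b$, how many of its variables are assigned the value $d$. Concretely, assume for contradiction that for some $d$ and $\eps>0$ and for every $b,r$ there is a $(d-\eps)^N \cdot N^{\bigO(1)}$ time algorithm for the structured problem. First, I choose $b$ large enough so that $(b+1)^{1/b} \le (d-\eps/2)/(d-\eps)$; this is possible since $(b+1)^{1/b}\to 1$ as $b\to\infty$. Then I apply \cref{thm:csp} with error parameter $\eps/2$ to obtain some $r'=r'(d,\eps/2)$, and set $r \coloneqq r'$.

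Now given an input $(\calV,\calC)$ of $(d,r')$-\textsc{CSP} on $N$ variables, we may add at most $b-1$ dummy (unconstrained) variables to ensure $b \mid N$, and partition $\calV$ arbitrarily into blocks $V_1,\ldots,V_{N/b}$ of size $b$. For each function $\boldf \colon [N/b] \to \{0,1,\ldots,b\}$, we construct a structured instance $\calI_{\boldf}$ as follows. The constraint set $\calC^{\textrm{sign}}$ contains, for each $i$, the constraint $C_i$ whose scope is exactly $V_i$ and whose relation enforces that exactly $f_i$ of its variables are assigned the value $d$. For each original constraint $C\in\calC$, we include a constraint $C'$ in $\calC^{\textrm{sat}}$ whose scope is the union of all blocks $V_i$ touched by the scope of $C$ (so its arity is at most $b\cdot r'$), and whose relation consists of those tuples that (i) project onto a satisfying assignment of $C$ and (ii) assign the value $d$ to exactly $f_i$ variables in each block $V_i$ in the scope. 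Finally, I merge any two constraints in $\calC^{\textrm{sat}}$ that share the same scope by intersecting their relations, which enforces the required uniqueness property.

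The correctness claim is that $(\calV,\calC)$ is satisfiable if and only if at least one $\calI_{\boldf}$ is satisfiable. Indeed, from any satisfying assignment $\psi$ of $(\calV,\calC)$, letting $f_i$ be the number of variables in $V_i$ that $\psi$ maps to $d$ yields a $\boldf$ such that $\psi$ satisfies $\calI_{\boldf}$. Conversely, any satisfying assignment of any $\calI_{\boldf}$ already satisfies every original constraint in $\calC$ through the projection encoded in condition (i). Running the hypothetical algorithm on all $\calI_{\boldf}$ gives total running time
\[
(b+1)^{N/b} \cdot (d-\eps)^N \cdot N^{\bigO(1)} \;\le\; (d-\eps/2)^N \cdot N^{\bigO(1)},
\]
by the choice of $b$, contradicting \cref{thm:csp}.

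The arity and block-alignment requirements of the structured form are what force the blow-up from $r'$ to $b\cdot r'$: a single constraint of the original instance can touch up to $r'$ blocks, and once its scope is extended to cover all of them, its arity becomes at most $b\cdot r'$. The main obstacle is thus balancing the guessing overhead $(b+1)^{N/b}$ against the gain from reducing $d-\eps$ to $d-\eps/2$; this is handled by choosing $b$ as a function of $d$ and $\eps$ only (independently of $N$), so that $r$ and $b$ are indeed constants of the reduction. Minor bookkeeping, such as padding $\calV$ to make $b\mid N$ and merging constraints with coincident scopes, does not affect the asymptotics.
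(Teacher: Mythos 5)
Your proposal is correct and follows essentially the same route as the paper's proof: guess the per-block count of variables assigned $d$, build one structured instance per signature with the sign constraints and the block-extended, merged sat constraints, and absorb the $(b+1)^{N/b}$ guessing overhead by choosing $b$ as a constant depending only on $d$ and $\epsilon$. The only difference is cosmetic bookkeeping of the error parameter (you invoke \cref{thm:csp} with $\epsilon/2$, the paper uses $\epsilon$ and ends with $\epsilon'=\epsilon^2/d$), which does not change the argument.
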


\begin{proof}
	Let $d \in \mathbb{N}$ and $\epsilon \in \mathbb{R}_{>0}$. Let $r$ be the value in the Theorem~\ref{thm:csp} for $d$ and $\epsilon$. Let $b$ be the smallest integer such that $(b+1)^{1/b} < 1 + \epsilon/d$; note that $b$ is a constant (again, depending on $d$ and $\epsilon$).

	Let $\calI = \left( \calV, \calC \right)$ be an $N$-variable instance of $(d,r)$-\textsc{CSP}.
	We assume that there is some fixed total order on the set $\calV$.
	We can further assume that the number $N$ of variables is divisible by $b$, as otherwise we can add at most $b-1$ dummy variables.
	Then we partition $\calV$ into $M= N/b$ blocks $V_1,\ldots,V_M$, each of size $b$.
	
	Let $r' = b \cdot r$; note that $r'$ depends on $d$ and $\epsilon$.
	Now we will construct a family $\boldI$ of instances of $(d,r')$-\textsc{CSP},  
	such that $\calI$ is satisfiable if and only if at least one instance in $\boldI$ is satisfiable.
	
	\paragraph{Construction of $\boldI$.}
	For an assignment $\varphi :\calV \to [d]$, its \emph{signature} is the vector $\boldf = (f_1\ldots,f_M) \in \{0,\ldots,b\}^M$, where for each $i \in [M]$ the value of $f_i$ is the number of variables from $V_i$ mapped by $\varphi$ to $d$.
	Note that the number of possible signatures is at most 
	\begin{equation}
		(b+1)^M = (b+1)^{N/b} = \left( (b+1)^{1/b} \right )^N < (1+\epsilon/d)^N. \label{eq:branches_new}
	\end{equation}
	
	Consider one such signature  $\boldf = (f_1,\ldots,f_M)$.
	We include into $\boldI$ the instance $\calI_{\boldf} = (\calV, \calC_{\boldf})$ of $(d,r')$-\textsc{CSP} defined as follows. The set of constrains $\calC_{\boldf}$ will be partitioned into two subsets, denoted by $\calC^{\textrm{sign}}_{\boldf}$ and $\calC^{\textrm{sat}}_{\boldf}$.
	
	First, for each block $V_i$, we add to $\calC_{\boldf}^{\textrm{sign}}$ the constraint $C^{i}_{\boldf}$ including all variables from $V_i$, which is satisfied if and only if exactly $f_i$ variables from $V_i$ have value $d$.
	
	Then, for each constraint $C \in \calC$, let $C_{\boldf}$ be the constraint involving all variables in all blocks intersected by $C$ such that $C_{\boldf}$ is satisfied by some assignment if an only if:
	\begin{myitemize}
		\item its projection to the variables from $C$ satisfies $C$,
		\item for each block $V_i$ involved in $C_{\boldf}$, exactly $f_i$ variables from $V_i$ are mapped to $d$.
	\end{myitemize}
	We include each such constraint $C_{\boldf}$ in $\calC_{\boldf}^{\textrm{sat}}$.
	Now, if $\calC_{\boldf}^{\textrm{sat}}$ contains several constraints $C_1,\ldots,C_p$ with exactly the same set of variables, we replace them with a single constraint whose corresponding relation is the intersection of the relations forced by $C_1,\ldots,C_p$.
	This completes the definition of $\calI_{\boldf}$.
	
	Note that each constraint from $\calC_{\boldf}$ has arity at most $r'=b \cdot r$ and thus $\calI_{\boldf}$ is an $N$-variable instance of $(d,r')$-\textsc{CSP}.
	Moreover, since $b$ divides $N$ and by the defnition of $\calC_{\boldf}$, we observe that $\calI_{\boldf}$ is ($b, \boldf$)-structured.
	Intuitively, constrains in $\calC^{\textrm{sign}}_{\boldf}$ make sure that a solution has the correct signature,
	and constraints in $\calC^{\textrm{sat}}_{\boldf}$ make sure that it satisfies $\calI$.

	\paragraph{Equivalence of instances.}
	First, observe that $\calI_{\boldf}$ is satisfiable if and only if $\calI$ is satisfied by some assignment with signature $\boldf$. Consequently, since $\boldI$ contains the instance $\calI_{\boldf}$ for every possible signature $\boldf$,
	we conclude that $\calI$ is satisfiable if and only if $\boldI$ contains a satisfiable instance.
	
	\paragraph{Running time.}
	Now let us estimate the running time.
	For each signature $\boldf$, the construction of $\calI_{\boldf}$ takes polynomial time.
	Suppose there exists an algorithm solving every ($b, \boldf$)-structured $N$-variable instance of $(d,b \cdot r)$-\textsc{CSP}, where $\boldf \colon \big[\frac{N}{b}\big] \to \{0, \ldots,b\}$, in time $(d-\epsilon)^{N}\cdot N^{\bigO(1)}$. Then, given any $N$-variable instance $\calI$ of $(d,r)$-\textsc{CSP}, we can construct $\calI_\boldf$ for all possible signatures $\boldf$, where each $\calI_\boldf$ can be solved in time $(d-\epsilon)^{N}\cdot N^{\bigO(1)}$.
	By \eqref{eq:branches_new}, the number of possible signatures is at most $(1+\epsilon/d)^N$. Consequently, $\calI$ can be solved in time 
	\[
	\left (1 + \frac{\epsilon}{d} \right )^N \cdot (d-\epsilon)^{N} \cdot N^{\bigO(1)} = \left( d - \frac{\epsilon^2}{d} \right)^N \cdot N^{\bigO(1)}= \left( d - \epsilon^{\prime} \right)^N \cdot N^{\bigO(1)}.
	\]
	where $\epsilon' \coloneqq \epsilon^{2} / d$. By \cref{thm:csp}, this is not possible unless the SETH fails.
\end{proof}

In the following, we will show lower bounds for the \emph{decision} variant of the \coloringVD{q} problem.
We assume that, along with the instance, we are given an integer $k$ and we ask whether the optimum solution (i.e., the set of vertices to delete) has size at most $k$.
Clearly, such a lower bound implies the lower bound for the optimization version,
as $k$ can be assumed to be bounded by the number of vertices of the instance.

It turns out that the characteristics of the problem differ between the cases $q \in \{1,2\}$ and $q \geq 3$:
this is because $q$-\textsc{Coloring} is polynomial-time solvable in the former case and \NP-hard in the latter one.
The proof of our lower bound is split into \cref{lem:qcolVD-1,lem:qcolVD-2}.

\begin{lem}\label{lem:qcolVD-1}
	For $q \in \{1,2\} $ and $\epsilon > 0$, there exist $\sigma, \delta > 0$ depending on $q$ and $\epsilon$
	such that there is no algorithm solving all $n$-vertex instances of \coloringVD{q} given with a
	\core{\sigma}{\delta} of size at most $\cpar$, in time $(q + 1 - \epsilon)^{\cpar} \cdot n^{\bigO(1)}$, unless the SETH fails.
\end{lem}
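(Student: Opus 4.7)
The plan is to reduce from the structured CSP lower bound of \cref{lem:csp-structured} applied with domain size $d = q+1$. Given a $(b,\boldf)$-structured instance $\calI_\boldf$ of $(q+1, b\cdot r)$-\textsc{CSP} on $N$ variables partitioned into blocks $V_1,\ldots,V_M$ of size $b$, I construct an instance $(G,k)$ of \coloringVD{q} whose core $Q$ contains one vertex $y(v)$ per variable, with the convention that coloring $y(v)$ by $c\in[q]$ represents the assignment $v:=c$, while deleting $y(v)$ represents $v:=q+1$. The deletion budget $k = \sum_i f_i + k_{\textrm{fix}}$ is tuned so that a solution of cost exactly $k$ must have exactly $f_i$ deletions inside each block $V_i$, matching the guessed signature.

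For each signature constraint $C_i\in \calC^{\textrm{sign}}$ I attach a constant-sized \emph{block gadget} to the $b$ portals $\{y(v):v\in V_i\}$ that forces at least $f_i$ of them to be deleted in any optimum solution; for each satisfaction constraint $C\in \calC^{\textrm{sat}}$ I attach a constant-sized \emph{satisfaction gadget} whose intrinsic deletion cost is minimized exactly on portal configurations consistent both with $R_C$ and with the ``at least $f_{i_j}$ deletions in block $V_{i_j}$'' conditions for the blocks it touches. Each such gadget is replicated sufficiently many times so that deleting any gadget-internal vertex is strictly suboptimal; hence any minimum-cost deletion set $X$ lies entirely in $Q$. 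The combined effect of the budget and the two families of gadgets is that a solution of cost at most $k$ exists iff $\calI_\boldf$ is satisfiable. Since $|Q|=N$, every component of $G-Q$ sits inside a single gadget of constant size, and each gadget has at most $b\cdot r$ neighbors in $Q$, so $\sigma$ and $\delta$ depend only on $q$ and $\epsilon$. A $(q+1-\epsilon)^p\cdot n^{\bigO(1)}$ algorithm for \coloringVD{q} would then solve the structured CSP in time $(q+1-\epsilon)^N\cdot N^{\bigO(1)}$, contradicting \cref{lem:csp-structured} and hence the SETH.

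The main obstacle is that \cref{prop:realize-coloring} is unavailable for $q\in\{1,2\}$, so the block and satisfaction gadgets must be constructed from scratch. The essential asymmetry of \coloringVD{q}, namely that deleting a vertex is never worse than coloring it, means that no gadget can by itself force a portal to a specific color---only that it is, at worst, deleted. The block-and-signature trick sidesteps this: once the number of deletions per block is pinned down by the global budget, a satisfaction gadget that is simultaneously sensitive to the signature on its blocks and to the relation $R_C$ effectively pins down both which portals are deleted and which color the remaining portals take. For $q=1$ (Vertex Cover), the block gadget can be realized using a clique structure on the portals (a $K_{f_i+1}$ on any $f_i+1$ of them forces $f_i$ deletions), and satisfaction gadgets can be assembled as disjoint unions of small ``allower'' subgraphs, one per accepted configuration, tuned so that only allowed configurations achieve the minimum gadget cost. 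For $q=2$ (Odd Cycle Transversal), odd cycles play the role of cliques in forcing deletions, and a list-coloring-style construction with a central $K_2$ realizes the color constraints on the non-deleted portals after list removal. Both constructions have size and degree depending only on $q$, $b$, and $r$, giving the claimed constants $\sigma,\delta$.
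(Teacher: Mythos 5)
Your high-level architecture is the same as the paper's: reduce from \cref{lem:csp-structured} with $d=q+1$, interpret deletion of $y(v)$ as the value $q+1$, use the signature budget $k'=\sum_i f_i$ so that a solution of cost exactly $k$ has exactly $f_i$ deletions per block, replicate each constraint gadget $N+1$ times so an optimum never deletes gadget-internal vertices, and take the hub to be the variable vertices. The gap is in the gadgets themselves, which is where essentially all the work of this lemma lies once \cref{prop:realize-coloring} is unavailable for $q\le 2$. Your block gadget is incorrect: a $K_{f_i+1}$ placed on a \emph{fixed} set of $f_i+1$ portals forces at least $f_i$ deletions inside that particular subset, which is strictly stronger than ``at least $f_i$ deletions somewhere in $V_i$''. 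For example, with $b=3$ and $f_i=1$ the gadget is an edge on $y(v_1),y(v_2)$, and it rules out the legal assignment that sets only $v_3$ to $q+1$; completeness of the reduction fails. The satisfaction gadgets are only asserted to exist (``allower subgraphs tuned so that only allowed configurations achieve the minimum cost''), but for vertex deletion this is precisely the problematic step: since deleting a portal is never worse than coloring it, a disjoint union of per-configuration components cannot by itself single out the accepted configurations; one needs a mechanism that simultaneously forces the correct portals to be deleted and (for $q=2$) forces colors on the surviving ones.

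The paper handles both issues with one gadget per constraint: a clique $\cl_C$ containing one vertex $x(\boldr)$ per satisfying tuple $\boldr\in R_C$ (so for a signature constraint there is one vertex per $f_i$-subset of the block, which is exactly how the choice of \emph{which} variables are deleted is left free). The budget forces exactly $\ell_C-1$ deletions in each copy, hence exactly one surviving vertex $x(\boldr)$; edges from $x(\boldr)$ to the vertices $y(v)$ with $\boldr(v)=q+1$ force those deletions, and the global counting then shows no other hub vertices are deleted. For $q=2$ the clique is reinforced by two extra vertices per clique edge (creating triangles) so that the odd-cycle-transversal counting (properties (P1)/(P2)) goes through, the colors of non-deleted portals are forced by attaching either an edge or a two-edge path (or both) from $x(\boldr)$ to $y(v)$, and a single global vertex $h$, placed in the hub and shown undeletable by the budget argument, synchronizes the $2$-coloring across all gadgets; no list-removal step is used at all. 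Your remarks that ``odd cycles play the role of cliques'' and that a ``central $K_2$ realizes the color constraints after list removal'' gesture in this direction, but neither construction is actually given, and the undeletability of the central object is not addressed, so the proof as proposed does not go through without supplying these constructions.
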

\begin{proof}
	Let $q \in \{1,2\}$ and $\epsilon > 0$. Let $b$ and $r$ be the values given by \cref{lem:csp-structured} for $d \coloneqq q + 1$ and $\epsilon$. 
	
	Suppose for a contradiction that for all constant $\sigma, \delta  \in \mathbb{N}$, there exists an algorithm solving all $n$-vertex instances of \coloringVD{q}, given with a \core{\sigma}{\delta} of size at most $\cpar$, in time $(q + 1 - \epsilon)^{\cpar} \cdot n^{\bigO(1)}$.
	
	Consider a ($b, \boldf$)-structured $N$-variable $(d,b \cdot r)$-\textsc{CSP} instance $\calI = (\calV,\calC)$, where $\boldf \colon \big[\frac{N}{b}\big] \to \{0, \ldots,b\}$.
	We will define an instance $(G,k)$ of \coloringVD{q} that is equivalent to $\calI$,
	and solving it with the hypothetical algorithm mentioned before,
	would allow us to solve $\calI$ in time $(d-\epsilon)^N \cdot N^{\bigO(1)}$.
	By \cref{lem:csp-structured} this contradicts the SETH.
	
	Note that without loss of generality we may assume that every constraint in $\calC$ has at least one satisfying assignment, as otherwise $\calI$ is clearly a no-instance.
	
	\paragraph{Construction of $(G,k)$.}
	We start by introducing the set $Y = \bigcup_{v \in \calV} \{y(v) \}$ of \textit{variable vertices}.
	For each $v \in \calV$, the vertex $y(v)$ represents the variable $v$.
	The intended meaning is that coloring $y(v)$ with a color $c \in [q]$ corresponds to assigning the value $c$ to $v$.
	Deleting $y(v)$ then corresponds to assigning the value $q+1$ to $v$.
	
	Consider a constraint $C \in \calC$.
	Let $V_C$ be the set of variables involved in $C$ and let $R_C$ be the relation enforced by $C$ on $V_C$.
	Let $\ell_C$ be the size of $R_C$, i.e., the number of valuations of $V_C$ that satisfy $C$; recall that $\ell_C \geq 1$.
	We will create a gadget $\gd_C$ representing $C$.
	
	We introduce a clique $\cl_C$ with vertices $\bigcup_{\boldr \in R}\{ x(\boldr)\}$,
	where $x(\boldr)$ corresponds to the assignment $\boldr$.
	
	In the case of $q = 2$, for each edge $x(\boldr)x(\boldr')$ of $\cl_C$,
	we add two additional vertices and connect them both to $x(\boldr)$ and $x(\boldr')$.
	
	Next, we add edges between the vertices of $\cl_C$ and $Y$.
	Consider $\boldr \in R$.
	For $q = 1$ and all $v \in V_C$, we connect $x(\boldr)$ and $y(v)$ if $\boldr$ sets $v$ to 2.
	For $q = 2$ and for all $v \in V_C$ we proceed as follows:
	\begin{myitemize}
		\item if $\boldr$ sets $v$ to 1, we connect $x(\boldr)$ and $y(v)$,
		\item if $\boldr$ sets $v$ to 2, then we introduce a two-edge path with endvertices $x(\boldr)$ and $y(v)$,
		\item and finally, if $\boldr$ sets $v$ to 3, we introduce both an edge and a two-edge path between $x(\boldr)$ and $y(v)$.
	\end{myitemize}
	
	This completes the description of $\gd_C$. 
	Now, for each $C \in \calC$, we introduce $N+1$ copies of the gadget.
	The $j$-th copy will be denoted by $\gd_C^{(j)}$, and we will use the same convention when refering to the vertices of that copy.
	Finally, if $q=2$, we introduce a single ``global'' vertex $h$ and make it adjacent to every vertex $\cl_C^{(i)}$, for all $C \in \calC$ and $i \in [N+1]$.
	Let $G$ be the resulting graph.
	By $Y'$ we denote the set of vertices that are \emph{not} in any $\gd_C^{(i)}$ for $C \in \calC$ and $i \in [N+1]$, i.e.,
	$Y'=Y$ if $q=1$, and $Y' = Y \cup \{h\}$ if $q=2$. 
	
	Let us define
	\[
	k' \coloneqq  \sum_{i=1}^{N / b} f_i \qquad \text{ and } \qquad k  \coloneqq  (N+1) \cdot \left( \sum_{C \in \calC} (\ell_C - 1) \right) + k'.
	\]
	
	This completes the definition of the instance $(G,k)$ of \coloringVD{q}.
	
	\paragraph{Equivalence of instances.} 
	We will now show that $\calI$ is satisfiable if and only if $G$ becomes $q$-colorable after deleting at most $k$ vertices.
	
	\medskip
	Suppose that there exists a satisfying assignment $\psi \colon \calV \to [q+1]$.
	We will find a set $X$ of size at most $k$ such that $G - X$ is $q$-colorable.
	Note that for $q=1$ this means that $X$ is a vertex cover, while for $q=2$ this means that $X$ is an odd cycle transversal.
	
	For all $v \in \calV$, the vertex $y(v)$ is included in $X$ if and only if $\psi(v) = q+1$.
	Furthermore, for $q=2$, the vertex $h$ is not included in $X$.
	Since $\calI$ is ($b$,$\boldf$)-structured, this means that $|X \cap Y'| = k'$.
	
	Consider a constraint $C \in \calC$.
	As $\psi$ satisfies $C$, we obtain that $\psi$ restricted to $V_C$ corresponds to some element of $R_C$, say $\boldr$.
	Now consider $j \in [N+1]$ and the gadget $\gd_C^{(j)}$.
	We include in $X$ all vertices from $\cl_{C}^{(j)}$, except the vertex $x(\boldr)^{(j)}$.
	In total, we obtain that
	\[
	|X \setminus Y'| = (N+1) \cdot \left( \sum_{C \in \calC} (\ell_C - 1) \right) = k - k'.
	\]
	Summing up, we have $|X| = k$.
	It is straightforward to verify that for $q=1$, the set $X$ is a vertex cover of $G$, i.e., the graph $G-X$ has a proper $1$-coloring.
	Similarly, for $q=2$, the set $X$ is an odd cycle transversal: To see this, we can define a proper 2-coloring of $G-X$ as follows. For each $v \in \calV$, if $\psi(v) \in \{1,2\}$, then we color $y(v)$ with the color $\psi(v)$.
	The vertex $h$ receives color 1, and the remaining vertices from $\cl_C^{(i)}$, for $C \in \calC$ and $i \in [N+1]$, receive color 2. The vertices introduced for edges of $\cl_C^{(i)}$ receive color 1. Finally, the endvertices of two-edge paths joining vertices from $Y$ with the vertices from $\cl_C^{(i)}$ are all colored 2, so we can color the middle vertices with color 1.
	
	Summing up, there exists a set of vertices of size $k$ whose removal from $G$ results in a graph that is $q$-colorable. Thus $(G,k)$ is a yes instance of \coloringVD{q}.
	\medskip
	
	Now suppose that there exists a set $X \subseteq V(G)$ of size at most $k$ and a proper $q$-coloring $\phi$ of $G - X$.
	We claim that for each constraint $C \in \calC$ and $j \in [N+1]$ the following two properties hold:
	\begin{myenumerate}[(P1)]
		\item At least $\ell_C - 1$ vertices from $\gd_{C}^{(j)}$ belong to $X$.
		\item If exactly $\ell_C-1$ vertices from $\gd_{C}^{(j)}$ belong to $X$, then all of them belong to $\cl_{C}^{(j)}$.
	\end{myenumerate}
	
	For $q = 1$, the properties are clear: any vertex cover of a clique with $\ell_C$ vertices must contain at least $\ell_C-1$ vertices.
	Furthermore, property (P2) follows trivially, as $\gd_C^{(j)}$ contains no vertices that are not in $\cl_C^{(j)}$.
	
	Now consider the case $q=2$ and let us first argue about property (P1).
	For contradiction suppose that $X$ contains at most $\ell_C-2$ vertices from $\gd_{C}^{(j)}$; in particular there are two vertices from  $\cl_{C}^{(j)}$ that are not in $X$.
	On the other hand, $X$ must contain at least $\ell_C - 2$ vertices from $\cl_{C}^{(j)}$ as otherwise $G - X$ has a triangle and thus is not bipartite.
	Therefore no other vertices from $\gd_C^{(j)}$ are included in $X$.
	However, recall that for every edge of  $\cl_{C}^{(j)}$ we introduced two vertices adjacent with both endpoints of this edge.
	Thus $G-X$ contains a triangle, a contradiction.
	
	Now let us argue about (P2). For contradiction suppose that $X$ contains $\ell_C-1$ vertices from  $\gd_{C}^{(j)}$,
	but only at most $\ell_C - 2$ of them are in $\cl_{C}^{(j)}$.
	Repeating the argument from the previous paragraph, this means that exactly $\ell_C - 2$ vertices from $\cl_{C}^{(j)}$ are in $X$.
	Since $X$ might contain at most one of the vertices adjacent to both vertices from  $\cl_{C}^{(j)} -X$, we conclude that $G-X$ contains a triangle, a contradiction.

	Note that property (P1) implies that $|X \setminus Y'| \geq  (N+1) \cdot \left( \sum_{C \in \calC} (\ell_C - 1) \right) = k - k'$.

	Without loss of generality we can assume that for each $C \in \calC$ and every vertex $x$ of $\gd_C$, either the copy of $x$ in each $\gd_C^{(j)}$ is included in $X$, or none of them. Furthermore, for $q=2$, the coloring $\phi$ restricted to each copy of $\gd_C$ (after removing $X$) is exactly the same.
	Indeed, otherwise we can pick a copy whose intersection with $X$ is the smallest and use this solution (i.e., vertices in $X$ and the $q$-coloring of the remaining vertices) on all other copies, obtaining another solution with at most the same number of deleted vertices.
	Notice that if for some $C' \in \calC$, the set $X$ contains more than $\ell_{C'}-1$ vertices from each copy of $\gd_{C'}$,
	then
	\[
	|X| = |X \setminus Y'| + |X \cap Y'| \geq (N+1) \cdot \left( \sum_{C \in \calC} (\ell_C - 1) \right) + (N+1) + |X \cap Y'| = k - k' + (N+1) > k,
	\]
	a contradiction. Thus, for each $C \in \calC$ and each $j \in [N+1]$, the set $X$ contains exactly $\ell_C-1$ vertices from $\gd_C^{(j)}$ and, by property (P2), all these vertices belong to $\cl_C^{(j)}$.
	In particular this implies that $|X \setminus Y'| = k-k'$.

	\medskip
	Finally, we set the valuation $\psi \from \calV \to [q+1]$ according to the coloring of vertices in $Y$:
	$v \in \calV$ is assigned the value $i \in [q]$ if $y(v) \notin X$ and $\phi(v)=i$,
	and it is assigned the value $q + 1$ if $y(v) \in X$.
	
	We claim that $\psi$ satisfies all constraints in $\calC$.
	First, let us consider a constraint $C \in \calC^{\textrm{sign}}$ corresponding to block $V_i$ for $i \in [N/b]$.
	Consider a copy $\gd_C^{(j)}$ of $\gd_C$;
	recall that the solution (i.e., the intersection with $X$ and the coloring of the remaining vertices) is the same for each copy.
	Since $\ell_C - 1$ vertices were deleted from $\cl_{C}^{(j)}$, there is a unique vertex $x$ in $\cl_{C}^{(j)}-X$.
	Let $\boldr \in R_C$ be the assignment corresponding to $x$. 
	Since $C \in \calC^{\textrm{sign}}$, exactly $f_i$ variables have the value $q + 1$ in $\boldr$.
	Note that for each such variable $v \in V_i$, the vertex $y(v)$ must be included in $X$.
	Indeed, if $q=1$, then $y(v)x$ is an edge of $G$.
	If $q=2$, then $y(v)$ and $x$ are contained in a triangle whose third vertex is not in $X$.
	Thus, for each $i \in [N/b]$, at least $f_i$ vertices from $\{x(v) ~|~ v \in V|_i\}$ must be included in $X$.
	Since $\sum_{i=1}^{N/b} f_i = k'$ and $|X \setminus Y'| = k-k'$, we conclude that $|X \cap Y| = k'$.
	Consequently, for each $i \in [N/b]$ exactly $f_i$ vertices from $\{x(v) ~|~ v \in V_i\}$ are in $X$.
	This implies that exactly $f_i$ variables from $V_i$ are mapped to $q+1$ and thus $C$ is satisfied.
	Furthermore, we have that in case $q=2$ the vertex $h$ does not belong to $X$.
	Without loss of generality we may assume that in this case we have $\phi(h)=1$ (otherwise we can switch colors $1$ and $2$ in $\phi$).
	
	Now consider $C \in \calC^{\textrm{sat}}$ and again let $\boldr \in R_C$ be the assignment corresponding to the unique vertex $x$ of $\cl_C^{(1)} - X$.
	By the argument in the previous paragraph we observe that for each $v \in V_C$ we have  $\psi(v) = q+1$ whenever $\boldr$ maps $v$ to $q+1$. In particular, if $q=1$, this shows that $C$ is satisfied.
	So consider the case that $q=2$, we need to argue that values $1$ and $2$ are assigned according to $\boldr$.
	As $\phi(h)=1$, we have that $\phi(x)=2$.
	Note that for every $v \in V_C$ such that $\boldr$ maps $v$ to 1 we have an edge joining $x$ and $x$ and $y(v)$, and thus $\phi(y(v))=1$.
	Simlarly, every $v \in V_C$ such that $\boldr$ maps $v$ to 2 is joined with $x$ with a two-edge path whose middle vertex is not in $X$, so we have $\phi(y(v))=2$.
	Consequently, all variables from $V_C$ are assigned according to $\boldr$ and thus $C$ is satisfied.
	
	Summing up, indeed $(G,k)$ is a yes-instance of \coloringVD{q} if and only if $\calI$ is satisfiable.
	
	\paragraph{Structure of the instance.}
	Let us bound the number of vertices of $G$. We will only consider the case that $q=2$, the case of $q=1$ is similar but  simpler. We have
	\begin{align*}
		|V(G)| \leq \ & |Y'| + (N+1) \cdot \sum_{C \in \calC} \left( (\ell_C - 1) \cdot 2 + \binom{\ell_C - 1}{2} \cdot 2 \right)\\ 
		&\leq (N+1) + (N+1) \cdot (N/b + N^{br}) (2 d^{br} + d^{2br}) = N^{\bigO(1)},
	\end{align*}
	as $|\calC|=\abs{\calC^{\textrm{sat}}}+\abs{\calC^{\textrm{sign}}} \leq N/b + N^{br}$, $\ell_C \leq d^{br}$, and $d,b,r$ are constants.
	
	Define $Q \coloneqq Y'$, we clearly have $Q = N+1$. Note that connected components of $G - Q$ are precisely copies of $\gd_C$ for $C \in \calC$. The number of vertices of each such component is at most $\sigma \coloneqq (\ell_C - 1) \cdot 2 + \binom{\ell_C - 1}{2} \cdot 2 \leq d^{2br}$, and each such gadget is connected to at most $\delta \coloneqq br+1$ vertices of $Q$.
	Thus $Q$ is a \core{\sigma}{\delta}, where $\sigma$ and $\delta$ depend only on $d,b,r$, which in turn depend only on $q$ and $\epsilon$.
	
	\paragraph{Running time.} The construction of the instance
	$(G,k)$ takes time  polynomial in $N$. Also, $G$ has a
	\core{\sigma}{\delta} of size at most $p \coloneqq N + 1$. Using the
	hypothetical algorithm that solves all $n$-vertex instances of \coloringVD{q}
	instances with a \core{\sigma}{\delta} of size at most $p$ in time $(q
	+ 1 - \epsilon)^{p} \cdot  n^{\bigO(1)}$, we get an algorithm that
	solves $\calI$ in time
	\begin{equation*}
		(q + 1 - \epsilon)^{p} \cdot n^{\bigO(1)} = (q + 1 - \epsilon)^{N} \cdot N^{\bigO(1)}
	\end{equation*}
	which, by \cref{lem:csp-structured}, contradicts the SETH.
\end{proof}

Now let us move to the proof of our lower bound for \coloringVD{q} for $q \geq 3$.

\begin{lem}\label{lem:qcolVD-2}
	For $q \geq 3$ and $\epsilon > 0$, there exist $\sigma, \delta > 0$ depending on $q$ and $\epsilon$
	such that there is no algorithm solving all $n$-vertex instances of \coloringVD{q} given with a
	\core{\sigma}{\delta} of size at most $\cpar$, in time $(q + 1 - \epsilon)^{\cpar} \cdot n^{\bigO(1)}$, unless the SETH fails.
\end{lem}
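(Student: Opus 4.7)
The plan is to adapt the strategy of \cref{lem:qcolVD-1} by replacing its bespoke low-$q$ constructions with a design that exploits the extra colors available when $q\geq 3$. I start from a $(b,\boldf)$-structured $N$-variable instance $\calI$ of $(q+1,br)$-\textsc{CSP} obtained from \cref{lem:csp-structured} with $d=q+1$, first build an equivalent instance of \listcoloringVD{q}, and finally remove the lists using the standard $q$-clique trick from the proof of \cref{thm:LBqColSETH} at the cost of only $O(1)$ additional hub vertices. The hub is $Q=Y=\{y(v):v\in\calV\}$ with $L(y(v))=[q]$; coloring $y(v)$ by $c\in[q]$ encodes the assignment $v\mapsto c$ and deleting $y(v)$ encodes $v\mapsto q+1$.

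For every $C\in\calC$ I attach a gadget $\gd_C$ to the hub vertices of its scope and take $N+1$ copies to block deletions inside the gadget beyond a base cost. The gadget contains a selector clique $\cl_C=\{x(\boldr):\boldr\in R_C\}$ with $L(x(\boldr))=\{1\}$, so every proper list coloring of $\cl_C$ must delete $\ell_C-1$ of its vertices, giving a per-copy base cost of $\ell_C-1$. For every pair $(\boldr,v)$ with $v\in V_C$, I install conditional forcers between $x(\boldr)$ and $y(v)$ that activate only while $x(\boldr)$ survives: a direct edge $x(\boldr)y(v)$ forbids $y(v)=1$, and for each $c'\in\{2,\ldots,q\}$ a new vertex with list $\{1,c'\}$ adjacent to $x(\boldr)$ and $y(v)$ is pinned to color $c'$ exactly when $x(\boldr)$ is colored $1$, thereby forbidding $y(v)=c'$. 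By installing the forcers corresponding to $[q]\setminus\{\boldr(v)\}$ if $\boldr(v)\in[q]$, and to all of $[q]$ if $\boldr(v)=q+1$, the surviving selector $x(\boldr)$ pins $y(v)$ to the color $\boldr(v)$ or forces its deletion. The key verification is that when $x(\boldr)$ is deleted, every attached forcer still has two colors available and can be colored consistently with any state of $y(v)$, so deleted branches never impose extra deletions.

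Setting the budget $k=(N+1)\sum_C(\ell_C-1)+\sum_i f_i$, the forward direction is immediate from a satisfying assignment of $\calI$. For the reverse direction, I follow the symmetry argument of \cref{lem:qcolVD-1}: one may assume all $N+1$ copies of each $\gd_C$ use the same internal deletion pattern, and the per-copy lower bound together with the tight budget forces exactly one surviving selector $x(\boldr_C)$ per gadget and exactly $f_i$ hub deletions in each block $V_i$. The forcers attached to $x(\boldr_C)$ leave only two hub states per position --- namely $y(v)=\boldr_C(v)$ or $y(v)$ deleted --- so the signature-enforcing constraints in $\calC^{\textrm{sign}}$ rule out any over-deletion and pin the hub to match $\boldr_C$ exactly on the scope of $C$, and agreement between overlapping constraints yields a consistent satisfying assignment of $\calI$. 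A routine bound gives $|V(G)|=N^{\bigO(1)}$ and shows $Y$ is a \core{\sigma}{\delta} with $\sigma,\delta$ depending only on $q$, $b$, $r$, hence only on $q$ and $\epsilon$; an algorithm of running time $(q+1-\epsilon)^p\cdot n^{\bigO(1)}$ would therefore solve $\calI$ in time $(q+1-\epsilon)^N\cdot N^{\bigO(1)}$, contradicting the SETH by \cref{lem:csp-structured}. I expect the main obstacle to be the careful design of the conditional forcers: they must simultaneously pin $y(v)$ when $x(\boldr)$ survives, impose no mandatory deletion when $x(\boldr)$ is deleted, and remain jointly colorable when many forcers around different $x(\boldr)$'s share the same $y(v)$; maintaining this three-way invariant across all combinations of live and dead selectors is the principal bookkeeping challenge.
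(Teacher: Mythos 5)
Your construction up to the list stage is plausible and is a genuinely different gadget design from the paper's: you generalize the selector-clique idea of \cref{lem:qcolVD-1} (a clique of list-$\{1\}$ vertices, one survivor per copy, plus conditional forcers with lists $\{1,c'\}$), paying a per-copy base cost $\ell_C-1$, whereas the paper uses the deletion-free gadgets of \cref{prop:realize-coloring} applied to the relations $\textsf{Force}(R)$ and $\textsf{Rb}_{f_i}$ (each hub vertex meets $q$ portals whose colors either omit exactly one color or form a rainbow), so that the budget is just $\sum_i f_i$ and no gadget-internal deletions are ever needed. Within the \emph{list} setting your accounting (uniformity across the $N+1$ copies, exactly $\ell_C-1$ deletions per copy all inside $\cl_C$, at least $f_i$ deletions per block forced by the signature gadgets, hence exact counts) mirrors the paper's and looks sound.

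The genuine gap is the last sentence of your second step: removing the lists ``using the standard $q$-clique trick from the proof of \cref{thm:LBqColSETH} at the cost of only $O(1)$ additional hub vertices.'' That trick is only valid for the decision problem; in the vertex-deletion problem the vertices of the added clique are themselves deletable, and your budget argument does not survive this. Concretely, if the solution deletes a single clique vertex, say the one representing color $2$, then every selector vertex (intended list $\{1\}$) gains the effective list $\{1,2\}$ and every forcer gains an extra color, so each copy of each selector clique can keep \emph{two} survivors and the forcers no longer pin the hub; one deletion thus saves on the order of $(N+1)\,|\calC|$ deletions and leaves the hub unconstrained, so essentially every constructed graph (including those coming from unsatisfiable CSP instances) admits a solution within the budget $k$ --- soundness fails. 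This is exactly the difficulty the paper flags in its proof of \cref{lem:qcolVD-2}: one cannot protect a central clique by massive copying without destroying the \core{\sigma}{\delta} structure. The paper's resolution is two-tiered: each of the $N+1$ copies of each \emph{signature} gadget gets its own private $q$-clique, which is safe because $\textsf{Rb}_{f_i}$ is invariant under color permutations, and only the gadgets for $\calC^{\textrm{sat}}$ are wired to a single global clique placed in the hub; its integrity is then \emph{derived}, since the (clique-independent) signature gadgets already force all $k$ budgeted deletions to lie in $Y$. Your signature gadgets are also permutation-invariant in the relevant sense (they only force deletions), so your construction could very likely be repaired along the same lines, but as written the list-removal step is a fatal gap rather than a routine appeal to \cref{thm:LBqColSETH}.
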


\begin{proof}
	Let $b$ and $r$ be the values given by \cref{lem:csp-structured} for $d=q+1$ and $\eps$.
	Consider an $N$-variable ($b,\boldf$)-structured $(q+1,b \cdot r)$-\textsc{CSP} instance $\calI$ with variables $\calV$ and constraints $\calC$, for some $\boldf \colon \big[\frac{N}{b}\big] \to \{0, \ldots,b\}$ and $N \in \mathbb{N}$.
	Again we may assume that each constraint has at least one satisfying assignment.
	
	\paragraph{Auxiliary relations.}
	For a tuple 
	\[
	\boldz = (z_{1,1},\ldots,z_{1,q},z_{2,1},\ldots,z_{2,q},\ldots,z_{b,1},\ldots,z_{b,q}) \in [q]^{bq},
	\]
	by $\textsf{Rainbow}(\boldz)$ we denote the set of those $i$
	for which $\{z_{i,1},\ldots,z_{i,q}\} = [q]$.
	For $j \in \{0,\ldots,b\}$, we define $\textsf{Rb}_j \subseteq [q]^{bq}$ consisting of all tuples $\boldz$ such that $|\textsf{Rainbow}(\boldz)|=j$.
	The important property is that $\textsf{Rb}_j$ is invariant under permutations of the domain.

	For each $i \in [q+1]$ we define a $q$-element vector $\gamma(i)$:
	\begin{myitemize}
		\item if $i =1$, then $\gamma(i) = (2,2,3\ldots,q)$,
		\item if $i \in [2,q]$, then $\gamma(i) = (1,1,2,\ldots,i-1,i+1,\ldots,q)$,
		\item if $i = q+1$, then $\gamma(i) = (1,2,\ldots,q)$.
	\end{myitemize}
	The crux is that if $i \leq q$, then the set of values appearing in $\gamma(i)$ is exactly $[q] \setminus \{i\}$,
	and if $i = q+1$, then the set of values appearing in $\gamma(i)$ is exactly $[q]$.
	
	Now consider a relation $R \subseteq [q+1]^\ell$.
	Let $\textsf{Force}(R) \subseteq [q]^{q\ell}$ be the relation defined as follows. 
	For each $(a_1,a_2,\ldots,a_{\ell}) \in R$, we add to $\textsf{Force}(R)$ the $ \ell q$-element sequence $(\gamma(a_1),\gamma(a_2),\ldots,\gamma(a_{\ell}))$ over $[q]$.
	
	Now we can proceed to our reduction.
	
	\paragraph{Intermediate step: an instance of the list variant.}
	Let us first construct an instance of the \emph{list} variant of \coloringVD{q}, i.e., a triple $(G',L,k)$,
	where $G'$ is a graph whose every vertex $u$ is equipped with a list $L(u) \subseteq [q]$ of admissible colors (the option of deleting a vertex is always available).
	We want to decide whether one can obtain a yes-instance of list $q$-coloring by deleting at most $k := \sum_{i=1}^{N/b} f_i$ vertices from $G'$.
	
	We start by introducing the set $Y = \bigcup_{v \in \calV} \{y(v) \}$ of vertices, each with list $[q]$.
	For each $v \in \calV$, the vertex $y(v)$ represents the variable $v$.
	The intended meaning is that coloring $y(v)$ with a color $c \in [q]$ corresponds to assigning the value $c$ to $v$.
	Deleting $y(v)$ then corresponds to assigning the value $q+1$ to $v$.
	
	Consider a constraint $C \in \calC^{\textrm{sign}}$ related to the block $V_i$.
	We call \cref{prop:realize-coloring} for $\textsf{Rb}_{f_i}$ --- here is where we use the fact that $q\ge 3$ --- to obtain a $bq$-ary gadget
	\[
	\calF(C) = (F(C), L, (z_{1,1},\ldots,z_{1,q},z_{2,1},\ldots,z_{2,q},\ldots,z_{b,1},\ldots,z_{b,q})),
	\]
	with all lists contained in $[q]$ and interface vertices forming an independent set.
	If $v$ is the $j$-th variable in $V_i$, we make vertices $z_{j,1},\ldots,z_{j,q}$ adjacent to $y(v)$.
	We repeat the above step $N+1$ times, i.e., for each $i \in [N/b]$ we introduce $N+1$ copies of the gadget $\calF(C)$.
	
	Now consider a constraint $C \in \calC^{\textrm{sat}}$. We proceed similarly as in the previous case.
	Denote the arity of $C_{\boldf}$ by $\ell$ (where $\ell \leq r'$) and let $R \subseteq [q+1]^{r'}$ be the relation enforced by $C$ (i.e., the set of satisfying assignments). 
	
	We call \cref{prop:realize-coloring} for the relation $\textsf{Force}(R)$ to obtain an $\ell q$-ary gadget
	\[
	\calF(C) = (F(C),L,(z_{1,1},z_{1,2},\ldots,z_{1,q},z_{2,1},\ldots,z_{2,q},\ldots,z_{\ell,1},\ldots,z_{\ell,q})),\]
	with all lists contained in $[q]$. 
	If $v$ is the $j$-th variable of $C$, we make the vertices $z_{i,1},\ldots,z_{i,q}$ adjacent to $y(v)$. 
	
	This completes the construction of $(G',L,k)$.

	\paragraph{Equivalence of instances.} 
	We claim that $(G',L,k)$ is a yes-instance of the list variant of \coloringVD{q} if and only if $\calI$ is satisfiable. 
	
	First, suppose that there exists a satisfying assignment $\varphi : \calV \to [q+1]$ of $\calI$.
	We assign colors to vertices of $Y$ as described above: for $v \in \calV$, if $\varphi(v) \leq q$, then we color $y(v)$ with color $\varphi(v)$. If $\varphi(v) = q+1$, then the vertex $y(v)$ is deleted.
	
	Since all constraints in $\calC^{\textrm{sign}}$ are satisfied, we note that we delete exactly $f_i$ variables from each block $V_i$, and thus in total we delete $k = \sum_{i=1}^{N/b} f_i$ vertices from $Y$. Thus we need to argue that the coloring of the remaining vertices of $Y$ can be extended to the coloring of all gadgets, without deleting any further vertices.
	
	Consider a constraint $C \in \calC^{\textrm{sign}}$ related to the block $V_i$ and a copy of the gadget $\calF(C)$ with the portals $z_{1,1},\ldots,z_{1,q},\ldots,z_{b,1},\ldots,z_{b,q}$.
	Let $j \in [b]$ and $v$ be the $j$-th variable of $C$.
	We color the vertices $z_{j,1},\ldots,z_{j,q}$ so that the vector of colors appearing on these vertices is $\gamma(\varphi(v))$. Note that by the definition of $\gamma$, the color of $y(v)$ does not appear on its neighbors.
	Finally, by the definition of $\textsf{Rb}_{f_i}$, we can extend this coloring of the portals of the gadget to the coloring of the whole gadget, without deleting any vertices from it.
	
	For contraints $C \in \calC^{\textrm{sat}}$ we proceed in a similar way.
	Let $R$ be the relation enforced by $C$ on its variables.
	Consider the gadget $\calF(C)$ introduced for $C$ and let $z_{1,1},\ldots,z_{\ell,q}$ be its portals, where $\ell=\ar(C)$.
	For each $j$, the vertices $z_{j,1},\ldots,z_{j,q}$ receive the vector of colors $\gamma(\varphi(v))$, where $v$ is the $j$-th variable involved in $C$.
	Now the coloring of portals of the gadget is in $\textsf{Force}(R)$, so it can be extended to the coloring of the whole gadget without removing any additional vertices. 
	
	\medskip
	
	Now suppose that there is a set $X \subseteq V(G)$ of size at most $k$ and a proper list coloring $\psi$ of $G'- X$.
	Note that we can safely assume that, for any $C \in \calC^{\textrm{sign}}$,
	all $N+1$ copies of $F(C)$ receive exactly the same coloring.
	Indeed, if this is not the case, we can recolor all copies in the same way as the one in which there are the fewest deleted vertices, obtaining another solution to the problem.
	As $N+1 > N \geq \sum_{i=1}^{N/b} f_i = k$, we conclude that for all $i \in [N/b]$, no vertex from $F(C)$ is deleted.
	
	We set the valuation of variables of $\calI$ according to the coloring of $Y$.
	Consider a variable $v \in \calV$.
	If $y(v) \notin X$, we set the value of $v$ to $\psi(y(v))$, and otherwise we set the value of $v$ to $q+1$.
	Note that at most $|X|\leq k$ variables were mapped to $q+1$.
	We claim that this valuation satisfies all constraints of $\calI$.
	
	First, consider a constraint $C \in \calC^{\textrm{sign}}$ related to the block $V_i$, and a copy of $\calF(C)$ introduced for $C$.
	Recall that no vertex from the gadget was deleted, thus the coloring of the portals of the gadget must respect the relation $\textsf{Rb}_{f_i}$.
	In other words, for $f_i$ variables $v$ of $C$ it holds that all colors from $[q]$ appear in the neighborhood of $y(v)$.
	This means that such a vertex $y(v)$ must be deleted.
	Consequently, for each $i \in [N/b]$, the set $X$ contains at least $f_i$ vertices $y(v)$ for $v \in V_i$.
	
	Summing up, we conclude that $|X| = k$ and that all the deleted vertices are in $Y$.
	Furthermore, exactly $f_i$ vertices corresponding to variables from $V_i$ are deleted, which means that $C$ is satisfied by our valuation.
	
	Now consider a constraint $C \in \calC^{\textrm{sat}}$ of arity $\ell$.
	Consider the gadget $\calF(C)$ introduced for $C$ and denote its portals by $\boldz=(z_{1,1},\ldots,z_{\ell,q})$.
	Recall that by the previous argument no vertex from the gadget is deleted.
	Thus the coloring of the portals of the gadget belongs to $\textsf{Force}(R)$, where $R$ is the relation enforced by $C$.
	Let $(a_1,\ldots,a_p) \in R$ be such that $(\gamma(a_1),\ldots,\gamma(a_p))$ is the coloring of the portals of the gadget.
	
	Let $v$ be the $j$-th variable of $C$ and let $i$ be such that $v \in V_i$.
	Recall that all variables from $V_i$ appear in $C$.
	
	By the definition of $\gamma$, if $a_j = q+1$, then $j \in \mathrm{Rainbow}(\boldz)$ and so $y(v)$ is certainly in $X$.		Consequently, the value of the variable $v$ was set to $q+1$.
	Now consider the case that $a_j \leq q$, say $a_j = c$. Then either $\psi(y(v))=c$ (and thus $v$ gets the value $c$) or $y(v) \in X$.
	However, recall that by the definition of $C$, for exactly $f_i$ variables from $V_i$ their index appears in $\mathrm{Rainbow}(\boldz)$. Note that this is \emph{not} ensured by any relation in $\calC^{\textrm{sign}}$ and this is why $C \in \calC^{\textrm{sat}}$ still needs to check consistency with $\boldf$.

	On the other hand, exactly $f_i$ vertices from $\bigcup_{v' \in V_i} \{y(v')\}$ are in $X$.
	Thus $y(v)$ is not in $X$ and therefore the value of $v$ is set to $c$.
	Summing up, the valuation of the variables of $C$ is exactly $(a_1,\ldots,a_p)$, which means that $C$ is satisfied.

	Let us point out the special structure of the constraints in $\calC^{\textrm{sign}}$ was not used yet and so far we 	 could have obtained the same outcome by introducing $N+1$ copies of each gadget related to a contraint in $\calC^{\textrm{sat}}$.
	
	\paragraph{\boldmath Construction of $(G,k)$.}
	Now let us modify the instance $(G',L,k)$ into an equivalent instance $(G,k)$ of \coloringVD{q}.
	
	For a graph $\widetilde{G}$ with lists $\widetilde{L} : V(\widetilde{G}) \to 2^{[q]}$ and a set $V' \subseteq V(\widetilde{G})$,
	by \emph{simulating lists of $V'$ by $K$} we denote the following operation.
	We introduce a clique $K$ with vertices $c_1,\ldots,c_q$, and for every vertex $u$ of $\widetilde{G}$ and every $i \in [q]$,
	we make $u$ adjacent to $c_i$ if and only if $i \notin \widetilde{L}(u)$.
	
	The typical way of turning an instance of list $q$-coloring to an equivalent instance of $q$-coloring is to introduce a ``global'' $q$-clique $K$ and simulate lists of all vertices of the graph with $K$; recall e.g., the proof of \cref{thm:LBqColSETH}.
	However, in the vertex deletion variant this is no longer that simple, as we have to prevent deleting vertices from this central clique.
	Note that we cannot make them undeletable by introducing (say roughly the size of $G$) many copies of the clique and connecting them in an appropriate way, as this would create an instance without a \core{\sigma}{\delta} for constant $\sigma$, $\delta$.
	
	\medskip
	Let us discuss how to modify $(G',L,k)$ into $(G,k)$.
	Note that the vertices from $Y$ already have lists $[q]$ so they do not have to be simulated.
	For each constraint $C \in \calC^{\textrm{sign}}$ and for each copy of the gadget $\calF(C)$ introduced for $C$,
	we introduce a private $q$-clique and simulate the lists of all vertices of the gadget by this clique.
	
	Next we introduce a single $q$-clique $K$, and the lists of all the remaining vertices, i.e., the vertices from the gadgets  $\calF(C)$ introduced for $C \in \calC^{\textrm{sat}}$, are simulated by $K$. This completes the construction of $G$.
	
	Observe that $K$ is meant to ``synchronize'' the colorings of the gadgets $\calF(C)$ for $C \in \calC^{\textrm{sat}}$.
	However, the colorings of the gadgets $\calF(C)$ for $C \in \calC^{\textrm{sign}}$ are not synchronized, i.e., each clique introduced for these gadgets might receive a different permutation of colors.
	
	Let us argue that $(G',L,k)$ is a yes-instance of the list variant of \coloringVD{q} if and only if $(G,k)$ is a yes-instance of \coloringVD{q}.
	
	First suppose that there is a set $X \subseteq V(G')$ of size at most $k$ and a proper $q$-coloring $\psi$ of $G' - X$ that respects lists $L$. Note that we can delete from $G$ exactly the same vertices $X$,
	and use exactly the same coloring $\psi$ on the remaining vertices from $V(G) \cap V(G')$, and then extend this partial coloring to the additionally introduced $q$-cliques according to the description of the operation of simulating lists. 
	
	On the other hand, suppose that there is a set $X \subseteq V(G)$ of size at most $k$ and a proper $q$-coloring $\psi$ of $G - X$.
	We aim to show that $\psi$ respects lists $L$ on vertices from $V(G) \cap V(G')$.
	
	Similarly as we did in the previous paragraph when analyzing the properties of $(G',L,k)$ we observe that no vertex from any gadget introduced for $C \in \calC^{\textrm{sign}}$, along with its private $q$-clique, was deleted. Indeed, this is because there are $N+1$ copies of each gadget.
	Consider $i \in [N/b]$ and the constraint $C \in \calC^{\textrm{sign}}$ introduced for the block $V_i$. 
	Next, recall that the relation $\textsf{Rb}_{f_i}$ enforced by $C$ is invariant to permuting the set of colors.
	Consequently, simulating lists by a local copy of $K_q$ is sufficient to ensure that the coloring $\psi$ of the vertices of the gadget satisfies lists $L$. This is exactly the point where we use the special structure of constrains in $\calC^{\textrm{sign}}$.
	
	Now, similarly as in the previous case, we can argue that $X \subseteq Y$.
	Consequently, the vertices from $K$, i.e., the common copy of $K_q$ introduced for all gadgets $\calF(C)$ for $C \in \calC^{\textrm{sat}}$,
	are not deleted.
	Thus they simulate the lists of all vertices within the gadgets, i.e., the coloring $\psi$ respects lists $L$.
	
	\paragraph{\boldmath Structure of $G$.} 
	The number of vertices of $G$ is at most (below $h_1(\cdot), h_2(\cdot)$ are some functions)
	\begin{align*}
		n = & |Y| + |K| + \sum_{C \in \calC^{\textrm{sign}} } (|V(F(C)|+q) \cdot (N+1) + \sum_{C \in \calC^{\textrm{sat}}} |V(F(C))| \\
		\leq & N + q + N \cdot (h_1(b,q)+q) \cdot (N+1) + |\calC| \cdot h_2(b,q,r) \\
		\leq & N + q + N(N+1) \cdot (h_1(b,q)+q) + N^r \cdot h_2(b,q,r) \\
		= &  \bigO(N^{r+1}) = N^{\bigO(1)},
	\end{align*}
	as $q,b,r$ are constants.
	
	Set $Q= Y \cup K$, note that $|Q| = N+q$.
	There are two types of components of $G-Q$:
	(i) (subgraphs of) copies of the gadgets $F(C)$  introduced for $C \in \calC^{\textrm{sign}}$,
	including the local $q$-clique, or
	(ii) (subgraphs of) copies of the gadgets $F(C)$  introduced for $C \in \calC^{\textrm{sat}}$.
	Using the notation from the formula above, the components of type (i) have size at most $h_1(b,q)+q$ and each of them attaches to $b$ vertices from $Q$, and the components of type (ii) have size at most $h_2(b,q,r)$ and each of them attaches to at most $br+q$ vertices from $Q$.
	Since $r$ depends on $q$ and $\epsilon$, we conclude that $Q$ is a \core{\sigma}{\delta} for some $\sigma$, $\delta$ depending only on $q$ and $\epsilon$.
	
	\paragraph{Running time.}
	Now let us estimate the running time.
	The construction of $(G,k)$ takes polynomial time. Furthermore, $G$ has a \core{\sigma}{\delta} of size at most $\cpar \coloneqq N + \bigO(1)$. This implies that if there is an algorithm solving all $n$-vertex instances of \coloringVD{q} with a
	\core{\sigma}{\delta} of size at most $\cpar$ in time $(q + 1 - \epsilon)^{\cpar} \cdot n^{\bigO(1)}$, then we get an algorithm solving all $N$-variable ($b,\boldf$)-structured instances of $(q+1,b \cdot r)$-\textsc{CSP} in time
	\begin{equation*}
		(q+1-\epsilon)^{p} \cdot n^{\bigO(1)}=(q+1-\epsilon)^{N} \cdot N^{\bigO(1)},
	\end{equation*}
	which, by \cref{lem:csp-structured}, contradicts the SETH.
\end{proof}

Now, \cref{thm:vd-coloring-intro} follows from combining \cref{lem:qcolVD-1} and \cref{lem:qcolVD-2}.

\subsection{\boldmath Simple Algorithm for \coloringVD{q}}\label{sec:vd-basic-algo}
The following algorithm solves the \coloringVD{q} problem for $q \geq 1$, which also includes the \textsc{Vertex Cover} problem for $q = 1$ and \textsc{Odd Cycle Transversal} for $q = 2$. It is worth noting that this algorithm can be modified to use treewidth as the parameter.

\begin{thm}
	For all integers $q, \sigma, \delta \geq 1$, every $n$-vertex instance of \coloringVD{q} can be solved in time $(q+1)^{p} \cdot n^{\bigO(1)}$ if a \core{\sigma}{\delta} of size $p$ is given in the input.
\end{thm}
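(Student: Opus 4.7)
The plan is to use a straightforward brute-force branching algorithm on the \coreword\ vertices and then solve each constant-size component independently.

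First, I would branch over all $(q+1)^p$ ways to fix the status of each vertex in the \coreword\ $Q$ of size $p$: each vertex is either assigned one of the $q$ colors, or marked as deleted. For each such branch, I immediately discard the branch if two adjacent vertices within $Q$ receive the same color (and neither is deleted), since no completion could then yield a proper coloring. Otherwise, I have committed to a partial assignment on $Q$, which also determines the contribution of $Q$ to the deletion cost.

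Next, for each branch, I process the components of $G - Q$ independently. The key observation is that each component $C$ of $G - Q$ is adjacent only to vertices in $Q$, so once the status of $Q$ is fixed, the optimal deletion pattern inside $C$ decouples from every other component. Since $|V(C)| \le \sigma$ and $\sigma$ is a constant, I can in time $(q+1)^\sigma = O(1)$ enumerate every possible assignment to the vertices of $C$ (color or delete), check that it is consistent with the fixed assignment on the neighbors of $C$ in $Q$ (i.e.\ no monochromatic edge survives), and among the valid assignments record the minimum number of deletions used within $C$. If no assignment is valid, this branch is infeasible; otherwise I sum the per-component minima with the number of deletions in $Q$ to get the total cost of the best solution extending this branch.

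Finally, I return the minimum total cost over all non-discarded branches. Correctness is clear because any feasible solution restricted to $Q$ yields one of the branches, and within that branch my per-component minimization finds the best compatible extension. For the running time, there are $(q+1)^p$ branches; each branch requires $O(1)$ work per component and there are at most $n$ components, plus polynomial overhead for checking edges within $Q$ and reading the input, giving the claimed $(q+1)^p \cdot n^{\bigO(1)}$ bound. There is no real obstacle here; the only thing to be slightly careful about is that $\delta$ plays no role in the analysis (it only bounds how many vertices of $Q$ each component sees, which is automatically $\le \sigma$ anyway when enumerating assignments to $C$), and that the per-component brute force handles both the coloring constraints and the possibility of deleting vertices inside $C$ uniformly.
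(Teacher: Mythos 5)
Your proposal is correct and matches the paper's own proof essentially verbatim: enumerate the $(q+1)^p$ color/delete assignments on the hub, then for each branch compute the optimal extension to each component of $G-Q$ independently in constant time (since components have at most $\sigma$ vertices), and return the cheapest total. The only difference is that you make explicit the check for conflicts among edges inside $Q$ and the remark that $\delta$ is irrelevant to the analysis, both of which are fine and implicit in the paper.
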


\begin{proof}
As input to \coloringVD{q}, consider a graph $G$ with $n$ vertices, given along with a \core{\sigma}{\delta} $Q$ of size $p$.
We exhaustively guess the set $X \subseteq Q$ of \coreword\ vertices that are deleted and a $q$-coloring $f$ of $Q-X$.
This results in at most $(q+1)^p$ branches.

Now, for each component $C$ of $G-X$, we compute the minimum number of vertices to be deleted from $C$,
so that the remaining vertices admit a proper $q$-coloring that extends $f$.
We output the coloring with the fewest vertices deleted in total (i.e., from $Q$ and from the components of $G-Q$).

Recall that each component $C$ of $G-Q$ is of size at most $\sigma$, i.e., a constant.
Thus, the optimum extension of the coloring $Q-X$ to $C$ can be computed in constant time.
Since the number of components $C$ is at most $n$, we conclude that the running time of each branch is bounded by polynomial.
Summing up, the overall running time is $(q+1)^p \cdot n^{\bigO(1)}$.
\end{proof}

\section{\boldmath Edge Deletion to \coloring{q}}\label{sec:ed-basic}

Now let us move to the edge-deletion counterpart of \coloringVD{q}.
The goal of this section is to prove \cref{thm:coloringEDcombined,thm:LBcolEDMSHintro}.

The algorithmic part \cref{thm:coloringEDcombined}\,(1) is again simple and stated in \cref{sec:ed-basic-algo} for completeness.
For the hardness result \cref{thm:coloringEDcombined}\,(2), the case $q\ge3$ already follows from our result \cref{thm:coloringcombined} for the decision problem. However, the case $q=2$ (i.e., the result for \maxcut) is the missing piece, and so it remains to show the following hardness result.

\begin{thm}\label{thm:LBmaxcutSETH}
	For every $\eps>0$, there are integers $\sigma$ and $\delta$ such that no algorithm solves every $n$-vertex instance of \maxcut that is given with a \core{\sigma}{\delta} of size $p$, in time $(2-\eps)^p\cdot n^{\bigO(1)}$, unless the SETH fails.
\end{thm}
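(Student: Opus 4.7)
The plan is a two-stage reduction from $(2, r)$-\textsc{CSP} to \maxcut, routed through the intermediate list problem \listcoloringED{2}. By \cref{thm:csp}, for any $\epsilon > 0$ I may fix $r$ such that the SETH rules out any $(2-\epsilon)^n$-time algorithm for $(2, r)$-\textsc{CSP}. Let $\calI = (\calV, \calC)$ be such an instance with $n = \abs{\calV}$ variables.

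In the first stage I build an equivalent \listcoloringED{2} instance $(G_0, L, k')$. The hub $Q_0$ contains a vertex $y(v)$ with list $\{1, 2\}$ for each variable $v \in \calV$, where the colors encode the two values. For each constraint $C \in \calC$, \cref{cor:K2relations} supplies a gadget $1$-realizing the relation $R_C$, and I identify its portals with the vertices $y(v)$ for the variables in the scope of $C$. Since every gadget has uniform satisfaction cost $k_C$ and uniform violation cost $k_C + 1$, the global budget $k' := \sum_{C \in \calC} k_C$ is achievable iff $\calI$ is satisfiable. Each component of $G_0 - Q_0$ is a single relation gadget of constant size attached to at most $r$ hub vertices, so $Q_0$ is a $(\sigma_0, r)$-hub for a constant $\sigma_0$ depending only on $r$.

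The second stage removes the lists. The case $q = 2$ cannot mimic the $q \ge 3$ construction because a $q$-clique is merely an edge and the two colors are symmetric in \maxcut. Instead I add one \emph{anchor} vertex $c$ to the hub; by the color symmetry of \maxcut, I may normalise any optimum so that $c$ receives color $1$. For every vertex $v \in V(G_0)$ and every forbidden color $i \notin L(v)$, I attach a constant-sized \emph{penalty gadget} between $v$ and $c$ consisting of $M$ vertex-disjoint paths: length-$2$ paths when $i = 2$ (pulling $v$ toward $c$'s color) and length-$3$ paths when $i = 1$ (pushing $v$ away from it). Each penalty gadget contributes a known additive constant $\kappa$ to the non-cut edge count when $v$ respects the relevant list entry, and strictly more otherwise. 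Taking $M$ larger than $\abs{E(G_0)} + 1$ ensures that any list-violating coloring is strictly more expensive than any list-respecting one. Setting $k = k' + K$ for the total baseline penalty $K$ completes the \maxcut instance $(G, k)$.

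The final graph $G$ has hub $Q = Q_0 \cup \{c\}$ of size $n + 1$. Every component of $G - Q$ is either a relation gadget (constant size, $\le r$ hub-neighbors) or a penalty path (size $\le 3$, exactly $2$ hub-neighbors), so $Q$ is a $(\sigma, \delta)$-hub for constants $\sigma, \delta$ depending only on $\epsilon$. A hypothetical $(2-\epsilon)^p \cdot n^{\bigO(1)}$ algorithm for \maxcut would then solve $\calI$ in time $(2-\epsilon)^n \cdot n^{\bigO(1)}$, contradicting the SETH. The main obstacle is calibrating the list-removal step: without a multi-colored clique to exploit, the penalty structure must simulate both "same colour as $c$" and "different colour from $c$" constraints through inequivalent small gadgets whose costs are uniform across list-respecting colorings (so the budget is well-defined) yet strictly dominate any potential saving obtainable by violating a list, all while keeping each gadget adjacent to only $O(1)$ hub vertices.
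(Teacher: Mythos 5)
Your overall route is the same as the paper's: first reduce a CSP with Boolean domain to \listcoloringED{2} using the $1$-realizers of \cref{cor:K2relations} attached to variable vertices, then remove the lists by adding a single anchor vertex and simulating the list entry $\{1\}$ (resp.\ $\{2\}$) by bundles of parallel $3$-vertex (resp.\ $4$-vertex) paths to the anchor, normalising the anchor's colour by the symmetry of \maxcut. However, there is a genuine flaw in how you calibrate the bundles. You take $M>\abs{E(G_0)}+1$ parallel paths per forbidden list entry, and $\abs{E(G_0)}$ grows polynomially with $n$. These paths are attached to \emph{internal} gadget vertices, which are not in the hub $Q=Q_0\cup\{c\}$ (only the variable vertices $y(v)$ and the anchor are), so after deleting $Q$ each relation gadget stays in one component together with all of its pendant paths. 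With $M=\Theta(\abs{E(G_0)})$ that component has size polynomial in $n$, so $Q$ is \emph{not} a \core{\sigma}{\delta} for any constants $\sigma,\delta$ depending only on $\eps$, and the whole lower-bound framework collapses. (Relatedly, your structural claim that every component of $G-Q$ is ``either a relation gadget or a penalty path with exactly $2$ hub-neighbours'' is inaccurate for the same reason: the paths merge with the gadget components; this is harmless once $M$ is a constant, but it is not what your construction produces.)

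The fix is to argue locally rather than globally, which is what the paper does: for a gadget $\calJ_i$ with satisfaction cost $\alpha_i$, it suffices to attach on the order of $\alpha_i+2$ parallel paths per restricted vertex, a \emph{constant} depending only on $r$. The point is that violating the list of a vertex $v$ of $\calJ_i$ forces at least one non-cut edge in each of the parallel paths at $v$, i.e.\ at least $\alpha_i+2$ deletions, whereas recolouring $v$ to respect its list and repairing the colouring inside $\calJ_i$ costs at most $\alpha_i+1$ edge deletions (by the $1$-realizer property) and zero deletions in the paths; hence an optimum solution may be assumed to respect all lists. With constantly many paths per vertex, each component of $G-Q$ has constant size and at most $r+1$ neighbours in $Q$ ($r$ portals plus the anchor), and the rest of your argument goes through as stated.
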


Our second main result covered in this section, \cref{thm:LBcolEDMSHintro}, states that
under the \msh the lower bound from \cref{thm:coloringEDcombined} holds even for universal constants $\sigma$ and $\delta$, i.e., for $\sigma$ and $\delta$ that no longer depend on $\eps$. 
For the case of \maxcut, we show that constant $\sigma$ and $\delta=4$ suffice. 

\begin{thm}\label{thm:LBmaxcutMSH}
	There is an integer $\sigma$ such that, for every $\eps>0$, no algorithm solves every $n$-vertex instance of \maxcut that is given with a \core{\sigma}{4} of size $p$, in time $(2-\eps)^p\cdot n^{\bigO(1)}$, unless the \msh fails. 
\end{thm}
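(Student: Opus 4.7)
The plan is to reduce from \MaxCSP{$2$}{$3$}, which by \cref{thm:M3SHtoMaxCSP} cannot be solved in time $(2-\eps)^n\cdot n^{\bigO(1)}$ on $n$-variable instances unless the \msh fails. The fact that we may fix the arity to $r=3$ is exactly what will let us achieve $\delta=4$; using a larger arity would blow up the number of portal vertices each gadget touches.

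First I would construct an instance of \listcoloringED{$2$}. Given a CSP instance $\calI$ with variables $v_1,\dots,v_n$ and ternary constraints $C_1,\dots,C_m$, introduce main vertices $y_1,\dots,y_n$ each with list $\{1,2\}$; these will form the bulk of the \coreword. For each constraint $C_j$ with scope $(v_{i_1},v_{i_2},v_{i_3})$ and satisfying tuples $R_{C_j}\subseteq [2]^3$, attach a fresh copy of the $3$-ary $2$-gadget given by \cref{cor:K2relations} that $1$-realizes $R_{C_j}$, identifying its three portals with $y_{i_1},y_{i_2},y_{i_3}$. By the $1$-realizer property, the minimum number of edge deletions needed to extend a portal coloring to the gadget is a constant $k_{C_j}$ for tuples in $R_{C_j}$ and exactly $k_{C_j}+1$ for tuples outside $R_{C_j}$, so the optimal list-coloring edge-deletion cost equals $\sum_j k_{C_j}$ plus the minimum number of violated constraints of $\calI$.

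Next I would remove the lists by adding a single central vertex $c$ to the hub. For every gadget vertex $v$ with $L(v)=\{1\}$ I add the edge $vc$; for $L(v)=\{2\}$ I add a length-$2$ path $v\!-\!w_v\!-\!c$ through a fresh auxiliary vertex $w_v$ placed inside the same component as $v$. Fixing $c$ to side $2$ (by symmetry of \maxcut), the length-$1$ connection penalises ``same side as $c$'' and the length-$2$ connection penalises ``opposite side of $c$'', so these connections mimic the original list restrictions. The resulting hub is $Q=\{y_1,\dots,y_n,c\}$ of size $p=n+1$; each component of $G-Q$ is a single gadget together with its internal auxiliary vertices $w_v$, of size bounded by a constant $\sigma$ depending only on the (constant-bounded) gadget sizes from \cref{cor:K2relations}; and each such component is adjacent to at most the $3$ portals plus $c$, i.e., $\delta=4$.

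The main obstacle will be correctness of the list-removal step: I must show that the optimum \maxcut cost of the constructed graph equals the optimum list-coloring cost of the intermediate instance plus a fixed additive constant. One direction is routine — any valid list colouring extends by placing $c$ on side $2$ and choosing each $w_v$ optimally. The reverse direction requires arguing that no optimum \maxcut solution ``cheats'' by violating a list-simulation connection in order to save cost inside a gadget. The key is a local rearrangement argument built on the $1$-realizer property: given any optimum bipartition, replace the colouring of each gadget component by the list-respecting internal optimum for the portal tuple it induces; because violating tuples cost at least one more edge deletion inside the gadget, and each list-simulation edge already costs $1$ per violation, this swap can never strictly increase the total cost. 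Hence I may assume all lists are respected, reducing back to the list-coloring instance. A $(2-\eps)^p\cdot n^{\bigO(1)}$-time algorithm on the constructed \core{\sigma}{4} instance of \maxcut would then solve $\calI$ in time $(2-\eps)^{n+1}\cdot n^{\bigO(1)}$, contradicting \cref{thm:M3SHtoMaxCSP} under the \msh.
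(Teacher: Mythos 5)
Your overall route is the same as the paper's: reduce from \MaxCSP{$2$}{$3$} using \cref{thm:M3SHtoMaxCSP}, attach the $1$-realizer gadgets of \cref{cor:K2relations} to three variable vertices per constraint, and then remove lists by wiring list-constrained gadget vertices to a single apex vertex added to the \coreword, which is what keeps $\delta=4$. The gap is in the list-removal step. Your construction charges a penalty of exactly $1$ per violated list (one uncut edge on the pendant edge, resp.\ on the $2$-edge path), and your ``local rearrangement'' argument justifies this by appealing to the $1$-realizer property. But that property only compares \emph{list-respecting} extensions of different \emph{portal} tuples; it says nothing about how many internal edge deletions a cut can save by mis-coloring \emph{interior} gadget vertices whose lists are being simulated. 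In general a single list violation can save several internal deletions. For a concrete instance inside the very gadgets you are using: in the $\OR{2}$ building block of \cref{clm:or2} (a $5$-cycle with one vertex of list $\{2\}$), the list-respecting cost of the portal tuple $(2,2)$ is $3$, yet mis-coloring the single list-$\{2\}$ vertex gives a $2$-coloring with only one uncut internal edge; with your unit penalty the total is $2<3$, so an optimal cut cheats and the claimed identity ``optimum of \maxcut $=$ optimum of \listcoloringED{2} $+$ constant'' fails. Hence the step ``this swap can never strictly increase the total cost'' is false as stated, and the backward direction of your correctness argument does not go through.

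The fix — and this is exactly what the paper does — is to make the penalty for each violated list exceed the maximum possible internal saving. Since any portal tuple of the gadget $\calJ_i$ can be extended list-respectingly after at most $\alpha_i+1$ deletions, it suffices to attach on the order of $\alpha_i+2$ \emph{parallel} paths from each list-constrained vertex to the apex: $2$-edge paths for one list value and $3$-edge paths for the other (parallel paths rather than parallel edges keep the graph simple). Then violating a list costs at least $\alpha_i+2$ uncut path edges, strictly more than it could ever save inside the gadget, so every optimal cut respects the lists and the reduction to the list instance is restored. Since $\alpha_i$ and the gadget sizes depend only on the arity ($3$) this keeps $\sigma$ a universal constant, the path-interior vertices stay inside the gadget's component, and each component still meets only its $3$ portals plus the apex, so the \core{\sigma}{4} structure and the $(2-\eps)^{p}$ lower bound under the \msh are preserved.
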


For $q\ge3$, the respective lower bound for \coloringED{q} can be obtained for constant $\sigma$ and $\delta=6$.

\begin{thm}\label{thm:LBcolEDMSH}
	For every $q\ge 3$ there is an integer $\sigma$ such that the following holds.
	For every $\eps>0$, no algorithm solves every $n$-vertex instance of \coloringED{q} that is given with a \core{\sigma}{6} of size $p$, in time $(q-\eps)^p\cdot n^{\bigO(1)}$, unless the \msh fails.
\end{thm}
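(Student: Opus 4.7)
My plan is to derive \cref{thm:LBcolEDMSH} from the list-coloring analogue \cref{thm:LBLHomEDKqMSH} via a polynomial-time transformation that removes lists while adding only $q$ vertices to the hub and at most $3$ new hub-neighbors to each component of the graph minus the hub. Starting from an instance $(G, L, k)$ of \listcoloringED{q} with a \core{\sigma_q}{3} $Q$ of size $p$, I produce an equivalent instance $(G', k')$ of \coloringED{q} whose hub $Q' = Q \cup V(K)$ of size $p + q$ is a \core{\sigma}{6}, for some constant $\sigma$ depending only on $q$.

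The construction introduces a fresh $q$-clique $K = \{c_1, \ldots, c_q\}$ placed entirely inside the new hub, and, for each vertex $u \in V(G) \setminus Q$ with $L(u) \subsetneq [q]$ and each $i \notin L(u)$, adds the edge $u c_i$. In any proper $q$-coloring $\phi$ of $G'$ after edge deletions, the vertices of $K$ take pairwise distinct colors, so after a global color relabeling I may assume $\phi(c_i) = i$. Then every surviving list-enforcement edge $u c_i$ forces $\phi(u) \ne i$, meaning $\phi$ automatically respects the original list at $u$ unless an edge of $K$ or a list-enforcement edge has been deleted. To make sure that no such structural edges are deleted in an optimal solution, I apply the standard replication technique: each component of $G - Q$ is duplicated $N$ times for $N$ large compared to $k$, all copies sharing the same portals in $Q$ but having disjoint internal vertices and therefore disjoint list-enforcement edges to $K$. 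With $k' = N\cdot k$, an exchange argument shows that any solution deleting a structural edge can be replaced by one that does not: the local saving per copy is bounded by a constant, so replication makes the trade-off unprofitable. Hence optimal solutions of the \coloringED{q} instance correspond to optimal solutions of the \listcoloringED{q} instance.

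The delicate point is keeping $\delta \le 6$ rather than the naive $\delta \le 3 + q$. Each component of $G' - Q'$ is a duplicate of a component of $G - Q$, so it has at most $\sigma_q$ vertices and at most $3$ neighbors in the original hub $Q$. Its neighbors in $K$ are exactly those $c_i$ for which some internal vertex $u$ of the component satisfies $i \notin L(u)$. To bound this count by $3$, I refine the list-coloring gadgets behind \cref{thm:LBLHomEDKqMSH} (obtained via \cref{prop:realize-coloring}) so that the internal vertices of every gadget-component collectively forbid at most three distinct colors. This is accomplished by decomposing any realized relation into a composition of small sub-gadgets, each constraining only a bounded number of colors, with each sub-gadget inhabiting its own component of $G - Q$. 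After the refinement, every component of $G' - Q'$ has at most three neighbors in $Q$ plus at most three neighbors in $K$, yielding $\delta \le 6$.

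The running-time accounting is then routine: a hypothetical algorithm solving \coloringED{q} in time $(q - \eps)^{p'} \cdot n^{\bigO(1)}$ on instances with a \core{\sigma}{6} of size $p'$ yields, via the above construction (with $p' = p + q$ and $n = |V(G')|$ polynomial in $|V(G)|$), a $(q - \eps)^{p} \cdot |V(G)|^{\bigO(1)}$-time algorithm for \listcoloringED{q} on instances with a \core{\sigma_q}{3} of size $p$, after absorbing the constant $(q - \eps)^q$ factor into the polynomial overhead. This contradicts \cref{thm:LBLHomEDKqMSH} and hence the \msh. I expect the main obstacle to be the refinement of the list-coloring gadgets so that each component uses only a bounded number of forbidden colors; without this, the reduction only achieves $\delta \le 3 + q$, which fails to be universal in $q$. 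The duplication argument and the clique-based list simulation itself are otherwise well-established techniques.
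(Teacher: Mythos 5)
Your high-level plan (take the \msh-based list lower bound, add a $q$-clique to the hub to simulate lists, keep $\delta$ at $6$) matches the paper's intent, but the two steps you lean on are exactly the ones that are not established, and the paper resolves them differently. The first gap is your central claim that the gadgets behind \cref{thm:LBLHomEDKqMSH} can be ``refined so that the internal vertices of every gadget-component collectively forbid at most three distinct colors.'' Nothing in \cref{prop:realize-coloring} or the Jaffke--Jansen construction gives this (those gadgets use small lists, i.e.\ lists that forbid up to $q-1$ colors), you give no construction, and your fallback of ``decomposing into sub-gadgets, each inhabiting its own component'' cannot work as stated: sub-gadgets living in different components can only communicate through the hub, so the shared intermediate vertices would have to be placed in the hub, inflating its size by $\Theta(\text{number of constraints})$ and destroying the $(q-\eps)^{N}$ accounting. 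The paper never builds such color-sparse gadgets. Instead it keeps the unrestricted gadgets $\calJ_\boldd$ and gives each copy $\gamma_\boldd+1$ \emph{private} $q$-cliques inside its own component to simulate the lists; these private cliques are tied to the global hub vertices $k_1,\ldots,k_q$ only at the $\le 3$ colors $d_1,\ldots,d_\ell$ occurring in the forbidden tuple $\boldd$, and the permutation argument (\cref{clm:recoloringJd}) shows that pinning down just those colors suffices, because the relation ``$\ne\boldd$'' is preserved under permutations fixing them. That is what yields exactly $3+r=6$ hub neighbours per component; it is a different mechanism from the one you propose, and it is the missing idea.

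The second gap is the exchange argument protecting the ``structural'' edges. You duplicate each component $N$ times and claim ``the local saving per copy is bounded by a constant, so replication makes the trade-off unprofitable.'' For the edges of the global clique $K$ this is backwards: those edges are shared, there are only $\binom{q}{2}$ of them, and after replication the budget is $k'=Nk$ while deleting all of $K$'s edges costs a constant. If a non-rainbow coloring of $K$ relaxed the simulated lists and saved even one deletion per gadget copy, the total saving would be $\Theta(N\cdot m)$ against a constant cost, so the cheat would be profitable; ruling this out requires an argument you do not give. Similarly, within a single copy, deleting one list-enforcement edge $uc_i$ costs $1$ but can save up to $\gamma_\boldd>1$ internal deletions, so per-copy protection also fails without multiplicity in the enforcement structure itself. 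The paper's proof is shaped precisely around these issues: the clique on $k_1,\ldots,k_q$ is not a clique at all but $z'+1$ parallel inequality gadgets per pair (so breaking it costs more than the whole budget), and the list simulation is replicated $\gamma_\boldd+1$ times per gadget copy via the private cliques, after which the exact cost accounting ($0$ or $\gamma_\boldd$ deletions per copy, hence exactly $P$ per violated constraint) survives. As written, your proposal does not close either of these holes.
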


The combination of \cref{thm:LBmaxcutMSH,thm:LBcolEDMSH} gives \cref{thm:LBcolEDMSHintro}.

This section is structured as follows. 
First, in \cref{sec:maxCSPMSH}, we establish a lower bound for \textsc{Max-CSP} (\cref{thm:M3SHtoMaxCSP}) based on the \msh. This will be the starting point for our \msh-based lower bounds. 
Second, in order to bridge from the coloring problems to \textsc{Max-CSP} we show how to model arbitrary relations with the coloring problems via gadget constructions. For $q\ge 3$ we can rely on the previously established \cref{prop:realize-coloring}. For $q=2$,we have to do some additional work, and this is done in \cref{sec:ed-relations}. 
Third, in \cref{sec:ed-maxcut}, we consider the case $q=2$ --- here we show \cref{thm:LBmaxcutSETH,thm:LBmaxcutMSH} in a unified proof. Finally, in \cref{sec:ed-col}, we treat the case $q\ge 3$ and show \cref{thm:LBcolEDMSH}.
In both \cref{sec:ed-maxcut} and \cref{sec:ed-col}, we show intermediate results for the \listcoloringED{q} problem. In fact, we first show the following results for list colorings, and subsequently show how to remove the lists in the reductions.

\LBLHomEDKqSETH*

\LBLHomEDKqMSH*

\subsection{\textsc{Max\,CSP} --- Hardness under \msh}\label{sec:maxCSPMSH}
Let us introduce a constraint-deletion variant of $(d,r)$-\textsc{CSP}.
An instance of \MaxCSP{$d$}{$r$} is the same as in $(d,r)$-\textsc{CSP} and we ask for a smallest possible number of constraints that need to be deleted in order to obtain a yes-instance of $(d,r)$-\textsc{CSP}.

Clearly, using a brute-force approach we can solve an $n$-variable instance of  \MaxCSP{$d$}{$r$} in time $d^n \cdot n^{\bigO(1)}$.
Let us explore the consequences of M3SH to the complexity of \MaxCSP{$d$}{$r$}.
Clearly, \MaxSat is a special case of \MaxCSP{$2$}{$3$}, which means that there is no $\epsilon>0$ such that for any $r \geq 3$,
\MaxCSP{$2$}{$r$} with $n$ variables can be solved in time $(2-\epsilon)^n \cdot n^{\bigO(1)}$, unless the M3SH fails.

\begin{thm}\label{thm:powerof2}
	For $p \geq 1$ and any $r \geq 3$, there is no algorithm solving every $n$-variable instance of \MaxCSP{$2^p$}{$r$} in time $(2^p-\epsilon)^n\cdot n^{\bigO(1)}$, for $\epsilon>0$, unless the M3SH fails.
\end{thm}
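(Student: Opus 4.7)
The plan is to reduce \MaxSat directly to \MaxCSP{$2^p$}{$3$}, which is a special case of \MaxCSP{$2^p$}{$r$} for any $r \geq 3$, by a straightforward grouping argument. The key observation is that a single variable over a domain of size $2^p$ can simulate $p$ Boolean variables, because the $2^p$ values of the CSP variable are in bijection with the $\{0,1\}^p$ assignments of the $p$ Boolean variables.

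Concretely, given a \MaxSat instance $\phi$ with $N$ variables $x_1, \ldots, x_N$, I would first assume (by padding with at most $p-1$ dummy variables that do not appear in any clause) that $p$ divides $N$. Then I partition the variables into blocks $B_1, \ldots, B_{N/p}$ of size $p$, and introduce one CSP variable $y_i$ over $[2^p]$ per block, fixing a bijection between $[2^p]$ and the Boolean assignments of $B_i$. For each clause $C$ of $\phi$, its at most $3$ literals live in at most $3$ blocks; I introduce a single CSP constraint of arity at most $3$ on the corresponding $y_i$'s, whose relation contains exactly those tuples whose decoding satisfies $C$. This gives an instance of \MaxCSP{$2^p$}{$3$} with $n = N/p$ variables and one constraint per clause of $\phi$, and moreover there is a bijection between Boolean assignments of $\phi$ and CSP assignments of the reduced instance under which the number of violated clauses equals the number of violated constraints.

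Now suppose for contradiction that for some $\epsilon > 0$ there is an algorithm for \MaxCSP{$2^p$}{$r$} running in time $(2^p - \epsilon)^n \cdot n^{\bigO(1)}$. Applying it to the reduced instance gives a running time of
\[
(2^p - \epsilon)^{N/p} \cdot N^{\bigO(1)} = c^N \cdot N^{\bigO(1)},
\]
where $c = (2^p - \epsilon)^{1/p}$. Since $\epsilon > 0$ we have $c < 2$, so setting $\epsilon' = 2 - c > 0$ yields a $(2 - \epsilon')^N \cdot N^{\bigO(1)}$ algorithm for \MaxSat, contradicting the \msh.

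There is no real obstacle here; the whole proof is the standard ``blocking'' trick and the only thing to verify carefully is that arity-$3$ \MaxSat clauses translate to arity-at-most-$3$ CSP constraints (which they do, since the three literals of a clause touch at most three blocks). The genuinely subtle case, where $q$ is not a power of $2$, is what the subsequent randomized/derandomized construction in \cref{sec:maxCSPMSH} is designed to handle; for the present theorem the argument is purely combinatorial.
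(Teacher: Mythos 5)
Your proposal is correct and matches the paper's proof essentially verbatim: the paper also pads to make $p$ divide the number of Boolean variables, groups them into blocks of size $p$ encoded bijectively by one domain-$2^p$ variable each, turns each clause into one arity-at-most-$3$ constraint, and derives the contradiction from the same running-time calculation $(2^p-\epsilon)^{n/p} = (2-\epsilon')^n$ up to polynomial factors. No differences worth noting.
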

\begin{proof}
	Note that it is sufficient to prove the lower bound for $r=3$.
	Consider an instance $I_1$ of \MaxSat with variables $V$ and the set of clauses $C$, where $|V|=n$. By introducing fewer than $p$ dummy variables,
	we can assume that $n$ is divisible by $p$.
	Let $N = n/p$. We partition $V$ into $N$ blocks of size $p$.
	Denote these blocks by $V_1,V_2,\ldots,V_N$.
	
	Now we are going to create an instance $I_2$ of \MaxCSP{$2^p$}{$3$}. For each $i \in [N]$, we introduce a new variable $x_i$.
	We bijectively map each possible valuation of variables in $V_i$ to one of $2^p$ possible values of $x_i$.
	
	Now consider a clause $C$ of the \MaxSat instance, involving variables $v,v',v''$, such that $v \in V_i$, $v' \in V_{i'}$, and $v'' \in V_{i''}$.
	We introduce a constraint involving variables $x_i, x_{i'}$, and $x_{i''}$, which is satisfied by exactly those values that correspond to valuations that
	satisfy $C$. We repeat this step for every clause $C$; note that the arity of each constraint is at most 3.
	
	It is clear that the created instance $I_2$ of \MaxCSP{$2^p$}{3} is equivalent to the original instance $I_1$ of \MaxSat.
	Now let us argue about the running time.
	For contradiction, suppose we can solve $I_2$ in time $(2^p-\epsilon)^N \cdot N^{\bigO(1)}$ for some $\epsilon > 0$,
	i.e., $(2^p)^{\delta N}\cdot N^{\bigO(1)}$ for some $\delta <1$ depending on $p$ and $\epsilon$.
	This means that $I_1$ can be solved in time $(2^p)^{\delta N} \cdot N^{\bigO(1)} = 2^{\delta n} \cdot n^{\bigO(1)}= (2-\epsilon')^n\cdot n^{\bigO(1)}$ for some $\epsilon'>0$. This contradicts the M3SH.
\end{proof}

\begin{lem}\label{lem:randomized-reduction}
	Let $d \geq 3$ and let $d'$ and $p$ be integers such that $d'=2^p$ and $d\le d'$.
	If there is $\eps>0$ such that every $n$-variable instance of \MaxCSP{$d$}{$3$} can be solved in time $(d-\epsilon)^n\cdot n^{\bigO(1)}$,
	then there is $\epsilon' > 0$ depending on $d$ and $\epsilon$ such that every $n$-variable instance of \MaxCSP{$d'$}{$3$} can be solved in randomized time $(d'-\epsilon')^n\cdot n^{\bigO(1)}$.
\end{lem}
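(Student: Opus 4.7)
}
The plan is to give a simple randomized domain-reduction: we will shrink each variable's domain from $[d']$ to a uniformly random $d$-element subset, solve the resulting \MaxCSP{$d$}{$3$} instance by the assumed algorithm, and repeat enough times to hit the event that an optimal assignment of the original instance survives the restriction.

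Concretely, given an $n$-variable instance $\calI=(\calV,\calC)$ of \MaxCSP{$d'$}{$3$}, I would, for each variable $v\in\calV$ independently, sample a uniformly random set $S_v\subseteq [d']$ of size $d$ and fix a bijection $\pi_v\from S_v\to [d]$. For each constraint $C\in\calC$ with scope $(v_1,v_2,v_3)$ and relation $R_C\subseteq [d']^3$, define a new constraint $C'$ with the same scope and relation $R_{C'}=\{(\pi_{v_1}(a_1),\pi_{v_2}(a_2),\pi_{v_3}(a_3)) : (a_1,a_2,a_3)\in R_C\cap(S_{v_1}\times S_{v_2}\times S_{v_3})\}$. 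This yields an $n$-variable instance $\calI'$ of \MaxCSP{$d$}{$3$}.

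The key observation is the following correspondence: assignments $\phi'\from\calV\to[d]$ of $\calI'$ are in bijection, via the $\pi_v$'s, with assignments $\phi\from\calV\to[d']$ of $\calI$ that satisfy $\phi(v)\in S_v$ for every $v$; moreover, $\phi'$ violates exactly the same number of constraints in $\calI'$ as $\phi$ violates in $\calI$. Therefore the minimum number of unsatisfied constraints $\mathrm{opt}(\calI')$ is always at least $\mathrm{opt}(\calI)$, and equality holds whenever some optimal assignment $\phi^\ast$ of $\calI$ is \emph{surviving}, i.e., $\phi^\ast(v)\in S_v$ for every $v$. Since each $S_v$ is a uniform $d$-subset of $[d']$, the probability that $\phi^\ast(v)\in S_v$ is exactly $d/d'$, and by independence across variables a fixed $\phi^\ast$ survives with probability $(d/d')^n$.

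Thus the algorithm is: for a threshold $k$, repeat the sampling and solve the resulting $\calI'$ with the hypothesised $(d-\epsilon)^n\cdot n^{\bigO(1)}$-time algorithm, answering ``$\mathrm{opt}(\calI)\le k$'' iff some trial gives $\mathrm{opt}(\calI')\le k$. Running $T=n\cdot(d'/d)^n$ independent trials drives the one-sided error down to $(1-(d/d')^n)^T\le e^{-n}$. The total running time is
\[
T\cdot(d-\epsilon)^n\cdot n^{\bigO(1)} \;=\; \bigl(\tfrac{d'}{d}(d-\epsilon)\bigr)^n\cdot n^{\bigO(1)} \;=\; \bigl(d'-\tfrac{d'}{d}\epsilon\bigr)^n\cdot n^{\bigO(1)},
\]
so setting $\epsilon'=d'\epsilon/d>0$ gives the claimed bound. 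There is no real obstacle here; the only subtlety is to phrase the decision-version correspondence so that the inequality $\mathrm{opt}(\calI')\ge\mathrm{opt}(\calI)$ (with equality on surviving trials) is used in the right direction, which is exactly what the one-sided randomized test above exploits.
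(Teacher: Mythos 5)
Your proposal is correct and follows essentially the same route as the paper's proof: independently restrict each variable's domain to a uniformly random $d$-element subset of $[d']$, observe that a fixed optimal assignment survives with probability $(d/d')^n$ and that violation counts are preserved on surviving trials, and repeat roughly $(d'/d)^n$ times before invoking the assumed $(d-\epsilon)^n\cdot n^{\bigO(1)}$ algorithm. The only cosmetic differences (explicit bijections $\pi_v$, a decision-version formulation with $n(d'/d)^n$ trials instead of returning the best solution found with constant success probability) do not change the argument.
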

\begin{proof}
	Consider an instance $I_1$ of \MaxCSP{$d'$}{$3$} with $n$ variables.
	Let $\varphi$ be the (unknown) optimum solution, i.e., the one that violates the minimum number of constraints;
	let this number be $k$.
	
	For each variable of $I_1$ independently, uniformly at random we discard $d'-d$ possible valuations.
	More precisely, for each variable $v$ we select a subset $D(v) \subset [d']$ of size $d$ and set it as the domain of $v$.
	We modify each constraint to be satisfied by only these valuations that for each variable $v$ pick a value from $D(v)$;
	note that it is possible that some constraints become not satisfiable at all.
	This way we have created an instance $I_2$ of \MaxCSP{$d$}{$3$}.
	
	Note that, while in $I_2$ we only consider a subset of the valuations of $I_1$ the fact whether some clause is violated or not is preserved.
	In particular, if $\varphi$ satisfies the domains of all variables in $I_2$, then it is an optimal solution of $I_2$.
	
	Note that the probability that $\varphi$ satisfies all domains is $\left( \frac{d}{d'} \right)^n$.
	Thus, using standard calculations we observe that if we repeat our random experiment $(\frac{d'}{d})^n$ times, then the success probability is constant.
	
	So the algorithm is as follows: we randomly create $(\frac{d'}{d})^n$ instances of \MaxCSP{$d$}{$3$} by discarding some valuations as previously described, and each of them is solved with our hypothetical algorithm. We return the best of found solutions; with constant probability this solution will be an optimal solution for $I_1$.
	
	Let us compute the running time (below $\delta, \delta' < 1$ and $\epsilon'>0$ are some constants depending on $d$ and $\epsilon$):
	\[
	\left(\frac{d'}{d} \right)^n \cdot (d-\epsilon)^n \cdot n^{\bigO(1)}= \left(\frac{d'}{d} \right)^n \cdot d^{\delta n} \cdot n^{\bigO(1)}= d'^{\delta' n} \cdot n^{\bigO(1)}= (d' - \epsilon')^n\cdot n^{\bigO(1)}.
	\]
	We remark that we can make the success probability arbitrarily close to 1 by adding more iterations of sampling, which results in a blow-up in the running time by a constant factor.
\end{proof}

Now let us discuss how to derandomize the argument in Lemma~\ref{lem:randomized-reduction}.
Instead of repeated random sampling, we want to enumerate a family $\calF$, such that
\begin{myitemize}
	\item each set in $\calF$ is of the form  $D_1 \times \ldots \times D_n$, where each $D_i$ is a $d$-element subset of $[d']$,
	\item the union of all sets in $\calF$ covers $[d']^n$, and
	\item $\calF$ can be enumerated in time roughly $\left(\frac{d'}{d}\right)^n$.
\end{myitemize}
Then we could replace the random sampling step by testing each member of $\calF$ separately.

Notice that the idea above is actually about finding a small (approximate) solution to a certain \SetCov instance.
Recall that an instance of \SetCov consists of a universe $U$ and a family $\calS$ of subsets of $U$, for which we can assume without loss of generality that $\bigcup_{S \in \calS} S = U$.
The task is to find a minimum-sized family $\calS' \subseteq \calS$ such that $\bigcup_{S \in \calS'} S = U$.

For integers $d' \geq d$ and $n$, by $\calS(d',d,n)$ we denote the family consisting of sets of the form $D_1 \times D_2 \times \ldots \times D_n$, for all possible choices of $D_1,\ldots,D_n \in \binom{[d']}{d}$. Thus, in this notation, our task is to look for a small subfamily of $\calS(d',d,n)$ which covers $[d']^n$.

\begin{lem}\label{lem:setcovergap}
	For every $n>1$ and $d' \geq d$, the \SetCov instance $([d']^n, \calS(d',d,n))$ has a solution of size $\left( \frac{d'}{d} \right)^n \cdot n^{\bigO(1)}$.
\end{lem}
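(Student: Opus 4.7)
}
The plan is to first exhibit a small fractional cover of $([d']^n, \calS(d',d,n))$ and then invoke the logarithmic integrality gap of \SetCov (equivalently, the classical greedy bound of Johnson/Lov\'asz/Chv\'atal) to round it to an integral cover of size $(d'/d)^n \cdot n^{\Oh(1)}$.

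The key combinatorial calculation is a double-counting. First I would assign to every set $D_1\times\cdots\times D_n \in \calS(d',d,n)$ the uniform fractional weight $w = 1/\binom{d'-1}{d-1}^n$. For any fixed $(x_1,\ldots,x_n)\in [d']^n$ and any coordinate $i$, the number of $D_i\in\binom{[d']}{d}$ containing $x_i$ is $\binom{d'-1}{d-1}$, so by independence across the $n$ coordinates the number of sets of $\calS(d',d,n)$ containing $(x_1,\ldots,x_n)$ is exactly $\binom{d'-1}{d-1}^n$. Hence every element is covered with total fractional weight $1$, so this is a feasible fractional set cover. Its total weight equals
\[
	|\calS(d',d,n)|\cdot w \;=\; \frac{\binom{d'}{d}^n}{\binom{d'-1}{d-1}^n} \;=\; \left(\frac{d'}{d}\right)^{n},
\]
using the identity $\binom{d'}{d}/\binom{d'-1}{d-1}=d'/d$.

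Next I would invoke the standard fact that \SetCov admits an integrality gap (equivalently, a greedy rounding) of at most $H(|U|)\le \ln|U|+1$, where $U$ is the universe. In our setting $|U|=|[d']^n|=(d')^n$, so $\ln|U| = n\ln d'$. Applying this to the fractional cover above yields an integer subfamily of $\calS(d',d,n)$ covering $[d']^n$ of size at most
\[
	\left(\frac{d'}{d}\right)^{n}\cdot(n\ln d' + 1) \;=\; \left(\frac{d'}{d}\right)^{n}\cdot n^{\Oh(1)},
\]
since $d$ and $d'$ are treated as constants in the surrounding application (or, more generally, $\ln d'\le \ln |U|$, which is already absorbed into the $n^{\Oh(1)}$ factor when $d'$ is constant; for the quantitative version one keeps an explicit polynomial in $n$ and $d'$).

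The main subtlety, and the only place one needs care, is the ``$n^{\Oh(1)}$'' term: the logarithmic integrality gap gives a factor of $\ln|U|=n\ln d'$, which is polynomial in $n$ precisely when $d'$ is treated as a constant, exactly as is the case in the invocations of this lemma (cf.\ the proof of \cref{lem:randomized-reduction}, where $d$ and $d'$ are fixed constants depending only on the hypothesized $\eps$). The rest of the argument is purely an exercise in binomial identities and a citation of the standard greedy analysis, so no further obstacle is expected.
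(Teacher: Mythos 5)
Your proposal is correct and follows essentially the same route as the paper: the same uniform fractional cover of weight $\binom{d'}{d}^n/\binom{d'-1}{d-1}^n=(d'/d)^n$, followed by the logarithmic integrality gap for \SetCov to obtain an integral cover of size $(d'/d)^n\cdot n^{\bigO(1)}$. Your explicit remark that the $\ln|U|=n\ln d'$ factor is absorbed into $n^{\bigO(1)}$ for constant $d'$ matches how the lemma is used later (in \cref{lem:setcoverefficient}), so there is nothing to add.
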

\begin{proof}
	For an instance $(U,\calS)$ of \SetCov, a \emph{fractional solution}  is a function $f : \calS \to [0,1]$, such that for every $u \in U$ we have $\sum_{\substack{S \in \calS\\ u \in S}} f(S) \geq 1$. The weight of a fractional solution is $\sum_{S \in \calS} f(S)$.
	
	We claim that $([d']^n, \calS(d',d,n))$ admits a fractional solution of weight at most $\left( \frac{d'}{d} \right)^n$.
	Indeed, set $f(S) = \binom{d'-1}{d-1}^{-n}$ for every $S \in \calS(d',d,n)$.
	First, note that each element of $[d']^n$ is covered by exactly $\binom{d'-1}{d-1}^n$ sets in $\calS(d',d,n)$, so indeed $f$ is a fractional solution.
	Its weight is 
	\[
	\sum_{S \in \calS(d',d,n)} f(S) = \left( \frac{\binom{d'}{d}} { \binom{d'-1}{d-1}} \right)^n = \left( \frac{d'}{d} \right)^n.
	\]
	
	Next, we recall a well-known fact that the integrality gap for \SetCov is bounded by a logarithmic function of the size of the universe~\cite{LOVASZ1975383}.
	This, combined with the observation from the previous paragraph, implies that  $([d']^n, \calS(d',d,n))$ has an integral solution of size  $\left( \frac{d'}{d} \right)^n \cdot n^{\bigO(1)}$.
\end{proof}

Note that \cref{lem:setcovergap} is not sufficient for our derandomization procedure, as we need to be able to find this solution efficiently.

\begin{lem}\label{lem:setcoverefficient}
	Let $\delta >0$ and $d' \geq d \geq 2$ be constants.
	For every $n$, there is a family $\calF \subseteq \calS(d',d,n)$ of size at most $\left( \frac{d'}{d} + \delta \right)^n$ that covers $[d']^n$ and can be enumerated in time $\bigO\left(|\calF|\right)$.
\end{lem}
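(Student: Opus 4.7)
The plan is to reduce the general case to a constant-size block via a tensor-product construction. Fix a block size $n_0$ (depending on $\delta$, $d$, $d'$, chosen later) and apply \cref{lem:setcovergap} to the instance $([d']^{n_0}, \calS(d',d,n_0))$ to obtain a cover $\calF_0 \subseteq \calS(d',d,n_0)$ of size at most $(d'/d)^{n_0} \cdot n_0^c$ for some absolute constant $c$. Since $n_0$, $d$, and $d'$ are all constants, this instance has constant size, so such an $\calF_0$ can be located in constant time by exhaustive search over subfamilies (or by running the greedy algorithm). This precomputation happens once and is independent of $n$.

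Write $n = k n_0 + r$ with $0 \le r < n_0$ and obtain analogously a constant-size cover $\calF_r \subseteq \calS(d',d,r)$. Define
\[
  \calF \coloneqq \{\, S_1 \times \cdots \times S_k \times T \;:\; S_1, \ldots, S_k \in \calF_0,\ T \in \calF_r \,\}.
\]
Read coordinate-wise, each product is a member of $\calS(d',d,n)$. To see coverage, split any $(x_1, \ldots, x_n) \in [d']^n$ into $k$ blocks of length $n_0$ plus a tail of length $r$; each block is covered by some $S_i \in \calF_0$ and the tail by some $T \in \calF_r$, hence the corresponding product lies in $\calF$ and contains $(x_1, \ldots, x_n)$. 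Enumeration is by iterating over tuples in $\calF_0^k \times \calF_r$, producing one element of $\calF$ per iteration, so the total running time is $\bigO(|\calF|)$.

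The heart of the argument is the size bound. Using $|\calF_0| \le (d'/d)^{n_0} \cdot n_0^c$, $k n_0 \le n$, $d'/d \ge 1$, and $|\calF_r| = \bigO(1)$, one gets
\[
  |\calF| \le \left( (d'/d)^{n_0} \cdot n_0^c \right)^{k} \cdot \bigO(1) \le (d'/d)^{n} \cdot (n_0^{c/n_0})^{n} \cdot \bigO(1).
\]
Since $n_0^{c/n_0} \to 1$ as $n_0 \to \infty$, I can pick $n_0$ large enough (depending only on $\delta$, $d$, $d'$) so that $(d'/d) \cdot n_0^{c/n_0} \le d'/d + \delta/2$. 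The $\bigO(1)$ overhead is absorbed by the remaining multiplicative slack $((d'/d + \delta)/(d'/d + \delta/2))^n$ once $n$ is sufficiently large; the finitely many small $n$ are handled by any constant-size cover. The main technical obstacle is precisely this asymptotic balancing between the per-block overhead $n_0^c$ from \cref{lem:setcovergap} and the available slack $(1 + \delta d/d')^n$, which dictates the choice of $n_0$ and thereby the whole construction.
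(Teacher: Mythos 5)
Your proposal is correct and follows essentially the same route as the paper: apply \cref{lem:setcovergap} to a constant-size block (found by brute force), take the $n/n_0$-fold product, and choose the block length so that the per-block polynomial overhead $n_0^{c/n_0}$ is absorbed into the $\delta$ slack. The only differences — handling the remainder with a separate constant-size factor and deferring a constant overhead to sufficiently large $n$ — are cosmetic variants of the paper's choice of $m$ with $m^{c/m}<1+\delta d/d'$ and its shrinking of the last block.
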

\begin{proof}
	In what follows we assume that $n$ is sufficiently large, as otherwise we can find the optimum solution using brute force.
	The exact lower bound on $n$ depends on $d,d'$, and $\delta$ and follows from the reasoning below.
	
	Let $c$ be the constant hidden in the $\bigO(\cdot)$-notation in \cref{lem:setcovergap}, i.e., 
	such that for every $n$, the instance $([d']^n, \calS(d',d,n))$ has an integral solution of size at most $\left( \frac{d'}{d} \right)^n \cdot n^c$.
	Let $m$ be the minimum integer such that $m^{c/m} < 1+\delta d/d'$.
	Note that $m$ depends only on $\delta,d$, and $d'$, as $c$ is an absolute constant.
	
	For simpilicity let us assume that $m$ divides $n$; the argument below can be easily modified to the general case by making the last block smaller.
	Let $\calF'$ be the family given by \cref{lem:setcovergap} for $([d']^m, \calS(d',d,m))$; note that we can compute it by brute force as $m$ is a constant.
	We observe that the family
	\[
	\calF = \underbrace{\calF' \times \calF' \times \ldots \times \calF'}_{n/m \text{ times}}
	\]
	covers $[d']^n$.
	
	Let us estimate the size of $\calF$. It is upper-bounded by
	\[
	|\calF'|^{n/m} \leq \left( \left( \frac{d'}{d} \right)^m \cdot m^c \right)^{n/m} = \left( \frac{d'}{d} \right)^n \cdot (m^{c/m})^n  <  \left( \frac{d'}{d} \right)^n \cdot \left( 1 + \frac{d \delta}{d'} \right)^n =  \left( \frac{d'}{d} + \delta \right)^n.
	\]
	Clearly the family $\calF$ can be enumerated in time proportional to its size.
\end{proof}

We can now prove \cref{thm:M3SHtoMaxCSP}, which we restate for convenience.
\MTSHtoMaxCSP*
\begin{proof}
	If $d$ is a power of 2, then the result follows already from \cref{thm:powerof2}.
	So assume that $d$ is not a power of 2, and let $d'$ be the smallest power of 2 which is greater than $d$.
	We reduce from \MaxCSP{$d'$}{$r$} consider an instance with $n$ variables.
	
	For contradiction, suppose that there is some $\epsilon>0$ and some algorithm solving every $n$-variable instance of \MaxCSP{$d$}{$r$} in time $(d-\epsilon)^n\cdot n^{\bigO(1)}$.
	Let $\delta := \frac{\epsilon d'}{d^2}$ and let $\calF$ be the family given by \cref{lem:setcoverefficient} for this value of $\delta$.
	We proceed similarly as in the proof of \cref{lem:randomized-reduction}: for each set in $\calF$ we construct a corresponding instance of \MaxCSP{$d$}{$r$} and solve it using our hypothetical algorithm.
	As $\calF$ covers $[d']^n$, we know that for some set in $\calF$ we will find the optimum solution for the original instance.
	
	Let us estimate the running time:
	\[
	\bigO\left(\left( \frac{d'}{d} + \delta \right)^n\right) + \left( \frac{d'}{d} + \delta \right)^n \cdot (d-\epsilon)^n \cdot n^{\bigO(1)}= \left (d'  + \delta d - \epsilon \frac{d'}{d} -  \epsilon \delta  \right)^n \cdot n^{\bigO(1)} =  \left (d'  - \frac{\epsilon^2 d'}{d^2} \right)^n \cdot n^{\bigO(1)}.
	\]
	By \cref{thm:powerof2}, this contradicts the M3SH.
\end{proof}

\subsection{Realizing Relations}\label{sec:ed-relations}

We start by defining what it means for a gadget to realize a relation. Recall from the definition in \cref{sec:prelims} that a gadget may use lists.

\begin{defn}[Realizing a relation]\label{def:realizing}
	For some positive integer $q$ and $r$, let $R\subseteq [q]^r$. Consider some $r$-ary $q$-gadget $\calJ=(J,L,\boldx)$, and some $\boldd\in [q]^r$. By $\edcount(\calJ, \boldd)$ we denote the size of a minimum set of edges $X$ that ensure that there is a proper list $q$-coloring $\phi$ of $(J\setminus X,L)$ with $\phi(\boldx)=\boldd$.  Then $\calJ$ \emph{realizes} the relation $R$ if there is an integer $k$ such that, for each $\boldd\in [q]^r$, $\edcount(\calJ, \boldd)=k$ if $\boldd\in R$, and $\edcount(\calJ, \boldd)>k$, otherwise.
	We say that $\calJ$ \emph{$\omega$-realizes} $R$ for some integer $\omega\ge 1$ if additionally, for each $\boldd\notin R$ we have $\edcount(\calJ, \boldd)=k+\omega$.
\end{defn}

\begin{lem}\label{lem:NEQtoRel}\label{lem:ed-allrelations}
	For each $r\ge 1$ and $R\subseteq [2]^r$ there is an $r$-ary $2$-gadget that realizes $R$.
\end{lem}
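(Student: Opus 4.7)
My plan is to reduce realizing an arbitrary $R\subseteq[2]^r$ to the simpler task of realizing, for each forbidden tuple $\boldc\in[2]^r\setminus R$, the relation $R_\boldc:=[2]^r\setminus\{\boldc\}$. Given, for every such $\boldc$, a gadget $\calJ_\boldc$ that $1$-realizes $R_\boldc$ with base cost $k_\boldc$ (meaning $\edcount(\calJ_\boldc,\boldd)=k_\boldc$ whenever $\boldd\neq\boldc$ and $\edcount(\calJ_\boldc,\boldc)=k_\boldc+1$), a realizer of $R$ is obtained by taking the disjoint union of all $\calJ_\boldc$ while sharing the $r$ original portals. A short calculation then shows that the total edge-deletion cost at any $\boldd\in R$ equals $K:=\sum_{\boldc\notin R}k_\boldc$, while at any $\boldd\notin R$ exactly one summand (the one for $\boldc=\boldd$) pays an extra unit, yielding a total of $K+1>K$.

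To construct $\calJ_\boldc$ uniformly, I plan to first build a single master OR-gadget $\mathcal{OR}_r$ that $1$-realizes $R_{\mathbf 1}$ for $\mathbf 1:=(1,\ldots,1)\in[2]^r$, and then handle arbitrary $\boldc$ by NEQ-flips. Concretely, for each coordinate $i$ with $c_i=2$ I introduce a private auxiliary vertex $z_i'$ with list $\{1,2\}$ joined to $z_i$ by a single edge (which implements an NEQ, since such an edge survives at cost $0$ only if $\psi(z_i)\neq\psi(z_i')$), and use $z_i'$---or $z_i$ itself if $c_i=1$---as the $i$-th input portal fed into $\mathcal{OR}_r$. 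At zero extra cost the optimal choice is $\psi(z_i')=3-\psi(z_i)$, so the composite $1$-realizes $R_\boldc$.

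The main technical step is constructing $\mathcal{OR}_r$ itself, which I plan to build inductively. For the base case $r=2$, I would take portals $z_1,z_2$, auxiliary vertices $v_1,v_2$ with list $\{1,2\}$, and one vertex $u$ with list $\{2\}$, connected by the five edges $z_1v_1$, $z_2v_2$, $v_1u$, $v_2u$, $v_1v_2$. A direct case analysis over the four possible assignments $(\psi(v_1),\psi(v_2))\in[2]^2$ shows that the minimum edge-deletion cost is $1$ on each of $(1,2),(2,1),(2,2)$ and exactly $2$ on $(1,1)$, giving a $1$-realizer with $k=1$. For $r\ge 3$, I plan to chain binary OR-gadgets through a ternary ``OR-with-output'' sub-gadget realizing $\{(a,b,y):y=\max(a,b)\}$ under the order $1<2$: introduce intermediate running-OR portals $o_1:=z_1$ and $o_i:=$ output of OR-with-output applied to $o_{i-1}$ and $z_i$, and finally force $\psi(o_r)=2$ by attaching $o_r$ to a fresh list-$\{1\}$ vertex via a single edge. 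A brief case check confirms that this composite has cost $(r-1)k'$ when some $z_i=2$ and $(r-1)k'+1$ when all $z_i=1$, independently of how one ``cheats'' inside the chain.

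The hardest part will be constructing and verifying the OR-with-output sub-gadget itself: its cost must be the same constant on all four allowed triples $(1,1,1),(1,2,2),(2,1,2),(2,2,2)$ and strictly larger on each of the four forbidden ones. This uniformity on the allowed side is substantially more delicate than for the binary OR-gadget, because the output portal interacts with both input portals simultaneously; I anticipate the construction will glue together two copies of the binary OR-gadget with small equality sub-gadgets (each being a path of length two through a free-list middle vertex) tying the output portal to whichever input side serves as a witness of the OR, with uniformity ultimately certified by a finite case check. All remaining ingredients---the NEQ-flips, the disjoint-union bookkeeping, and the propagation of base costs---then follow routinely from the gadgets above.
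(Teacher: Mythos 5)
Your outer scaffolding is sound, and in fact mirrors the paper's: superimposing, for every forbidden tuple $\boldc\notin R$, a gadget for $[2]^r\setminus\{\boldc\}$ on a common independent set of portals does yield a realizer of $R$ (the paper's final claim does exactly this), and your NEQ-flips via a pendant list-$\{1,2\}$ vertex correctly reduce $[2]^r\setminus\{\boldc\}$ to an OR-type relation (the paper instead attaches pendant vertices to flip coordinates, same idea). Your binary gadget also checks out: a case analysis confirms cost $1$ on the three allowed pairs and $2$ on the forbidden all-$1$ pair, so it $1$-realizes your $\OR{2}$. The genuine gap is the step you yourself flag as the hardest: the ternary ``OR-with-output'' gadget realizing $\{(a,b,y): y=\max(a,b)\}$ with the \emph{same} cost on all four allowed triples is never constructed, only conjectured to be obtainable by gluing two binary OR gadgets with equality paths ``to whichever input serves as a witness.'' Everything for $r\ge 3$ rests on this object, and its existence is essentially the content of the lemma for arity $3$, so it cannot be waved through. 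Note also that the natural way to manufacture it from exclude-one-tuple relations (writing $y=a\vee b$ as the conjunction of $a\rightarrow y$, $b\rightarrow y$, and $y\rightarrow a\vee b$ and superimposing realizers) is circular: the clause $y\rightarrow a\vee b$ excludes exactly one triple and is itself, up to flips, the arity-$3$ OR you are trying to build.

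The delicacy is real, not just unproven bookkeeping: a functional gadget must pay identically in the ``one witness'' states $(1,2,2),(2,1,2)$, the ``two witnesses'' state $(2,2,2)$, and the ``no witness'' state $(1,1,1)$, and ad hoc gluings of OR and equality pieces typically charge differently in the two-witness case; your own equality sub-gadget (a length-two path) costs $0$ when its endpoints agree and $1$ when they disagree, so the balancing is exactly what must be exhibited and verified. The paper avoids needing any output wire at all: it builds $\OR{p}$ for arbitrary arity directly, by taking three vertex classes $X,Y,Z$, linking $x_i$--$y_i$, $y_i$--$z_i$ and all pairs $z_i,z_j$ with $\omega$-fold copies of the $\OR{2}$ gadget, and adding a universal list-$\{1\}$ vertex; a counting argument over how many vertices can receive color $2$ then gives the realizer, and the pendant-flip and superposition steps follow as in your plan. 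So either supply and verify the OR-with-output gadget explicitly (with the uniform-cost case check), or replace the chaining by a direct construction of $\OR{r}$ in the spirit above; as written, the proof is incomplete for every $r\ge 3$.
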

\begin{proof}
First note that a single edge $x_1x_2$ whose endpoints have each have a list $L(x_1)=L(x_2)=\{1,2\}$ enforces that its endpoints are mapped to different colors. So essentially this is a gadget that realizes the relation $\NEQ=\{(1,2),(2,1)\}$.

For an integer $p \geq 2$, by $\OR{p}$ we denote the relation $[2]^p \setminus \{2^p\}$ (the intuition behind the notation is clear when we interpret $1$ as \emph{true} and $2$ as \emph{false}).

\begin{clm}\label{clm:or2}
There exists a gadget $\calJ_{\OR{2}}$ whose portals are pairwise non-adjacent
that $2$-realizes $\OR{2}$.
\end{clm}
\begin{claimproof}
We start the construction of the gadget with a 5-cycle with consecutive vertices $v_1,v_2,v_3,v_4,v_5$.
We set the list of $v_5$ to $\{2\}$; the lists of $v_i$ for $i \in [4]$ are $\{1,2\}$.
The portals of the created gadget $\calJ_{\OR{2}}$ are $v_1$ and $v_4$, see also \cref{fig:or2}.

It is straightforward to verify that for $\boldd \in \OR{2}$ precisely one edge has to be deleted and hence $\edcount(\calJ_{\OR{2}}, \boldd)= 5\alpha+1$,
while for $(2,2)$ we have to delete precisely three edges and hence $\edcount(\calJ_{\OR{2}}, (2,2))=5\alpha + 3$.
\end{claimproof}

\begin{clm}\label{clm:or2tau}
For every fixed integer $\omega \geq 1$ there exists a gadget $\calJ^\omega_{\OR{2}}$  whose portals are pairwise non-adjacent that $2\omega$-realizes $\OR{2}$.
\end{clm}
\begin{claimproof}
We introduce $\omega$ copies of the gadget $\calJ_{\OR{2}}$ from \cref{clm:or2}, let the portals of the $i$-th copy be $x_i,y_i$.
We identify all $x_i$ into a new vertex $x$ and all $y_i$ into a new vertex $y$; note that the portals of $\calJ_{\OR{2}}$ are non-adjacent so this step does not introduce multiple edges.
The obtained graph with $x$ and $y$ as portals is our gadget $\calJ^\omega_{\OR{2}}$.

For $\boldd\in \OR{2}$, let $\edcount(\calJ_{\OR{2}}, \boldd)= \alpha'$.
It is straighforward to observe that for $\boldd \in \OR{2}$ we have $\edcount(\calJ^\omega_{\OR{2}}, \boldd)= \omega \alpha'$, and for $\bar \boldd \notin \OR{2}$ we have $\edcount(\calJ^\omega_{\OR{2}}, \bar \boldd)= \omega \cdot (\alpha'+2) = \omega \alpha' + 2\omega$.
\end{claimproof}

\begin{clm}\label{clm:orp}
For every $p \geq 3$, there exists a gadget $\calJ_{\OR{p}}$ realizing $\OR{p}$ whose portals are pairwise non-adjacent.
\end{clm}
\begin{claimproof}
Let $\omega$ be an integer whose value will be specified later.

We introduce three pairwise disjoint sets of vertices $X := \bigcup_{i=1}^p \{x_i\}$, $Y := \bigcup_{i=1}^p \{y_i\}$,
and $Z := \bigcup_{i=1}^p \{z_i\}$
For each $i \in [p]$ we add a copy of $\calJ^\omega_{\OR{2}}$ with portals identified with $x_i$ and $y_i$,
and another copy with portals identified with $y_i$ and $z_i$.
Next, for all distinct $i,j$, we add a copy of $\calJ^\omega_{\OR{2}}$ with portals identified with $z_i,z_j$.

Next, we add a new vertex $u$ with list $\{1\}$ that is adjacent to every vertex in $X \cup Y \cup Z$. 
This completes the construction of the gadget. Its portals are $y_1,\ldots,y_p$.
A schematic picture of the construction for $p=3$ is depicted in \cref{fig:or3}.

We say that a copy of $\calJ^\omega_{\OR{2}}$ is satisfied if the mapping of its portals satisfies $\OR{2}$.
Imagine that $\omega$ is sufficiently large so that every copy of $\calJ^\omega_{\OR{2}}$ must be satisfied.
Because of the universal vertex $u$, we have to delete one additional edge for every vertex in $X \cup Y \cup Z$ that is mapped to $1$.
Intuitively, we want to map as many of these vertices as possible to $2$.
Since all copies of $\calJ^{\omega}_{\OR{2}}$ must be satisfied, at most $p+1$ vertices from $X \cup Y \cup Z$ can be mapped to $2$: precisely one vertex from each pair $\{x_i, y_i\}$ is mapped to $2$, and potentially one vertex from the set $Z$. The maximum number of $2$s is only possible if at least one vertex from $Y$ is mapped to $1$.
In other words, this happens if and only if the mapping of portals of $\calJ_{\OR{p}}$ satisfies $\OR{p}$.

For $\boldd\in \OR{2}$, let $\edcount(\calJ_{\OR{2}}, \boldd)= \alpha'$.
If $\boldd \in \OR{p}$, then $\edcount(\calJ_{\OR{p}}, \boldd) = \alpha''$, where $\alpha'' \coloneqq (2p-1) + \alpha' \cdot \left( \binom{p}{2} + 2p \right)$.
By setting $\omega = \alpha''+1$ we ensure that we cannot afford making even one copy of $\calJ^\omega_{\OR{2}}$ unsatisfied.
\end{claimproof}

\begin{figure}
\begin{subfigure}[b]{0.45\textwidth}
\centering
\includegraphics[scale=1,page=1]{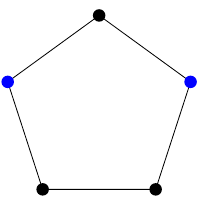}
\caption{The gadget realizing $\OR{2}$ constructed in \cref{clm:or2}.}
\label{fig:or2}
\end{subfigure}
\hfill
\begin{subfigure}[b]{0.45\textwidth}
\centering
\includegraphics[scale=1,page=2]{figures/ed-gadgets.pdf}
\caption{The gadget realizing $\OR{3}$ constructed in \cref{clm:orp}.}
\label{fig:or3}
\end{subfigure}
\caption{Gadgets constructed in the proof of \cref{lem:ed-allrelations}.
Black lines correspond to edges, and red lines denote copies of $\calJ^\omega_{\OR{2}}$. Blue vertices are portals.
}
\end{figure}

For an integer $p \geq 2$, and $\boldd \in [2]^p$,
by $R_{\neq \boldd}$ we denote the relation $[2]^p \setminus \{\boldd\}$.

\begin{clm}\label{clm:notr}
For every $p \geq 2$ and every $\boldd \in [2]^p$ there exists a gadget $\calJ_{R_{\neq \boldd}}$ realizing $R_{\neq \boldd}$
whose portals are non-adjacent.
\end{clm}
\begin{claimproof}
Let $\boldd = (d_1,\ldots,d_p)$.
Introduce a copy of $\calJ_{\OR{p}}$ given by \cref{clm:orp}.
Let its portals be $x_1,\ldots,x_p$.
Let $I =\{ i \in [p] \mid d_i = 1\}$.
For each $i \in I$, introduce a vertex $y_i$ that is adjacent only to $x_i$.

This completes the construction of $\calJ_{R_{\neq \boldd}}$. For $i \in [p]$, its $i$-th portal is $x_i$ (if $i \notin I$) or $y_i$ (if $i \in I$). It is straightforward to verify that $\calJ_{R_{\neq \boldd}}$ satisfies all required properties.
\end{claimproof}

Finally, we show that using previously obtained gadgets we can realize any relation $R$.

\begin{clm}\label{clm:allR}
For every $p \geq 2$ and every $R \subseteq  [2]^p$, there exists a gadget $\calJ_{R}$ realizing $R$ whose portals are non-adjacent.
\end{clm}
\begin{claimproof}
We start with introducing vertices $x_1,\ldots,x_p$, which will be the portals of our gadget.
For each $\bar \boldd \in \{a,b\}^p \setminus R$ we call \cref{clm:notr} to obtain $\calJ_{R_{\neq \bar \boldd}}$ and identify the respective portals of the gadget with $x_1,x_2,\ldots,x_p$.
It is straightforward to verify that such a gadget indeed realizes $R$.
\end{claimproof}

This completes the proof.
\end{proof}

We strengthen the previous result by balancing out the violation costs. 

\maxcutRelations*
\begin{proof}
	Let $\bar R\coloneqq [2]^r\setminus R$, and let $\bar \boldd_1, \ldots, \bar \boldd_{\abs{\bar R}}$ be an enumeration of the elements of $\bar R$.
	
	We define a relation $R'$ in $[2]^r\times [2]^{\abs{\bar R}}$ as follows:
	A tuple $(x_1, \ldots, x_r, y_1, \ldots, y_{\abs{\bar R}})$ is in $R'$ if and only if one of the following holds:
	\begin{enumerate}
		\item $(x_1, \ldots, x_r)\in R$ and $y_i=1$ for all $i\in [\abs{\bar R}]$, or
		\item for some $i\in [\abs{\bar R}]$, $(x_1, \ldots, x_r)=\bar \boldd_i$, $y_i=2$, and $y_j=1$ for all $j\neq i$.
	\end{enumerate}
	According to \cref{lem:NEQtoRel}, there is a gadget $\calJ$ that realizes $R'$.
	Let $Z_1, \ldots, Z_r$ be the portals of $\calJ$ that belong to the respective entries $x_1\ldots, x_r$, and let $z_1, \ldots, z_{\abs{\bar R}}$ be those that belong to $y_1, \ldots, y_{\abs{\bar R}}$, respectively.
	By definition of a realization, there is some integer $\alpha$ such that for each $\boldd\in R'$, we have $\edcount(\calJ,\boldd)=\alpha$.
	
	We modify $\calJ$ to form a new gadget $\calJ'$ by attaching to each portal $z_i$ a new vertex $v_i$ with list $L(v_i)=\{2\}$.
	Now interpret this modified gadget $\calJ'$ as a gadget with portals $Z_1, \ldots, Z_r$. We claim that $\calJ'$ $1$-realizes $R$:
	Whenever $Z_1, \ldots, Z_r$ are mapped to a state in $R$, mapping all of the $z_i$s to $1$ requires only $\alpha$ edge deletions in $\calJ$ and no further edge deletions outside as the $v_i$'s can be mapped to $2$.
	However, if $Z_1, \ldots, Z_r$ are in a state $\bar \boldd_i\in \bar R$ then $\alpha$ edge deletions are required in $\calJ$, and they are sufficient if $z_i$ is mapped to $2$. However, this requires one more edge deletion between $z_i$ and $v_i$, both of which are mapped to $2$. At the same time $\alpha +1$ edge deletions are sufficient to extend a state $\bar \boldd_i$, so all violations of $R$ have the same cost $\alpha +1$.
	
	So we have shown that $\calJ'$ $1$-realizes $R$. Note that, according to \cref{lem:NEQtoRel}, the portals of $\calJ$ are non-adjacent. Consequently, so are the portals of $\calJ'$. Thus, we can introduce $\omega$ copies of $\calJ'$ on the same set of portals $Z_1, \ldots, Z_r$ to obtain a gadget that $\omega$-realizes $R$.
\end{proof}

\subsection{\boldmath Hardness for \maxcut}\label{sec:maxcut}\label{sec:ed-maxcut}
We now consider the problem \maxcut, and our goal is to prove \cref{thm:LBmaxcutSETH,thm:LBmaxcutMSH}.
To this end we first show respective results for the list version of \maxcut, that is, we show \cref{thm:LBLHomEDKqSETH,thm:LBLHomEDKqMSH} for the case $q=2$.
We do this in a unified proof that only in the end plugs in two different assumptions based on the SETH or the \msh, respectively.
Afterwards we show how to model the lists using the nonlist version to obtain \cref{thm:LBmaxcutSETH,thm:LBmaxcutMSH}.

\begin{proof}[Proof of \cref{thm:LBLHomEDKqSETH,thm:LBLHomEDKqMSH} for $q=2$]
	Consider an instance $\calI$ of the decision version of $\MaxCSP{$2$}{$r$}$ with $N$ variables $\calV$, clauses $\calC=\{C_1, \ldots, C_m\}$, and clause deletion budget $z\le m$ for the number of clauses that can be violated. For each clause $C_i$ in $\calC$ let $R_i$ be the corresponding relation of arity $r_i\le r$.
	
	\paragraph{An instance of \listcoloringED{2}.}
	We define an instance $(G,L)$ of \listcoloringED{2}. We consider its decision version and also define a corresponding edge deletion budget $z'$. For each variable $v\in \calV$, the graph $G$ contains a vertex $y(v)$ with $L(v) = \{1,2\}$. Let $Y=\{y(v) \mid v\in \calV\}$. Let us consider each $R_i$ as a relation in $[2]^{r_i}$. From \cref{cor:K2relations}, we know that for each relation $R_i$ there is a gadget $\calJ_i$ that $1$-realizes $R_i$. Note that by the definition of realization these gadgets may use lists. Let $\alpha_i$ be the required edge deletions within $\calJ_i$ for a state that satisfies $R_i$, i.e., for each $\boldd\in R_i$ we have $\edcount(\calJ_i, \boldd)=\alpha_i$. (By the definition of realizing, $\alpha_i$ does not depend on $\boldd$.) Since $\calJ_i$ is a $1$-realizer, each state outside of $R_i$ requires $\alpha_i+1$ edge deletions within $\calJ_i$. Also note that we can compute $\alpha_i$ in constant time. For each $i\in [m]$, the graph $G$ contains the gadget $J_i$, where the portals of $J_i$ are those vertices $y(v_1)$, $y(v_2)$, $y(v_3)$ for which $(v_1, v_2, v_3)$ is the scope of $R_i$.  We set $z'\coloneqq z +\sum_{i=1}^m \alpha_i$.
	
	\paragraph{Equivalence of instances.}
	Suppose there is a satisfying assignment $f\from \calV \to \{0,1\}$ that violates at most $z$ clauses from $\calC$. Consider the assignment $h$ that maps $y(v)$ to $1$ whenever $f(v)=1$, and that maps $y(v)$ to $2$ whenever $f(v)=0$. By construction, in order to ensure that $h$ can be extended to a list $2$-coloring, it suffices to make $\alpha_i+1$ edge deletions in each gadget $\calJ_i$ for which $C_i$ is violated by $f$, and $\alpha_i$ edge deletions for each of the remaining gadgets. Thus, it suffices to make a total of $z'$ edge deletions.
	The reverse direction is also straight-forward.

	\paragraph{Structure of the constructed instance.}
	The gadgets $\calJ_i$ as well as the integers $\alpha_i$ depend only on the arity $k$ of the relations.
	Consequently, for fixed $k$, the size of $G$ is $\abs{Y}+\bigO(1)=N^{\bigO(1)}$.
	Let $Q=Y$. Each connected component of $G - Q$ is a subgraph of a gadget $\calJ_i$ and has $k$ neighbors in $Q$ ($k$ portals of $\calJ_i$). Thus, the size of such a component depends only on $k$ and consequently $Q$ is a \core{\sigma}{k} of size $N+1$ for some $\sigma$ depending only on $k$.
	
	\paragraph{Runtime.}
	As the size of $G$ is polynomial in $N$, the hypothetical algorithm for \listcoloringED{2} would require $(2-\eps)^{(N+1)}\cdot N^{\bigO(1)} = (2-\eps)^{N}\cdot N^{\bigO(1)}$ time to solve $\MaxCSP{$2$}{$r$}$ on instance $\calI$.
	Note that $\MaxCSP{$2$}{$r$}$ with zero deletion budget clearly generalizes the decision problem $r$-\textsc{Sat}.
	Thus, the hypothetical algorithm contradicts the SETH for some $r$ depending on $\eps$ (and consequently for some $\sigma$ and $\delta$ depending on $\eps$). Moreover, according to \cref{thm:M3SHtoMaxCSP}, the hypothetical algorithm also contradicts the \msh even for $k=3$ (and consequently for some universal constant $\sigma$ and $\delta=3$) to prove \cref{thm:LBLHomEDKqSETH,thm:LBLHomEDKqMSH} for $q=2$. 
\end{proof}

We will now continue the previous reduction by removing the lists in the constructed instance of \listcoloringED{2} to obtain a result for \maxcut.

\begin{proof}[Proof of \cref{thm:LBmaxcutSETH,thm:LBmaxcutMSH}]
	
	As in the proof of \cref{thm:LBLHomEDKqSETH,thm:LBLHomEDKqMSH} for $q=2$, consider an instance $\calI$ of the decision version of $\MaxCSP{$2$}{$r$}$ with $N$ variables $\calV$, and then consider the equivalent instance $(G,L,z')$ of the decision version of \listcoloringED{2}.
	
	\paragraph{Constructing the instance of \maxcut.}
	Let us adjust the instance $(G,L,z')$ to an equivalent instance $(G^*, \abs{E(G^*)}-z')$ of the decision version of \maxcut. Since all vertices in $Y$ have list $\{1,2\}$, each vertex with list $\{1\}$ or list $\{2\}$ is part of some gadget $\calJ_i$.
	To construct the graph $G^*$, we add a new vertex $A$ to the graph $G$.
	For each vertex $v$ of $G$ that has a list $L(v)=\{1\}$, we introduce a set of $\alpha_i+\omega+1$ (parallel) $3$-vertex paths from $v$ to $A$ (each such path contains the vertex $v$, one inner vertex, and the vertex $A$). Similarly, for each vertex $v$ of $G$ in $\calJ_i$ that has a list $L(v)=\{2\}$, we introduce $\alpha_i+2$ (parallel) $4$-vertex paths from $v$ to $A$.
	This finishes the definition of $G^*$.
	
\paragraph{Equivalence of instances.}	
	To show that the two instances are equivalent, first assume that there is a set $X$ of at most $z'$ edges such that there is a $2$-coloring $\phi$ of $G\setminus X$ that respects the lists in $L$. Then $\phi$ can be extended to a $2$-coloring of $G^*\setminus X$ with $\phi(A)=1$. This ensures that all of the introduced $3$-vertex paths have endpoints that are mapped to $1$, and all $4$-vertex paths have endpoints that are mapped to different colors --- and consequently all of these paths can be $2$-colored without further edge deletions. If we interpret the two color classes as the two different parts of a cut, this gives a cut of size $\abs{E(G^*)}-\abs{X}\ge \abs{E(G^*)}-z'$.
	
	In the opposite direction, suppose that there is a cut of size at least $\abs{E(G^*)}-z'$. Let $X$ be the at most $z'$ non-cut edges.  Then there exists a $2$-coloring $\phi$ of $G^*\setminus X$ (mapping vertices on one side of the cut to $1$, and the other side to $2$).
	Without loss of generality, by renaming the colors, we can assume that $h(A)=1$.
	Suppose that for some vertex $v$, $\phi(v)\notin L(v)$, say $\phi(v)=2$ but $L(v)=\{1\}$ (the other case is analogous). As $L(v)\neq \{1,2\}$ the vertex $v$ is part of some gadget $\calJ_i$. Consequently, $X$ must contain at least one edge from each of the $3$-vertex-paths connecting $v$ and $A$. This gives a total of $\alpha_i+2$ such edge deletions.  
	However, it only requires a set $F$ of at most $\alpha_i+1$ edge deletions on $\calJ_i$ to ensure that the mapping $\phi(Y)$ can be extended to a $2$-coloring of $\calJ_i\setminus F$. Moreover, in this case $\phi$ satisfies the lists $L$ on the vertices of $\calJ_i$, which means that any $4$-vertex paths that connects some vertex $v$ of $\calJ_i$ to $A$ can be $2$-colored without further edge deletions. Summarizing, this shows that there exists a $2$-coloring that respects the lists $L$ and requires at most $z'$ edge deletions.
	
	\paragraph{Structure of the constructed instance.}
	Recall that the gadgets $\calJ_i$ as well as the integers $\alpha_i$ depend only on the arity $k$ of the relations of $\calI$.
	Consequently, for fixed $k$, the size of $G^*$ is $\abs{Y}+\bigO(1)=N^{\bigO(1)}$.
	Let $Q=Y\cup \{A\}$. Each connected component of $G^* - Q$ is a subgraph of a gadget $\calJ_i$ together with at most $\alpha_i+2$ pending paths per vertex of $\calJ_i$. Note that each of these components has at most $k+1$ neighbors in $Q$ ($k$ portals of $\calJ_i$ plus the vertex $A$). Thus, the size of such a component depends only on $k$ and consequently $Q$ is a \core{\sigma}{k+1} of size $N+1$ for some $\sigma$ depending only on $k$.
	
	\paragraph{Runtime.}
	As the size of $G^*$ is polynomial in $N$, the hypothetical algorithm for \maxcut would require $(2-\eps)^{(N+1)}\cdot N^{\bigO(1)} = (2-\eps)^{N}\cdot N^{\bigO(1)}$ time to solve $\MaxCSP{$2$}{$r$}$ on instance $\calI$.
	Recall that $\MaxCSP{$2$}{$r$}$ with zero deletion budget generalizes the decision problem $r$-\textsc{Sat}.
	Thus, the hypothetical algorithm contradicts the SETH for some $r$ depending on $\eps$ (and consequently for some $\sigma$ and $\delta$ depending on $\eps$). Moreover, according to \cref{thm:M3SHtoMaxCSP}, the hypothetical algorithm also contradicts the \msh even for $k=3$, which implies that $Q$ as defined in the previous paragraph is a \core{\sigma}{4} for constant $\sigma$.
	This proves \cref{thm:LBmaxcutSETH,thm:LBmaxcutMSH}.
\end{proof}

\subsection{\boldmath Hardness for \coloringED{q}}\label{sec:colED}\label{sec:ed-col}
Now we consider the problem \coloringED{q} for $q\ge 3$, and our goal is to prove \cref{thm:LBcolEDMSH}.
Again, we first show respective results for the list coloring, that is, now we show \cref{thm:LBLHomEDKqSETH,thm:LBLHomEDKqMSH} for the case $q\ge 3$, which also finishes the proofs of these two results.
As before (but using different gadgets), we show how to remove the lists to obtain \cref{thm:LBcolEDMSH}, as desired.

\begin{proof}[Proof of \cref{thm:LBLHomEDKqSETH,thm:LBLHomEDKqMSH} for $q\ge3$]
	 
	With \cref{thm:M3SHtoMaxCSP} in mind, for some fixed $r\ge 3$, let $\calI=(\calV, \calC)$ be an instance of the decision version of \MaxCSP{$q$}{$r$} with $N$ variables and deletion budget $z\le \abs{\calC}$. 
	
	\paragraph{\boldmath An instance of \listcoloringED{q}.}
	We now define an instance $(G,L, z')$ of the decision version of \listcoloringED{q}.
	For each variable $v\in \calV$ we introduce a variable vertex $y(v)$. Let $Y=\{y(v) \mid v\in \calV\}$.
	For a constraint in $\calC$, let $R\subseteq [q]^\ell$ be the corresponding relation of arity $\ell\le r$, and let $(v_1, v_2, \ldots, v_\ell)\in \calV^\ell$ be the scope of $R$.

	According to \cref{prop:realize-coloring}, for every $\boldd = (d_1,d_2,\ldots,d_\ell)\in [q]^\ell$,
	 there is an $\ell$-ary gadget $\calJ_\boldd=(J,L,\{z_1,z_2,\ldots, z_\ell\})$ with lists $L$ contained in $[q]$ such that any $q$-coloring $\psi$ of the vertices $z_1, z_2, \ldots, z_\ell$ can be extended to a list $q$-coloring of $\calJ_\boldd$ if and only if $(\psi(z_1), \psi(z_2), \ldots, \psi(z_\ell))\neq \boldd$. Let $\gamma_\boldd\ge 1$ be the minimum number of edge deletions in $\calJ_\boldd$
	 that are required to extend $\boldd$ to a proper list coloring of this gadget; note that $\gamma_\boldd$ can be computed in constant time as the number of vertices in $J$ is bounded by a function of $\ell$ and $q$, i.e., a constant.

	Let $P\coloneqq \prod_{\ell=1}^{r}\prod_{\boldp\in [q]^\ell} \gamma_\boldp$. Note that $P$ depends only on $q$ and $r$.
	For each vector $\boldd$ of $\ell\le r$ elements from $[q]$ with $\boldd\notin R$, we introduce $P/\gamma_\boldd$ copies of $\calJ_\boldd$ to balance out the different violation costs for different $\boldd$ (and different $R$), and we identify the respective portals $z_1, z_2, \ldots, z_\ell$ of each of these copies with the variable vertices $y(v_1),y(v_2),\ldots,y(v_\ell)$. Let $\calJ_R$ be the union of the copies of all the gadgets $\calJ_\boldd$ that we introduced for this constraint (over all vectors $\boldd$).
	We repeat this process for each constraint in $\calC$. This forms the graph $G$ together with the lists $L$.
	Finally, we set $z'=P\cdot z$.
	
	\paragraph{Equivalence of instances.}
	Suppose there is a set of constraints $\calC'\subseteq \calC$ with $\abs{\calC'}\le z$ such that there is a satisfying assignment $f\from \calV \to [q]$ for $(\calV, \calC\setminus \calC')$. Consider the coloring $\phi$ of $Y$ with $\phi(y(v))=f(v)$ for each $v\in \calV$.
	Consider a constraint in $\calC$ with the corresponding relation $R$ and scope $(v_1, v_2, \ldots, v_\ell)\in \calV^\ell$. If $f$ satisfies $\calC$ then $\phi$ can be extended to a proper list coloring of all the gadgets in $\calJ_R$ (with zero required additional edge deletions). However, if $f$ violates $\calC$ then $\boldd\coloneqq (\phi(v_1), \phi(v_2),\ldots, \phi(v_\ell))$ is not in $R$ and consequently extending $\phi$ to a proper list coloring of $\calJ_\boldd$ requires $\gamma_\boldd$ edge deletions for each copy of $\calJ_\boldd$, for a total of $P$ required edge deletions.
	Thus, $\phi$ can be extended to a proper list coloring of $G\setminus X$ where $X$ is a set of $P\cdot \abs{\calC'}\le P\cdot z= z'$ edges (where we use the fact that $P$ is independent of $R$).
	
	Now suppose there is a minimum-size set of edges $X$ with $\abs{X}\le z'$ such that $G\setminus X$ has a proper list $q$-coloring. Assume that $\phi$ is such a coloring. Consider some gadget $\calJ_\boldd$ that is part of $G$. Let $\boldd=(d_1, d_2,\ldots, d_\ell)$ and without loss of generality let $(y(v_1), y(v_2), \ldots, y(v_\ell))$ be the portals of $\calJ_\boldd$. If $(\phi(y(v_1)), \phi(y(v_2)),\ldots, \phi(y(v_\ell)))\neq \boldd$ then $X$ does not contain any edges of $\calJ_\boldd$ (because of the minimality of $X$), whereas otherwise $X$ has to contain precisely $\gamma_\boldd$ edges from $\calJ_\boldd$, and this holds for each of the $P/\gamma_\boldd$ copies of $\calJ_\boldd$. Thus, $\abs{X}$ is a multiple of $P$ and the assignment $f$ with $f(v)\coloneqq \phi(y(v))$ satisfies all but $\abs{X}/P\le z$ constraints in $\calC$.

	\paragraph{Structure of the constructed instance.}
	The size of the gadgets $\calJ_\boldd$ and the violation costs $\gamma_\boldd$ are upper-bounded by some function in $q$ and $r$. Consequently, the size of at most $P$ copies of the gadget $\calJ_\boldd$ is upper-bounded by some function $h(q,r)$. We can assume that no two constraints have scopes of variables such that one scope completely contains the other. So we can assume that there are at most $(N+1)^r$ different constraints in $\calC$. Consequently, the size of $G$ is at most $\abs{Y}+ (N+1)^r\cdot h(q,r)=N^{\bigO(1)}$.
	Let $Q=Y$. Then each component of $G-Q$ is a subgraph of a gadget of the form $\calJ_\boldd$. Hence the size of such a component depends only on $q$ and $r$, and each component has at most $r$ neighbors in $Q$ (the portals of $\calJ_\boldd$).
	Therefore, the set $S$ is a \core{\sigma}{r} of size $N$ for some $\sigma$ depending only on $q$ and $r$.

	\paragraph{Runtime.}
	As the size of $G$ is polynomial in $N$, the hypothetical algorithm for \listcoloringED{q} would require $(q-\eps)^{N}\cdot N^{\bigO(1)}$ time to solve $\MaxCSP{$q$}{$r$}$ on instance $\calI$.	
	This contradicts the SETH for some $r$ depending on $\eps$ according to \cref{thm:csp}. It also contradicts the \msh even for $r=3$ according to \cref{thm:M3SHtoMaxCSP}, which means that the set $Q$ as defined in the previous paragraph is a \core{\sigma}{3}. 
	This proves \cref{thm:LBLHomEDKqSETH,thm:LBLHomEDKqMSH} for $q\ge 3$.
\end{proof}

We will now continue the previous reduction by removing the lists in the constructed instance of \listcoloringED{q}.

\begin{proof}[Proof of \cref{thm:LBcolEDMSH}]
For $q\ge 3$, consider an instance $\calI=(\calV, \calC)$ of the decision version of \MaxCSP{$q$}{$r$} with $N$ variables $\calV$ and deletion budget $z$, and then consider the equivalent instance $(G,L,z')$ of \listcoloringED{q} constructed in the proof of \cref{thm:LBLHomEDKqSETH,thm:LBLHomEDKqMSH} for $q\ge 3$.
	
\paragraph{Constructing the instance of \coloringED{q}.}
We modify the instance $(G,L,z')$ into an equivalent instance $(G^*,z')$ of $\coloringED{q}$.
Note that for each vertex in $Y$ the corresponding list is $[q]$, so we only have to take care of lists of vertices that are non-portal vertices of some gadget of the form $\calJ_\boldd$, for some $\boldd=(d_1, d_2,\ldots, d_\ell)\in [q]^\ell$. Let $\gamma_\boldd\ge 1$ be the minimum number of edge deletions in $\calJ_\boldd$ that are required to extend the state $\boldd$ of the portals to a proper list coloring of this gadget. 
	
The high-level idea is to simulate lists using a copy of the $q$-clique, similarly as in the proof of \cref{thm:vd-coloring-intro}.
However, again we need to be careful as deleting edges incident to this clique (either inside the clique or joining the clique with vertices of $G$) could destroy the structure of solutions.
	
We first introduce $q$ new vertices $k_1, \ldots, k_q$. They will be used as a ``global copy of a $q$-clique''
Now we would like to turn $\{k_1,\ldots,k_q\}$ into a clique whose edges cannot be deleted.
As we cannot do it directly, instead of edges we will introduce many copies of gadgets that behave ``like an edge.''
More specifically, for each distinct $i,j \in [q]$, where $i < j$, we introduce $z'+1 $ copies of the following \emph{inequality gadget}.
We introduce a new $(q-1)$-clique $K_{i,j}$ and make it fully adjacent to $k_i$ (i.e., the set $V(K_{i,j}) \cup \{k_i\}$ induces a $q$-clique in $G^*$). Next, we introduce a new vertex $x_{i,j}$ and make if fully adjacent to $K_{i,j}$ (i.e., the set $V(K_{i,j}) \cup \{x_{i,j}\}$ induces a $q$-clique in $G^*$). Note that this enforces that in any proper $q$-coloring, the color of $x_{i,j}$ is the same as the color of $k_i$, and no further constraints are introduced. Finally, we make $x_{i,j}$ adjacent to $k_j$, so that the color of $k_j$ must be different than the color of $k_i$.

\smallskip
Now consider some $\boldd=(d_1, d_2,\ldots, d_\ell)\in [q]^\ell$ and a copy of a gadget $\calJ_\boldd$.
We introduce $\gamma_\boldd+1$ ``private'' $q$-cliques $C^p_\boldd$ with vertex labels $c^p_1, \ldots, c^p_q$ (for $p\in [\gamma_\boldd+1]$) . For $i\in [\ell]$, and for each $p\in [\gamma_\boldd +1]$, we introduce an edge between $k_{d_i}$ and each vertex of $C^p_\boldd$ with the exception of $c^p_{d_i}$.
	Then, for each vertex $v$ of $\calJ_\boldd$ and $j\notin L(v)$ and for each $p\in [\gamma_\boldd+1]$, we introduce an edge between $v$ and $c^p_j$. This completes the construction of $G^*$.
	
\paragraph{Equivalence of instances.}	
Now let us show that the two instances are indeed equivalent. Suppose there is a proper list coloring $\psi$ of $G\setminus X$ for some set of edges $X$ from $G$.
We claim that there is an extension $\phi$ of $\psi$ that is a proper $q$-coloring of $G^*\setminus X$.
To this end, we need to show how to extend $\psi$ to the cliques of the form $C^p_\boldd$, to the vertices $k_1, \ldots, k_q$, and to inequality gadgets.
For a clique of the form $C^p_\boldd$ we color the corresponding vertices $c^p_1, \ldots, c^p_q$ by setting $\phi(c^p_i)=i$.
For each $i\in [q]$, we set $\phi(k_i)=i$.
For each $i,j\in [q]$ and each copy of the inequality gadget, we color $K_{i,j}$ arbitrarily using colors $[q]\setminus \{i\}$ and $x_{i,j}$ receives the color $i$.
Observe that a vertex $k_i$ is never adjacent to a vertex $c^p_i$ with the same index $i$. Moreover, a vertex $c^p_i$ is only adjacent to some vertex $v$ in $\calJ_\boldd$ if $i\notin L(v)$.
Hence $\phi$ is a proper coloring of $G^* \setminus X$ if $\psi$ is a proper list coloring of $G\setminus X$.

\smallskip	
For the other direction let $\phi$ be a proper $q$-coloring of $G^*\setminus X$ for some minimum-size set of edges $X$ from $G^*$ with $\abs{X}\le z'$.
Note that since for each pair of $i,j \in [q]$ we introduced $z'+1$ copies of the inequality gadget between $k_i$ and $k_j$,
there is always one copy that is not affected by $X$. Thus we can assume that vertices $k_1,\ldots,k_q$ receive pairwise distinct colors.
By permuting colors we can assume that for each $i \in [q]$ we have $\phi(k_i)=i$.

Now consider some $\boldd = (d_1,\ldots,d_\ell)$ and a copy of $J_\boldd=(J,L, \{z_1, \ldots, z_\ell\})$.
Let $X'$ denote those edges from $X$ that have at least one endvertex in $J$ or in some clique of the form $C^p_{\boldd}$ introduced for $J_\boldd$. We aim to understand where $X'$ lies.

\begin{clm}\label{clm:recoloringJd}
Let $\pi$ be some permutation of $[q]$ and let $\calJ_\boldd^\pi=(J,L^\pi, \{z_1, \ldots, z_\ell\})$, where for each vertex $v$ of $J$, $L^\pi(v)\coloneqq \{\pi(i) \mid i\in L(v)\}$. Let $\psi \from \{z_1, \ldots, z_\ell\} \to [q]$ be a coloring of $\boldz=\{z_1, \ldots, z_\ell\}$. Then 
\begin{myitemize}
	\item If $\psi(\boldz) \neq \pi(\boldd)$ then $\psi$ can be extended to a proper list coloring of $\calJ_\boldd^\pi$.
	\item  If $\psi(\boldz) = \pi(\boldd)$ then $\gamma_\boldd$ edge deletions are required and sufficient to extend $\psi$ to a proper list coloring of $\calJ_\boldd^\pi$.
\end{myitemize}
\end{clm}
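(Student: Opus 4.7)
The plan is to show that applying the permutation $\pi$ to colors induces a bijection between proper list colorings of $\calJ_\boldd$ and proper list colorings of $\calJ_\boldd^\pi$ that preserves the number of edge deletions required. This reduces the claim to the known behavior of the original gadget $\calJ_\boldd$ (as given by \cref{prop:realize-coloring} together with the definition of $\gamma_\boldd$ used in the reduction).

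First I would make the bijection precise. For any assignment $\phi \from V(J)\to [q]$, let $\pi\circ \phi$ denote the assignment $v\mapsto \pi(\phi(v))$. By construction of $L^\pi$, we have $\phi(v)\in L(v)$ if and only if $(\pi\circ\phi)(v)\in L^\pi(v)$. Moreover, since $\pi$ is a permutation, for any edge $uv$ of $J$, $\phi(u)\ne\phi(v)$ if and only if $(\pi\circ\phi)(u)\ne(\pi\circ\phi)(v)$. Hence $\phi$ is a proper list coloring of $(J\setminus X, L)$ if and only if $\pi\circ\phi$ is a proper list coloring of $(J\setminus X, L^\pi)$, for any edge set $X\subseteq E(J)$. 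Consequently, $\phi\mapsto \pi\circ\phi$ is a bijection between such colorings that preserves the underlying set of deleted edges.

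Given this bijection, both bullets follow easily. For any $\psi\from \{z_1,\ldots,z_\ell\}\to [q]$, let $\psi'\coloneqq \pi^{-1}\circ \psi$; then $\psi(\boldz)=\pi(\boldd)$ if and only if $\psi'(\boldz)=\boldd$. In the first case ($\psi(\boldz)\ne \pi(\boldd)$), we have $\psi'(\boldz)\ne\boldd$, so by \cref{prop:realize-coloring} the coloring $\psi'$ extends to a proper list coloring $\phi'$ of $(J,L)$ with no edge deletions, and then $\pi\circ\phi'$ is a proper list coloring of $(J,L^\pi)$ extending $\psi$ with no edge deletions. In the second case, $\psi'(\boldz)=\boldd$, and by the definition of $\gamma_\boldd$, exactly $\gamma_\boldd$ edge deletions from $J$ are required and sufficient to extend $\psi'$ to a proper list coloring of $(J\setminus X,L)$; applying the bijection shows that the same number of deletions is required and sufficient in $\calJ_\boldd^\pi$.

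There is no real obstacle here: the only thing to verify carefully is that permutation of colors commutes with both list membership and edge-properness, which is immediate because $\pi$ is a bijection on $[q]$ and the notion of ``proper coloring'' depends only on equality of colors, not on their identity. The conclusion is that $\calJ_\boldd^\pi$ behaves exactly like $\calJ_\boldd$ with the distinguished tuple $\boldd$ replaced by $\pi(\boldd)$.
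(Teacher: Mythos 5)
Your proposal is correct and matches the paper's argument: the paper's one-line proof says precisely that the claim follows from the properties of $\calJ_\boldd$ given by \cref{prop:realize-coloring} by applying the permutation $\pi$ to each coloring, and you simply spell out the color-renaming bijection (preserving list membership, properness, and deleted edges) that makes this explicit.
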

\begin{claimproof}
This follows directly from the properties of $\calJ_\boldd$ according to \cref{prop:realize-coloring}, if to each coloring of $\calJ_\boldd$ we apply the permutation $\pi$.
\end{claimproof}

Note that \cref{clm:recoloringJd} implies that regardless of the coloring of $z_1,\ldots,z_\ell$, we can make the copy of $J_\boldd$ $q$-colorable by removing at most $\gamma_\boldd$ edges from this gadget (i.e., with both endvertices in $J$).
In particular, this means that we can safely assume that $|X'| \leq \gamma_\boldd$ (otherwise we can obtain another, better solution).
Since we introduced $\gamma_\boldd+1$ local cliques $C^p_{\boldd}$ for $J_{\boldd}$, there is always at least one such clique whose vertices are not incident with any edge from $X'$.
Since all local cliques have the same neighborhood, we can safely assume that the edges from $X'$ are not incident to any of these cliques. In particular, all edges from $X'$ have both endvertices in $J$, i.e., they are present in $G$.

Furthermore, again, without loss of generality, we can assume that the corresponding vertices of each clique $C^p_{\boldd}$ receive the same color in $\phi$, i.e., for all $p,p' \in [\gamma_\boldd]$ and $i \in [q]$ we have $\phi(c^p_i) = \phi(c^{p'}_i)$.
Since the colors of vertices of $C^p_{\boldd}$ are pairwise distinct, we conclude that there is a permutation $\pi$ of $[q]$ such that $\phi(c^p_i) = \pi(i)$, for every $i,p$.
Summing up, we observe that the lists enforced by the local cliques on the vertices of $J_\boldd$ are $L^\pi$, where the definition of $L^\pi$ is as in \cref{clm:recoloringJd}. 
Note that this implies that without loss of generality we may assume that $X$ contains either 0 (if $\phi(\boldz) \neq \pi(\boldd)$) or $\gamma_\boldd$ (if $\phi(\boldz) = \pi(\boldd)$) edges from $J$ -- otherwise we could obtain a better solution.
Furthermore, observe that the edges joining $\{k_1,\ldots,k_q\}$ with vertices from the local cliques imply that for each $d_i$ we have $\pi(d_i)=d_i$. 
Consequently, $|X \cap E(J)| = 0$ if $\phi(\boldz) = \boldd$ and  $|X \cap E(J)| =\gamma_\boldd$ if $\phi(\boldz) = \boldd$.

But now, by the definition of $\calJ_\boldd$ and $\gamma_\boldd$,
we can remove $|X \cap E(J)|$ edges from $J$ and properly color all non-portal vertices in a way that this coloring respects lists $L$.
Repeating this argument for each copy of $J_{\boldd}$, we conclude that $G\setminus X$ admits a proper $q$-coloring that respects lists $L$.

\paragraph{Structure of the constructed instance.}
The size of at most $P$ copies of the gadget $\calJ_\boldd$ together with $P$ copies of a clique of size $q$ is upper-bounded by some function $h(q,r)$. We can assume that there are at most $(N+1)^r$ different constraints in $\calC$. Further, note that $w=z'+1=P\cdot z+1\le P\cdot (N+1)^r\in N^{\bigO(1)}$. Consequently, the size of $G^*$ is at most $\abs{Y}+ (N+1)^r\cdot h(q,r) + (w+1)\cdot q^3=N^{\bigO(1)}$.
Let $Q=Y\cup \{k_1, \ldots, k_q\}$. Then each component of $G^*-Q$ is of one of two forms.
The first possibility is that it is a subgraph of a gadget of the form $\calJ_\boldd$ together with the corresponding cliques $C^1_\boldd, \ldots, C^{\gamma_\boldd+1}_\boldd$, in which case the size of such a component depends only on $q$ and $r$, and the component has at most $3+r$ neighbors in $Q$,  three in $Y$, and $r$ in $\{k_1, \ldots, k_q\}$.
The second possibility is that such a component is an inequality gadget, i.e., it consists of $q$ vertices and attaches to two vertices of $Q$.
Therefore, the set $Q$ is a \core{\sigma}{3+r} of size $N+q$ for some $\sigma$ depending only on $q$.

\paragraph{Runtime.}
As the size of $G^*$ is polynomial in $N$, the hypothetical algorithm for $\coloringED{q}$ would require $(q-\eps)^{(N+q)}\cdot N^{\bigO(1)} = (q-\eps)^{N}\cdot N^{\bigO(1)}$ time to solve $\MaxCSP{$q$}{$r$}$ on instance $\calI$. This contradicts the \msh even for $r=3$ according to \cref{thm:M3SHtoMaxCSP}, which implies that the set $Q$ defined in th eprevious paragraph is a \core{\sigma}{6}.  This proves \cref{thm:LBcolEDMSH}.	
\end{proof}

\subsection{\boldmath Simple Algorithm for \coloringED{q}}\label{sec:ed-basic-algo}
The following algorithm solves the \coloringED{q} problem for $q \geq 2$, which also includes the \textsc{Max Cut} problem for $q = 2$. It is worth noting that this algorithm can be modified to use treewidth as the parameter.

\begin{thm}
	For all integers $q \geq 2$ and $\sigma, \delta \geq 1$, every $n$-vertex instance of \coloringED{q} can be solved in time $q^{p} \cdot n^{\bigO(1)}$ if a \core{\sigma}{\delta} of size $p$ is given in the input.
\end{thm}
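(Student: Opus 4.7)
The plan is to mirror the branching algorithm used for \coloring{q} and \coloringVD{q}: enumerate every $q$-coloring of the hub $Q$, and then for each component of $G-Q$ compute the cheapest completion independently by brute force.

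More concretely, I would proceed as follows. First, enumerate all $q^p$ maps $\phi\from Q\to[q]$. For a fixed $\phi$, every edge of $G$ with both endpoints in $Q$ whose endpoints receive the same color under $\phi$ must be deleted; let $\mu(\phi)$ denote the (easily computable) number of such monochromatic edges of $G[Q]$. Second, for each connected component $C$ of $G-Q$ separately, compute
\[
  \mathsf{opt}_C(\phi) \;=\; \min_{\psi\from V(C)\to[q]} \Big|\{uv\in E(G)\cap (V(C)\cup N_G(C))^2 : \psi\cup\phi \text{ makes } uv \text{ monochromatic}\}\Big|,
\]
where the neighborhood contribution uses the fixed colors $\phi$ on $Q$-vertices adjacent to $C$. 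Since $\abs{V(C)}\le\sigma$, there are at most $q^{\sigma}$ candidate maps $\psi$, and for each candidate we can count the violating edges in time polynomial in $\sigma+\delta$; thus $\mathsf{opt}_C(\phi)$ is computed in constant time per component. The total deletion cost for the branch $\phi$ is $\mu(\phi)+\sum_{C}\mathsf{opt}_C(\phi)$, where the sum is over the components of $G-Q$; we return the minimum of this quantity over all $\phi$.

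Correctness is straightforward: any edge-deletion set $X$ such that $G\setminus X$ has a proper $q$-coloring $\chi$ gives rise to the branch $\phi=\chi\restriction_Q$, and the cost accounted for by this branch is at most $\abs{X}$, because within $G[Q]$ we delete exactly the monochromatic edges under $\phi$, and within each component $C$ we delete an optimal set consistent with $\phi$; conversely, any branch yields a global proper coloring of $G$ minus the accounted edges. For the running time, each of the $q^p$ branches costs $\bigO(\abs{E(G[Q])})$ to compute $\mu(\phi)$ and at most $n$ calls of constant cost for the components, giving an overall bound of $q^p\cdot n^{\bigO(1)}$.

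There is really no obstacle here — the only thing to verify carefully is the bookkeeping of edges between $Q$ and the components: each such edge is incident to a unique component $C$ and is charged exactly once, inside $\mathsf{opt}_C(\phi)$. With $\sigma$ and $\delta$ fixed constants, the per-component brute-force remains polynomial (in fact constant) time, which is what makes the whole scheme collapse to $q^p\cdot n^{\bigO(1)}$.
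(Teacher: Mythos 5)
Your proposal is correct and follows essentially the same route as the paper: enumerate the $q^p$ colorings of the hub and, for each component of $G-Q$ (of constant size $\sigma$ with at most $\delta$ hub neighbors), compute the cheapest consistent completion by brute force in constant time. Your explicit split into $\mu(\phi)$ for the edges inside $G[Q]$ and $\mathsf{opt}_C(\phi)$ for the edges incident to each component is, if anything, slightly more careful bookkeeping than the paper's own terse argument; just make sure $\mathsf{opt}_C(\phi)$ is read as counting only edges incident to $V(C)$ (not edges among $N_G(C)\subseteq Q$), as your charging discussion intends, so no edge is counted twice.
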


\begin{proof}
	As input to \coloringED{q}, consider a graph $G$ with $n$ vertices, given along with a \core{\sigma}{\delta} $Q$ of size $p$.

	We start by guessing the coloring of $Q$. Let $f \colon Q \to [q]$ denote this coloring. Consider now a component $G'$ of $G \setminus Q$, and let $O$ denote the graph induced by the vertices of $G'$ and their neighbors in $Q$. Since $G'$ has constant size and is adjacent to constantly many vertices in $Q$, we can determine in constant time the minimum number of edges to delete from $O$, such that the remaining graph admits a $q$-coloring.

	The number of components in $G \setminus Q$ is bounded above by a polynomial in $n$, and therefore, in polynomial time, we can find the optimum solution that agrees with $X$ and $f$. Moreover, since there is always an $(X,f)$ corresponding to the optimal solution, an algorithm that tries all possible $X$ and $f$ gives the correct output. The running time of such an algorithm is upper bounded by $q^{p} \cdot n^{\bigO(1)}$.

\end{proof}

\section{\boldmath \listcoloring{q} and \listcoloringVD{q} with gadgets of constant degree faster than brute force}\label{sec:vd-beatbruteforce}
Recall that, assuming the SETH, we proved lower bounds for \coloring{q}, \coloringVD{q}, and \coloringED{q},
parameterized by the size of a \core{\sigma}{\delta} of the instance graph $G$, where $\sigma$ and $\delta$ are constants depending on $\epsilon>0$.
In all cases, we were able to exclude an algorithm with running time $(f(q)-\epsilon)^\cpar \cdot n^{\bigO(1)}$, for any $\epsilon>0$,  where $f(q)$ is an function of $q$ (i.e., either $q$ or $q+1$), $\cpar$ is the size of a \core{\sigma}{\delta}, and $n$ is the number of vertices of $G$.
Furthermore, assuming the \msh, we have a somewhat stronger lower bound for \coloringED{q} where $\sigma$ and $\delta$ are absolute (and small) constants independent of $\epsilon$. For \coloring{q} and \coloringVD{q}, we have only a weaker form of the lower bound, where the values of $\sigma$ and $\delta$ depend on $\epsilon$.

In this section, we show that these latter lower bounds cannot be strengthened so that $\sigma$ and $\delta$ are absolute constants; in fact, we prove that already $\delta$ on its own cannot be an absolute constant. 
Specifically, we prove the following algorithmic results that hold even for the respective coloring problems with lists.

\begin{thm} \label{thm:lcoloring-better-alg}
For every $q \geq 3$ and every constant $\delta$ there exists $\epsilon$ with the following property: 
For every constant $\sigma$, every instance $(G,L)$ of \listcoloring{q} with $n$ vertices, given with a \core{\sigma}{\delta} of size $\cpar$, can be solved in time $(q-\epsilon)^\cpar \cdot n^{\bigO(1)}$.
\end{thm}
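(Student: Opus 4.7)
The plan is to interpret the instance as a constraint satisfaction problem (CSP) on $Q$ of arity at most $\delta$, and then solve this CSP by a direct branching procedure that exploits the fact that any non-trivial constraint forbids at least one tuple: branching on its scope loses only $q^{t}-1$ of the $q^{t}$ possible choices instead of all $q^{t}$.

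\textbf{Reduction to a bounded-arity CSP.} For each component $C$ of $G-Q$, I brute-force compute the relation $R_C \subseteq [q]^{N(C)}$ consisting of all colorings of $N(C)$ that extend to a proper list coloring of $G[V(C)\cup N(C)]$. Since $|V(C)|\le \sigma$ and $|N(C)|\le \delta$ are constants, this takes $\Oh(1)$ time per component, hence $\mathrm{poly}(n)$ overall. Together with the unary constraints given by the lists $L(v)$ and the binary constraints given by edges in $G[Q]$, we obtain a CSP on variables $Q$, domain $[q]$, with all constraints of arity at most $\max(\delta,2)$; I assume $\delta\ge 2$ (otherwise replace $\delta$ by $2$).

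\textbf{Branching rule and analysis.} The algorithm maintains a partial assignment of $Q$. If some constraint has become infeasible under the current assignment, return \no. If every remaining constraint is trivialized --- its restriction $R'_T$ to the unassigned variables in its scope equals the full product --- return \yes (extending the assignment arbitrarily but respecting any lists). Otherwise, pick a constraint $R_T$ whose restriction $R'_T$ is a proper subset of $[q]^{t'}$ for some $t'\le\delta$, and branch on the at most $q^{t'}-1$ tuples of $R'_T$, fixing the corresponding $t'$ variables in each branch and recursing on the reduced instance. Letting $T(p)$ denote the worst-case running time on $p$ unassigned variables, a recursion tree whose branches use arities $t_1,t_2,\ldots$ with $\sum_i t_i = p$ costs at most $\prod_i(q^{t_i}-1)\cdot\mathrm{poly}(n)$. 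The function $g(t):= (q^t-1)^{1/t}=q\,(1-q^{-t})^{1/t}$ is monotonically increasing in $t$: substituting $u=q^t$, the sign of $g'(t)$ matches that of $u\ln u-(u-1)\ln(u-1)$, which is positive for $u>1$ since $x\ln x$ is strictly increasing on $(1/e,\infty)$. Hence $q^{t_i}-1\le g(\delta)^{t_i}$, so the total work is at most $g(\delta)^{p}\cdot\mathrm{poly}(n) = (q-\eps)^{p}\cdot\mathrm{poly}(n)$ with $\eps:=q-(q^{\delta}-1)^{1/\delta}>0$, a constant depending only on $q$ and $\delta$ (and in particular independent of $\sigma$, which only inflates the polynomial factor).

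\textbf{Main obstacle.} The reduction and branching rule are essentially routine; the delicate point is the analysis. I need to argue that even in the worst case --- when every branching step is forced to use a constraint of the maximum arity $\delta$ --- the per-variable saving factor $(q^{\delta}-1)^{1/\delta}/q$ is strictly less than $1$ by a constant depending only on $q$ and $\delta$. The monotonicity of $g(t)$ is what identifies arity $\delta$ as the worst case, and it also explains why $\eps$ necessarily tends to $0$ as $\delta$ grows, consistent with the CSP lower bound of \cref{thm:csp}.
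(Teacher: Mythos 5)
Your proposal is correct and follows essentially the same route as the paper: the paper branches directly on a component $A$ of $G-X$, observing that either every coloring of its at most $\delta$ hub-neighbors extends into $A$ (so $A$ can be discarded) or at most $q^r-1$ of the $q^r$ colorings do, which gives the same recurrence $T(p)\le (q^r-1)\,T(p-r)$ and the same final bound via $(q^\delta-1)^{1/\delta}<q$. Your CSP repackaging (which additionally treats hub-internal edges and lists as low-arity constraints) and your explicit monotonicity proof of $(q^t-1)^{1/t}$ merely make precise the step the paper dismisses as a ``standard yet quite tedious calculation.''
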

\begin{thm} \label{thm:vd-better-alg}
For every $q \geq 3$ and every constant $\delta$ there exists $\epsilon$ with the following property: 
For every constant $\sigma$, every instance $(G,L)$ of \listcoloringVD{q} with $n$ vertices, given with a \core{\sigma}{\delta} of size $\cpar$, can be solved in time $(q+1-\epsilon)^\cpar \cdot n^{\bigO(1)}$.
\end{thm}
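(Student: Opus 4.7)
My plan is to deduce \cref{thm:vd-better-alg} from \cref{thm:lcoloring-better-alg} by enumerating the hub-deletion pattern and reducing each residual case to a weighted variant of the list coloring problem covered by \cref{thm:lcoloring-better-alg}.

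For each subset $D \subseteq Q$, consider the branch in which $D$ is precisely the set of hub vertices deleted in an optimal solution; this contributes $|D|$ to the total cost. Setting $Q' = Q \setminus D$ and $G' = G - D$, the residual problem is to find a proper list $q$-coloring $\phi$ of $Q'$ that minimizes $\sum_{C} f_C(\phi|_{N(C) \cap Q'})$, where $C$ ranges over the components of $G - Q$ and $f_C(\phi')$ is the minimum number of vertices of $C$ that must be deleted in order to extend $\phi'$ to a proper list $q$-coloring of $C$. Since $|V(C)| \le \sigma$ and $|N(C)| \le \delta$, each $f_C$ is tabulated in time $n^{\bigO(1)}$, and $Q'$ is a \core{\sigma}{\delta} of $G'$ of size $p - |D|$. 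The crucial step is to solve this residual min-cost problem in time $(q - \epsilon')^{p - |D|} \cdot n^{\bigO(1)}$ with $\epsilon' > 0$ depending only on $q$ and $\delta$---a weighted generalization of \cref{thm:lcoloring-better-alg}. Granted this, summing over $D \subseteq Q$ yields
\[
\sum_{D \subseteq Q} (q - \epsilon')^{p - |D|} \cdot n^{\bigO(1)} = (q + 1 - \epsilon')^{p} \cdot n^{\bigO(1)},
\]
the desired bound with $\epsilon := \epsilon'$.

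The main obstacle is establishing this weighted generalization of \cref{thm:lcoloring-better-alg}. Since the proof of \cref{thm:lcoloring-better-alg} must exploit the bounded arity $\delta$ of the per-component constraints---the only structural parameter besides $q$ appearing in the running time---I expect it to rest on a bounded-arity CSP branching argument that eliminates at least one configuration per non-trivial constraint. To transfer this to the min-cost setting, I would binary-search on the target total cost $k \in \{0, \ldots, n\}$ and, for each $k$, ``trim'' each $f_C$ by marking as forbidden any configuration whose cost alone exceeds $k$; afterwards every nontrivial constraint rules out at least one configuration, and the decision-CSP branching of \cref{thm:lcoloring-better-alg} applies. Configurations of small but nonzero cost, which individually remain feasible yet whose sum must not exceed $k$, are captured by an auxiliary cost-aggregation gadget organized as a balanced binary tree over the per-component costs; its aggregation variables are functionally determined by the hub variables and hence do not contribute to the branching factor, while the local aggregation constraints (each of arity at most $3$) preserve the overall arity bound of $\bigO(\delta)$.
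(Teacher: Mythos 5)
Your outer structure (enumerate the deleted hub set $D\subseteq Q$, solve a residual weighted problem over colorings of $Q\setminus D$ in time $(q-\epsilon')^{p-|D|}$, and sum via the binomial theorem to get $(q+1-\epsilon')^{p}$) is arithmetically fine, but it stands or falls with your ``crucial step'', and that step is both unproven and very likely false as stated. The residual problem is a minimum-cost CSP over domain $[q]$ with arity-$\le\delta$ cost functions: once deletion is removed from the hub there is no dominating value in the domain, so the branching behind \cref{thm:lcoloring-better-alg} --- which needs every nontrivial component to \emph{forbid} at least one coloring of its neighborhood --- gives nothing, because a component may admit every neighborhood coloring, just at different costs. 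Your proposed fix (guess a budget $k$, discard configurations of cost $>k$, and aggregate the remaining costs through a binary tree of arity-$3$ constraints whose variables are ``functionally determined'') does not restore it: after thresholding, the typical constraint forbids no configuration, so there is no local assignment to exclude; the aggregation variables are not hub vertices with domain $[q]$, so they lie outside the framework of \cref{thm:lcoloring-better-alg}, and if you realize them as graph gadgets they either join the hub (inflating $p$) or force components with unbounded neighborhoods (violating $\delta$); and the constraint they encode is a global cardinality bound on the hub coloring, invisible to any local branching. Worse, the general form of your crucial step --- solving min-cost CSPs with domain $q$, arity $\delta\ge 3$ and bounded costs in time $(q-\epsilon')^{n}$ with $\epsilon'$ depending only on $q$ and $\delta$ --- is exactly the kind of speedup that \cref{thm:M3SHtoMaxCSP} rules out under the M3SH, and the edge-deletion lower bounds \cref{thm:LBLHomEDKqMSH,thm:LBcolEDMSH} show hardness for instances of precisely this shape (a colored hub with constant-size cost gadgets of constant $\delta$ attached). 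So any correct proof of your step would have to exploit special structure of vertex-deletion extension costs, which you neither identify nor use.

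The paper sidesteps this entirely by never separating ``which hub vertices are deleted'' from ``how the rest is colored'': deletion stays in the domain as a wildcard value $\del$, the instance is reduced to $(q,\delta)$-\textsc{CSP-with-Wildcard} (\cref{lem:reduce-to-csp}), whose cost functions satisfy the domination property that replacing any value by $\del$ never increases the cost, and \cref{thm:solve-csp} shows this property always lets one discard a constraint outright or exclude at least one joint assignment of its scope (an assignment dominated by the one obtained by deleting one more variable). Branching on the remaining at most $(q+1)^{\delta}-1$ assignments yields $\left((q+1)^{\delta}-1\right)^{p/\delta}\cdot n^{\bigO(1)}=(q+1-\epsilon)^{p}\cdot n^{\bigO(1)}$; the saving comes from the dominance of the $(q+1)$-st value, not from beating base $q$ on a weighted coloring problem, which is the step your proposal cannot supply.
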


\subsection{Faster algorithm for \listcoloring{q}}

As a warm-up, let us start with \cref{thm:lcoloring-better-alg}, whose proof is quite straightforward.

\begin{proof}[Proof of \cref{thm:lcoloring-better-alg}.]
Let $X$ be a \core{\sigma}{\delta} of $G$.
If $X$ is a constant-size set, we can guess its coloring exhaustively, and then adjust the lists of the neighbors of vertices in $X$ as follows.
If $x \in X$ is guessed to be colored with color $i$, then $i$ can be removed from lists of all vertices in the neighborhood of $X$. After that we can safely remove $X$ from the graph, obtaining an equivalent instance, which can be solved in polynomial time as its every component is of constant size. Thus from now on assume that $|X|=p$ is sufficiently large.

We first check if there is a vertex with an empty list; if so, we immediately reject the instance.
Then we check whether there exists a component $A$ of $G-X$ with no neighbor in $X$.
If such an $A$ exist, we solve the instance $G[A]$ of \listcoloring{q}: it can be done in constant time as $|A| \leq \sigma$.
If it is a yes-instance, then we can proceed with the equivalent instance $G-A$; it is a no-instance, then $(G,L)$ is a no-instance and we reject it.

Consider a component $A$ of $G - X$, and let $\nh(A)$ denote its neighborhood in $X$; recall that we can assume that $\nh(A) \neq \emptyset$.
Let $r = |\nh(A)|$ and recall that $1 \leq r \leq \delta$.

Let $\mathcal{F}$ be the set of all colorings of $\nh(A)$ with $q$ colors, respecting lists $L$, that can be extended to a proper list coloring of $G[A \cup \nh(A)]$. Note that $|\mathcal{F}| \leq q^r$ and $\mathcal{F}$ can be computed in constant time as $|A \cup \nh(A)| \leq \delta	+ \sigma$. If $|\mathcal{F}| = q^r$, then \emph{every} coloring of $\nh(A)$ can be extended to the vertices of $A$. Thus we can remove $A$ from $G$, obtaining an equivalent instance.
So assume that $|\mathcal{F}|  \leq  q^r -1$.

We exhaustively guess a coloring of $\nh(A)$ that belongs to $\mathcal{F}$. 
In each branch we adjust the lists of the neighbors of vertices in $\nh(A)$ as previously, and then remove $A \cup \nh(A)$ from the graph, obtaining an equivalent instance.

The number of leaves in the recursion tree, measured as the function of $p$, is upper-bounded by the recursive formula
\[
	T(p) \leq (q^r-1) \cdot T(p-r).
\]
Proceeding by induction we can show that
\[
T(p) \leq (q^r-1)^{p/r} \cdot \left(q^\delta-1 \right)^{(p-r)/\delta} \leq \left(q^\delta-1 \right)^{p/\delta},
\]
where the last inequality can be verified by a standard yet quite tedious calculation.

As every node in the recursion tree is processed in polynomial time, the total running time is bounded by $T(p) \cdot n^{\Oh(1)} \leq \left(q^\delta-1 \right)^{p/\delta} \cdot n^{\bigO(1)}$.
Now \cref{thm:lcoloring-better-alg} follows by observing that $\left(q^\delta-1 \right)^{1/\delta}< q$.
\end{proof}

\subsection{Faster algorithm for \listcoloringVD{q}}

Now let us proceed to the \listcoloringVD{q} problem. 
It will be more convenient to reduce it to a certain auxiliary variant of \textsc{CSP}.
Before we formally define it, let us introduce some more notation.

Let $X$ be a set and let $q$ be a positive integer.
By $\setf{X}$ we denote the set of all functions $f \from X \to [q] \cup \{\del\}$, where $\del$ is a special symbol.
For a function $f \in \setf{X}$, we let $||f||$ denote $|f^{-1}(\del)|$.

We define a binary relation $\preceq$ on $\setf{X}$ as follows:
for all $f, f' \in \setf{X}$, we say that $f' \preceq f$ if for all $x \in X$ either $f(x)=f'(x)$, or $f(x)=\del$.
In other words, $f$ was obtained from $f'$ by mapping some (possibly empty) subset of elements of $X$ to $\del$.
Note that $(\setf{X}, \preceq)$ is a partially ordered set and its unique maximum element is $f^{\del}_{X}$, where $f^{\del}_{X} \in \setf{X}$ is the function that maps every element of $X$ to $\del$.
We write $f' \prec f$ if $f' \preceq f$ and $f' \neq f$.

For constants $q,r$, an instance of $(q,r)$-\textsc{CSP-with-Wildcard} is a triple $(\calV,\calC,\cost)$, where
\begin{myitemize}
\item $\calV$ is the set of variables, each with domain $[q] \cup \{\del\}$,
\item $\calC$ is a multiset of subsets of $\calV$, each of size at most $r$,
\item $\cost$ is a function that maps each pair $(C,f)$ where $C \in \calC, f \in \setf{C}$ to a non-negative integer and satisfies the following:

\smallskip
\begin{tabular}{ll}
	\textbf{(wildcard property)} & For all $C \in \calC$ and all $f,f' \in \setf{C}$, if $f' \preceq f$,\\
& then it holds that $\cost(C,f) \leq \cost(C,f')$.
\end{tabular}
\smallskip

\end{myitemize}

The aim of $(q,r)$-\textsc{CSP-with-Wildcard} is to find an assignment $\varphi \from \calV \to [q] \cup \{\del\}$,
minimizing the total cost defined as follows:
\[
\mathsf{total\text{-}cost}(\varphi)=||\varphi|| + \sum_{C \in \calC} \cost(C,\varphi|_C).
\]
It is fairly straightforward to reduce an instance of \listcoloringVD{q} with a \core{\sigma}{\delta} of size $\cpar$ 
to an instance of $(q,\delta)$-\textsc{CSP-with-Wildcard} with $\cpar$ variables.
\begin{lem}\label{lem:reduce-to-csp}
Let $\sigma,\delta$ be constants.
For an instance $(G,L)$ of \listcoloringVD{q}, given with a \core{\sigma}{\delta} of size $\cpar$,
in polynomial time we can compute an instance $(\calV,\calC,\cost)$ of $(q,\delta)$-\textsc{CSP-with-Wildcard}
with $\cpar$ variables, such that the values of optimum solutions of both instances are equal.
\end{lem}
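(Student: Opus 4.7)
The plan is to encode each vertex of the hub $Q$ as a CSP variable and each component of $G-Q$ as a wildcard constraint whose scope is the neighbors of that component in $Q$, using the symbol $\del$ to mean ``delete this hub vertex.''

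First, I would build the instance $(\calV,\calC,\cost)$ as follows. Set $\calV=\{x_v : v\in Q\}$, so $|\calV|=\cpar$. For each component $C$ of $G-Q$, let $N_C = N(C)\cap Q$ and add a constraint scope $S_C = \{x_v : v\in N_C\}$ to the multiset $\calC$; since $|N_C|\le\delta$, the arity is at most $\delta$. For $f\in\setf{S_C}$, define $\cost(S_C,f)$ to be the minimum number of vertices of $C$ that must be deleted so that the remaining vertices of $C$ admit a proper list coloring $\phi$ with the following consistency condition on the boundary: for every $v\in N_C$ with $f(x_v)\in[q]$, every non-deleted neighbor $u\in C$ of $v$ satisfies $\phi(u)\neq f(x_v)$, while if $f(x_v)=\del$ then $v$ imposes no constraint on $C$ at all. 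Because $|C|\le\sigma$ and $|N_C|\le\delta$ are both constants, each $\cost(S_C,f)$ is computable in constant time by brute force, and the total construction time is polynomial since $G-Q$ has at most $n$ components.

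Next, I would verify the wildcard property. Suppose $f'\preceq f$, so that $f$ arises from $f'$ by replacing some entries with $\del$. Any vertex subset $D\subseteq C$ that realizes $\cost(S_C,f')$ together with a witnessing proper list coloring also works against $f$: each position where $f$ differs from $f'$ has $f$-value $\del$, which imposes \emph{no} color constraint on the neighbors in $C$, whereas $f'$ possibly did. Hence $\cost(S_C,f)\le\cost(S_C,f')$, as required.

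Finally, I would establish that the optima agree. Given a solution $X$ to $\listcoloringVD{q}$ with witnessing list coloring $\phi$ of $G-X$, define $\varphi(x_v)=\del$ for $v\in X\cap Q$ and $\varphi(x_v)=\phi(v)$ otherwise. Then $\|\varphi\|=|X\cap Q|$, and for each component $C$ the set $X\cap C$ certifies $\cost(S_C,\varphi|_{S_C})\le|X\cap C|$; summing gives $\mathsf{total\text{-}cost}(\varphi)\le|X|$. Conversely, given $\varphi$ of cost $k$, delete the hub vertices mapped to $\del$ and, for each component $C$, delete an optimal $\cost(S_C,\varphi|_{S_C})$-size set inside $C$ together with its witnessing coloring; the combined coloring on the non-deleted part of $G$ is proper because the only edges crossing $Q$ and $C$ are checked exactly by the constraint at $C$. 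This yields a $\listcoloringVD{q}$-solution of size $k$.

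There is no real obstacle: both the wildcard property and the equivalence of optima follow immediately from the semantic match between $\del$ and vertex deletion. The only point worth being careful about is that a ``deleted'' hub vertex in the CSP must impose no constraint whatsoever on the attached component, which is exactly what the definition of $\cost$ enforces and what makes the wildcard monotonicity hold.
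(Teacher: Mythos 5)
There is a genuine gap: your construction ignores the lists of the hub vertices. Your CSP variables range over $[q]\cup\{\del\}$, and in the backward direction you colour a non-deleted hub vertex $v$ with $\varphi(x_v)$, but nothing in your constraints forces $\varphi(x_v)\in L(v)$, so the colouring you assemble need not be a proper \emph{list} colouring and the CSP optimum can be strictly smaller than the \listcoloringVD{q} optimum. Concretely, take $q=2$, hub $Q=\{v\}$ with $L(v)=\{1\}$, and one component $C=\{u\}$ with $L(u)=\{1\}$ and the edge $uv$ (a \core{1}{1}): the true optimum is $1$, since two adjacent vertices with the same singleton list cannot both survive, but in your CSP the assignment $\varphi(x_v)=2$ has total cost $0$, because the constraint at $C$ only asks that $u$ avoid colour $2$. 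The paper closes exactly this hole: after assuming the hub lists are non-empty, it fixes for each hub vertex $x_j$ a surjection $\rho_j\colon[q]\to L(x_j)$ and declares that CSP value $j$ means ``colour $x_j$ with $\rho_j(j)$,'' so every CSP assignment automatically respects the hub lists; with that change the wildcard property and both directions of the equivalence go through essentially as you wrote them.

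A second, related omission is that you never check edges with both endpoints in $Q$. Nothing in the definition of a \core{\sigma}{\delta} makes $Q$ independent; for instance $Q$ could induce a clique on $q+1$ vertices with all lists equal to $[q]$ and $G-Q$ empty, in which case your CSP has no constraints at all and reports optimum $0$, while the true optimum is positive. The paper's cost function is defined with respect to the subgraph induced by the non-deleted vertices of $A\cup\nh(A)$, so hub--hub edges lying inside a component's neighbourhood are at least taken into account there (and intra-hub edges have to be accounted for one way or another for the optima to match); in your version even those are dropped, because your boundary condition constrains only the colours of vertices of $C$. Your forward direction and the wildcard monotonicity argument are fine; it is these two points that break the claimed equality of optimum values.
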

\begin{proof}
Without loss of generality assume that $G$ is connected.
Let $X$ be a \core{\sigma}{\delta} of $G$, and let us enumerate the vertices of $X$ as $\{x_1,x_2,\ldots,x_\cpar\}$.
Recall that we can safely assume that for each $x_j$, the set $L(x_j)$ is non-empty, and clearly $|L(x_j)|\leq q$.
For each $j \in [\cpar]$, let $\rho_j$ be an arbitrary surjective function from $[q]$ to $L(x_j)$.

The set $\calV$ has $\cpar$ elements $v_1,\ldots,v_\cpar$, each with domain $[q] \cup \{\del\}$.
For each $i \in [\cpar]$, the variable $v_i$ represents $x_i$.
Assigning the value $j \in [q]$ to $v_i$ corresponds to mapping $x_i$ to $\rho_i(j)$, while assigning $\del$ to $v_i$ corresponds to deleting $x_i$.

Now consider a component $A$ of $G - X$, and let $\nh(A)$ denote its neighborhood in $X$.
Recall that $|\nh(A)| \leq \delta$.
We add the set $C = \{v_i ~|~ x_i \in \nh(A)\}$ to $\calC$.
Finally, for every $f \in \setf{C}$, we set $\cost(C,f)$ to be the minimum number of vertices from $A$
that need to be deleted so that the mapping of $\nh(A)$ corresponding to $f$ (i.e., if $f(v_i) = j \in [q]$, then $x_i$ is mapped to $\rho_i(j)$, and if $f(v_i) = \del$, then $x_i$ is deleted) can be extended to a proper list coloring of the subgraph of $G$ induced by the non-deleted vertices from $A \cup \nh(A)$.

It is possible that there is more than one component of $G-X$ with exactly the same neighborhood in $X$.
In this situation $\calC$ contains multiple copies of $C$ (recall that it is defined to be a multiset) and the values of the cost function of each copy correspond to a distinct component of $G-X$.

It is clear that the created instance of $(q,\delta)$-\textsc{CSP-with-Wildcard} is equivalent to the original instance $(G,L)$ of \listcoloringVD{q},
i.e., the optimal solutions have the same value.

Finally, let us argue that $(\calV, \calC, \cost)$ can indeed be computed in polynomial time.
Observe that the number of components of $G-X$ is $\bigO(|V(G)|)$.
Furthermore, for each such component $A$ there are at most $(q+1)^\delta$ assignments of variables corresponding the vertices of $\nh(A)$.
Finally, for each component $A$ and each assignment $f$ we can compute the minimum cost of extending $f$ to $G[A \cup \nh(A)]$ in constant time,
as the size of $A$ is at most $\sigma$, i.e., a constant.
\end{proof}

Clearly a brute-force approach to solving an instance $(q,r)$-\textsc{CSP-with-Wildcard} with $n$ variables requires time $(q+1)^n \cdot n^{\bigO(1)}$. Indeed, we enumerate all possible assignments $\varphi \from \calV \to [q] \cup \{\del\}$, for each of them we compute $\mathsf{total\text{-}cost}(\varphi)$, and we return the assignment whose total cost is minimum.
In the next theorem we show that actually we can solve the problem faster.

\begin{thm}\label{thm:solve-csp}
Let $q,r$ be constants.
Any $n$-variable instance of $(q,r)$-\textsc{CSP-with-Wildcard} can be solved in time $\left((q+1)^r-1 \right)^{n/r} \cdot n^{\bigO(1)}$.
\end{thm}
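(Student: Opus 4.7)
My plan is to prove the bound by giving a recursive branching algorithm whose recurrence is
\[
T(n)\le\bigl((q+1)^r-1\bigr)\cdot T(n-r)+n^{\bigO(1)},
\]
which unrolls to the claimed running time $\bigl((q+1)^r-1\bigr)^{n/r}\cdot n^{\bigO(1)}$. First, I would handle the base cases: if $n<r$ or no constraint of size $r$ remains, solve directly in polynomial time by brute-forcing over the at most $(q+1)^{r-1}$ assignments per small piece, or by observing that a residual instance with arity strictly below $r$ has constant-size constraint scopes that can be dispatched at the cost of the $n^{\bigO(1)}$ factor.

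In the main recursive step, I would pick an arbitrary constraint $C\in\calC$ with $|C|=r$, and branch on the $(q+1)^r-1$ assignments $f\from C\to[q]\cup\{\del\}$ with $f\neq f^\del_C$. In each such branch, commit $\varphi|_C:=f$, charge $\|f\|+\cost(C,f)$, substitute the fixed values into every other constraint incident to $C$ (this either drops variables from those constraints or updates their cost functions while preserving the wildcard property of the induced reduced functions), and recurse on the $(n-r)$-variable residual instance. In parallel, I would compute as a baseline the cost of the globally all-$\del$ assignment, namely $n+\sum_{C'\in\calC}\cost(C',f^\del_{C'})$, in linear time, and output the minimum of this baseline and the best of the $(q+1)^r-1$ recursive branches.

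The main obstacle I anticipate is justifying that the omitted branch corresponding to the $\preceq$-maximum $f^\del_C$ never costs us an optimum. Here I plan to invoke the wildcard property: since $\cost(C,\cdot)$ is non-increasing along $\preceq$ and $f^\del_C$ is unique-maximum in $\setf{C}$, any optimum $\varphi^*$ with $\varphi^*|_C=f^\del_C$ either coincides with the globally all-$\del$ assignment (caught by the baseline) or, via an exchange argument, can be converted into an equivalent-cost optimum in which some variable of $C$ is non-$\del$ (and which is therefore found in one of the $(q+1)^r-1$ branches). The delicate part is that flipping a variable of $C$ from $\del$ to a value in $[q]$ can raise the cost of other constraints incident to that variable; the exchange must be chosen so that the unit deletion savings at least cancel these increases, which I expect to arrange by amortizing against multiple wildcard inequalities or by invoking a pigeonhole over $C$'s variables to find a cost-neutral flip.

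With the exchange argument in place, the recurrence closes immediately, and the claimed running time follows by unrolling the recurrence through $n/r$ levels with polynomial work per node, establishing \cref{thm:solve-csp}.
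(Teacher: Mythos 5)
The skeleton of your algorithm (branch on all but one assignment of a chosen scope $C$, recurrence $T(n)\le((q+1)^r-1)\cdot T(n-r)$) is the same as the paper's, but the branch you omit is the wrong one, and the exchange argument you hope will close the gap is false. You always exclude the all-$\del$ assignment $f^{\del}_C$ and try to catch it with the globally all-$\del$ baseline; however, an optimum can be all-$\del$ on $C$ while non-$\del$ elsewhere, and in general it cannot be converted into an equally good solution avoiding $f^{\del}_C$. Concretely, take $q=1$, $r=2$, variables $\{x,y,z\}$, one constraint $C=\{x,y\}$ with $\cost(C,(\del,\del))=0$ and $\cost(C,f)=10$ for every other $f\in\setf{C}$ (this satisfies the wildcard property), and $z$ in no constraint. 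The unique optimum sets $x=y=\del$ and $z=1$, with total cost $2$; every branch with $\varphi|_C\neq f^{\del}_C$ costs at least $10$, and the global all-$\del$ baseline costs $3$, so your algorithm outputs $3$. The underlying reason is that the wildcard property only controls what happens when you delete \emph{more}: un-deleting a variable of $C$, which is the direction of your proposed flip, can raise $\cost(C,\cdot)$ and the costs of other incident constraints by an unbounded amount against a saving of one unit of deletion, so no pigeonhole or amortization over the variables of $C$ can make the exchange cost-neutral.

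The paper excludes an \emph{adaptively chosen} assignment instead. For the selected scope $C$ it distinguishes two cases: either $\cost(C,\cdot)$ is constant, in which case $C$ is simply discarded without branching; or there exist $f'\prec f$ with $\cost(C,f)+||f||\le\cost(C,f')+||f'||$, and then the single omitted branch is this dominated $f'$ (in the example above, $f'=(1,1)$, while $(\del,\del)$ is kept). The exchange now goes in the safe direction: replacing $f'$ by $f$ only deletes more variables of $C$, so by the wildcard property the cost of every other constraint can only decrease, and the displayed inequality takes care of $C$ itself and the deletion penalty; hence some optimum avoids $f'$, which is exactly what soundness of the pruned branching needs. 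A secondary flaw in your write-up: the base case ``no constraint of scope size $r$ remains, hence polynomial time'' is unjustified, since instances whose constraints all have arity below $r$ are still NP-hard in general (for $q=1$, arity $2$ they already encode \textsc{Vertex Cover}); this part is easily repaired by branching on scopes of any size $r'\le r$, as the paper does, using $((q+1)^{r'}-1)^{1/r'}\le((q+1)^{r}-1)^{1/r}$, but the unsound exclusion of $f^{\del}_C$ is a genuine gap that requires the paper's (or an equivalent) domination argument.
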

\begin{proof}
Let $(\calV,\calC,\cost)$ denote an instance of $(q,r)$-\textsc{CSP-with-Wildcard}.
Consider $C \in \calC$. Recall that $f^{\del}_{C} \in \setf{C}$ is the function mapping every element of $C$ to $\del$.
By the wildcard property we observe that $\cost(C,f_{C}^{\del}) = \min_{f \in \setf{C}} \cost(C,f)$.
For brevity, we denote $\cost(C,f_{C}^{\del})$ by $\cdel(C)$.

\begin{clm}\label{clm:constr-oneof}
For each $C \in \calC$ one of the following holds:
\begin{myenumerate}[a)]
\item for all $f \in \setf{C}$ it holds that $\cost(C,f) = \cdel(C)$, or
\item there exist $f,f' \in \setf{C}$ such that $f' \prec f$ and $\cost(C,f) + ||f|| \leq \cost(C,f') + ||f'||$.
\end{myenumerate}
\end{clm}
\begin{claimproof}
	Suppose that a) does not hold and let $f' \in \setf{C}$ be such that $\cost(C,f') \geq \cdel(C)+1$
and $||f'||$ is maximum possible.
Note that by the definition of $\cdel(C)$ we know that $f' \neq f_C^{\del}$, i.e., there exists $v \in C$ such that $f'(v) \neq \del$.
Let $f$ be obtained from $f'$ by mapping $v$ to $\del$. Note that $f' \prec f$.
Clearly $||f|| = ||f'||+1$ and, by the maximality of $f'$, we have $\cost(C,f)=\cdel(C)$.
Thus we have 
\[
\cost(C,f) + ||f|| = \cdel(C) + ||f'||+1 \leq \cost(C,f') + ||f'||,
\]
which completes the proof of claim.
\end{claimproof}

\begin{clm}\label{clm:notfprime}
Let $C,f,f'$ satisfy property b) in \cref{clm:constr-oneof}.
Then there exists an optimum solution $\varphi$ for $(\calV,\calC,\cost)$ such that $\varphi|_C \neq f'$.
\end{clm}
\begin{claimproof}
Let $\varphi'$ be an optimum solution for $(\calV,\calC,\cost)$.
If $\varphi'|_C \neq f'$, we are done, so suppose otherwise.
Define $\varphi \from \calV \to [q] \cup \{\del\}$ as follows:
\[
\varphi(v) = \begin{cases}
\varphi'(v) & \text{ if } v \notin C,\\
f(v) 		& \text{ if } v \in C.
\end{cases}
\]
Note that as $\varphi'|_C = f' \prec f$, we have $\varphi' \prec \varphi$.
Let us compute the total cost of $\varphi$.
\begin{align*}
\mathsf{total\text{-}cost}(\varphi)= & \ ||\varphi|| + \sum_{C' \in \calC} \cost(C',\varphi|_{C'}) \\
= & \ |\varphi^{-1}(\del) \setminus C| + |\varphi^{-1}(\del) \cap C|  + \cost(C,\varphi|_{C}) + \sum_{C' \in \calC \setminus \{C\}} \cost(C',\varphi|_{C'})\\
\stackrel{\mathclap{\normalfont\mbox{(1)}}}{=} & \ |\varphi'^{-1}(\del) \setminus C| + ||f||  + \cost(C,f) + \sum_{C' \in \calC \setminus \{C\}} \cost(C',\varphi|_{C'})\\
\stackrel{\mathclap{\normalfont\mbox{(2)}}}{\leq} & \ |\varphi'^{-1}(\del) \setminus C| + ||f||  + \cost(C,f) + \sum_{C' \in \calC \setminus \{C\}} \cost(C',\varphi'|_{C'})\\
\stackrel{\mathclap{\normalfont\mbox{(3)}}}{\leq} & \ |\varphi'^{-1}(\del) \setminus C| + ||f'||  + \cost(C,f') + \sum_{C' \in \calC \setminus \{C\}} \cost(C',\varphi'|_{C'})\\
= & \ ||\varphi'|| + \sum_{C' \in \calC} \cost(C',\varphi'|_{C'}) = \mathsf{total\text{-}cost}(\varphi').
\end{align*}
where step (1) follows as $\varphi'|_{\calV \setminus C} = \varphi|_{\calV \setminus C}$, step (2) follows from the wildcard property, and step (3) follows from the properties of $C,f,f'$ given by \cref{clm:constr-oneof} b).
Since $\varphi'$ is an optimal solution and thus its total cost is minimum possible,
we conclude that $\mathsf{total\text{-}cost}(\varphi)=\mathsf{total\text{-}cost}(\varphi')$ and thus $\varphi$ is also an optimal solution.
\end{claimproof}

In the beginning of (each recursive call of) our algorithm, we exhaustively apply the following two reduction rules.
First, if there is some $v \in \calV \setminus \bigcup_{C \in \calC}$, then we remove it from $\calV$.
Note that an optimal solution for the reduced instance can be extended to an optimal solution for the original one by mapping $v$ to any element of $[q]$.
Second, if there are $C,C' \in \calC$ such that $C' \subseteq C$, we can obtain an equivalent instance by removing $C'$ from $\calC$ and, for each $g \in \setf{C}$, increasing the value of $\cost(C,g)$ by $\cost(C',g|_{C'})$.
After applying the second reduction rule exhaustively we can assume that the sets in $\calC$ are pairwise incomparable.
It is clear that the reduction rules can be applied in polynomial time.

Now we are ready to describe our algorithm.
If $n = |\calV| \leq r$, we solve the problem in constant time by brute force in constant time, as $q$ and $r$ are constants.
Otherwise, pick any $C \in \calC$ and denote $|C|$ by $r'$. Clearly $r' \leq r$. Consider two possibilities given by \cref{clm:constr-oneof}.
If possiblity a) applies to $C$, then notice that we can safely remove $C$ from $\calC$.
Indeed, an optimal solution of $(\calV,\calC \setminus \{C\},\cost|_{\calC \setminus \{C\}})$ is also an optimal solution of $(\calV,\calC,\cost)$
and the total cost of this solution in the latter instance is equal to its total cost in the former instance plus $\cdel(C)$.
So assume that possibility b) applies and let $f,f'$ be as in \cref{clm:constr-oneof}~b).

We guess the valuation of variables from $C$, considering all possibilities except for $f'$.
By \cref{clm:notfprime} we know that it is safe to ignore $f'$ as there is always an optimum solution whose valuation restricted to $C$ is not $f'$.

This results in $(q+1)^{r'}-1$ branches.  Consider one such branch corresponding to a valuation $\widetilde{f} \in \setf{C}$.
The instance considered in this branch is $(\calV',\calC',\cost')$, defined as follows.
The set $\calV'$ of variables is $\calV \setminus C$; note that it is non-empty as $|C| \leq r$ and $n > r$.
The set $\calC'$ is defined as $\calC' \coloneqq \{ \widetilde{C} \cap \calV' ~|~ \widetilde{C} \in \calC\}$. Note that $\emptyset \notin \calC'$, since otherwise the second reduction rule could have been applied.
For each $C' \in \calC'$ and for each $g \in \setf{C'}$ we set 
\[
\cost'(C',g) = \sum_{\substack{\widetilde{C} \in \calC\\ \widetilde{C} \cap \calV' = C'}} \cost(\widetilde{C}, (g \cup \widetilde{f})|_{\widetilde{C}}),
\]
where by $g \cup \widetilde{f}$ we denote the valuation obtained by assigning the values of variables from $C$ according to $\widetilde{f}$, and the values of variables from $C'$ according to $g$; note that these sets are disjoint.
Since we aim to solve $(\calV,\calC,\cost)$, the value of an optimum solution found for $(\calV',\calC',\cost')$ must be increased by $||\widetilde{f}||$.
It is straightforward to verify that the above recursive procedure is correct. Now let us argue about the running time.

Since the local computation at each recursive call can be performed in polynomial time,
in order to bound the time complexity we need to estimate the number of leaves in the recursion tree.
Let $T(n)$ denote this value for an instance with $n$ variables. We aim to show that $T(n) \leq \left((q+1)^r-1 \right)^{n/r}$.

If $n \leq r$, then $T(n)=1 \leq \left((q+1)^r-1 \right)^{n/r}$.
Suppose now that $n > r$ and for every $n' < n$ we have $T(n') \leq \left((q+1)^r-1 \right)^{n'/r}$.
We obtain the following recursive inequality:
\[
T(n) \leq \left((q+1)^{r'}-1 \right) \cdot T(n-r').
\]
Applying the inductive assumption, we get that 
\[
T(n) \leq \left( (q+1)^{r'}-1 \right) \cdot \left((q+1)^r-1 \right)^{(n-r')/r} \leq \left((q+1)^r-1 \right)^{n/r},
\]
similarly as in the proof of \cref{thm:lcoloring-better-alg}.
Summing up, the total running time is bounded by $T(n) \cdot n^{\bigO(1)} \leq  \left((q+1)^r-1 \right)^{n/r} \cdot n^{\bigO(1)}$.
\end{proof}

Now \cref{thm:vd-better-alg} follows by combining \cref{lem:reduce-to-csp} with \cref{thm:solve-csp} and observing that $\left((q+1)^r-1 \right)^{1/r}< q+1$.

\section{Equivalent Hypotheses: Set Covering/Packing/Partitioning}\label{sec:equivHypo}
In this section we prove \cref{thm:SCCequivalent}.

Recall that the Set Cover Conjecture (SCC) asserts that for all $\varepsilon > 0$, there exists $d \geq 1$ such that there is no algorithm that solves every $\leqdsetcover$ instance $(U,\calF)$ in time $(2 - \varepsilon)^{|U|} \cdot |U|^{\bigO(1)}$~\cite{cyganProblemsHardCNFSAT2016}.

We show that, for a whole list of problems, similar lower bound statements are actually all equivalent to the SCC. The respective problems were informally introduced in the introduction, \cref{sec:intro}. Formally, they are defined in \cref{sec:prelims}. At a high level, we consider covering, packing and partition problems and distinguish between problem variants, where every set of the family has size exactly $d$ or at most $d$, and we also consider (whenever it makes sense) whether the number of selected sets or the size of their union is maximized.

All discussed problems can be solved in time $2^n \cdot n^{\bigO(1)}$, where $n$ is the size of the universe, by quite straightforward dynamic programming.
We now show that these problems are all equivalent to the SCC in the following sense: a \emph{faster than standard dynamic programming} algorithm for any of those problems violates the SCC. Similarly, if the SCC fails, then there is a \emph{faster than standard dynamic programming} algorithm for each of the problems in the list.

\SCCequivalent*

\cref{fig:reductionoverview} gives an overview of the reductions that we prove in order to show \cref{thm:SCCequivalent}.

\subsection{Basic Reductions}	
Recall from \cref{sec:overview_SCCequiv} that we say that a problem class $\calA=\{A_d\mid d\ge 1\}$ (such as \leqsetcoverclass) is $\hard$ if the following statement holds.
\begin{quote}
	For each $\eps > 0$, there is some $d\geq 1$ such that no algorithm solves $A_d$ on all $n$-element instances in time $(2 - \eps)^{n} \cdot n^{\Oh(1)}$.
\end{quote} 
Let $\calB=\{B_d\mid d\ge 1\}$ be another problem class. In the following sections we prove many reductions of the form ``If $\calA$ is $\hard$ then $\calB$ is $\hard$''. We will usually prove the contrapositive statement:

\begin{quote}
	Suppose for some $\eps > 0$ it holds that for every $d \geq 1$, there is an algorithm that solves $B_d$ on $n$-element instances in time $(2 - \eps)^{n} \cdot n^{\Oh(1)}$. \\[.5em]
	Then there exists $\eps' > 0$ such that for every $d \geq 1$, there is an algorithm that solves $A_d$ on $n$-element instances in time $(2 - \eps')^{n} \cdot n^{\Oh(1)}$.
\end{quote}

As \leqdsetcover (resp., \leqdsetpackingsets) is more general than \eqdsetcover (resp., \eqdsetpackingsets), we immediately observe the following.

\begin{obs}\label{obs:eqSCtoleqSC} \label{obs:eqPCtoleqPC2}
If \eqsetcoverclass is $\hard$, then \leqsetcoverclass is $\hard$; and 
if \eqsetpackingsetsclass is $\hard$, then \leqsetpackingsetsclass is $\hard$.
\end{obs}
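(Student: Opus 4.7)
The plan is to prove both implications by showing that every instance of the more restrictive variant is literally already an instance of the more general variant, so any hypothetical fast algorithm for the general variant can be used verbatim on the restrictive one. I will establish the contrapositive in each case, using the shorthand from \cref{sec:overview_SCCequiv}.

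For the first implication, suppose \leqsetcoverclass is not $\hard$. Then there exists some $\eps>0$ such that for every $d\ge 1$ there is an algorithm $\calA_d$ solving every $n$-element instance of \leqdsetcover in time $(2-\eps)^n\cdot n^{\bigO(1)}$. I will fix an arbitrary $d\ge 1$ and argue that $\calA_d$ also solves every $n$-element instance $(U,\calF,t)$ of \eqdsetcover within the same time bound: every set in $\calF$ has size exactly $d$, which trivially satisfies ``at most $d$'', so the tuple $(U,\calF,t)$ is a valid input to \leqdsetcover whose \texttt{YES}/\texttt{NO} answer coincides with the one for \eqdsetcover (both ask whether at most $t$ sets from $\calF$ cover $U$). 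Hence $\calA_d$ witnesses that \eqsetcoverclass is not $\hard$ with the same $\eps$, contradicting the hypothesis.

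The second implication is completely analogous: if \leqsetpackingsetsclass admits, for some $\eps>0$ and every $d\ge 1$, an algorithm running in time $(2-\eps)^n\cdot n^{\bigO(1)}$, then the same algorithm solves every instance of \eqdsetpackingsets by viewing its input as a \leqdsetpackingsets instance (each set of size exactly $d$ has size at most $d$, and the question ``is there a collection of at least $t$ pairwise disjoint sets?'' is syntactically identical in both problems). This again contradicts the assumption that \eqsetpackingsetsclass is $\hard$.

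There is no real obstacle here; the only thing to be careful about is to note that the equivalence of the decision questions on a common input is literal (the definitions of \eqdsetcover and \leqdsetcover, and of \eqdsetpackingsets and \leqdsetpackingsets, agree on the \texttt{YES}/\texttt{NO} verdict once the set-size restriction is relaxed), so no transformation of the instance is required and the parameter $n=|U|$ is preserved exactly. Consequently, the same $\eps$ transfers, and the witnessing value of $d$ on the \eqdsetcover/\eqdsetpackingsets side is simply the same $d$ used on the \leqdsetcover/\leqdsetpackingsets side.
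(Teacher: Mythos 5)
Your proof is correct and is essentially the paper's argument: the paper dismisses this as an immediate observation because \leqdsetcover (resp.\ \leqdsetpackingsets) is more general than \eqdsetcover (resp.\ \eqdsetpackingsets), which is exactly the instance-inclusion point you spell out via the contrapositive, with the same $d$, the same $\eps$, and the universe size preserved. No gap; you have simply written out in full what the paper states in one line.
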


Similarly, since $t$ disjoint sets of size $d$ cover exactly $t \cdot d$ elements, $\leqdsetpackingunion$ problem is a generalization of $\eqdsetpackingsets$.
\begin{obs}\label{obs:eqPCtoleqPC1}
If \eqsetpackingsetsclass is $\hard$, then \leqsetpackingunionclass is $\hard$.			
\end{obs}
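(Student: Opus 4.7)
The plan is to use a direct identity reduction that preserves the universe, so the runtimes carry over exactly. Given any $d\geq 1$ and an instance $(U,\calF,t)$ of \eqdsetpackingsets, I would construct an instance $(U,\calF,t')$ of \leqdsetpackingunion, with the same universe $U$, the same family $\calF$, and threshold $t'\coloneqq t\cdot d$. Since every set in $\calF$ has size exactly $d$, it in particular has size at most $d$, so this is a syntactically valid instance of \leqdsetpackingunion.

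The key step is to verify the equivalence of the two instances. If there exist $t$ pairwise disjoint sets $S_1,\ldots,S_t\in\calF$, then since the $S_i$ are pairwise disjoint and each has size exactly $d$, their union has size $t\cdot d=t'$, which shows the "at least $t'$" condition of \leqdsetpackingunion. Conversely, if some collection $\calF'\subseteq\calF$ consists of pairwise disjoint sets whose union has size at least $t'=t\cdot d$, then because every member of $\calF'$ has size exactly $d$, disjointness gives $|\calF'|\cdot d=|\bigcup\calF'|\geq t\cdot d$, so $|\calF'|\geq t$, and any $t$ sets from $\calF'$ form the required packing.

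Finally, I would translate $\hardness$. The reduction is computable in polynomial time and, crucially, does not change the universe, so $n=|U|$ is preserved. Therefore, any algorithm that solves every $n$-element instance of \leqdsetpackingunion in time $(2-\eps)^n\cdot n^{\Oh(1)}$ immediately solves every $n$-element instance of \eqdsetpackingsets within the same running time. Taking the contrapositive of the definition of $\hardness$, if \eqsetpackingsetsclass is $\hard$, i.e., for each $\eps>0$ there is some $d$ for which no such algorithm exists for \eqdsetpackingsets, then the same $d$ witnesses that no such algorithm exists for \leqdsetpackingunion. Hence \leqsetpackingunionclass is $\hard$.

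There is no real obstacle here; the observation boils down to the arithmetic identity $|\bigcup\calF'|=|\calF'|\cdot d$ for disjoint families of $d$-sets, which converts a bound on the number of selected sets into a bound on the union size and vice versa. The only thing to be careful about is checking that the universe size, and hence the base of the exponent in the running time, is preserved by the reduction.
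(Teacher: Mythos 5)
Your reduction is exactly the one the paper uses: it notes that $t$ pairwise disjoint sets of size exactly $d$ cover exactly $t\cdot d$ elements, so \leqdsetpackingunion with threshold $t\cdot d$ generalizes \eqdsetpackingsets on the same universe, and the $\hardness$ transfer is immediate. Your write-up just spells out the same observation in more detail, including the converse counting step, and is correct.
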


Next, we observe that \eqdsetpartition can be easily reduced to  \eqdsetcover.
Now consider a set family $\calF$ over the universe $U$ such that each set in $\calF$ has size $d$. We observe that each set partition of $(U,\calF)$ is a set cover of size $n/d$, and vice versa each set cover of size $n/d$ can only contain disjoint sets, and is therefore a set partition.
Thus, \eqdsetcover can be used to solve \eqdsetpartition.

\begin{obs}\label{lem:eqPTtoeqSC}
If \eqsetpartitionclass is $\hard$, then \eqsetcoverclass is $\hard$.		
\end{obs}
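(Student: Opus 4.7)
The plan is to establish the observation by the contrapositive: suppose \eqsetcoverclass is not $\hard$, meaning there exists $\eps > 0$ such that for every $d \ge 1$, the problem \eqdsetcover admits an algorithm running in time $(2-\eps)^n \cdot n^{\Oh(1)}$ on $n$-element instances. I will show that the same $\eps$ witnesses that \eqsetpartitionclass is also not $\hard$, by giving, for each fixed $d$, a trivial reduction from \eqdsetpartition to \eqdsetcover that preserves the universe exactly.

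The reduction is as follows. Given an \eqdsetpartition instance $(U, \calF)$ with $|U| = n$ and every set in $\calF$ of size exactly $d$, note that $n$ must be divisible by $d$ (otherwise the instance is trivially \no). Set $t := n/d$ and output the \eqdsetcover instance $(U, \calF, t)$. The key equivalence is:
\[
(U, \calF) \text{ has a set partition} \iff (U, \calF, t) \text{ has a set cover of size at most } t.
\]
For the forward direction, any partition consists of exactly $n/d = t$ pairwise disjoint sets whose union is $U$, which is in particular a set cover of size $t$. For the reverse direction, suppose $\calF' \subseteq \calF$ has $|\calF'| \le t$ and $\bigcup \calF' = U$. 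Then by a counting argument,
\[
n = |U| = \Bigl|\bigcup \calF'\Bigr| \le \sum_{S \in \calF'} |S| = |\calF'| \cdot d \le t \cdot d = n,
\]
so every inequality is tight. In particular $|\bigcup \calF'| = \sum_{S \in \calF'} |S|$ forces the sets of $\calF'$ to be pairwise disjoint, i.e., $\calF'$ is a set partition of $U$.

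Because the reduction outputs an instance on the identical universe $U$ and can be performed in polynomial time, running the hypothetical $(2-\eps)^n \cdot n^{\Oh(1)}$ algorithm for \eqdsetcover yields a $(2-\eps)^n \cdot n^{\Oh(1)}$-time algorithm for \eqdsetpartition for the same value of $d$. Since this holds for every $d$, the class \eqsetpartitionclass is not $\hard$, contradicting the assumption and proving the observation. There is no real obstacle here; the only thing to be careful about is noting that the \eqdsetcover problem asks for a cover of size \emph{at most} $t$, so the counting step showing that equality is forced is what pins the cover down to be a partition.
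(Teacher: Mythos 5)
Your proof is correct and follows essentially the same route as the paper: the paper also reduces \eqdsetpartition to \eqdsetcover on the same universe by setting $t=n/d$ and observing that a cover of size $n/d$ by size-$d$ sets must consist of pairwise disjoint sets, hence is a partition. Your counting argument just spells out the disjointness step the paper states in one line.
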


Another simple argument gives the following.
\begin{lem}\label{lem:leqSCtoleqPTs}
If \leqsetcoverclass is $\hard$, then \leqsetpartitionsetsclass is $\hard$..	
\end{lem}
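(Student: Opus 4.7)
The plan is to give a direct reduction from \leqdsetcover to \leqdsetpartitionsets that preserves the universe exactly, so that a hypothetical algorithm for \leqdsetpartitionsets running in time $(2-\eps)^n \cdot n^{\Oh(1)}$ immediately yields the same running time for \leqdsetcover, violating the assumption that \leqsetcoverclass is $\hard$. Concretely, starting from an instance $(U, \calF, t)$ of \leqdsetcover with $|U|=n$ and every $S \in \calF$ of size at most $d$, I would build the \leqdsetpartitionsets instance $(U, \calF', t)$, where
\[
  \calF' \coloneqq \{\, S' \subseteq S : S \in \calF,\ S' \neq \emptyset\,\}.
\]
Since every $S \in \calF$ has size at most $d$, it contributes at most $2^d$ non-empty subsets; thus $|\calF'| \le 2^d \cdot |\calF|$, which is polynomial in the input as $d$ is a constant, and each set in $\calF'$ still has size at most $d$. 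Crucially, the universe is unchanged, so the parameter $n$ is preserved.

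Next I would verify equivalence. For the forward direction, suppose $S_1, \dots, S_k \in \calF$ with $k \le t$ cover $U$. Define greedily $S_i' \coloneqq S_i \setminus \bigcup_{j < i} S_j'$ and discard any $S_i'$ that becomes empty. The remaining $S_i'$'s are pairwise disjoint, cover $U$, and each lies in $\calF'$, giving a partition using at most $k \le t$ sets from $\calF'$. For the reverse direction, a partition of $U$ using at most $t$ sets $S_1', \dots, S_k'$ from $\calF'$ yields supersets $S_1, \dots, S_k \in \calF$ (one per $S_i'$) whose union is $U$, i.e.\ a \leqdsetcover solution of size $k \le t$. So $(U,\calF,t)$ is a \yes-instance of \leqdsetcover iff $(U, \calF', t)$ is a \yes-instance of \leqdsetpartitionsets.

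Plugging this together: if, contrary to the claim, \leqsetpartitionsetsclass is not $\hard$, then there is $\eps>0$ such that for every $d' \ge 1$, \leqsetpartition{d'} admits an $(2-\eps)^{n}\cdot n^{\Oh(1)}$-time algorithm. For arbitrary $d$, applying this algorithm with $d' \coloneqq d$ to the instance $(U, \calF', t)$ constructed above solves the original \leqdsetcover instance in the same running time (up to the polynomial-time construction), contradicting the assumed $\hardness$ of \leqsetcoverclass.

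There is no real obstacle here: the whole argument hinges on the observation that enlarging $\calF$ by a factor of $2^d$ is a polynomial-time blow-up (since $d$ is constant) and, more importantly, does not touch the universe. Once this is noticed, correctness and the running-time bookkeeping are both routine.
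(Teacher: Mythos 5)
Your reduction is exactly the paper's: the same family $\calF'$ of all nonempty subsets of sets in $\calF$, the same greedy disjointification of a cover in the forward direction (the paper guarantees nonemptiness by taking an inclusion-minimal cover, you simply discard empty sets --- both work), and the same lifting of a partition back to a cover, with the universe and $d$ preserved so the running-time transfer is immediate. The argument is correct; only note that the algorithm invoked in your last paragraph should be for \leqsetpartitionsets{d'} (the \#Sets variant with budget $t$), not the plain partition decision problem.
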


\begin{proof}
	Given a \leqdsetcover instance $(U, \calF, t)$, create a \leqdsetpartitionsets instance $(U, \calF', t)$ where
	$
		\calF' \coloneqq \{A \mid A \neq \emptyset, A \subseteq S, S \in \calF\}. 
	$

	First, suppose $U$ has a cover $\calS = \{S_1, \ldots, S_r\}  \subseteq \calF$ of size $r \leq t$.
	Without loss of generality we can assume that it is inclusion-wise minimal, i.e., for every $i \in [r]$ there exists $u \in S_i$ that is not covered by $\calS \setminus S_i$.

	Define sets $A_1,\ldots,A_r$ as follows. We set $A_1 = S_1$.
	Then, for $i \geq 2$, we set $A_i = S_i \setminus \bigcup_{i < i} A_i$.	
	It is straightforward to see that $\Bigl(\bigcup_{1 \leq i \leq r} A_i\Bigr) = \Bigl(\bigcup_{1 \leq i \leq S_i} S_i\Bigr)$.
	Furthermore, the sets $A_1, \ldots, A_r$ are pairwise disjoint, and they are nonempty by minimality of $\calS$.
	Since for $1 \leq i \leq r$, $A_i$ is a subset of $S_i$ at the end of the algorithm described above,
	we have $\{A_1, \ldots, A_r\} \subseteq \calF'$. Consequently, this set is
	a partition of $U$ of size $r \leq t$.

	Second, suppose that $U$ has a partition $\calS' \subseteq \calF'$ of size $r \leq t$.
	Since each set in $\calF'$ is a subset of some set in $\calF$, there is a corresponding
	set cover $\calS$ of $U$ of size $r \leq t$.
\end{proof}

Let $U$ be some universe and $\calF$ a family of size-$d$ subsets of $U$.
Clearly, $(U, \calF)$ has a set partition if and only if it has a set packing of size (at least) $\abs{U}/d$.
So we obtain the following observation.

\begin{obs}\label{obs:eqPTtoeqPC}
If \eqsetpartitionclass is $\hard$, then \eqsetpackingsetsclass is $\hard$.	
\end{obs}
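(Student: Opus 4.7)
The plan is to prove the contrapositive via a trivial (identity) reduction from \eqdsetpartition to \eqdsetpackingsets that preserves the universe size exactly, so that any faster-than-brute-force algorithm for the packing problem transfers immediately to the partition problem with the same constant $\eps$ and the same $d$.

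Concretely, given an instance $(U,\calF)$ of \eqdsetpartition with $n=|U|$, I will output the instance $(U,\calF,t)$ of \eqdsetpackingsets with threshold $t := n/d$ (if $d$ does not divide $n$ the original instance is trivially a no-instance and we can output any fixed no-instance of \eqdsetpackingsets). The key equivalence is: a subfamily $\calF'\subseteq\calF$ is a partition of $U$ if and only if it is a packing of size at least $n/d$. The forward direction is clear since a partition into size-$d$ sets consists of $n/d$ pairwise disjoint sets. For the reverse direction, if $\calF'\subseteq\calF$ is pairwise disjoint and $|\calF'|\ge n/d$, then $\bigl|\bigcup\calF'\bigr| = d\cdot|\calF'|\ge n$; since $\bigcup\calF'\subseteq U$ and $|U|=n$, equality holds and $\calF'$ partitions $U$.

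Since the universe size is unchanged, the contrapositive argument runs cleanly: assume for contradiction that \eqsetpackingsetsclass is not \hard, i.e., there exists $\eps>0$ such that for every $d\ge 1$ there is an algorithm solving \eqdsetpackingsets on all $n$-element instances in time $(2-\eps)^{n}\cdot n^{\Oh(1)}$. Then the same $\eps$ and the same $d$ work for \eqdsetpartition by running the reduction above and invoking the assumed algorithm, contradicting the assumption that \eqsetpartitionclass is \hard. There is no real obstacle here; the only thing to be careful about is to phrase the statement in the contrapositive form used throughout \cref{sec:equivHypo} (namely, the same $\eps'=\eps$ and the same $d'=d$ are used, so the quantifier structure of the definition of \hard is preserved without loss).
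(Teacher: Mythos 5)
Your proof is correct and matches the paper's argument exactly: the paper also observes that for a family of size-$d$ sets, a partition of $U$ exists if and only if there is a packing of at least $\abs{U}/d$ sets, and uses this identity reduction (preserving the universe and hence the same $\eps$ and $d$). Your write-up simply spells out the counting argument for the reverse direction, which the paper leaves implicit.
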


In order to prove the following reduction we again use the trick that we also employed to show \cref{lem:csp-structured}. Namely, we split the universe of the instance into blocks and then guess the number of unused elements for each block.

\begin{lem}\label{lem:leqPCtoleqPTs}
If \leqsetpackingunionclass is $\hard$, then \leqsetpartitionsetsclass is $\hard$.
\end{lem}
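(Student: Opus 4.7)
The plan is to prove the contrapositive. Assume that \leqsetpartitionsetsclass is not $\hard$, i.e., there exists $\eps' > 0$ such that for every $d' \ge 1$ we have an algorithm solving $\leqsetpartitionsets{d'}$ on $n$-element instances in time $(2-\eps')^n \cdot n^{\Oh(1)}$. The goal is to exhibit some $\eps > 0$ that serves as a witness that \leqsetpackingunionclass is also not $\hard$: i.e.\ for every $d$, \leqdsetpackingunion is solvable in time $(2-\eps)^n \cdot n^{\Oh(1)}$. Throughout, let $b$ be a large constant depending only on $\eps'$, to be fixed at the end.

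Consider a \leqdsetpackingunion instance $(U,\calF,t)$ with $|U|=n$. Pad $U$ with at most $b-1$ dummy elements (adjusting $t$ accordingly and adding singletons for the dummies so they do not affect the optimum) so that $n$ is divisible by $b$, and partition $U$ into blocks $B_1,\ldots,B_{n/b}$ of size $b$. The intuition is that a packing $\calP\subseteq\calF$ leaves some elements in each block uncovered; we enumerate the exact per-block deficit. Formally, for every signature $\boldf=(f_1,\ldots,f_{n/b}) \in \{0,1,\ldots,b\}^{n/b}$ with $\sum_i f_i \le n-t$, build the auxiliary instance $(U^{\boldf},\calF^{\boldf})$ as follows: introduce a fresh marker element $\alpha_i$ for each block, set $U^{\boldf} = U \cup \{\alpha_1,\ldots,\alpha_{n/b}\}$, and let
\[
\calF^{\boldf} = \calF \;\cup\; \bigl\{\{\alpha_i\}\cup X \;:\; i\in[n/b],\; X\subseteq B_i,\; |X|=f_i\bigr\}.
\]
All sets of $\calF^{\boldf}$ have size at most $d' := \max(d,b+1)$, so this is a \leqd'setpartitionsets instance; use the trivial threshold $t' = |U^{\boldf}|$. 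We then run the assumed partition algorithm on each such instance and accept if any one of them is a yes-instance.

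The correspondence between packings and partitions is tight. Given a packing $\calP\subseteq\calF$ with $|\bigcup\calP|\ge t$, define $f_i := b-|B_i\cap\bigcup\calP|$; then $\sum_i f_i = n-|\bigcup\calP|\le n-t$ and adding the set $\{\alpha_i\}\cup(B_i\setminus\bigcup\calP)$ to $\calP$ for each block yields a partition of $U^{\boldf}$. Conversely, in any partition of $U^{\boldf}$ each $\alpha_i$ lies in exactly one chosen set, which must be of the form $\{\alpha_i\}\cup X_i$ with $|X_i|=f_i$; the remaining chosen sets come from $\calF$, are pairwise disjoint, and cover $n-\sum_i f_i\ge t$ elements, giving a valid packing.

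For the running time: the number of signatures is at most $(b+1)^{n/b}$, and the auxiliary universe has size $n+n/b$, so the overall running time is bounded by
\[
(b+1)^{n/b}\cdot(2-\eps')^{n+n/b}\cdot n^{\Oh(1)} \;=\; \Bigl[(2-\eps')\cdot\bigl((b+1)(2-\eps')\bigr)^{1/b}\Bigr]^{n}\cdot n^{\Oh(1)}.
\]
Since $\bigl((b+1)(2-\eps')\bigr)^{1/b}\to 1$ as $b\to\infty$, there is a choice of $b$ depending only on $\eps'$ for which the bracketed base is strictly less than $2$, yielding the desired $\eps > 0$. The one point requiring care is this quantitative trade-off between the signature-enumeration factor and the $1/b$ blow-up in the partition instance, but the freedom to take $b$ as large as we like (it only affects $d'$, which we allow to depend on $\eps'$) makes this routine.
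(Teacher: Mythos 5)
Your proof is correct and follows essentially the same route as the paper's: split the universe into blocks of constant size $b$, enumerate per-block signatures of uncovered elements, introduce one marker element per block together with the sets $\{\alpha_i\}\cup X$, and balance the $(b+1)^{n/b}$ enumeration overhead against the $(2-\eps')^{\,n+n/b}$ solver time by taking $b$ large (depending only on $\eps'$). The only deviations are cosmetic and harmless: you restrict to signatures with $\sum_i f_i\le n-t$ and use a trivial set-count threshold, whereas the paper keeps all signatures and instead bounds the number of selected sets by $t+n/b$; your explicit choice $d'=\max(d,b+1)$ is, if anything, slightly more careful.
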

\begin{proof}
	Let $(U, \calF, t)$ be an instance of \leqdsetpackingunion. Let $n=\abs{U}$.
	Let $b$ be the smallest integer such that $(2b+2)^{1/b}<(1+\eps/2)$. This choice will become clear later on.
	Without loss of generality, we assume that $n$ is divisible by $b$ (as otherwise we can add at most $b-1$ dummy elements that are in none of the sets from $\calF$).
	We split the elements from $U$ into pairwise disjoint blocks $U_1, \ldots, U_{n/b}$ of size $b$ each.
	
	Suppose we are given an integer $x_i\in \{0, \ldots, b\}$ for each block $U_i$ that specifies how many elements of this block are \emph{not} covered in a hypothetical \leqdsetpackingunion{} solution. 
	We refer to the vector $\boldx=(x_1, \ldots, x_{n/b})$ as the signature of this solution. Note that each solution has precisely one signature.
	
	Consider the following algorithm: 
	\begin{enumerate}
		\item Define the set  $U'=U\cup \{a_i \mid i\in [n/b]\}$, where $a_i$ is a new element introduced for each block.
		\item Iterate over all possible signatures. For each such signature $\boldx$, build a new set system $\calF_\boldx$ that contains all sets from $\calF$ and, in addition, for each $i$ and each set $S\subseteq U_i$ of size $x_i$, the set $S\cup \{a_i\}$.
		\item For $d'=b+1$, run the assumed algorithm for \leqsetpartition{d'} on the instance $(U', \calF_\boldx, t+n/b)$.
	\end{enumerate}
	
	First note that by the fact that a set $S\cup \{a_i\}$ has size $x_i+1\le b+1=d'$, $(U', \calF_\boldx, t+n/b)$ is, in fact, an instance of \leqsetpartitionsets{d'}.
	The idea is that for each block $U_i$, a partition of $U'$ contains precisely one set that contains the new element $a_i$. This set holds all elements that are not covered by a corresponding packing of sets from $\calF$ with signature $\boldx$.
	Hence the partition corresponding to such a packing contains one additional set per block. So it is straight-forward that above algorithm solves \leqdsetpackingunion.
	
	Let $n'=\abs{U'}=n+n/b$.
	There are at most $(b+1)^{n/b}$ different signatures; and for each signature the algorithm for \leqsetpartition{d'}  takes time $(2 - \eps)^{n'} \cdot n'^{\bigO(1)}$. Thus the total runtime is bounded by
	\begin{align*}
		(b+1)^{n/b}\cdot (2 - \eps)^{n'} \cdot n'^{\bigO(1)}
		&\le (b+1)^{n/b}\cdot (2 - \eps)^{n+n/b} n^{\bigO(1)}\\
		&\le (2b+2)^{n/b} \cdot (2 - \eps)^n \cdot n^{\bigO(1)}\\
		&\le (1+\eps/2)^n \cdot (2 - \eps)^n \cdot n^{\bigO(1)}\\
		&=(2-\eps^2/2)^n \cdot n^{\bigO(1)}.
	\end{align*}
	For $\eps'=\eps^2/2$ this gives the correct runtime.
\end{proof}

\subsection{Structured Families of Sets}
Several subsequent reductions use similar ideas. Therefore we introduce some notation and prove a few
technical lemmas to avoid redundancy as much as possible.
Let $U$ be a set of $n$ elements and $\calF \subseteq 2^{U}$ be a set family where each set has size at most $d$.

Let $(\calF_1, \ldots, \calF_d)$ be the partition of $\calF$ according to set size, i.e., let $\calF_i$ contain all size-$i$ sets of $\calF$. We now define a vector that represents a guess of how many sets of each size are chosen (in some solution).
	Let $\boldr=(r_1, \ldots, r_d) \in \{0, \ldots, {n \choose d}\}^{d} $ satisfy
	\begin{enumerate}
		\item $\sum_{i = 1}^{d} r_i \leq n$, and
		\item $r_i  \leq \abs{\calF_i}$.
	\end{enumerate}
	We say that $\boldr$ is an \signature{\calF} and define $w(\boldr)=\sum_{i = 1}^{d} r_i$ to be the \emph{weight} of $\boldr$. For a packing/partition $\calS \subseteq \calF$ of $U$, we say that $\calS$ respects $\boldr$ if $\calS$ contains exactly $\boldr_i$ sets from $\calF_i$.

	Intuitively, our aim is to modify $(U,\calF$)
	such that all sets have the same size while preserving the original packings/partitions. To this end, for each
	\signature{\calF}  $\boldr$ and integer $c \geq 1$, we will define a new set system $\left( U_{\boldr}, \calF_{\boldr}
	\right)$ such that every set in $\calF_{\boldr}$
	has size $\left( c \cdot d\,! + 1 \right) $. 
	Moreover, we will show that there is a one-to-one relationship between
	the packings (partitions) of $(U,\calF)$ that respect $\boldr$,
	and packings (partitions) of $\left( U_{\boldr},
	\calF_{\boldr} \right)$. We say that $\left( U_{\boldr},
	\calF_{\boldr} \right)$ is the \emph{$\joininstance{c,\boldr}$ of $(U,\calF)$}.

	The idea is to merge, for each $i\in [d]$, selections of $a_i \coloneqq \frac{c \cdot d\,!}{i}$ disjoint sets from $\calF_i$ to obtain new
	sets of size $a_i \cdot i = \frac{c \cdot d\,!}{i} \cdot i = c \cdot
	d\,!$ (now the same size for all $i$). A selection of $r_i/a_i$ of these new sets then corresponds to a selection of $r_i$ of the original sets. In order to avoid divisibility issues, we introduce some dummy sets and work with the vector  $\bolds=(s_1,\ldots, s_d) \in \mathbb{Z}_{\geq 0}^{d}$ with
	\begin{equation*}
		\bolds_i \coloneqq \left\lceil \frac{r_i}{a_i} \right\rceil \cdot a_i \quad (\text{for }i\in [d]),
	\end{equation*}
	So, coming from $r_i$, $s_i$ is the next-largest integer that is divisible by $a_i$.

	For $I_{\boldr} \coloneqq \{i\in [d]  \mid r_i \neq 0\}$, and
	an index $i \in I_{\boldr}$, we define $[s_i-r_i]$ new sets of dummy elements of size $i$ each:
	\[
		A^{\boldr, i}_j \coloneqq \{a^{i,j}_1, \ldots, a^{i,j}_{i}\} \quad \text{for } j \in [s_i-r_i].
	\]
	We assume that the sets $\calA^{\boldr,i}_j$ are all
	disjoint sets of new elements (outside of $U$).
	For a given signature $\boldr$, define the set of dummy elements
	\[
			N_{\boldr} \coloneqq \bigcup_{i \in I_{\boldr}}\bigcup_{j\in [s_i-r_i]} A^{\boldr,i}_j.
	\]
	Note that
	\begin{equation}\label{eq:size_U_r}
		\abs{N_{\boldr}} =  \abs{\bigcup_{i \in I_{\boldr}}\bigcup_{j\in [s_i-r_i]} A^{\boldr,i}_j} \leq  d \cdot c \cdot d\,! =  \Oh(1)
	\end{equation}
	since $\abs{I_{\boldr}} \leq d$ and $\bolds_i - \boldr_i \leq a_i$.
	
	For each $i \in I_{\boldr}$, the idea is to extend the part $\calF_i$ by the new size-$i$ sets
	\[
		\calF^{\boldr, i}_{\textrm{add} } \coloneqq \{ A^{\boldr,i}_j \mid j\in [s_i-r_i]\}.
	\]

	Then, for each $i \in I_{\boldr}$, we consider the sets that can be obtained as a union of $a_i$ pairwise disjoint sets from $\calF_i\cup \calF^{\boldr,i}_{\textrm{add}}$, formally,
	\begin{equation*}
		\calG^{\boldr}_i \coloneqq \bigg\{\bigcup_{j\in [a_i]} V_j \mid V_1, \ldots, V_{a_i} \text{ are pairwise disjoint sets from }\calF_i\cup \calF^{\boldr,i}_{\textrm{add}}\bigg\}.
	\end{equation*}

	Crucially, for each $i \in I_{\boldr}$ and set $X \in
	\calG^{\boldr}_i$, it holds that $\abs{X} = a_i \cdot i = \frac{c \cdot
	d\,!}{i} \cdot i = c \cdot d\,!$.  Ultimately, the idea is that all of
	these sets $X$ will form a new collection of sets whose elements now all have
	the same size.
	
	As previously mentioned, for each $i$, we are ultimately interested in selections of precisely $r'_i\coloneqq\frac{\boldr_i}{a_i}$ of the new size-$a_i$ sets. To ensure that no more than $r_i'$ sets are picked, we introduce $r_i'$ new elements $E_i \coloneqq \bigg\{e^{i}_1, \ldots, e^{i}_{r_i'}\bigg\}$, and extend each set in our new collection of sets $\calG^{\boldr}_i$ by one of these elements as follows:

	For each $i \in I_{\boldr}$, let
	\[
		\calZ^{\boldr}_i  \coloneqq \Big\{\{e\} \cup X \mid e\in E_i, X \in \calG^{\boldr}_i\Big\}.
	\]
	With this modification, each selection of disjoint sets from $\calZ^{\boldr}_i$ has size at most $\abs{E_i}=r_i'$.
	Let $\calE_\boldr\coloneqq \bigcup_{i \in [d]}E_i$ and let $\alpha_{\boldr} \coloneqq \sum_{i \in I_{\boldr}} r_i'$ be the target number of sets that we aim to select from the new collection of sets.

	 We finish the definition of the $\joininstance{c,\boldr}$ $\left( U_{\boldr},
	\calF_{\boldr} \right)$ by setting
	\[
	\calF_{\boldr} \coloneqq \bigcup_{i \in I_{\boldr}} \calZ^{\boldr}_i, \quad\text{and}\quad
	U_{\boldr} \coloneqq U \cup N_{\boldr} \cup \calE_\boldr.		
	\]

	For the bound on the size of this new instance, we observe that
	\begin{align*}
		\abs{\calE_\boldr} = \sum_{i \in I_{\boldr}} r_i'= \sum_{i = 1}^{d} \left\lceil \frac{\boldr_i}{a_i} \right\rceil \leq d + \sum_{i = 1}^{d} \frac{\boldr_i}{a_i}
		= d + \sum_{i = 1}^{d} \frac{\boldr_i \cdot i}{c \cdot d\,!}
		\leq d + \sum_{i = 1}^{d}  \frac{\boldr_i}{c (d-1)\,!}
		&\leq d + \frac{n}{c (d-1)\,!},
	\end{align*}
	where the last step follows because $\boldr$ is an $\signature{\calF}$.
	Moreover, each set in $\calF_{\boldr}$ has size $c
	\cdot d\,! + 1$ and consequently
	\begin{equation}\label{eq:U_bar_r_size}
		\abs{U_{\boldr} } = \abs{U} + \abs{N_{\boldr}} + \abs{\calE_\boldr} = n + \Oh(1) + d + \frac{n}{c \cdot (d-1)\,!} = n + \frac{n}{c \cdot (d-1)\,!} + \Oh(1).
	\end{equation}
	
	Next we show that these structured instances preserve packings and partitions. 
	\begin{lem}\label{lem:join_packing_new}
		Let $\calF$ be a collection of subsets of a universe $U$, each with size at most $d$. Let $c \geq 1$ be an integer. For each $\signature{\calF}$ $\boldr$, the following two statements are equivalent
		\begin{enumerate}
			\item $U$ has a packing $\calS \subseteq \calF$ such that $\abs{\calS \cap \calF_i} = \boldr_i$ for $i\in [d]$.\label{item:equiv_pack_1}
			\item $U_{\boldr}$ has a packing $\calS_{\boldr} \subseteq \calF_{\boldr}$ of size $\alpha_{\boldr}$ such that $(N_{\boldr} \cup \calE_\boldr) \subseteq \calS_\boldr$.\label{item:equiv_pack_2}
		\end{enumerate}
		Moreover, it holds that
		\begin{equation*}
			\left( \bigcup_{A \in \calS_{\boldr}} A \right)  \setminus \left( N_{\boldr} \cup \calE_\boldr \right)  = \left( \bigcup_{A \in \calS} A \right).
		\end{equation*}
	\end{lem}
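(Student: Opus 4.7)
The plan is to prove the equivalence by exhibiting explicit correspondences in both directions, with the moreover clause falling out of the constructions. Throughout, I read the containment $(N_\boldr \cup \calE_\boldr) \subseteq \calS_\boldr$ in statement~(2) as shorthand for $(N_\boldr \cup \calE_\boldr) \subseteq \bigcup \calS_\boldr$, consistent with the moreover clause.

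For direction (\ref{item:equiv_pack_1})\,$\Rightarrow$\,(\ref{item:equiv_pack_2}), I would start with a packing $\calS \subseteq \calF$ satisfying $|\calS \cap \calF_i| = r_i$. For each $i \in I_\boldr$, I enlarge $\calS \cap \calF_i$ by the full collection of dummy sets $\{A^{\boldr,i}_j : j \in [s_i - r_i]\}$. Because the dummy elements are freshly introduced and distinct across dummy sets, the resulting $s_i$ size-$i$ sets remain pairwise disjoint. Since $a_i \mid s_i$ by construction, I partition this collection arbitrarily into $r_i'$ groups of $a_i$ sets; each group's union belongs to $\calG^\boldr_i$ and has size $c \cdot d\,!$. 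Tagging each group with a distinct element of $E_i$ yields a set in $\calZ^\boldr_i$. Iterating over $i \in I_\boldr$ produces a family $\calS_\boldr \subseteq \calF_\boldr$ of size $\sum_i r_i' = \alpha_\boldr$. Pairwise disjointness follows because the tags are mutually distinct and disjoint from $U \cup N_\boldr$, while the non-tag parts originate from pairwise disjoint pools of size-$i$ sets. By construction, every dummy element and every tag of $\calE_\boldr$ is covered.

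For the reverse direction, every set of $\calS_\boldr$ has the form $\{e\} \cup X$ with a unique tag $e$ lying in some $E_i$. As $|\calS_\boldr| = \alpha_\boldr = |\calE_\boldr|$ and coverage of $\calE_\boldr$ is assumed, the map sending each set to its tag is a bijection onto $\calE_\boldr$; in particular exactly $r_i'$ sets of $\calS_\boldr$ carry tags from $E_i$. For each such set $\{e\} \cup X$, the definition of $\calG^\boldr_i$ guarantees a decomposition of $X$ into $a_i$ pairwise disjoint sets from $\calF_i \cup \calF^{\boldr,i}_{\text{add}}$, which I call the constituents of this set. Collecting constituents across $\calS_\boldr$ yields $\sum_i s_i$ pairwise disjoint size-$i$ sets (disjointness within each $X$ is built in, and across different sets of $\calS_\boldr$ it follows from disjointness of their $X$-parts). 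The key step is the observation that each dummy element $a^{i,j}_k$ appears in no element of $\calF$ and in exactly one dummy set $A^{\boldr,i}_j$, so any constituent containing such an element must equal $A^{\boldr,i}_j$ itself. Hence the required coverage of $N_\boldr$ forces every dummy set to appear as a constituent, accounting for $s_i - r_i$ of the size-$i$ constituents and leaving exactly $r_i$ non-dummy constituents from $\calF_i$ for each $i$. Defining $\calS$ as the union of all non-dummy constituents across $i$ yields the desired packing.

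The moreover statement is immediate from either construction: $\bigcup \calS_\boldr$ decomposes as the disjoint union of $N_\boldr$, $\calE_\boldr$, and $\bigcup \calS$. The only point requiring genuine care is the ``forced dummy constituent'' argument in the backward direction; once it is established that every dummy set must occur as a whole in some constituent decomposition, the remainder is essentially a counting argument using $s_i = r_i' \cdot a_i$. An analogous argument handles the partition case: one additionally observes that coverage of all of $U$ in (\ref{item:equiv_pack_1}) corresponds to coverage of $U \cup N_\boldr \cup \calE_\boldr$ in (\ref{item:equiv_pack_2}), so the same correspondence preserves the partition property.
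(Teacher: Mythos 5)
Your proposal is correct and follows essentially the same route as the paper's proof: in the forward direction you augment the packing with all dummy sets, group into blocks of $a_i$ and tag each block with a distinct $E_i$-element, and in the reverse direction you count the selected sets per color via their tags, decompose each into its $a_i$ constituents, and use coverage of $N_{\boldr}$ to force every dummy set to appear, leaving exactly $r_i$ sets of $\calF_i$. The only difference is cosmetic: you make the ``forced dummy constituent'' step (and the tag-counting bijection) more explicit than the paper, which asserts $\calF^{\boldr,i}_{\textrm{add}} \subseteq \calS_i$ rather tersely.
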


	\begin{proof}
		Let $\boldr$ be an $\signature{\calF}$.
		
		\paragraph*{First direction: 1. implies 2.}
		Suppose $U$ has a packing $\calS \subseteq \calF$
		with $\abs{\calS \cap \calF_i} = \boldr_i$ for $i\in [d]$.
		For $i\in [d]$, we set $\calS_i \coloneqq \calS \cap \calF_i$
		and $\bolds_i \coloneqq \left\lceil \frac{\boldr_i}{a_i} \right\rceil \cdot a_i$.
		Let $I$ be the set of indices $i$ for which $\calS_i$ is nonempty. For $i \in I$, consider the set $\calS_i \cup
		\calF^{\boldr,i}_{\textrm{add}}$ and observe that its size is
		equal to $\bolds_i$ because
		$\abs{\calF^{\boldr,i}_{\textrm{add}}} = \bolds_i - \boldr_i$.
		Then, for each $i \in I$, partition $\calS_i \cup
		\calF^{\boldr,i}_{\textrm{add}}$ into groups of size $a_i$ to define new sets of size $a_i\cdot i=c\cdot d\,!$. Here we use the fact that the sets in $\calS_i \cup
		\calF^{\boldr,i}_{\textrm{add}}$ are pairwise disjoint.
		Let $D^{i}_1\ldots, D^{i}_{\bolds_i}$ be an enumeration of the sets in $\calS_i \cup
		\calF^{\boldr,i}_{\textrm{add}}$. To group the sets, we set
		\begin{equation*}
			\calS'_i \coloneqq \bigg\{D^{i}_{(j-1) \cdot a_i + 1} \cup \ldots \cup D^{i}_{j \cdot a_i} \mid j\in[\bolds_i/a_i] \bigg\}.
		\end{equation*}
		Observe that each set in $\calS'_i$ has size $a_i \cdot i = c \cdot d\,!$, $\abs{\calS'_i}= \frac{\bolds_i}{a_i} = \left\lceil \frac{\boldr_i}{a_i} \right\rceil$, $\calS'_i \subseteq \calG^{\boldr}_i$, and importantly that sets in $\calS'_i$ are pairwise disjoint. Moreover,
		\begin{equation}\label{eq:union_S_prime}
			\bigcup_{A \in \calS'_i} A = \bigcup_{A \in \left(\calS_i \cup \calF^{\boldr,i}_{\textrm{add}}\right)} A .
		\end{equation}
		Then we add to each set in $\calS'_i$ some distinct element from $E_i$. Note that this operation is well-defined since $\abs{E_i} = \abs{\calS'_i} = \left\lceil \frac{\boldr_i}{a_i} \right\rceil$. Explicitly, for an enumeration $T_1, \ldots, T_{\left\lceil \frac{\boldr_i}{a_i} \right\rceil} $ of the sets in
		$\calS'_i$, let
		\begin{equation*}
			\calT_i \coloneqq \bigg\{\{e^{i}_j\} \cup T_j \mid 1 \leq j \leq \left\lceil \frac{\boldr_i}{a_i} \right\rceil \bigg\}.
		\end{equation*}
		It follows that each element of $\calT_i$ has size $c \cdot d\,! + 1$, $\abs{\calT_i} = \left\lceil \frac{\boldr_i}{a_i} \right\rceil$, $\calT_i \subseteq \calZ^{(\boldr)}_i$, and importantly that sets in $\calT_i$ are pairwise disjoint. Moreover,
		\begin{equation}\label{eq:union_T}
			\bigcup_{A \in \calT_i} A = E_i\, \cup \left( \bigcup_{A \in \calS'_i} A\right).
		\end{equation}
		By \eqref{eq:union_S_prime} and \eqref{eq:union_T}, it follows that $\calS_{\boldr} \coloneqq \bigcup_{i \in I} \calT_i$ satisfies

		\begin{enumerate}
			\item $\calS_\boldr \subseteq \calF_{\boldr}$,
			\item sets in $\calS_{\boldr}$ are pairwise disjoint,
			\item the union of the sets in $\calS_\boldr$ is
				\begin{equation*}\label{eq:union_S_r}
					\bigcup_{A \in \calS_\boldr} A = \bigcup_{i \in I}  \bigcup_{A \in \calT_i} A = \left( \bigcup_{A \in \calS} A \right) \cup N_{\boldr} \cup \calE_{\boldr}.
				\end{equation*}
		\end{enumerate}
		This implies that $\calS_\boldr \subseteq \calF_{\boldr}$ is a packing of $U_\boldr$.
		Finally, the size of $\calS_\boldr$ is 
		\begin{equation*}
			\abs{\calS_\boldr}=\sum_{i \in I} \abs{\calT_i} = \sum_{i \in I} \left\lceil \frac{\boldr_i}{a_i} \right\rceil = \sum_{i = 1}^{d} \left\lceil \frac{\boldr_i}{a_i} \right\rceil = \alpha_{\boldr}.
		\end{equation*}
		
		\paragraph*{Second direction: 2. implies 1.}
		Now suppose that  $U_{\boldr}$ has a packing
		$\calS_{\boldr} \subseteq
		\calF_{\boldr}$ of size
		$\alpha_{\boldr}$ such that $(N_{\boldr} \cup \calE_\boldr) \subseteq \calS_\boldr$.
		Let $I_{\boldr} \coloneqq \{i\in [d]  \mid r_i \neq 0\}$.
		Observe that for $i \in I_{\boldr}$, $\abs{\calS_{\boldr} \cap
		\calZ^{\boldr}_i} \leq \left\lceil \frac{\boldr_i}{a_i}
		\right\rceil$ because each set in $\calZ^{\boldr}_i$ contains an element
		of $E_i$.
		However, this implies that $\calS_{\boldr} \cap
		\calZ^{\boldr}_i$ has exactly $r_i'\coloneqq\left\lceil \frac{\boldr_i}{a_i}
		\right\rceil$ elements as $\abs{\calS_{\boldr}}=\alpha_{\boldr}$.

		So, for each $i \in I_{\boldr}$, there are pairwise disjoint sets $X^{i}_1, \ldots, X^{i}_{r_i'} \in \calG^{\boldr}_i$ such that
		\begin{equation*}
			\calS_{\boldr} \cap \calZ^{\boldr}_i = \bigg\{\{e^{i}_{j}\} \cup X^{i}_{j} \mid j\in [r_i'] \bigg\}.
		\end{equation*}
		Furthermore, by definition of the set $\calG^{\boldr}_i$, each $X^{i}_j$ is a union of $a_i$ disjoint sets
		of size $i$ that belong to the set $\calF_i \cup
		\calF^{\boldr,i}_{\textrm{add}}$. Let $\kappa(X^{i}_j)$ denote these $a_i$ sets. For each $i \in I_{\boldr}$, let $\calS_i=\{A \mid A\in \kappa(X^{i}_j), j\in [r_i']\}$.
		Then we have $\calS_i \subseteq \calF_i \cup
		\calF^{\boldr,i}_{\textrm{add}}$, $\abs{\calS_i} =
		\left\lceil \frac{\boldr_i}{a_i} \right\rceil \cdot a_i =
		\bolds_i$, and the sets in $\calS_i$ are pairwise disjoint. Recall that $\calS_\boldr$ has the property that $(N_\boldr \cup \calE_\boldr) \subset \calS_\boldr$. Therefore, for each $i \in I_\boldr$ it holds that $\calF^{\boldr,i}_{\textrm{add}} \subseteq \calS_i$.
		Since $\abs{\calF^{\boldr,i}_{\textrm{add}}} = \bolds_i -
		\boldr_i$, it also holds that
		\begin{equation*}\label{eq:size_S_minus_F}
			\abs{\calS_i \setminus \calF^{\boldr,i}_{\textrm{add}}}
			= \abs{\calS_i} - \abs{\calF^{\boldr,i}_{\textrm{add}}}
			= \bolds_i - (\bolds_i - \boldr_i) = \boldr_i.
		\end{equation*}

		Finally, for $\calS \coloneqq \bigcup_{i \in I_{\boldr}} \left( \calS_i \setminus \calF^{\boldr,i}_{\textrm{add}} \right) $, it follows that
		\begin{equation*}
			\abs{\calS \cap \calF_i} = \abs{\calS_i \setminus \calF^{\boldr,i}_{\textrm{add}}} = \boldr_i.
		\end{equation*}
		Furthermore, it is straightforward to verify that $\left( \bigcup_{A \in \calS} A \right) = \left( \bigcup_{A \in \calS_{\boldr}} A \right)  \setminus \left( N_{\boldr} \cup \calE_\boldr \right).$ The combination of this observation and the fact that $\calS_\boldr$ is a packing of $U_\boldr$ implies that $\calS$ is a packing of $U$.
	\end{proof}

	Note that the assertion in \cref{lem:join_packing_new} remains valid when considering partitions instead of packings. This is a consequence of $U_{\boldr} = U \cup N_{\boldr} \cup \calE_\boldr$ and the second part of \cref{lem:join_packing_new}.
	\begin{cor}\label{cor:join_partition_new}
		Let $\calF$ be a collection of subsets of a universe $U$, each with size at most $d$. Let $c \geq 1$ be an integer. For each $\signature{\calF}$ $\boldr$, the following two statements are equivalent
		\begin{enumerate}
			\item $U$ has a partition $\calS \subseteq \calF$ such that $\abs{\calS \cap \calF_i} = \boldr_i$ for $i\in [d]$.\label{item:equiv_part_1}
			\item $U_{\boldr}$ has a partition $\calS_{\boldr} \subseteq \calF_{\boldr}$.\label{item:equiv_part_2}
		\end{enumerate}
	\end{cor}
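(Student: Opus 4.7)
\textbf{Proof plan for \cref{cor:join_partition_new}.} The plan is to derive the corollary directly from \cref{lem:join_packing_new} by observing that a partition is precisely a packing whose union equals the entire universe, and then using the ``moreover'' clause of the lemma (which relates the two unions) together with the identity $U_\boldr = U \cup N_\boldr \cup \calE_\boldr$.

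For the forward direction (\ref{item:equiv_part_1} $\Rightarrow$ \ref{item:equiv_part_2}): Given a partition $\calS$ of $U$ satisfying $\abs{\calS \cap \calF_i} = \boldr_i$, view $\calS$ as a packing of $U$ with the same signature. Apply \cref{lem:join_packing_new} to obtain a packing $\calS_\boldr \subseteq \calF_\boldr$ of size $\alpha_\boldr$ that covers $N_\boldr \cup \calE_\boldr$, and such that $(\bigcup_{A\in\calS_\boldr} A) \setminus (N_\boldr \cup \calE_\boldr) = \bigcup_{A\in\calS} A = U$. Since $\calS_\boldr$ also covers $N_\boldr \cup \calE_\boldr$, its union is exactly $U \cup N_\boldr \cup \calE_\boldr = U_\boldr$, so $\calS_\boldr$ is a partition of $U_\boldr$.

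For the reverse direction (\ref{item:equiv_part_2} $\Rightarrow$ \ref{item:equiv_part_1}): Given a partition $\calS_\boldr \subseteq \calF_\boldr$ of $U_\boldr$, note that $\calS_\boldr$ is in particular a packing of $U_\boldr$ whose union equals $U_\boldr$, so all elements of $N_\boldr \cup \calE_\boldr$ are covered. By construction of $\calF_\boldr = \bigcup_{i \in I_\boldr} \calZ^\boldr_i$, every set in $\calF_\boldr$ contains exactly one element of $\calE_\boldr$, and the $\abs{E_i}$ elements in each $E_i$ force $\abs{\calS_\boldr \cap \calZ^\boldr_i} = r_i'$, so $\abs{\calS_\boldr} = \sum_{i \in I_\boldr} r_i' = \alpha_\boldr$. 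Thus the hypothesis of item 2 in \cref{lem:join_packing_new} is satisfied, yielding a packing $\calS \subseteq \calF$ of $U$ with $\abs{\calS \cap \calF_i} = \boldr_i$ and $\bigcup_{A\in\calS} A = (\bigcup_{A\in\calS_\boldr} A) \setminus (N_\boldr \cup \calE_\boldr) = U_\boldr \setminus (N_\boldr \cup \calE_\boldr) = U$. Hence $\calS$ is a partition of $U$.

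There is no real obstacle here; the only subtlety is confirming that the two additional conditions in item 2 of \cref{lem:join_packing_new} (the coverage of $N_\boldr \cup \calE_\boldr$ and the size being exactly $\alpha_\boldr$) come for free once we know $\calS_\boldr$ partitions $U_\boldr$, which is immediate from how $\calE_\boldr$ was introduced as a ``counter'' attached to every set in $\calF_\boldr$.
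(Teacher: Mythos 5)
Your proof is correct and follows essentially the same route as the paper, which derives the corollary from \cref{lem:join_packing_new} via the identity $U_{\boldr} = U \cup N_{\boldr} \cup \calE_\boldr$ and the ``moreover'' clause relating the two unions. The only difference is that you spell out the check that a partition of $U_{\boldr}$ automatically has size $\alpha_{\boldr}$ and covers $N_{\boldr} \cup \calE_\boldr$ (via the counter elements in $\calE_\boldr$), a detail the paper leaves implicit.
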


\subsection{Reductions Based on Structured Families}\label{sec:furtherred}
	
\begin{lem}\label{lem:leqPTstoPT}
If \leqsetpartitionsetsclass is $\hard$, then \leqsetpartitionclass is $\hard$.
\end{lem}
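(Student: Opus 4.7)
The plan is to prove the contrapositive. Assume there exists $\eps > 0$ such that for every $d'\ge 1$ there is an algorithm solving every $n$-element instance of \leqsetpartition{d'} in time $(2-\eps)^n\cdot n^{\Oh(1)}$. We will derive an $\eps'>0$ such that for every $d\ge 1$ every $n$-element instance of \leqdsetpartitionsets can be solved in time $(2-\eps')^n\cdot n^{\Oh(1)}$, contradicting the assumed $\hardness$ of \leqsetpartitionsetsclass.

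Fix $d\ge 1$ and consider an instance $(U,\calF,t)$ of \leqdsetpartitionsets with $n=|U|$. The obstruction, compared to \leqdsetpartition, is precisely the bound $t$ on the number of sets used. The idea is to \emph{guess} this usage pattern via an $\signature{\calF}$ $\boldr=(r_1,\ldots,r_d)$. There are at most $(n+1)^d=n^{\Oh(1)}$ signatures, so we can afford to iterate over all of them that satisfy $w(\boldr)=\sum_i r_i\le t$. For each such $\boldr$, we use the $\joininstance{c,\boldr}$ construction from \cref{cor:join_partition_new}, with a constant $c$ to be chosen. The construction produces an instance $(U_{\boldr},\calF_{\boldr})$ of \leqsetpartition{d'} with $d'=c\cdot d!+1$, and by \cref{cor:join_partition_new} this instance admits a partition if and only if $(U,\calF)$ admits a partition whose signature equals $\boldr$. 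Hence $(U,\calF,t)$ is a yes-instance if and only if for at least one $\boldr$ with $w(\boldr)\le t$ the corresponding join instance is a yes-instance of \leqsetpartition{d'}.

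The quantitatively delicate step is choosing $c$ so that the universe blowup does not erode the base of the exponent. By~\eqref{eq:U_bar_r_size} we have $|U_{\boldr}|=n+\tfrac{n}{c\cdot(d-1)!}+\Oh(1)$. Given the fixed $\eps$ from our assumption, choose $\delta>0$ small enough that $(2-\eps)^{1+\delta}\le 2-\eps'$ for some $\eps'>0$; such $\delta$ exists since $2-\eps<2$. Then pick $c$ large enough (depending only on $d$ and $\delta$, hence a constant) so that $|U_{\boldr}|\le (1+\delta)\,n$ for all sufficiently large $n$. Apply the assumed algorithm for \leqsetpartition{d'} to each of the $n^{\Oh(1)}$ join instances. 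The total running time is
\[
 n^{\Oh(1)}\cdot (2-\eps)^{(1+\delta)n}\cdot n^{\Oh(1)} \;\le\; (2-\eps')^{n}\cdot n^{\Oh(1)},
\]
which contradicts the assumed $\hardness$ of \leqsetpartitionsetsclass.

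The main obstacle — which is already resolved by the framework built up earlier in this section — is precisely controlling the blow-up: the signatures give only a polynomial factor, and the join construction, for large constant $c$, only stretches the universe by a $(1+\delta)$ factor while, crucially, turning a partition problem with sets of several different sizes into a partition problem with sets of a single size $d'$. The verification that picking one of at most $(n+1)^d$ signatures and invoking \cref{cor:join_partition_new} is correct is immediate from the statement of that corollary, so no further combinatorial work is needed beyond the runtime bookkeeping sketched above.
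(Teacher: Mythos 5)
Your proposal is correct and follows essentially the same route as the paper's proof: iterate over the polynomially many signatures of weight at most $t$, build the $\joininstance{c,\boldr}$, invoke \cref{cor:join_partition_new} for equivalence, and choose the constant $c$ so that the universe blow-up $n/(c\cdot(d-1)!)$ is absorbed into a base $(2-\eps')$ with $\eps'$ independent of $d$. The only difference is cosmetic bookkeeping of constants (the paper fixes $c$ via $(2-\eps)^{1/(c(d-1)!)}<1+\eps/2$ and sets $\eps'=\eps^2/2$, whereas you phrase the same calculation through a $(1+\delta)$ stretch factor).
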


\begin{proof}
	Let $(U,\calF,t)$ be an $n$-element instance of \leqdsetpartitionsets for some $d \geq 1$ and
	let $c$ be the smallest integer such that $(2-\eps)^{\frac{1}{c \cdot (d-1)\,!}} < (1 + \frac{\eps}{2})$.
	Moreover, let $\calB$ be an algorithm that solves \leqsetpartition{h} for $h \coloneqq (c \cdot d\,! + 1)$ in time $(2 - \eps)^{n} \cdot n^{\Oh(1)}$. Finally define $\eps' \coloneqq \frac{\eps^{2}}{2}$.

	\paragraph{Algorithm for \leqdsetpartitionsets and its correctness.}
	Given an instance $(U,\calF,t)$ of \leqdsetpartitionsets, the algorithm
	starts by guessing an $\signature{F}$ $\boldr$ such that $w(\boldr)
	\leq t$. For each such $\boldr$, the algorithm constructs the
	$\joininstance{c,\boldr}$ of $(U, \calF)$, denoted by $(U_\boldr,
	\calF_\boldr)$. Note that each set in $\calF_\boldr$ has size $(c \cdot
	d\,! + 1) = h$, hence $(U_\boldr, \calF_\boldr)$ is an instance of
	\leqsetpartition{h}. Finally, the algorithm returns \yes if any of the
	calls $\calB(U_{\boldr}, \calF_\boldr)$ returns \yes, otherwise it
	returns \no.

	To establish the correctness of the algorithm, assume that
	$(U,\calF,t)$ is a \yes-instance. Then $U$ has a partition $\calS
	\subseteq \calF$ of size at most $t$. Construct the $\signature{\calF}$
	$\boldr$ by defining the coordinate $\boldr_i \coloneqq \abs{\calS \cap
	\calF_i}$ for $1 \leq i \leq d$ and observe that $w(\boldr) =
	\abs{\calS} \leq t$. By definition, $\calS$ satisfies
	\cref{item:equiv_part_1} in \cref{cor:join_partition_new} for this
	specific $\signature{\calF}$ $\boldr$, therefore it holds that
	$U_\boldr$ has a partition $\calS_\boldr \subseteq \calF_\boldr$.
	Therefore, $\calB(U_\boldr, \calF_\boldr)$ and in turn the algorithm
	defined above returns \yes.

	On the other hand, if $(U,\calF,t)$ is a \no-instance, then $U$ has no
	partition $\calS \subseteq \calF$ of size at most $t$. Therefore for
	any $\signature{\calF}$ $\boldr$ guessed by the algorithm,
	\cref{item:equiv_part_1} in \cref{cor:join_partition_new} will not
	hold. This implies that \cref{item:equiv_part_2} does not hold as well.
	As a result $\calB(U_\boldr, \calF_\boldr)$ returns \no. All in all,
	the algorithm described above returns $\no$ as well.

	\paragraph{Running time.} The number of $\signatures{\calF}$ is at most
	${n \choose d}^{d} \leq n^{d^{2}} = n^{\Oh(1)}$. Constructing the
	instance $(U_\boldr, \calF_\boldr)$ takes polynomial time in $n$. Therefore, the total running time is
	\begin{align*}
		n^{\Oh(1)} \cdot \max_{\signature{\calF}\; \boldr} (2 - \eps)^{\abs{U_{\boldr}}} &= n^{\Oh(1)} \cdot (2 - \eps)^{n + \frac{n}{c \cdot (d-1)\,!}}\\
		&= n^{\Oh(1)}\cdot (2 - \eps)^{n} \cdot \left( 1 + \frac{\eps}{2} \right) ^{n}\\
		&= n^{\Oh(1)} \cdot \left( 2 - \frac{\eps^{2}}{2} \right)^{n}\\
		&= n^{\Oh(1)} \cdot \left( 2 - \eps' \right)^{n}		
	\end{align*}
	where the first equality holds by \eqref{eq:U_bar_r_size} and the
	second equality holds because of the definition of $c$.
\end{proof}

\begin{lem}\label{lem:leqPTtoeqPT}
If \leqsetpartitionclass is $\hard$, then \eqsetpartitionclass is $\hard$.
\end{lem}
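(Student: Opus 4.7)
The plan is to mimic the proof of \cref{lem:leqPTstoPT} almost verbatim, using the \emph{join} construction from \cref{sec:equivHypo} which turns a family of sets of size at most $d$ into a family where every set has the same size $h = c\cdot d\,! + 1$, for a suitably chosen constant $c$. The key point is that the join construction is already known (by \cref{cor:join_partition_new}) to preserve partitions, which is exactly what we need here.

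Concretely, I would prove the contrapositive: assume there is some $\eps>0$ such that for every $h \ge 1$, an $n$-element instance of \eqsetpartition{h} can be solved in time $(2-\eps)^{n}\cdot n^{\bigO(1)}$; I show that then there is some $\eps'>0$ such that for every $d\ge1$, an $n$-element instance of \leqdsetpartition can be solved in time $(2-\eps')^{n}\cdot n^{\bigO(1)}$. Given $d\ge 1$, pick the smallest integer $c$ such that $(2-\eps)^{1/(c\cdot(d-1)\,!)} < 1+\eps/2$, set $h \coloneqq c\cdot d\,! + 1$, and let $\calB$ be the hypothesized algorithm for \eqsetpartition{h}. On input $(U,\calF)$ with $|U|=n$, the algorithm enumerates every \signature{\calF} $\boldr$, constructs the corresponding $\joininstance{c,\boldr}$ $(U_\boldr,\calF_\boldr)$ (which is a valid instance of \eqsetpartition{h}), calls $\calB(U_\boldr,\calF_\boldr)$, and returns \yes iff some call returns \yes.

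Correctness follows immediately from \cref{cor:join_partition_new}: if $(U,\calF)$ has a partition $\calS\subseteq\calF$, then defining $\boldr_i \coloneqq |\calS \cap \calF_i|$ yields a \signature{\calF} for which the first item of the corollary holds, so $(U_\boldr,\calF_\boldr)$ has a partition and $\calB$ answers \yes on it; conversely, if some $\calB(U_\boldr,\calF_\boldr)$ returns \yes then the second-to-first direction of the corollary produces a partition of $U$ in $\calF$. For the running time, the number of \signatures{\calF} is at most $\binom{n}{d}^{d} = n^{\bigO(1)}$, each join instance can be built in polynomial time, and by \eqref{eq:U_bar_r_size} we have $|U_\boldr| = n + n/(c\cdot(d-1)\,!) + \bigO(1)$, so the overall running time is bounded by
\[
n^{\bigO(1)} \cdot (2-\eps)^{n + n/(c\cdot(d-1)\,!)} \le n^{\bigO(1)} \cdot (2-\eps)^{n}\cdot (1+\eps/2)^{n} \le n^{\bigO(1)} \cdot (2-\eps^{2}/2)^{n},
\]
so $\eps' \coloneqq \eps^{2}/2$ works.

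There is really no substantial obstacle here, since the technical heart of the reduction (the join construction and the corresponding partition-preservation statement in \cref{cor:join_partition_new}) is already available from the previous subsection. The only small points to be careful about are (i)~using the partition version \cref{cor:join_partition_new} rather than the packing version \cref{lem:join_packing_new}, and (ii)~choosing $c$ (depending on $\eps$ and $d$) large enough so that the multiplicative blow-up $(1+\eps/2)^{n}$ coming from $|U_\boldr|>n$ is absorbed into a slightly worse but still nontrivial base $2-\eps'$ of the exponent.
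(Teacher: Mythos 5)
Your proposal is correct and matches the paper's proof essentially verbatim: the same enumeration of \signatures{\calF}, the same $\joininstance{c,\boldr}$ construction with $h=c\cdot d\,!+1$ and the same choice of $c$, correctness via \cref{cor:join_partition_new}, and the identical running-time analysis yielding $\eps'=\eps^2/2$.
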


\begin{proof}
	Let $d \geq 1$ and $(U,\calF)$ be an instance of \leqdsetpartition.
	Moreover, let $c$ be the smallest integer such that $(2-\eps)^{\frac{1}{c \cdot (d-1)\,!}} < (1 + \frac{\eps}{2})$ and $\calB$ be an algorithm that solves \eqsetpartition{h} for $h \coloneqq (c \cdot d\,! +1)$ in time $(2 - \eps)^{n} \cdot n^{\Oh(1)}$. Finally define $\eps' \coloneqq \frac{\eps^{2}}{2}$.

	\paragraph{Algorithm for \leqdsetpartition and its correctness.} Given an instance $(U,\calF)$ of \leqdsetpartition, the algorithm starts with guessing an $\signature{F}$ $\boldr$. For each such $\boldr$, the algorithm constructs the $\joininstance{c,\boldr}$ of $(U, \calF)$, denoted by $(U_\boldr, \calF_\boldr)$. Note that each set in $\calF_\boldr$ has size $( c \cdot d\,! + 1) = h$, hence $(U_\boldr, \calF_\boldr)$ is an instance of \eqsetpartition{h}. Finally, the algorithm returns \yes if any of the calls $\calB(U_{\boldr}, \calF_\boldr)$ returns \yes, otherwise it returns \no.
	
	To show that the algorithm is correct, assume that $(U,\calF)$ is a \yes-instance. Then $U$ has a partition
	$\calS \subseteq \calF$. Construct the $\signature{\calF}$ $\boldr$ by defining $\boldr_i \coloneqq \abs{\calS \cap \calF_i}$ for $1 \leq i \leq d$. By definition, $\calS$ satisfies \cref{item:equiv_part_1} in \cref{cor:join_partition_new} for this specific $\signature{\calF}$ $\boldr$, therefore it holds that $U_\boldr$ has a partition $\calS_\boldr \subseteq \calF_\boldr$. Therefore $\calB(U_\boldr, \calF_\boldr)$ and in turn the algorithm defined above returns \yes.

	On the other hand, if $(U,\calF)$ is a \no-instance, then $U$ has no partition $\calS$ such that $\calS \subseteq \calF$. Therefore for any $\signature{\calF}$ $\boldr$ the algorithm tries, \cref{item:equiv_part_1} in \cref{cor:join_partition_new} will not hold. This implies that \cref{item:equiv_part_2} does not hold as well, consequently $\calB(U_\boldr, \calF_\boldr)$ returns \no. Hence, the algorithm described above returns $\no$.

	\paragraph{Running time.} The number of $\signatures{\calF}$ is at most ${n \choose d}^{d} \leq n^{d^{2}} = n^{\Oh(1)}$. Constructing the instance $(U_\boldr, \calF_\boldr)$ takes polynomial time in $n$ as well.
	Therefore, the whole running time becomes
	\begin{align*}
		n^{\Oh(1)} \cdot \max_{\signature{\calF}\; \boldr} (2 - \eps)^{\abs{U_{\boldr}}} &= n^{\Oh(1)} \cdot (2 - \eps)^{n + \frac{n}{c \cdot (d-1)\,!}}\\
		&= n^{\Oh(1)}\cdot (2 - \eps)^{n} \cdot \left( 1 + \frac{\eps}{2} \right) ^{n}\\
		&= n^{\Oh(1)} \cdot \left( 2 - \frac{\eps^{2}}{2} \right)^{n}\\
		&= n^{\Oh(1)} \cdot \left( 2 - \eps' \right)^{n}		
	\end{align*}
	where the first equality holds by \eqref{eq:U_bar_r_size} and the second equality holds because of the definition of $c$.	
\end{proof}

\begin{lem}\label{lem:leqPCtoeqPC}
 If \leqsetpackingsetsclass is $\hard$, then \eqsetpackingsetsclass is $\hard$.
\end{lem}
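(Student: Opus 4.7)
The proof will follow the same template as \cref{lem:leqPTtoeqPT}, reducing \leqdsetpackingsets to \eqsetpackingsets{h} for $h = c\cdot d\,! + 1$ by guessing the size profile of a hypothetical solution and invoking the $\joininstance{c,\boldr}$ construction introduced before \cref{lem:join_packing_new}. Concretely, given an input $(U,\calF,t)$ of \leqdsetpackingsets with $|U|=n$, I will fix a sufficiently large constant $c$ (chosen so that $(2-\eps)^{1/(c\,(d-1)!)} < 1+\eps/2$), enumerate every $\signature{\calF}$ $\boldr$ with $w(\boldr)\ge t$, build the instance $(U_{\boldr},\calF_{\boldr})$ (all sets of size exactly $h$), and call the hypothetical \eqsetpackingsets{h} algorithm with target $\alpha_{\boldr}$. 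The algorithm outputs \yes iff some call does.

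For the forward direction I will just quote \cref{lem:join_packing_new}: a packing of $\calF$ of size $\ge t$ yields a signature $\boldr$ with $w(\boldr)\ge t$, and by part 1 $\Rightarrow$ part 2 of that lemma, a packing of size $\alpha_{\boldr}$ in $\calF_{\boldr}$ exists. The reverse direction is the part that requires a small additional observation rather than a direct appeal to \cref{lem:join_packing_new} (whose statement only guarantees a correspondence for packings that cover $N_{\boldr}\cup \calE_{\boldr}$): suppose the \eqsetpackingsets{h} algorithm returns a packing $\calS_{\boldr}\subseteq \calF_{\boldr}$ of size $\alpha_{\boldr}=\sum_{i\in I_{\boldr}} r_i'$. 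Since every set in $\calF_{\boldr}$ contains a unique element of some $E_i$ and $|E_i|=r_i'$, $\calS_{\boldr}$ must use \emph{exactly} $r_i'$ sets from each $\calZ^{\boldr}_i$. Unfolding these using $a_i$ blocks from $\calF_i\cup\calF^{\boldr,i}_{\textrm{add}}$ yields $s_i$ pairwise disjoint sets in $\calF_i\cup\calF^{\boldr,i}_{\textrm{add}}$; because only $s_i-r_i$ dummy sets are available in $\calF^{\boldr,i}_{\textrm{add}}$, at least $r_i$ of them come from the original $\calF_i$. Summing over $i$ gives $\sum r_i = w(\boldr)\ge t$ pairwise disjoint sets from $\calF$, as required.

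The running time is the main thing to check, and it mirrors the proofs of \cref{lem:leqPTstoPT,lem:leqPTtoeqPT}: there are at most $\binom{n}{d}^d = n^{\bigO(1)}$ signatures, the construction of each $(U_{\boldr},\calF_{\boldr})$ is polynomial, and by \eqref{eq:U_bar_r_size} we have $|U_{\boldr}| = n + n/(c\,(d-1)!) + \bigO(1)$. Hence the total time is
\[
n^{\bigO(1)}\cdot (2-\eps)^{n+n/(c\,(d-1)!)} \le (2-\eps)^n\cdot (1+\eps/2)^n\cdot n^{\bigO(1)} \le (2-\eps')^n\cdot n^{\bigO(1)}
\]
for $\eps'=\eps^2/2$, which contradicts the $\hardness$ of \leqsetpackingsetsclass.

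The only subtle point, and thus the step I will double-check carefully, is the reverse-direction argument above: unlike the partition setting of \cref{lem:leqPTtoeqPT}, an \eqsetpackingsets{h} solution need not cover the dummy elements in $N_{\boldr}$, so \cref{lem:join_packing_new} cannot be invoked as a black box. Fortunately the disjointness of $\calF^{\boldr,i}_{\textrm{add}}$ together with the $\calE_{\boldr}$-bottleneck provided by the $E_i$ markers forces at least $r_i$ original sets to appear for each $i$, which is the key pigeonhole-style observation making the reduction work.
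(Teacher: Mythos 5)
Your proposal is correct and follows essentially the same route as the paper: guess an $\signature{\calF}$ $\boldr$ of weight at least $t$, build the $\joininstance{c,\boldr}$ $(U_{\boldr},\calF_{\boldr})$, call the hypothetical \eqsetpackingsets{h} algorithm with target $\alpha_{\boldr}$, and bound the running time via \eqref{eq:U_bar_r_size} exactly as in \cref{lem:leqPTstoPT,lem:leqPTtoeqPT}.

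The one place you diverge is the reverse direction, and there your treatment is in fact more careful than the paper's. The paper simply invokes \cref{lem:join_packing_new} in both directions, even though statement~2 of that lemma carries the extra requirement that the packing covers $N_{\boldr}\cup\calE_{\boldr}$, which a \yes-answer of the \eqsetpackingsets{h} oracle does not guarantee (a set in $\calG^{\boldr}_i$ may be assembled entirely from original sets of $\calF_i$, leaving dummy elements uncovered). Your pigeonhole argument --- exactly $r_i'$ sets per color class forced by the $E_i$ markers, at most $s_i-r_i$ dummy sets available, hence at least $r_i$ original pairwise disjoint sets from $\calF_i$, summing to $w(\boldr)\ge t$ --- closes this gap directly and is the right way to argue it for the packing (as opposed to partition) setting; it mirrors the second half of the proof of \cref{lem:join_packing_new} without needing its coverage hypothesis. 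So the proposal is complete and correct, and if anything slightly tightens the argument as written in the paper.
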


\begin{proof}
	Let $(U,\calF,t)$ be an instance of \leqdsetpackingsets for some $d
	\geq 1$ and let $c$ be the smallest integer such that
	$(2-\eps)^{\frac{1}{c \cdot (d-1)\,!}} < (1 + \frac{\eps}{2})$.
	Moreover, let $\calB$ be an algorithm that solves \eqsetpackingsets{h}
	for $h \coloneqq (c \cdot d\,! + 1)$ in time $(2 - \eps)^{n} \cdot
	n^{\Oh(1)}$. Finally define $\eps' \coloneqq \frac{\eps^{2}}{2}$.
	
	\paragraph{Algorithm for \leqdsetpackingsets and its correctness.}
	Given an instance $(U,\calF,t)$ of \leqdsetpackingsets, the algorithm
	starts with guessing an $\signature{F}$ $\boldr$ of weight at least
	$t$. For each such $\boldr$, the algorithm constructs the
	$\joininstance{c,\boldr}$ of $(U, \calF)$, denoted by $(U_\boldr,
	\calF_\boldr)$. Note that each set in $\calF_\boldr$ has size $( c
	\cdot d\,! + 1) = h$, hence $(U_\boldr, \calF_\boldr, \alpha_\boldr)$
	is an instance of \eqsetpackingsets{h}. Finally, the algorithm returns
	\yes if any of the calls $\calB(U_{\boldr}, \calF_\boldr,
	\alpha_\boldr)$ returns \yes, otherwise it returns \no.
	
	To show that the algorithm is correct, assume that $(U,\calF,t)$ is a
	\yes-instance. Then $U$ has a packing $\calS \subseteq \calF$ of size
	at least $t$. Construct the $\signature{\calF}$ $\boldr$ by defining
	$\boldr_i \coloneqq \abs{\calS
	\cap \calF_i}$ for $1 \leq i \leq d$ and observe that $w(\boldr)
	=\abs{\calS} \geq t$. By definition, $\calS$ satisfies
	\cref{item:equiv_pack_1} in \cref{lem:join_packing_new} for this
	specific $\signature{\calF}$ $\boldr$, therefore it holds that
	$U_\boldr$ has a packing $\calS_\boldr \subseteq \calF_\boldr$ of size
	$\alpha_\boldr$. Therefore $\calB(U_\boldr,
	\calF_\boldr,\alpha_\boldr)$ and in turn the algorithm defined above
	returns \yes.

	On the other hand, if $(U,\calF,t)$ is a \no-instance, then $U$ has no
	packing $\calS \subseteq \calF$ of size at least $t$.Therefore for any
	$\signature{\calF}$ $\boldr$ the algorithm tries,
	\cref{item:equiv_pack_1} in \cref{lem:join_packing_new} will not hold.
	This implies that \cref{item:equiv_pack_1} does not hold as well,
	consequently $\calB(U_\boldr, \calF_\boldr, \alpha_\boldr)$ returns
	\no. Hence, the algorithm described above returns $\no$.

	\paragraph{Running time.} The number of $\signatures{\calF}$ is at most
	${n \choose d}^{d} \leq n^{d^{2}} = n^{\Oh(1)}$. Constructing the
	instance $(U_{\boldr}, \calF_{\boldr}, \alpha_{\boldr})$ takes
	polynomial time in $n$ as well. Therefore, the whole running time
	becomes
	\begin{align*}
		n^{\Oh(1)} \cdot \max_{\signature{\calF}\; \boldr} (2 - \eps)^{\abs{U_{\boldr}}} &= n^{\Oh(1)} \cdot (2 - \eps)^{n + \frac{n}{c \cdot (d-1)\,!}}\\
		&= n^{\Oh(1)}\cdot (2 - \eps)^{n} \cdot \left( 1 + \frac{\eps}{2} \right) ^{n}\\
		&= n^{\Oh(1)} \cdot \left( 2 - \frac{\eps^{2}}{2} \right)^{n}\\
		&= n^{\Oh(1)} \cdot \left( 2 - \eps' \right)^{n}		
	\end{align*}
	where the first equality holds by \eqref{eq:U_bar_r_size} and the
	second equality holds because of the definition of $c$.
\end{proof}

\section{\boldmath \packing{\triangle} and \partition{\triangle}}\label{sec:clique-packing-partitioning}
In this section we give the proof of \cref{thm:mainpacking}. In \cref{thm:tri_part_SCC} we show that a \emph{fast} algorithm for $\partition{\triangle}$ violates the SCC, where the technical details will be discussed in \cref{sec:tri_part_SCC}. Similarly, in \cref{sec:triangletoSCC} we prove \cref{thm:triangletoSCC}, which shows that if the SCC fails, then there exists a \emph{fast} algorithm for $\packing{\triangle}$ problem. All in all, since $\packing{\triangle}$ problem is a generalization of $\partition{\triangle}$, \cref{thm:tri_part_SCC,thm:triangletoSCC} together imply that \cref{thm:mainpacking} holds.
\subsection{Reducing \textsc{Set Partition} to \partition{\triangle}}\label{sec:tri_part_SCC}

To proceed, we require a technical result asserting that we can assume the value of $d$ in \eqdsetpartition to be divisible by 3.

\begin{lem}\label{lem:set_part_mod_3}
	Suppose for some $\varepsilon > 0$ it holds that, for every $d \geq 1$
	there is an algorithm that solves every $n$-element instance of $\eqsetpartition{(3 \cdot d)}$ in
	time $(2 - \varepsilon)^{n} \cdot n^{\Oh(1)}$. Then for every $d \geq 1$ there is an
	algorithm that solves every $n$-element instance of $\eqdsetpartition$ in time $(2 -
	\varepsilon)^{n} \cdot n^{\Oh(1)}$.
\end{lem}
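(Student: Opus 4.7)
My plan is to give a polynomial-time reduction from $\eqdsetpartition$ to $\eqsetpartition{(3d)}$ that preserves the universe size up to a subtractive constant. Given an instance $(U,\calF)$ of $\eqdsetpartition$ with $|U|=n$, I first check that $d \mid n$, else reject immediately. The construction then splits on the residue $n \bmod 3d \in \{0,d,2d\}$, handling the nonzero residues by a polynomial-time enumeration over the few ``leftover'' $d$-sets that any partition must contain beyond a $3d$-aligned portion.

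In the base case $3d \mid n$, I build the family
\[
\calF' \;=\; \{\, S_1 \cup S_2 \cup S_3 \;:\; S_1,S_2,S_3 \in \calF \text{ pairwise disjoint}\,\},
\]
so every element of $\calF'$ has size exactly $3d$. Since $d$ is a constant, $|\calF| \leq \binom{n}{d} = n^{\bigO(1)}$, hence $|\calF'| \leq |\calF|^3 = n^{\bigO(1)}$, and $\calF'$ can be constructed explicitly in polynomial time. Any partition of $U$ by $\calF$ has $n/d$ blocks, a multiple of $3$, which can be grouped into triples yielding a partition of $U$ by $\calF'$; conversely, every set in $\calF'$ is, by construction, a disjoint union of three sets from $\calF$, so any partition of $U$ by $\calF'$ splits into a partition of $U$ by $\calF$. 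Thus $(U,\calF)$ and $(U,\calF')$ are equivalent instances, and a single call to the hypothetical $(2-\eps)^n \cdot n^{\bigO(1)}$ algorithm for $\eqsetpartition{(3d)}$ decides the original instance within the same running time.

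For $n \equiv d \pmod{3d}$, I iterate over all $S^\ast \in \calF$ and, for each, apply the base-case reduction to $(U \setminus S^\ast,\ \{S \in \calF : S \cap S^\ast = \emptyset\})$, whose universe has size $n-d$ divisible by $3d$; the original instance is a yes-instance iff some branch is. For $n \equiv 2d \pmod{3d}$, I iterate over all (unordered) pairs of disjoint sets $\{S_1^\ast,S_2^\ast\} \subseteq \calF$ and reduce analogously to the base case on $U \setminus (S_1^\ast \cup S_2^\ast)$. In each branch the new universe is divisible by $3d$, the number of branches is $|\calF|^{\bigO(1)} = n^{\bigO(1)}$, and each branch runs in time $(2-\eps)^{n} \cdot n^{\bigO(1)}$, so the total running time is $(2-\eps)^{n} \cdot n^{\bigO(1)}$ as required. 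The only point that needs verification is the forward/backward equivalence in the base case, sketched above; I foresee no real obstacle, since the reduction never inflates the universe and $d$ being constant keeps $|\calF|$ polynomial.
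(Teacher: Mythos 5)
Your proposal is correct, and its core is the same as the paper's: both reductions form the new family by taking unions of three pairwise disjoint sets of $\calF$, and both exploit that $d$ is a constant so this family has polynomial size. The difference is in how the divisibility issue $n/d \not\equiv 0 \pmod 3$ is handled. The paper pads the instance: it adds $3-s$ dummy $d$-sets on fresh elements (where $s = (n/d) \bmod 3$), so the universe grows by at most $3d$ elements and a \emph{single} call to the \eqsetpartition{(3 \cdot d)} algorithm suffices; the additive blow-up is harmless because $(2-\eps)^{3d}$ is a constant. You instead keep the universe unchanged (or shrink it) and branch over the one or two ``leftover'' blocks that any partition must contain when $n \equiv d$ or $2d \pmod{3d}$, reducing each branch to the aligned case; this costs $n^{\bigO(1)}$ oracle calls rather than one, which is equally harmless for the stated running time. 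Your equivalence argument in the base case (group the $n/d$ blocks into triples; conversely split each $3d$-block back into its three constituent $d$-sets, which are pairwise disjoint within a block and across blocks by disjointness of the blocks) is sound, and the guess-and-restrict step in the other residue classes is a standard and correct self-reduction. The only cosmetic point is that the constructed family might fail the syntactic requirement $U = \bigcup \calF'$ in the problem definition, in which case one should simply answer \no; the paper's construction has the same corner case and glosses over it in the same way.
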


\begin{proof}
	Let $(U,\calF)$ be an $n$-element instance of \eqdsetpartition for some $d \geq 1$. Note that we can assume that $d$ divides $n$. Moreover, let $\calB$ be an algorithm that solves
	\eqsetpartition{h} for $h \coloneqq 3 \cdot d$ in time $(2 -
	\eps)^{n} \cdot n^{\Oh(1)}$.

	The	algorithm is very similar to the ones given in
	\cref{lem:leqPTstoPT,lem:leqPTtoeqPT,lem:leqPCtoeqPC}. Given an
	instance $(U,\calF)$ of \eqdsetpartition, let $ 0 \leq s \leq 2$ denote
	the integer $\frac{n}{d}$ modulo 3. The algorithm constructs $\tilde{\calF}$
	by incorporating $\left( 3 - s \right)$ pairwise disjoint sets of size
	$d$, each of which is also disjoint from $U$, into the existing set
	$\calF$.  Let $\calA$ denote the newly introduced sets and define
	$R \coloneqq \bigcup \calA$. The new universe $U'$ is equal to $U \cup
	R$, i.e., it has $n + d(3-s) \leq n+3d$ elements. Finally, the
	algorithm constructs $\calF'$ by adding the union of all pairwise
	disjoint sets $A,B$, and $C$ where $A,B,C \in \tilde{\calF}$. Observe that
	each set in $\calF'$ has size equal to $3 \cdot d$. In the end, the
	algorithm returns \yes if $\calB(U', \calF')$ returns \yes.
	
	To show that the algorithm is correct, assume that $(U,\calF)$ is a
	\yes-instance. Then, $U$ has a partition $\calS \subseteq \calF$.
	Observe that the size of $\calS \cup \calA$ is equal to 0 modulo 3
	since $\abs{\calA} = 3 - s$. Group the sets in $\calS \cup \calA$ into
	groups of size 3 and take the union of the sets in each group. Let
	$\calS'$ be the resulting set. It is easy to verify that $\calS'
	\subseteq \calF'$ and the union of the sets in $\calS'$ is equal to $U
	\cup R = U'$. Hence $(U', \calF')$ is also a \yes-instance and the
	algorithm returns \yes.

	Similarly, if $(U, \calF)$ is a no instance, meaning that $U$ has no
	partition $\calS \subseteq \calF$, adding sets disjoint from those in
	$\calF$ does not create a set system where the universe can be
	partitioned. Hence, since $\calB(U', \calF')$ returns \no, 
	our algorithm returns \no as well.

	Constructing the instance $(U',
	\calF')$ takes polynomial time in $n$.
	Therefore, the whole running time becomes
	\begin{align*}
		n^{\Oh(1)} \cdot (2 - \eps)^{\abs{U'}} &= n^{\Oh(1)} \cdot (2 - \eps)^{n + 3 \cdot d}
		= n^{\Oh(1)} \cdot \left( 2 - \eps \right)^{n}.
	\end{align*}
	
\end{proof}

The following corollary can be easily deduced from \cref{thm:SCCequivalent} and \cref{lem:set_part_mod_3}.
\begin{cor}\label{cor:setpartition_hardness}
	For all $\varepsilon > 0$, there exists $d \geq 1$ such that there is no algorithm that solves every $n$-element instance of \eqsetpartition{(3 \cdot d)} in time $(2 - \varepsilon)^{n} \cdot n^{\bigO(1)}$, unless the SCC fails.
\end{cor}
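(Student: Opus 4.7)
The corollary follows from a direct combination of \cref{thm:SCCequivalent} and \cref{lem:set_part_mod_3}, and my plan is to argue via the contrapositive. Assume toward contradiction that there exists some $\varepsilon > 0$ such that for every $d \geq 1$, one can solve every $n$-element instance of \eqsetpartition{(3d)} in time $(2-\varepsilon)^n \cdot n^{\bigO(1)}$. My goal is then to deduce that the SCC fails.

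The first step is to invoke \cref{lem:set_part_mod_3} with this same $\varepsilon$. Its hypothesis matches our assumption exactly, so its conclusion yields that for every $d \geq 1$, every $n$-element instance of \eqdsetpartition admits a $(2-\varepsilon)^n \cdot n^{\bigO(1)}$-time algorithm. In the terminology introduced before \cref{thm:SCCequivalent}, this says that \eqsetpartitionclass is \emph{not} $\hard$, with a single witness $\varepsilon$ serving uniformly for all values of $d$.

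The second step is to apply \cref{thm:SCCequivalent}. Since the eight problem classes listed there are equivalent with respect to being $\hard$, the conclusion that \eqsetpartitionclass is not $\hard$ propagates to every class in the list. In particular, \leqsetcoverclass is not $\hard$, which is by definition precisely the negation of the SCC, completing the contradiction.

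There is no real obstacle here: the argument is essentially a bookkeeping exercise that aligns the quantifier pattern ``for every $d \geq 1$ with a fixed $\varepsilon$'' appearing in the hypothesis of \cref{lem:set_part_mod_3} with the same pattern implicit in the negation of a class being $\hard$. The only point worth double-checking is that \cref{lem:set_part_mod_3} preserves the same $\varepsilon$ from hypothesis to conclusion, which it does, so the chain of implications lands cleanly on a failure of the SCC.
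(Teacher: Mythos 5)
Your proof is correct and is essentially the paper's own argument: the paper merely states that the corollary ``can be easily deduced from \cref{thm:SCCequivalent} and \cref{lem:set_part_mod_3}'', and your contrapositive chain is exactly that deduction spelled out. The key bookkeeping point you verify---that \cref{lem:set_part_mod_3} carries the same $\varepsilon$ from hypothesis to conclusion, so the negated corollary yields that \eqsetpartitionclass is not $\hard$ and hence, by \cref{thm:SCCequivalent}, that the SCC fails---is precisely what makes the deduction go through.
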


\begin{thm}\label{thm:tri_part_SCC}
	For all $\varepsilon > 0$, there exists $\sigma, \delta \geq 1$  such that there is no algorithm that solves every $n$-vertex instance of \partition{\triangle} given with a \core{\sigma}{\delta} of size at most $p$ in time $(2- \varepsilon)^{p} \cdot n^{\bigO(1)}$, unless the SCC fails.
\end{thm}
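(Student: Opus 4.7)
Given $\varepsilon > 0$, by \cref{cor:setpartition_hardness} there exists $d \geq 1$ such that, assuming the SCC, no algorithm solves every $n$-element instance of \eqsetpartition{(3d)} in time $(2-\varepsilon)^n \cdot n^{\bigO(1)}$. The plan is to reduce this problem to \partition{\triangle} on graphs equipped with a \core{\sigma}{\delta} of size essentially equal to $n$, where $\sigma$ and $\delta$ are constants depending only on $d$ (and hence on $\varepsilon$). The divisibility of the set size by $3$ is what makes it possible to line up the arity of a set with a gadget built out of triangles.

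The central building block is the \trieq gadget $\calH_{3d}$: a graph with a designated tuple of $3d$ portal vertices and $\sigma' = \sigma'(d)$ non-portal vertices, such that there are exactly two triangle packings covering every non-portal vertex, one of which additionally covers all portals (and is thus a triangle partition of $V(\calH_{3d})$), while the other covers no portal. Such a gadget can be built as a suitably designed ``chain'' of coupled triangle strips, in which the two global states, ``cover all portals'' and ``cover no portal'', propagate consistently between consecutive units along the chain, so that no intermediate state covering some but not all portals is available.

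Given an instance $(U,\calF)$ of \eqsetpartition{(3d)}, I would build a graph $G$ as follows: introduce a vertex $q_u$ for every $u \in U$, forming the set $Q$ with $|Q| = |U| = n$; and for each set $S \in \calF$ attach a fresh copy of $\calH_{3d}$, identifying its $3d$ portals with the vertices $\{q_u : u \in S\}$ in an arbitrary order. Equivalence is then straightforward. If $\calF' \subseteq \calF$ is a partition of $U$, then on gadget copies corresponding to $S \in \calF'$ use the ``cover all portals'' packing, and on the copies for $S \in \calF \setminus \calF'$ use the ``cover none'' packing; each $q_u$ is covered exactly once (by the unique set of $\calF'$ containing $u$), and every non-portal vertex is covered. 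Conversely, any triangle partition of $G$ must cover every non-portal vertex of every gadget copy, so in each copy it must use one of the two valid packings; the copies using the ``cover all portals'' variant correspond to a subfamily $\calF' \subseteq \calF$ that partitions $U$.

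The set $Q$ is a \core{\sigma}{\delta} of $G$ with $\sigma = \sigma'(d)$ and $\delta = 3d$, since each connected component of $G - Q$ lies inside a single gadget copy and is adjacent to at most its $3d$ portals in $Q$. A hypothetical $(2-\varepsilon)^p \cdot n^{\bigO(1)}$-time algorithm for \partition{\triangle} on such inputs would then solve the \eqsetpartition{(3d)} instance $(U,\calF)$ in time $(2-\varepsilon)^n \cdot n^{\bigO(1)}$, contradicting \cref{cor:setpartition_hardness}. The main obstacle is constructing $\calH_{3d}$: the delicate requirement is that no triangle packing covering every non-portal vertex can ``mix'' the two states, i.e., cover some portals while leaving others uncovered, since any such unintended state would let a gadget witness a set that is neither fully included in nor fully excluded from the partition $\calF'$ and break the correspondence.
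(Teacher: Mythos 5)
Your reduction is exactly the one in the paper: start from \eqsetpartition{(3d)} via \cref{cor:setpartition_hardness}, put one hub vertex per universe element, attach one \trieq gadget per set with its $3d$ portals identified with the corresponding hub vertices, and read off a \core{\sigma}{3d} of size $n$. The equivalence argument you give is also the paper's (and it is sound: since portals are pairwise non-adjacent inside a gadget, no triangles live inside $Q$, and every triangle meeting a non-portal vertex stays inside one gadget copy, so the restriction of a triangle partition to a copy is forced to be one of the two legal packings).

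The genuine gap is that the \trieq gadget itself is only asserted, not constructed: ``a suitably designed chain of coupled triangle strips'' with the stated rigidity is precisely the nontrivial content of the proof, and without an explicit graph and a verification that \emph{no} packing covering all non-portal vertices can cover a proper nonempty subset of the portals, the argument is incomplete. For the record, the paper builds it on $4r$ vertices for $r=3d$ portals $p_0,\dots,p_{r-1}$: take a cycle $a_1,b_1,a_2,b_2,\dots,a_r,b_r$, add the triangles $\{p_i,a_i,b_i\}$ (family $\mathbb{P}_1$), the triangles $\{q_i,b_i,a_{i+1}\}$ on auxiliary vertices $q_i$ (family $\mathbb{P}_2$), and triangles grouping the $q_i$ in consecutive blocks of three (family $\mathbb{P}_3$) --- this last step is where divisibility of $r$ by $3$ is actually used, rather than merely ``lining up the arity with triangles.'' The rigidity proof is a propagation argument around the cycle: if one triangle of $\mathbb{P}_1$ is used, covering the next $a$-vertex forces the next $\mathbb{P}_1$ triangle, so all of $\mathbb{P}_1$ is used and the $q_i$ must be covered by $\mathbb{P}_3$ (all portals covered); if no $\mathbb{P}_1$ triangle is used, covering the $a_i,b_i$ forces all of $\mathbb{P}_2$ (no portal covered). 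Your write-up correctly identifies this rigidity requirement as the crux, but identifying it is not the same as establishing it; supplying such a construction and its two-state analysis is what would complete your proof.
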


\begin{proof}
Suppose  there exists $\eps > 0$ such that for all $\sigma, \delta \geq 1$ there exists an algorithm that solves every $n$-vertex $\partition{\triangle}$ instance given with a \core{\sigma}{\delta} of size at most $p$ in time $(2 - \eps)^{p} \cdot n^{\Oh(1)}$. 
Fix arbitrary $d \geq 1$ and denote $r=3 \times d$. Consider an instance $(U, \mathcal{F})$ of $\eqsetpartition{r}$, where $U = [n]$. In the following, we will describe an algorithm that solves $(U, \mathcal{F})$ in time $(2 - \eps)^{n} \cdot n^{\Oh(1)}$, which will contradict the SCC by \cref{cor:setpartition_hardness}.

\paragraph{Equality Gadget.}
A \emph{\trieq gadget} is a graph with a set of designated vertices called \emph{portals} that has exacly two triangle packings that cover all non-portal vertices:
\begin{myitemize}
\item one that also covers all portals (i.e., it is a triangle partition in the gadget), and
\item one that covers no portal.
\end{myitemize}
A \trieq gadget behaves similarly to gadgets that we introduced for variants of \coloring{q}, but this time for the \partition{\triangle} problem.
When constructing an instance $G$ of \partition{\triangle}, only the portal vertices of the gadget will have neighbors outside the gadget.
Consequently, in any triangle packing of $G$, all portals of the \trieq gadget will be in the same state: either they are all covered by triangles contained inside the gadget, or none of them is covered by such a triangle. This explains why we use the name \emph{equality gadget}.

Now let us show how to construct a \trieq gadget $Z$ with $r$ portals and $4r$ vertices in total; see \cref{fig:trieqgadget}.
Introduce four sets of vertices $P = \{p_0, \ldots,
p_{r-1}\} $, $Q = \{q_0, \ldots, q_{r-1}\}$, $A = \{a_0, \ldots, a_{r-1}\}$, and $B = \{b_0,\ldots, b_{r-1}\}$. All arithmetic iterations on indices of these vertices are performed modulo $r$.

The vertices from $A$ and $B$ form a cycle with concecutive vertices $a_1, b_1, a_2, b_2, \ldots, a_r, b_r$.
Then, for each $i \in [r]$, we create a triangle on vertices $a_i,b_i,p_i$;
let $\mathbb{P}_1$ denote the set of such triangles.
Similarly, for each $i \in [r]$, we create a triangle $q_i,b_i,a_{i+1}$; let $\mathbb{P}_2$ denote set of these triangles.
Finally, for each $i \in  [\frac{r}{3} -1]$, we create a triangle on vertices $q_{3i + 1}, q_{3i + 2}, q_{3i + 3}$; call the set of these triangles $\mathbb{P}_3$. 
This completes the construction of $Z$ and $P$ is the set of portal vertices.

\begin{figure}[ht]
    \centering
    \includegraphics[width=0.4\textwidth]{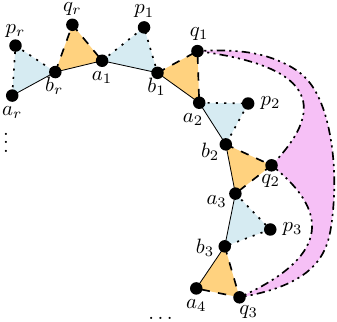}
    \caption{The construction of the \trieq gadget with $r$ portal vertices. The
	blue, orange, and violet triangles, with dotted, dashed, and
	dotted-dashed edges respectively, belong to $\mathbb{P}_1$,
        $\mathbb{P}_2$, and $\mathbb{P}_3$ respectively.}
    \label{fig:trieqgadget}
\end{figure}

Let us argue that $Z$ indeed has the property of a \trieq gadget.
Let $\mathcal{P}$ be a triangle packing in $Z$ that covers all non-portal vertices, i.e., $A \cup B \cup Q$.
Suppose $\mathcal{P}$ contains a triangle from $\mathbb{P}_1$, say, $p_i, a_i, b_i $ for $1 \leq i \leq r$.
In that case, since  $\mathcal{P}$ covers $a_{i + 1}$, it must also contain the triangle $p_{i + 1}, a_{i + 1}, , b_{i + 1}$.
By induction, it is easy to show that
$\mathcal{P}$ should include all triangles in $\mathbb{P}_1$.
The remaining vertices, i.e., $Q$ must be covered by the triangles in $\mathbb{P}_3$.
Therefore, in this case all the portal vertices of $Z$ are covered by $\mathcal{P}$.

Assume now that $\mathcal{P}$ has no triangle from $\mathbb{P}_1$.
In particular, no portal vertices are covered by $\mathcal{P}$.
We need to argue that such $\mathcal{P}$ exists and is unique.
Note that in order to cover all non-portal vertices, $\mathcal{P}$ must contain all the triangles in $\mathbb{P}_2$.
Hence the constructed graph is indeed a \trieq gadget.

\paragraph{Construction of the graph $G$.} Given a \eqsetpartition{r}
instance $(U,\mathcal{F})$, let us create an instance $G$ of
$\partition{\triangle}$ as follows. First, create $n$ vertices
$V_{U} = \{v_1, \ldots, v_n\} $, where each $v_i$ corresponds to $i \in U$. Then, for each $S \in \mathcal{F}$, create a \trieq
gadget, denoted by \trieqS, with $r$ portal vertices, and identify the
portal vertices of \trieqS with corresponding vertices of $V_{U}$. Let $G$ be the resulting graph obtained
by applying these operations.
Note that $G$ has $\Oh(n^r)$ vertices and $V_{U}$
is a \core{\sigma}{\delta} of $G$ of size $n$, where $\sigma = 4r$ and $\delta = r$. 

\paragraph{Equivalence of instances.} Suppose $U$ has a partition $\mathcal{S} \subseteq \mathcal{F}$.
For each $S \in \mathcal{S}$, define $V_S = \{ v_i ~|~ i \in S\}$.
Note that since $\mathcal{S}$ is a partition of $U$, sets $V_S$ form a  partition of the vertices $V_U$.
For $S \in \mathcal{S}$, let $\mathcal{T}_S$ be a
triangle partition of \trieqS which covers all its portal vertices, i.e., $V_S$; it exists by the definition of \trieq gadget.
For $S \in \mathcal{F} \setminus \mathcal{S}$, let $\mathcal{T}_S$ be a
triangle partition of \trieqS which covers no portal vertices; again, it exists by the definition of \trieq gadget.
Note that $\bigcup_{S \in \mathcal{F}} \mathcal{T}_{S}$ is a triangle partition of the whole graph $G$.

Now, suppose $G$ has a triangle partition $\mathcal{T}$. Let us call a copy \trieqS of the \trieq gadget in $G$ \emph{active} if there is no triangle that intersects the gadget, but it not contained in the gadget.
Recall that this means that each portal of \trieqS is a part of some triangle not contained in \trieqS.
Define $\mathcal{S} = \{S \in \mathcal{F} \mid \text{\trieqS is active}\}$. With a similar reasoning as in the previous paragraph, it is straightforward to verify that $\mathcal{S}$ is a partition of $U$. Summing up,  $(U,\mathcal{F})$ is a \yes-instance of $\eqsetpartition{(3 \cdot d)}$ if and only if $G$ is a \yes-instance of $\packing{\triangle}$.

\paragraph{Running Time.}
Building the graph $G$ takes time polynomial time in $n$. Moreover, as mentioned before, $G$ has $\Oh(n^r)$ vertices and  $V_U$ is a \core{4r}{r} of $G$ of size $n$. Therefore,  the assumed algorithm for $\eqsetpartition{r}$, one can solve the instance $(U,\mathcal{F})$ in time
\begin{equation*}
	(2 - \eps)^{\abs{V_U}} \cdot |V(G)|^{\Oh(1)} = (2 - \eps)^n \cdot n^{\Oh(1)}.
\end{equation*}
This completes the proof.
\end{proof}

\subsection{Reducing $\packing{\triangle}$ to \textsc{Set Packing}}\label{sec:triangletoSCC}

The goal of this section is to prove \cref{thm:triangletoSCC}.

\begin{thm}\label{thm:triangletoSCC}
	Suppose the SCC fails. Then there exists $\eps>0$ such that, for all $\sigma,\delta \ge 1$, there is an algorithm that solves every $n$-vertex instance of $\packing{\triangle}$ given with a $\sdhub$ of size $p$ in time $(2-\eps)^p\cdot n^{\bigO(1)}$.
\end{thm}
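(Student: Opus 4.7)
The plan is to implement the strategy sketched in Section 2.4: reduce $\packing{\triangle}$ to \leqsetpackingsetsclass, which by \cref{thm:SCCequivalent} is not $\hard$ once we assume that the SCC fails. So fix $\eps_0>0$ and, for every $d\ge 1$, an algorithm $\calB_d$ solving $n$-element instances of \leqsetpackingsets{d} in time $(2-\eps_0)^n\cdot n^{\bigO(1)}$. Given $\sigma,\delta$, we will choose a sufficiently large constant $c=c(\sigma,\delta,\eps_0)$ and reduce to $\calB_{d}$ with $d\coloneqq 2c\sigma+1$.

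First I would dispose of the triangles entirely contained in $Q$ by guessing (in $n^{\bigO(1)}$ branches) a maximal subpacking of such triangles and removing the chosen vertices from $Q$; this only shrinks $p$, so from now on I may assume that every triangle in the hypothetical optimum $\Pi$ meets at least one component of $G-Q$. Call a component of $G-Q$ \emph{active} in $\Pi$ if some triangle of $\Pi$ intersects both it and $Q$. There are at most $p$ active components. I would then invoke the splitter of Naor--Schulman--Srinivasan~\cite{NaorSS95} to obtain, in time $2^{\bigO(p/c)}\cdot n^{\bigO(1)}$, a family $\Psi$ of colorings of the components of $G-Q$ by $p/c$ colors such that for every set of at most $p$ components, some $\psi\in\Psi$ assigns at most $c$ of them to each color. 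I branch on $\psi\in\Psi$; at least one branch corresponds to a $\psi$ that splits the active components of $\Pi$ evenly.

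For a fixed $\psi$, let $\calC_i$ denote the components of color $i$ and let $X_i$ be the size of a maximum triangle packing in $G[V(\calC_i)]$, computable in polynomial time since each component has size $\le\sigma$. Each active component in $\calC_i$ can contribute at most $\sigma$ triangles crossing into $Q$, so under the promise the contribution of color $i$ to $\Pi$ lies in $\{X_i,X_i+1,\ldots,X_i+c\sigma\}$. I exhaustively guess the offset $q_i\in\{0,\ldots,c\sigma\}$ for each $i$, giving $(c\sigma+1)^{p/c}$ branches. For each color $i$ I enumerate all \emph{$i$-valid} sets $S\subseteq Q$ with $|S|\le 2c\sigma$ admitting a triangle packing in $G[S\cup V(\calC_i)]$ that (a) activates at most $c$ components of $\calC_i$ and (b) contains at least $X_i+q_i$ triangles; this enumeration runs in polynomial time since only $\bigO(p^{2c\sigma})$ candidates exist and each feasibility check takes constant time on a subgraph of bounded size. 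I then build the instance $(U,\calF,p/c)$ of \leqsetpackingsets{d} with $U=Q\cup\{a_1,\ldots,a_{p/c}\}$ and $\calF=\{S\cup\{a_i\}:i\in[p/c],\ S\ \text{is $i$-valid}\}$, and call $\calB_d$. Correctness follows because $p/c$ disjoint sets necessarily use one per color, reconstruct a triangle packing of size $\sum_i(X_i+q_i)$ that respects $\psi$, and conversely a solution $\Pi$ with a splitting $\psi$ and offsets $q_i$ yields such a family in $\calF$; I accept iff for some $\psi$ and some offset vector with $\sum_i(X_i+q_i)\ge t$ the call to $\calB_d$ returns \yes.

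The overall running time is
\[
|\Psi|\cdot (c\sigma+1)^{p/c}\cdot (2-\eps_0)^{p+p/c}\cdot n^{\bigO(1)}
\;\le\; 2^{\bigO(p/c)}\cdot (2-\eps_0)^p\cdot n^{\bigO(1)},
\]
since the universe has size $p+p/c$. The key step is therefore to pick $c$ large enough that the $2^{\bigO(p/c)}$ overhead from the splitter, the offset enumeration, and the extra $p/c$ exponent of $(2-\eps_0)$ is absorbed into a base strictly less than $2$, i.e.\ so that $2^{C/c}\cdot(2-\eps_0)^{1+1/c}<2-\eps$ for some $\eps>0$ where $C$ depends only on $\sigma,\delta,\eps_0$. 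This is the main obstacle, together with verifying carefully that (i) the set-packing instance truly captures triangle packings compatible with $\psi$ and the guessed offsets, and (ii) the enumeration of $i$-valid sets is polynomial, i.e.\ that the branching factor per color is at most $p^{\bigO(c\sigma)}$ rather than exponential in $p$. Both will follow from the size bounds $|S|\le 2c\sigma$ and $|V(\calC_i)\cap V(\Pi)|\le c\sigma$ together with the fact that components of $G-Q$ have constant size.
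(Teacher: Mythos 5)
Your high-level plan (splitter, color classes of components, offsets $q_i$, $i$-valid subsets of $Q$ of size at most $2c\sigma$, universe $Q\cup\{a_1,\ldots,a_{p/c}\}$, and choosing $c$ large so the $(c\sigma+1)^{p/c}$, splitter, and $(2-\eps_0)^{p/c}$ overheads are absorbed) is exactly the paper's, but there is a genuine gap in your very first step, the disposal of triangles contained in the hub. You claim you can guess, in $n^{\bigO(1)}$ branches, a maximal subpacking of the triangles inside $Q$ and delete its vertices. This does not work. First, the number of candidate subpackings of triangles inside $Q$ can be $2^{\Omega(p)}$ (take $Q$ inducing $p/6$ disjoint copies of $K_6$), and in the hard instances $p$ is polynomially related to $n$, so polynomially many branches cannot cover the possibilities. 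Second, even ignoring the counting, a \emph{maximal} subpacking is not what you need: you would have to hit exactly the set of solution triangles lying inside $Q$, because extending greedily to a maximal packing may consume hub vertices that the (unknown) optimum needs for triangles crossing into the components, and there is no exchange argument that repairs this. So the interaction between hub-internal triangles and crossing triangles through the vertices of $Q$ cannot be factored out up front; it is precisely what the set-packing instance has to encode.

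The paper resolves this by keeping the hub triangles inside the framework rather than eliminating them: it colors the set $\calC\cup\calD$, where $\calD$ is the family of triangles contained in $Q$, with the $p/c$ colors, imposes the budget of at most $c$ \emph{active} objects per color on components and hub triangles jointly (any solution has at most $p$ active objects in total, since each uses a hub vertex), and requires an $i$-valid set $S\subseteq Q$ to admit a packing $\Pi_S$ whose triangles inside $Q$ all lie in $\calD_i$, which activates at most $c$ elements of $\calC_i\cup\calD_i$ and has at least $X_i+q_i$ triangles. With that modification the rest of your argument goes through essentially verbatim. A minor additional imprecision: writing the overhead as $2^{Cp/c}$ with $C$ independent of $c$ is not quite right, since the per-color branching $(c\sigma+1)^{1/c}$ and the splitter base $(2\pi c)^{\bigO(1/c)}$ depend on $c$; but both tend to $1$ as $c\to\infty$, so the paper's explicit choice of $c$ (e.g.\ $(c\sigma+1)^{1/c}\le 1+\eps''$ and $(2-\eps_0)^{1/c}\le 1+\eps''$) makes this rigorous.
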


In order to simplify the description of the proof, we split it into two steps.
First, we show how we can use a fast algorithm for \leqdsetpackingsets to solve an auxiliary variant of the problem called $\prepacking$.
Then, we use some color-coding idea to show how an algorithm for $\prepacking$ can be used to solve $\packing{\triangle}$.

Let us start with some definitions and notation.
Let $G$ be a graph and $Q \subseteq V(G)$.
Let $\mathcal{C}$ denote the set of components of $G-Q$ and let $\mathcal{D}$ denote the set of triangles of $G$ that are contained in $Q$.
Let $\Pi$ be a collection of pairwise vertex-disjoint triangles in $G$.
We say that $C \in \mathcal{C}$ is \emph{active} (with respect to $\Pi$) if there is a triangle in $\Pi$ that intersects both $C$ and $Q$.
A triangle in $\mathcal{D}$ is \emph{active} if it belongs to $\Pi$.

For a function $\psi : \mathcal{C} \cup \mathcal{D} \to \N$, by $\mathcal{C}_i$ (resp., $\mathcal{D}_i$) we denote $\mathcal{C} \cap \psi^{-1}(i)$ (resp., $\mathcal{D} \cap \psi^{-1}(i)$).

For a constant $c$, an instance of $\prepacking$ is a quadruple $(G,t,Q,\psi)$, where $G$ is graph, $t$ is an integer, $Q$ is a subset of $V(G)$,
and $\psi$ is a function that maps elements of $\mathcal{C} \cup \mathcal{D}$ to numbers in $\left\{1, \ldots, \big\lceil |Q|/c \big\rceil \right\}$ (here $\mathcal{C}$ and $\mathcal{D}$ are defined as in the previous paragraph).
We call the numbers in the codomain of $\psi$ \emph{colors} and call $\psi$ a \emph{coloring}; we emphasize that this is an arbitrary coloring and has no correctness criterion.
We ask whether $G$ admits a packing of at least $t$ triangles such that, for each color $i$, at most $c$ elements of $\mathcal{C}_i \cup \mathcal{D}_i$ are active (the sets $\mathcal{C}_i$ and $\mathcal{D}_i$ are defined with respect to $\psi$).

\begin{lem}\label{lem:SetPackingToPrecolored}
Suppose there is $\eps'>0$ such that for all $d$, \leqdsetpackingsets with $n$-element universe can be solved in time $(2-\eps')^{n}\cdot n^{\bigO(1)}$.\\
Then there is $\eps''>0$ such that for every $\sigma,\delta \geq 1$ there is $c_0$ depending only on $\epsilon''$ and $\sigma$, so that for every $c \geq c_0$, every instance $(G,t,Q,\psi)$ of $\prepacking$, where $Q$ is a $\sdhub$ of size $p$, can be solved in time 
$(2-\eps'')^p \cdot |V(G)|^{\bigO(1)}$.
\end{lem}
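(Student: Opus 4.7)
The plan is to encode a $\prepacking$ solution as a combination of color-local triangle packings, with the disjointness constraint in $Q$ enforced by an instance of \leqdsetpackingsets.

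First, for each color $i$, I compute $X_i$, the maximum number of vertex-disjoint triangles in $G[\bigcup \calC_i]$; this is polynomial since every component of $\calC_i$ has at most $\sigma$ vertices. Given any triangle packing $\Pi$ of $G$ satisfying the color-activity promise, I define the \emph{$i$-restriction} $\Pi_i \subseteq \Pi$ as the triangles of $\Pi$ that touch a component of $\calC_i$ or coincide with one of $\calD_i$. Then $|\Pi| = \sum_i |\Pi_i|$ and the sets $S_i := V(\Pi_i) \cap Q$ are pairwise disjoint. The crucial normalization step is to observe that whenever $|\Pi_i| < X_i$, one may replace $\Pi_i$ by an optimal baseline packing of $G[\bigcup \calC_i]$: this does not decrease $|\Pi|$, does not conflict with other sub-packings, and trivially preserves the color-activity promise (the baseline uses no $Q$-vertex). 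Setting $q_i := |\Pi_i| - X_i \ge 0$, the triangles of $\Pi_i$ beyond the baseline are either mixed (at most $c\sigma$ of them, coming from at most $c$ active components, each contributing at most $\sigma$ vertices) or fully in $\calD_i$ (at most $c$ of them), so $q_i \le c(\sigma+1)$ and $|S_i| \le 2c\sigma + 3c$.

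Based on these bounds, I call a set $S \subseteq Q$ of size at most $2c\sigma+3c$ \emph{$(i,q)$-valid} if $G[S \cup \bigcup \calC_i]$, when restricted to $\calD_i$-triangles fully inside $S$, admits a packing of at least $X_i+q$ triangles with at most $c$ active elements of $\calC_i\cup\calD_i$. Since $|S|$ and each component size are constant, all $(i,q)$-valid $S$ can be enumerated in polynomial time by brute force. With $d := c(2\sigma+3)+1$, I then iterate over all offset vectors $\mathbf{q} \in \{0,\ldots,c(\sigma+1)\}^{\lceil p/c\rceil}$ satisfying $\sum_i(X_i+q_i)\ge t$, and for each, form the \leqdsetpackingsets${}_d$ instance on universe $U := Q \cup \{a_1,\ldots,a_{\lceil p/c\rceil}\}$ with family $\calF := \bigcup_i \{S \cup \{a_i\} : S \text{ is } (i,q_i)\text{-valid}\}$, asking for $\lceil p/c\rceil$ pairwise disjoint sets. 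The markers $a_i$ enforce that at most one set per color is chosen, so a positive answer corresponds to a collection of color-local packings whose $Q$-parts are pairwise disjoint; these combine into a triangle packing of $G$ of size at least $\sum_i(X_i+q_i)\ge t$. The converse uses the normalization step.

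The total runtime is
\[
\bigl(c(\sigma+1)+1\bigr)^{\lceil p/c\rceil} \cdot (2-\eps')^{p+\lceil p/c\rceil} \cdot n^{\Oh(1)},
\]
where the first factor counts offset vectors, the second the cost of each call to the assumed algorithm on a universe of size $p+\lceil p/c\rceil$, and the last is for construction. Since $(c(\sigma+1)+1)^{1/c}\cdot(2-\eps')^{1+1/c}\to 2-\eps'$ as $c\to\infty$, choosing $c$ sufficiently large in terms of $\eps'$ and $\sigma$ pushes this per-$p$ base below $2-\eps''$ for $\eps'' := \eps'/2$, giving the claimed bound. The main obstacle is the normalization argument ensuring $|\Pi_i|\ge X_i$ per color, together with the careful bookkeeping that keeps $|S_i|$, $q_i$, and $d$ all linear in $c$ and $\sigma$, so that both the guessing overhead and the universe blow-up can be absorbed into the base of the exponential for large enough $c$.
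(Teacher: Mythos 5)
Your proposal is correct and follows essentially the same route as the paper's proof: compute the per-color baseline values $X_i$, argue (via an exchange/normalization step) that a solution can be assumed to contribute at least $X_i$ per color, guess the constant-size offsets $q_i$, enumerate the constant-size valid subsets of $Q$ per color, and reduce to \leqdsetpackingsets with one marker element per color, absorbing the guessing overhead and universe blow-up into the exponential base by taking $c$ large. The only differences are slightly looser constants (e.g.\ offset bound $c(\sigma+1)$ and set size $2c\sigma+3c$ instead of $c\sigma$ and $2c\sigma$), which do not affect the argument.
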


\begin{proof}
Let $\sigma,\delta$ be fixed constants. Without loss of generality assume that $\sigma \geq 2$.
Let 
\begin{equation}
\epsilon'' = \epsilon'/4 \label{eq:chooseeps}
\end{equation}
and let $c_0$ be the smallest integer that satisfies both
\begin{align}
(c_0\sigma+1)^{1/c_0}\leq & (1+\epsilon'')  \label{eq:choosec1} \\
(2-\epsilon')^{1/c_0}\leq & (1+\epsilon''); \label{eq:choosec2}
\end{align}
and pick any $c \geq c_0$; clearly $c$ satisfies these inequalities too.
This choice will become clear later in the proof.
Note that $\epsilon''$ depends only on $\epsilon'$ and  $c_0$ depends only on $\eps'$ and $\sigma$.

Let $(G,t,Q,\psi)$ be an instance of $\prepacking$, where $Q$ is a $\sdhub$ of size $p$. By introducing at most $c-1$ isolated vertices to $Q$, we can assume that $c$ divides $p$; note that these dummy vertices contribute to the running time only by a constant factor.
We will use the assumed algorithm for  \leqdsetpackingsets for $d = 2c\sigma+1$ to solve $(G,t,Q,\psi)$ in time $(2-\eps'')^p \cdot |V(G)|^{\bigO(1)}$.

Let $\ell=p/c$ denote the number of colors used by $\psi$.
The sets $\calC, \calD$ and $\calC_i, \calD_i$ over all $i \in [\ell]$ are defined as previously.

Fix some (unknown) optimum solution $\Pi$, i.e., a largest triangle packing in $G$ that has at most $c$ active elements in each color.
A \emph{contribution} of a component $C \in \calC$ is the number of triangles in $\Pi$ that intersect $C$ (they can either be contained in $C$, or have some vertices in $C$ and some in $Q$). The contribution of a triangle in $\calD$ is 1 if this triangle is in $\Pi$ and 0 otherwise. A contribution of a color $i \in [\ell]$ is the total contribution of all elements in $\calC_i \cup \calD_i$.

In the algorithm, for each color $i$, we would like to exhaustively guess the contribution of this color.
However, the numbers involved might be very large, as $\calC_i$ might contain many components.
Thus we are going to do this indirectly.

Fix a color $i \in [\ell]$.
Let $X_i$ be the number of triangles of a maximum triangle packing of the graph induced by the components in $\calC_i$.
Note that $X_i$ can be computed in time linear in $|V(G)|$ as each component in $\calC_i$ has size at most $\sigma$ and there are at most $|V(G)|$ such components.
A maximum triangle packing can be obtained as the union of maximum triangle packings of the individual components.

Note that the contribution of the color $i$ is at least $X_i$ and at most $X_i + c\sigma$.
Indeed, on one hand, we can pick $X_i$ triangles that are contained in the components in $\calC_i$, without interfering with any object of any other color.
On the other hand, each active triangle in $\calD_i$ contributes 1 to the sum, while each of at most $\sigma$ vertices of each active component in $\calC_i$ can be present in at most one triangle intersecting $Q$. Since there at at most $c$ active objects, we get the upper bound.

So, instead of guessing the contribution of each color $i \in [\ell]$ directly, we guess the offset $q_i$ of this value against $X_i$. Formally, the algorithm iterates over all tuples of the form $\boldq=(q_1, \ldots, q_{\ell})\in \{0, \ldots, c\sigma\}^{\ell}$ that satisfy $\sum_{i\in [\ell]} (X_i+q_i) \ge t$. Note that if no such tuple exists, the discussion above yields that $(G,t,Q,\psi)$ is a \no-instance.
Observe that the number of choices for $\boldq$ is at most
\begin{equation}
	(c\sigma+1)^\ell = (c\sigma+1)^{p/c} \overset{\text{\eqref{eq:choosec1}}}{\leq} (1 + \epsilon'')^p. \label{eq:iterateoverq}
\end{equation}
	
Fix one such tuple $\boldq=(q_1, \ldots, q_{\ell})$ and consider a color $i \in [\ell]$.	
We say that a subset $S\subseteq Q$ of vertices from the hub is \emph{$i$-valid} if it has size at most $2c\sigma$ and the graph induced by $S$ together with the components in $\calC_i$ has a triangle packing $\Pi_S$ with the following properties:
	\begin{itemize}
		\item at most $c$ elements of $\calD_i \cup \calC_i$ are active w.r.t.~$\Pi_S$,
		\item all triangles of $\Pi_S$ contained in $Q$ are in $\calD_i$,
		\item the number of triangles in $\Pi_S$ is at least $X_i+q_i$.
	\end{itemize}

Intuitively, a set $S$ is $i$-valid if it is compatible with the choice of $q_i$.
The size bound comes from the fact that each active triangle in $\calD_i$ uses three vertices from $Q$,
while the triangles intersecting each active component from $\calC_i$ intersect at most $2\sigma$ vertices of $Q$. As $\sigma \geq 2$ and there are at most $c$ active elements in $\calD_i \cup \calC_i$, the bound follows.
	
\begin{clm}\label{clm:ivaldsets}
For each color $i$ and offset $q_i$, the $i$-valid sets can be enumerated in time polynomial in $|V(G)|$.
\end{clm} 
\begin{claimproof}
There are at most $\Oh(p^{2c\sigma})=|V(G)|^{\bigO(1)}$ subsets of $Q$ of size at most $2c\sigma$.
As each such candidate set $S$ is of constant size, in polynomial time we can enumerate all possible triangle packings in which every triangle intersects $S$. Furthermore, in polynomial time, each such packing can be extended in an optimal way by picking triangles contained in $\calC_i$; this is possible as each component in $\calC_i$ is of constant size.
It is straightforward to verify that $S$ is $i$-valid if and only if any of the packings obtained that way satisfies the conditions for $\Pi_S$.
\end{claimproof}
	
For each tuple $\boldq$, we can now define an instance $(U,\calF_\boldq,\ell)$ of \leqdsetpackingsets as follows.
The universe $U$ of the instance consists of all vertices in the hub $Q$ together with a distinct element $a_i$ per each color $i \in [\ell]$, so in total $|U| = p + \ell = p + p/c$.
The set system $\calF_\boldq$ contains, for each color $i \in [\ell]$ and for each $i$-valid set $S\subseteq Q$,
the set $S\cup \{a_i\}$.
Note that each set in $\calF_\boldq$ is of size at most $d=2c\sigma+1$, and the instance can be constructed in time polynomial in $|V(G)|$. 
The algorithm iterates over all such instances and executes the previously-mentioned algorithm for \leqdsetpackingsets on each of these instances.  If in any of the calls we obtain a positive answer, we report that $(G,t,Q,\psi)$ is a \yes-instance, and otherwise we reject it.

\paragraph*{Correctness}
The correctness of the algorithm follows from the following two claims.	
	
\begin{clm}\label{clm:triangletoSCCcorrect1}
If there is $\boldq$ for which $(U,\calF_\boldq,\ell)$ is a \yes-instance of  \leqdsetpackingsets, then $(G,t,Q,\psi)$ is a \yes-instance of \prepacking.
\end{clm}
\begin{claimproof}
Let $\calF^* \subseteq \calF_\boldq$ be a family $\ell$ pairwise disjoint sets.
Since each set in $\calF_\boldq$ contains one of the elements $\{a_1, \ldots, a_{\ell}\}$, the solution $\calF^*$ contains, for each color $i$, precisely one set of the form $S_i\cup \{a_i\}$.
Note that $S_i$ may be empty. For each such set, let $\Pi_{S_i}$ be a triangle packing defined as above.
We claim that $\bigcup_{i \in [\ell]} \Pi_{S_i}$ is a triangle packing that witnesses that $(G,t,Q,\psi)$ is a \yes-instance.

Clearly the triangles within one set $\Pi_{S_i}$ are pairwise disjoint.
Now consider two triangles from distinct sets, say $\Pi_i$ and $\Pi_j$.
As sets of components $\calC_i$ and $\calC_j$ are pairwise disjoint, these two triangles may overlap only on $Q$,
which means that $S_i \cap S_j \neq \emptyset$. However, this is not possible as sets in $\calF$ are pairwise disjoint.

Next, by the conditions in the definition of $i$-valid set, we observe that at most $c$ elements from $\calC_i \cup \calD_i$ are active. Finally, the total number of triangles is at least
$ \sum_{i \in [\ell]} (X_i + q_i)$ which is at least $t$ by the choice of $\boldq$.
\end{claimproof}
	
\begin{clm}\label{clm:triangletoSCCcorrect2}
If $(G,t,Q,\psi)$ is a \yes-instance of \prepacking, then there is  $\boldq$ for which $(U,\calF_\boldq,\ell)$ is a \yes-instance of  \leqdsetpackingsets.
\end{clm}
	\begin{claimproof}
Suppose there exists a triangle packing $\Pi$ of $G$ of size at least $t$,
that has at most $c$ active elements of each color. 
Fix color $i \in [\ell]$ and let $\Pi_i$ be the subfamily of $\Pi$ consisting of triangles that either are in $\calD_i$, or intersect $\calC_i$. Clearly the sets $\Pi$ over all $i \in [\ell]$ form a partition of $\Pi$.
Recall that we safely assume that $X_i \leq |\Pi_i| \leq X_i + c\sigma$
Define $q_i := |\Pi_i| - X_i$.
Note that $\boldq = (q_1,\ldots,q_\ell)$ is one of the tuples considered by our algorithm.

For $i \in [\ell]$, let $S_i \subseteq Q$ be the set consisting of vertices of triangles in $\Pi_i$.
The packing $\Pi_i$ is a witness that $S_i$ is an $i$-valid set (with respect to $\boldq$).
Furthermore, sets $S_i \cup \{a_i\}$ are pairwise disjoint. Thus,  $(U,\calF_\boldq,\ell)$ is a \yes-instance of  \leqdsetpackingsets.
	\end{claimproof}

\paragraph*{Runtime}
The algorithm iterates over all possible choices of $\boldq$, for each choice it builds an instance $(U,\calF_\boldq)$ in polynomial time, and then calls the assumed algorithm for \leqdsetpackingsets. As $|U|$ and $|\calF_\boldq|$ are bounded by a polynomial function of $|V(G)|$, and by \eqref{eq:iterateoverq}, the total running time is bounded by 
\begin{align*}
 &	(1 + \epsilon'')^p \cdot (2-\epsilon')^{|U|} \cdot |V(G)|^{\bigO(1)} \leq (1+\mu)^p \cdot (2-\epsilon')^{p(1+1/c)} \cdot |V(G)|^{\bigO(1)} \\
\leq   & \ \left( (1+\epsilon'') \cdot (2-\epsilon') \cdot (2-\epsilon')^{1/c} \right)^p \cdot |V(G)|^{\bigO(1)}   \overset{\text{\eqref{eq:choosec2}}}{\leq }  \left( (1+\epsilon'')^2 \cdot (2-\epsilon') \right)^p \cdot |V(G)|^{\bigO(1)}  \\
\overset{\text{\eqref{eq:chooseeps}}}{\leq }   & \ (2-\epsilon'')^p \cdot |V(G)|^{\bigO(1)},
\end{align*}
as claimed. This completes the proof.
\end{proof}
	
Now let us argue that we can reduce solving $\packing{\triangle}$ (where the instance is given with a hub) to $\prepacking$ for suitably chosen $c$.

In this reduction we use some color coding idea. We will use the so-called \emph{splitters}.
For integers $N$, $p$, and $\ell$ with $N\ge p$, an \emph{$(N,p,\ell)$-splitter} $\Psi$ is a family of functions from $[N]$ to $[\ell]$ such that, for each subset $S$ of $[N]$ with size $\abs{S}=p$, there is an $\psi\in \Psi$ that assigns colors from $[\ell]$ as evenly distributed as possible to the elements of $S$ --- formally, for $(p \bmod \ell)$ colors $j\in [\ell]$ it holds that $\abs{\psi^{-1}(j)}=\ceil{p/\ell}$, and for the remaining colors we have $\abs{\psi^{-1}(j)}=\lfloor{p/\ell}\rfloor$.
The methods of constructing splitters were introduced by Naor, Schulman, and Srinivasan~\cite{NaorSS95}.

\begin{thm}[{\cite[Theorem 3 (ii)]{NaorSS95}}]\label{thm:spitter}
Let $\ell=\omega(p)$ and let $f(p,\ell)=(2\pi p/\ell)^{\ell/2}e^{\ell^2/(12p)}$.
An $(N,p,\ell)$-splitter of size $s=\bigO(f(p,\ell)^{1+o(1)} \log N)$ can be computed in time polynomial in $s$ and $N$.
Here the $o(1)$ in the exponent is for $\ell/\sqrt{p}$ going to infinity.
\end{thm}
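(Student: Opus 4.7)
The plan is to combine a Stirling-based probabilistic computation with an explicit derandomization. First I would compute the probability that a uniformly random function $\psi\colon[N]\to[\ell]$ splits a fixed $p$-subset $S$ evenly. The number of favorable outcomes on the elements of $S$ is the balanced multinomial $\binom{p}{\lfloor p/\ell\rfloor,\ldots,\lceil p/\ell\rceil}$, so $\Pr[\psi\text{ splits }S]$ equals this multinomial divided by $\ell^{p}$. Applying Stirling's formula to $p!$ and to each of the $\ell$ factorials in the denominator, and retaining the standard $1/(12n)$ correction in the exponent (which is exactly what produces the $e^{\ell^{2}/(12p)}$ factor), would yield $\Pr[\psi\text{ splits }S]=f(p,\ell)^{-(1+o(1))}$ in the stated regime where $\ell/\sqrt{p}\to\infty$, with the $o(1)$ absorbing polynomial-in-$p$ prefactors from the $\sqrt{2\pi n}$ terms.

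Next I would lift this one-subset guarantee to all $\binom{N}{p}$ subsets simultaneously. A naive random sampling with a union bound over subsets of size $p$ gives the existence of a splitter of size $O(f(p,\ell)^{1+o(1)}\cdot p\log N)$, which is close but loses an extra factor of $p$. To shave this factor and to make the construction deterministic, I would follow the recursive scheme of Naor--Schulman--Srinivasan: first build an $(N,p,2)$-\emph{halver}, that is, a small family of $\{0,1\}$-valued functions such that every $p$-subset is split roughly in half by some member, using an explicit perfect hash family of size $2^{O(p)}\log N$ based on low-degree polynomials over a suitable finite field; then recurse inside each half with $\ell/2$ colors. Each level of the recursion can be searched for a good halver in polynomial time by the method of conditional expectations against the natural potential counting $p$-subsets that remain unsplit. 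Balancing the recursion depth against the per-level size produces a deterministic $(N,p,\ell)$-splitter of size $O(f(p,\ell)^{1+o(1)}\log N)$ computable in time polynomial in $s$ and $N$.

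The principal obstacle is getting the tight $(1+o(1))$ exponent on $f(p,\ell)$ rather than a constant-factor blow-up. A crude pairwise-independence argument or a straightforward union-bound-based derandomization would only give $f(p,\ell)^{O(1)}$; matching the Stirling estimate requires either full $p$-wise independence inside the halver, or the recursive halving construction with a branching factor tuned so that the multiplicative overhead at each level telescopes into a sub-leading $o(1)$ contribution to the exponent. A secondary technicality is that the correction term $e^{\ell^{2}/(12p)}$ is genuinely second-order in Stirling: one must verify that the Stirling error is uniform across the regime $\ell/\sqrt{p}\to\infty$, which is delicate precisely at the borderline $\ell\approx\sqrt{p}$ where the correction transitions from negligible to dominant.
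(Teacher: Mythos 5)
The paper does not prove this statement at all: it is imported verbatim from Naor, Schulman, and Srinivasan~\cite{NaorSS95} (their Theorem~3(ii)) and used as a black box in the proof of the corollary that follows it, so there is no internal proof to compare your argument against. Judged on its own terms, your first paragraph is sound: the probability that a uniformly random $\psi\colon[N]\to[\ell]$ splits a fixed $p$-set as evenly as possible is the balanced multinomial divided by $\ell^{p}$, the $1/(12n)$ Stirling correction applied to each of the $\ell$ factorials is indeed the source of the factor $e^{\ell^{2}/(12p)}$, and a union bound over the at most $N^{p}$ subsets gives a non-constructive splitter of size $O\bigl(f(p,\ell)^{1+o(1)}\,p\log N\bigr)$, exactly as you say. (As an aside, the hypothesis ``$\ell=\omega(p)$'' in the statement is evidently a typo for $\ell=\omega(\sqrt{p})$, which is the regime named in the last sentence and the one used later with $\ell=\lceil p/c\rceil$.)

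The derandomization half, however, contains genuine gaps rather than mere unchecked details. First, the building block you propose, a perfect hash family of size $2^{O(p)}\log N$, is already too large in the regime that matters: in the paper's application $\ell=\lceil p/c\rceil$ for a large constant $c$, so $f(p,\ell)=(2\pi c)^{\Theta(p/c)}e^{\Theta(p/c^{2})}$ is $2^{\alpha(c)p}$ with $\alpha(c)\to 0$, and the corollary needs total splitter size at most $(1+\eps)^{p}N^{O(1)}$ for arbitrarily small $\eps$; any $2^{\Omega(p)}$ factor with an unspecified constant destroys this. Second, the assertion that the per-level overheads of the recursive halving ``telescope into a sub-leading $o(1)$ contribution to the exponent'' is precisely the content of the theorem and is nowhere computed --- you flag it yourself as the principal obstacle, which means the sketch does not yet cross it. Third, the method of conditional expectations requires an explicitly enumerable sample space to condition over; full $p$-wise independence over $[N]$ has size $N^{\Theta(p)}$, so the claimed running time polynomial in $s$ and $N$ does not follow from what is written. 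None of this affects the paper, which only needs the statement as a quoted result; but as a standalone proof your outline would need the quantitative overhead accounting of the actual Naor--Schulman--Srinivasan composition, and in the context of this paper the correct move is simply the citation the authors give.
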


\begin{cor}\label{cor:splitter}
For every $\eps>0$ there is $c_1$ such that for each $N$, $p$ with $p\le N$, $c \geq c_1$, an $(N,p,\ceil{p/c})$-splitter can be computed (and iterated over) in time $(1+\eps)^p\cdot N^{\bigO(1)}$.
\end{cor}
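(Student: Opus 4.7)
The plan is to invoke \cref{thm:spitter} with $\ell=\ceil{p/c}$ and choose $c_1$ large enough that the splitter-size bound $\bigO(f(p,\ell)^{1+o(1)}\log N)$ collapses to $(1+\eps)^{p}\cdot N^{\bigO(1)}$.

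First I would bound the base quantity $f(p,\ell)=(2\pi p/\ell)^{\ell/2}e^{\ell^{2}/(12p)}$ for $\ell=\ceil{p/c}$. Using $p/\ell\le c$ and $\ell/p\le 1/c+O(1/p)$ (the $O(1/p)$ correction coming from the ceiling), a short calculation gives
\[
f(p,\ell)^{1/p}\;\le\;(2\pi c)^{1/(2c)+O(1/p)}\cdot e^{1/(12c^{2})+O(1/p)}.
\]
The limit $(2\pi c)^{1/(2c)}\cdot e^{1/(12c^{2})}$ tends to $1$ as $c\to\infty$, so I would select $c_{1}$ such that for every $c\ge c_{1}$ this limiting value is at most $1+\eps/3$.

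Second, because $\ell/\sqrt{p}\ge\sqrt{p}/c\to\infty$ as $p\to\infty$ for any fixed $c$, the $o(1)$ appearing in the exponent of \cref{thm:spitter} tends to $0$ with $p$. Hence for each $c\ge c_{1}$ there is a threshold $p_{0}=p_{0}(c,\eps)$ such that whenever $p\ge p_{0}$ we have both $f(p,\ell)\le(1+\eps/2)^{p}$ and $f(p,\ell)^{1+o(1)}\le(1+\eps)^{p}$. Combining this with the $\log N$ factor and the polynomial-time construction of \cref{thm:spitter} yields a splitter of the required size, computable (and hence iterable) in time $(1+\eps)^{p}\cdot N^{\bigO(1)}$.

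For the remaining case $p<p_{0}$, the parameter $p$ is bounded by a constant depending only on $c$ and $\eps$, so $\binom{N}{p}=N^{\bigO(1)}$. Here I would build a splitter by brute force: enumerate all size-$p$ subsets $S$ of $[N]$ and, for each, include one function $[N]\to[\ceil{p/c}]$ that splits $S$ evenly (found in constant time since $p$ is bounded). The resulting family has size $N^{\bigO(1)}\le(1+\eps)^{p}\cdot N^{\bigO(1)}$ and can be constructed and iterated in $N^{\bigO(1)}$ time, completing the proof.

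The main obstacle is controlling the $(1+o(1))$ exponent in \cref{thm:spitter}: a na\"ive bound $f(p,\ell)\le(1+\eps/3)^{p}$ is not directly sufficient, because raising it to the power $1+o(1)$ could inflate it well beyond $(1+\eps)^{p}$. The resolution is that once $\sqrt{p}/c$ is large (which happens for $p$ past a threshold depending only on $c$ and $\eps$), the extra factor $f(p,\ell)^{o(1)}$ becomes subconstant per coordinate and is absorbed into the slack between $1+\eps/3$ and $1+\eps$; the finitely many remaining small-$p$ cases are handled by the trivial construction above.
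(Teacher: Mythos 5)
Your proposal is correct and follows essentially the same route as the paper: invoke the Naor--Schulman--Srinivasan splitter theorem (\cref{thm:spitter}) with $\ell=\ceil{p/c}$, bound $f(p,\ell)\le (2\pi c)^{\bigO(p/c)}$, choose $c_1$ so that this per-coordinate base is close enough to $1$, and treat the bounded-$p$ regime separately. The only (immaterial) difference is bookkeeping: the paper absorbs the $1+o(1)$ exponent via the crude bound $f^{1+o(1)}\le f^2$ together with a factor-$4$ safety margin in the choice of $c_1$, and dismisses small $p$ by noting the splitter size is then trivially $\bigO(\log N)$, whereas you use a tighter $(1+\eps/3)$ margin plus a threshold $p_0(c,\eps)$ and an explicit brute-force splitter below it.
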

\begin{proof}
Let $c_1$ be the smallest integer such that $(2\pi c_1)^{4/c_1}\le (1+\eps)$.
Choose $N,p \leq N$, and $c \geq c_1$.
Note that $\ell:=\ceil{p/c}=\omega(\sqrt{p})$.
	Thus, by \cref{thm:spitter}, there is an $(N,p,\ell)$-splitter of size $s=\bigO(f(p,\ell)^{1+o(1)}\log N)$ that can be computed in time polynomial in $N$ and $s$, where $f(p,\ell)= (2\pi p/\ell)^{\ell/2}e^{\ell^2/(12p)}$.
	The $o(1)$ in the exponent is for $\ell/\sqrt{p}=\ceil{p/c}/\sqrt{p}$ going to infinity, i.e., it is for $p$ going to infinity.
	If $p$ is smaller than some constant the size of the splitter is trivially $\bigO(\log N)$. So we can assume that $p$ is sufficiently large such that $f(p,\ell)^{1+o(1)}\le f(p,\ell)^2$ and $p>c$. Since $p>c$ we have $\ceil{p/c}\le 2p/c$. Then
	\[
	f(p,\ell)= (2\pi p/\ell)^{\ell/2}e^{\ell^2/(12p)}\le (2\pi c)^{p/c} e^{4p^2/(12p c^2)}\le (2\pi c)^{2p/c}.
	\]
	Using the fact that $(2\pi c)^{4/c}\le (1+\eps)$, it follows that
	\[
	f(p,\ell)^{1+o(1)}\le f(p,\ell)^2\le (2\pi c)^{4p/c}\le (1+\eps)^p.
	\]
	So we can compute an $(N,p,\ell)$-splitter and then iterate over its members in time $(1+\eps)^p \cdot N^{\bigO(1)}$. 
\end{proof}	
	
Let us proceed to the proof of the following result.

\begin{lem}\label{lem:PrecoloredToTriangles}
Suppose there is $\eps''>0$ such that for every $\sigma,\delta \geq 1$ there is $c_0$ depending only on $\epsilon''$ and $\sigma$, so that for every $c \geq c_0$, every instance $(G,t,Q,\psi)$ of $\prepacking$, where $Q$ is a $\sdhub$ of size $p$, can be solved in time 
$(2-\eps'')^p \cdot |V(G)|^{\bigO(1)}$.\\
Then there is $\eps>0$ such that for every $\sigma,\delta \geq 1$,
every $n$-vertex instance of $\packing{\triangle}$ given with a $\sdhub$ of size $p$ can be solved in time 
$(2-\eps)^p \cdot n^{\bigO(1)}$.
\end{lem}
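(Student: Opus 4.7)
The plan is to reduce a given $\packing{\triangle}$ instance to several instances of $\prepacking$ via a color-coding step based on the splitters of Naor, Schulman, and Srinivasan. Let $(G,t)$ be an $n$-vertex instance of $\packing{\triangle}$ given with a $\sdhub$ $Q$ of size $p$, and define $\calC$ (components of $G-Q$) and $\calD$ (triangles of $G$ contained in $Q$) as before the statement of \cref{lem:SetPackingToPrecolored}. Set $N=|\calC|+|\calD|=n^{\bigO(1)}$. The crucial structural observation is that, for any triangle packing $\Pi$ of $G$, the total number of active elements of $\calC\cup\calD$ is at most $p$: each active component uses at least one vertex of $Q$ (via the triangle witnessing activity) and each active triangle in $\calD$ uses three vertices of $Q$, and since triangles in a packing are vertex-disjoint, these usages are disjoint.

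The algorithm proceeds as follows. Fix the constant $\eps''$ and let $c_0=c_0(\sigma,\eps'')$ be given by the assumption. Choose a parameter $\mu>0$ (to be fixed at the end), and let $c\ge\max\{c_0,c_1(\mu)\}$, where $c_1(\mu)$ is the constant from \cref{cor:splitter}. Using \cref{cor:splitter}, compute an $(N,p,\ceil{p/c})$-splitter $\Psi$ of size at most $(1+\mu)^p\cdot N^{\bigO(1)}$ in time $(1+\mu)^p\cdot N^{\bigO(1)}$. Interpret each $\psi\in\Psi$ as a coloring of $\calC\cup\calD$ with $\ell=\ceil{p/c}$ colors. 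For each $\psi\in\Psi$, invoke the assumed algorithm on the $\prepacking$ instance $(G,t,Q,\psi)$ in time $(2-\eps'')^p\cdot n^{\bigO(1)}$, and output \yes iff at least one call returns \yes.

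For correctness, the easy direction is that any $\prepacking$ solution is in particular a triangle packing of $G$ of size at least $t$, so a \yes-answer is never spurious. Conversely, if $G$ admits a triangle packing $\Pi$ of size at least $t$, let $A\subseteq\calC\cup\calD$ be its set of active elements; by the discussion above $|A|\le p$. By the splitting property of $\Psi$, there exists $\psi\in\Psi$ under which the elements of $A$ are distributed as evenly as possible among the $\ell=\ceil{p/c}$ colors, so each color receives at most $\ceil{|A|/\ell}\le\ceil{p/\ell}\le c$ active elements. Thus $\Pi$ is a feasible solution of the $\prepacking$ instance $(G,t,Q,\psi)$, and that call returns \yes.

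The total running time is bounded by $(1+\mu)^p\cdot(2-\eps'')^p\cdot n^{\bigO(1)}$. Choosing $\mu>0$ small enough that $(1+\mu)(2-\eps'')\le 2-\eps$ for some $\eps>0$ (e.g.\ $\mu=\eps''/4$, giving $\eps=\eps''/2$ for $\eps''$ sufficiently small) and then setting $c=\max\{c_0(\sigma,\eps''),c_1(\mu)\}$, which depends only on $\sigma$ and $\eps''$ and hence is a constant, yields the claimed bound $(2-\eps)^p\cdot n^{\bigO(1)}$. The main obstacle is not correctness but ensuring that the application of \cref{cor:splitter} is legitimate: one must verify that the parameter regime $\ell=\ceil{p/c}=\omega(\sqrt{p})$ is satisfied so that the splitter construction applies, which holds because $c$ is a constant independent of $p$; the only remaining care is the constant-size case $p\le c$, which can be handled by exhaustive enumeration in constant time.
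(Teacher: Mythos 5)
Your proposal is correct and follows essentially the same route as the paper: a splitter-based color-coding step over the active elements of $\calC\cup\calD$ (at most $p$ of them, since each consumes a distinct hub vertex), one call to the assumed $\prepacking$ algorithm per splitter coloring, and the same runtime balancing with $c=\max\{c_0,c_1\}$. The only nitpick is that the splitter guarantee is stated for sets of size exactly $p$, so one should pad the active set $A$ to a $p$-element superset before invoking it (as the paper does); this does not affect your bound of at most $c$ active elements per color.
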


\begin{proof}
Define $\epsilon = \epsilon''/3$.
Let $c_0$ be as in the assumption of the lemma, and let $c_1$ be as in \cref{cor:splitter} (for $\eps$).
Let $c = \max(c_0,c_1)$.

Fix $\sigma$ and $\delta$, and consider an $n$-vertex instance $(G,t)$ of $\packing{\triangle}$, given with a $\sdhub$ $Q$ of size $p$.
Again, by adding at most $c-1$ isolated vertices to the hub, we can assume that $c$ divides $p$. Define $\ell = p/c$.
Let $\calC$ be the set of components of $G-Q$, and let $\calD$ be the set of triangles contained in $Q$.
Define $N = |\calC + \calD| = \Oh(n^{3})$.

Let $\Psi$ be the $(N,p,\ell)$-splitter given by \cref{cor:splitter} for $N,p,c$.
We claim that $(G,t)$ is a \yes-instance of $\packing{\triangle}$ if and only if there exists $\psi \in \Psi$ such that $(G,t,Q,\psi)$ is a \yes-instance of $\prepacking$.

The backwards implication is trivial; let us show the forward one.
Suppose there is a triangle packing $\Pi$ in $G$, consisting of at least $t$ triangles.
Let $\calC^*$ and $\calD^*$ be the elements of $\calC$ and $\calD$, respectively, that are active with respect to $\Pi$.
Notice that $| \calC^*  \cup \calD^*| \leq p$. Indeed, each active element of $\calC  \cup \calD$ uses at least one vertex of $Q$ and these vertices are pairwise distinct. 
Let $S$ be a subset of $\calD \cup \calC$ of size exactly $p$ that contains $ \calC^*  \cup \calD^*$.

We interpret elements from $\Psi$ as functions from $\calC + \calD$ (recall that $|\calC + \calD|=N$) to $[\ell]= [p/c]$.
By the definition of the splitter, there is $\psi \in \Psi$ in which each color appears exactly $\frac{p}{p/c} = c$ times on elements of $S$.
Thus, for each color $i$ there are at most $c$ active elements of $\calD \cup \calC$.
Consequently, $(G,t,Q,\psi)$ is a \yes-instance of $\prepacking$.

Now the algorithm is simple. First, it uses \cref{cor:splitter} to compute $\Psi$.
Then it iterates over all $\psi \in \Psi$, and calls the assumed algorithm for the instance $(G,t,Q,\psi)$ of $\prepacking$. If any the calls succeeds, the algorithm reports a \yes-instance, and otherwise it reports a \no-instance.
The correctness follows from the reasoning above.

The running time is determined by the number of elements of $\Psi$ times the time needed to call the algorithm for each instance of $\prepacking$, thus it is bounded by 
\[
(1+\epsilon)^p \cdot (2-\epsilon'')^p \cdot n^{\bigO(1)} = (2 - \epsilon'' + 2\epsilon''/3 - \epsilon''^2/3)^p \cdot n^{\bigO(1)} \leq (2 - \epsilon)^p \cdot n^{\bigO(1)},
\]
as claimed. This completes the proof.
\end{proof}

Now \cref{thm:triangletoSCC} follows directly from the combination of \cref{thm:SCCequivalent}, \cref{lem:SetPackingToPrecolored}, and \cref{lem:PrecoloredToTriangles}.

\section{Dominating Set}\label{sec:dominating}
In this section we turn our attention to \textsc{Dominating Set} parameterized by the hub size.
We show two quite simple reductions that exclude an algorithm with running time $(2-\epsilon)^p \cdot |V(G)|^{\bigO(1)}$ (where the instance $G$ is given with a hub of size $p$) under two complexity assumptions: the SETH and the SCC.

\begin{thm}\label{thm:domsetSCC}
For every $\epsilon > 0$ there exists $\delta$ such that the \textsc{Dominating Set} problem on $n$-vertex instances given with a \core{3}{\delta}  of size $p$ cannot be solved in time $(2-\epsilon)^p \cdot |V(G)|^{\bigO(1)}$, unless the SCC fails.
\end{thm}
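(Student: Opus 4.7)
The plan is to reduce from \leqdsetcover. Given $\eps > 0$, I would let $d$ be the value provided by the SCC such that \leqdsetcover cannot be solved in time $(2-\eps)^n \cdot n^{\bigO(1)}$, and set $\delta \coloneqq d+1$.

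Given an instance $(U,\calF,t)$ of \leqdsetcover with $n \coloneqq |U|$, I construct a graph $G$ on the vertex set $U \cup V_{\calF} \cup \{z, z'\}$, where $V_\calF \coloneqq \{v_S : S \in \calF\}$, with edges $v_S u$ for each $u \in S$, $v_S z$ for each $S \in \calF$, and a single edge $z z'$ (making $z'$ a pendant). Let $Q \coloneqq U \cup \{z\}$, so $|Q| = n+1$. Each component of $G - Q$ is a singleton: either some $v_S$ (with at most $|S|+1 \le d+1$ neighbors in $Q$) or $z'$ (with a single neighbor in $Q$). Hence $Q$ is a \core{1}{d+1}, and in particular a \core{3}{\delta}.

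The main claim is that $(U,\calF,t)$ is a yes-instance of \leqdsetcover if and only if $G$ admits a dominating set of size at most $t+1$. For the forward direction, a cover $\calF^* \subseteq \calF$ of size $\le t$ yields the dominating set $\{z\} \cup \{v_S : S \in \calF^*\}$: the vertex $z$ dominates $z'$ and all of $V_\calF$, while the selected $v_S$'s dominate $U$. For the reverse direction, let $D$ be a dominating set of $G$ with $|D| \le t+1$. Since $z'$'s only neighbor is $z$, at least one of $\{z,z'\}$ lies in $D$, hence $|D \cap (U \cup V_\calF)| \le t$. Using that $\bigcup \calF = U$, fix for each $u \in U$ some $f(u) \in \calF$ with $u \in f(u)$, and set $\calF^{**} \coloneqq \{S : v_S \in D\} \cup \{f(u) : u \in D \cap U\}$. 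Every $u \in U$ is covered by $\calF^{**}$: if $u \in D$ then by $f(u)$; otherwise $u$ must be dominated in $G$ by some $v_S \in D$ with $u \in S$, so $S \in \calF^{**}$. Thus $|\calF^{**}| \le |D \cap V_\calF| + |D \cap U| \le t$.

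Composing the polynomial-time reduction with a hypothetical $(2-\eps)^p \cdot n^{\bigO(1)}$-algorithm for \textsc{Dominating Set} on instances given with a \core{3}{\delta} of size $p$ would solve every $n$-element \leqdsetcover instance in time $(2-\eps)^{n+1} \cdot n^{\bigO(1)} = (2-\eps)^n \cdot n^{\bigO(1)}$, contradicting our choice of $d$. The only mild subtlety is the role of $z'$: this pendant forces one unavoidable extra vertex into every dominating set, providing precisely the $+1$ slack that lets the backward translation work uniformly, regardless of which of $\{z, z'\}$ (or both) ends up in $D$.
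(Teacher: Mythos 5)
Your proof is correct and follows essentially the same route as the paper: a reduction from \leqdsetcover in which the universe serves as the hub and each set $S$ contributes a vertex $v_S$ adjacent to its elements. The only difference is in how the set-side is neutralized --- the paper appends a pendant path $b_F,c_F$ to every set vertex (budget $k+m$, giving a \core{3}{d}), whereas you use one apex $z$ with pendant $z'$ placed in the hub (budget $t+1$, giving a \core{1}{d+1}), which also lets you skip the paper's exchange/normalization step by mapping any universe vertices of the dominating set to arbitrary sets containing them.
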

\begin{proof}
Assume the SCC. Given some $\eps >0$, let $d$ be the constant given by the SCC for this $\eps$.
We reduce from \leqdsetcover, let $(U,\mathcal{F})$ be an instance where $|U|=n$ and $|\mathcal{F}|=m$.
We start the construction of the instance $G$ of \textsc{Dominating Set}  with introducing a set $Y$ which contains a vertex $y_i$ for every $i \in U$.
Next, for every set $F \in \mathcal{F}$, we proceed as follows.
We introduce a three-vertex path with consecutive vertices $a_F,b_F,c_F$. Then we add an edge $a_Fy_i$ if and only if $i \in F$. Denote $A = \{a_F ~|~ F \in \mathcal{F}\}$, $B = \{b_F ~|~ F \in \mathcal{F}\}$, and $C = \{c_F ~|~ F \in \mathcal{F}\}$.
This completes the construction of $G$.

Note that this construction can be performed in time polynomial in $n$ (recall that $d$ is a constant and thus $m$ is polynomial in $n$) and the constructed graph has $n+3m = \Oh(n^d)$ vertices. Furthermore, the set $Y$ is a \core{3}{d} of size $n$.

We claim that $G$ has a dominating set of size at most $k + m$ if and only if $\mathcal{F}$ contains at most $k$ sets that cover $U$.

First, suppose that there is a subfamily $\mathcal{F'} \subseteq \mathcal{F}$ of size at most $k$ such that $U = \bigcup \mathcal{F}'$. Define $X = B \cup \{a_F ~|~ F \in \mathcal{F'} \}$. Clearly $X$ is of size  $k + m$. Let us show that it is a dominating set.

Notice that for each $F \in \mathcal{F}$, the vertices $a_F,b_F,c_F$ are dominated by $b_F \in X$.
Now consider any $y_i \in Y$. As $\mathcal{F}'$ covers $U$, there is $F \in \mathcal{F}'$ such that $i \in F$.
This means that $a_F \in X$ and $y_i$ is dominated by $a_F$.

For the other direction, suppose that $G$ has a dominating set $X$ of size at most  $k + m$.
We claim that we can assume that $B \subseteq X \subseteq A \cup B$.
Indeed, consider $F \in \mathcal{F}$. Notice that in order to dominate $c_F$, the set $X$ must contain at least one of $b_F,c_F$. However, if it contains $c_F$, we can obtain a solution of at most the same size by replacing $c_F$ with $b_F$ (if $b_F \notin X$) or removing $c_F$ (otherwise). Thus we may assume that $B \subseteq X$ and $C\cap X=\emptyset$.
Now, for contradiction, suppose that there is $y_i \in Y \cap X$. The vertex $y_i$ can only dominate itself or some set of vertices in $A$. However, $B\subseteq X$ already dominates all vertices in $A$. Hence, $X \setminus \{y_i\}$ dominates every vertex of $G$, except possibly $y_i$. So consider any $F \in \mathcal{F}$ such that $i \in F$; recall that such $F$ exists as $U = \bigcup \mathcal{F}$.
Then the set $X \setminus \{y_i\} \cup \{a_F\}$ is a dominating set of size at most $|X|$.

Define $\mathcal{F}' = \{ F ~|~ a_F \in X\}$; clearly $|\mathcal{F}'| \leq k$. We claim that $\bigcup \mathcal{F}' = U$.
Suppose there is $i \in U$ that is not in $\bigcup\mathcal{F}'$. However, this means that $y_i$ is not dominated by $X$, a contradiction.

Summing up, a hypothetical algorithm that solves \textsc{Dominating Set} on $G$ in time $(2-\epsilon)^n \cdot |V(G)|^{\bigO(1)}$ can be used to solve the instance $(U, \mathcal{F})$ of \leqdsetcover in time $(2-\epsilon)^n \cdot n^{\bigO(1)}$, contradicting the SCC.
\end{proof}

In the $\leq d$-\textsc{Hitting Set} problem the instance is a pair $(U, \mathcal{F})$, where $U$ is a set called the \emph{universe} and $\mathcal{F}$ is a family of subsets of $U$, each of size at most $d$.
We ask for a minimum \emph{hitting set}, i.e., a minimum-sized subset of $U$ that intersects every element of $\mathcal{F}$.

\begin{thm}[Cygan et al.~\cite{cyganProblemsHardCNFSAT2016}]\label{thm:hittingset}
For every $\epsilon > 0$ there exists $d$ such that the  $\leq d$-\textsc{Hitting Set} with universe of size $n$ cannot be solved in time $(2-\epsilon)^n \cdot n^{\Oh(1)}$, unless the SETH fails.
\end{thm}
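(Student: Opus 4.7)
The plan is to reduce from the $(2,r)$-CSP problem, which is SETH-hard by Theorem~\ref{thm:csp}: for every $\varepsilon>0$ there is an $r$ such that no $(2-\varepsilon)^N\cdot N^{\Oh(1)}$ algorithm solves $(2,r)$-CSP on $N$ variables unless SETH fails. Given such an instance $(\calV,\calC)$, I would build a $\leq d$-Hitting Set instance $(U,\calF,k)$ whose universe size matches $N$ up to a factor $(1+o(1))$, so that a $(2-\varepsilon')^{|U|}$ Hitting Set algorithm would translate back to $(2-\varepsilon)^N$ for the CSP.

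The basic building block is the doubled-universe reduction. For each variable $v_i\in\calV$ introduce two elements $y_i,\bar y_i$ representing the two values of $v_i$; add the pair $\{y_i,\bar y_i\}$ to $\calF$ and set the budget $k=N$, so that hitting sets of that size contain exactly one element of each pair and thus encode a Boolean assignment. For each constraint $C\in\calC$ with scope $(v_{i_1},\dots,v_{i_r})$ and each ``bad'' tuple $b\in\{0,1\}^r\setminus R_C$, include in $\calF$ the set
\[
	S_{C,b}=\bigl\{\,e_j^{\,b_j}\;:\;1\le j\le r\bigr\},\qquad\text{where }e_j^{\,0}=y_{i_j}\text{ and }e_j^{\,1}=\bar y_{i_j}.
\]
Hitting $S_{C,b}$ means the induced assignment differs from $b$ in at least one coordinate, so requiring all these sets to be hit enforces $C$. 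Each $S_{C,b}$ has size at most $r$, and the equivalence with the CSP is immediate: a satisfying assignment yields a hitting set of size $N$, and vice versa.

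The universe size $2N$ alone is not good enough: a $(2-\varepsilon')^{2N}$ Hitting Set algorithm only yields $(4-\varepsilon'')^N$ for the CSP, which fails to contradict the $(2-\varepsilon)^N$ lower bound from Theorem~\ref{thm:csp} unless $\varepsilon>2-\sqrt 2$. To squeeze the universe down to $(1+o(1))N$, I would group the CSP variables into blocks of size $b=b(\varepsilon)$, and on each block use a more efficient encoding with $b+\Oh(1)$ elements rather than $2b$: the $b$ ``bit'' elements $y_i^1,\dots,y_i^b$ represent the block's assignment directly as a subset (no forcing pairs needed), while a small number of shared auxiliary elements are used to rephrase the ``negative-bit'' part of each per-constraint forbidding set as a positive Hitting-Set condition. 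A first pass with the Sparsification Lemma of Impagliazzo--Paturi--Zane can be applied to ensure that $|\calC|=\Oh(N)$, so that per-constraint gadgetry contributes only $\Oh(N/b)$ additional elements, which is negligible for large $b$.

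The main obstacle is precisely this final encoding step: representing a $d$-ary variable's assignment using only $\log_2 d$ elements, while still allowing every violating pattern of values to be expressed as a \emph{positive} Hitting-Set constraint. A naive binary encoding is thwarted by the need to say ``this bit equals $0$'' (i.e., ``this element is \emph{not} in $X$''), which has no direct positive counterpart; the key trick will be to share complement-type auxiliary elements across many constraints so that the total number of such extras is only $o(N)$, and to tune $b=b(\varepsilon)$ so that $(2-\varepsilon')^{(1+o(1))N}$ falls below $(2-\varepsilon)^N$. The final constant $d$ in the resulting Hitting Set instance will grow with $b$, and hence with $1/\varepsilon$, matching the quantifier structure in the statement.
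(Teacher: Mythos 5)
First, note that the paper does not prove this statement at all --- it is quoted verbatim from Cygan et al.\ and used as a black box --- so there is no in-paper argument to compare against; what matters is whether your sketch would constitute a self-contained proof, and it does not. Your opening reduction (pairs $\{y_i,\bar y_i\}$, budget $k=N$, one forbidding set per violated tuple) is correct but, as you say yourself, only gives universe size $2N$, which is useless against a $(2-\epsilon)^N$ lower bound. The entire content of the theorem is the compression to $(1+o(1))N$ elements, and that is exactly the step you leave open: "share complement-type auxiliary elements across many constraints so that the total number of such extras is only $o(N)$" is not a construction, and as described it cannot work. Hitting-set constraints are monotone (any superset of a hitting set is again a hitting set), so a family of sets can only ever force the chosen subset of a block to be \emph{large enough} in various directions; it can never forbid the presence of extra elements. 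Dropping the forcing pairs and letting "the block's assignment be the chosen subset directly" therefore destroys the bijection between budget-$k$ hitting sets and assignments, and auxiliary complement elements shared across blocks certify nothing about any particular block, precisely because nothing pins down how the budget is distributed. This is a genuine gap, not a routine detail: it is the technical heart of the Cygan et al.\ result.

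The standard way to close it (and essentially what the cited proof does) is a \emph{fixed-weight} block encoding rather than a plain subset encoding: partition the variables into blocks of size $\approx b$, represent each block by a set $B$ of $b'$ universe elements, and force every hitting set to contain \emph{exactly} $b'/2$ elements of each block --- "at least $b'/2$" is enforced by adding all $(b'/2+1)$-element subsets of $B$ as sets to be hit, and "at most" follows from the global budget $k=(b'/2)\cdot(\text{number of blocks})$. A block then encodes $\binom{b'}{b'/2}\approx 2^{b'}/\mathrm{poly}(b')$ assignments using $b'$ elements, which is where the $(1+o(1))$ factor comes from, and, crucially, forbidding a specific local pattern $S\subseteq B$ (or a combination $S_1,\dots,S_r$ across the $\le r$ blocks of a constraint) becomes the \emph{positive} constraint "hit $B\setminus S$" (respectively $\bigcup_j (B_j\setminus S_j)$), of size at most $d=d(b,r)$. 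Without some device of this kind that simultaneously (i) pins down the per-block intersection size and (ii) turns "the intersection is not $S$" into a monotone condition, your reduction cannot be completed, so the proposal as written does not prove the theorem.
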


With an argument similar to the proof of \cref{thm:domsetSCC}, we can show the following result.

\begin{thm}
For every $\epsilon > 0$ there exists $\delta$ such that the \textsc{Dominating Set} problem on $n$-vertex instances given with a \core{2}{\delta}  of size $p$ cannot be solved in time $(2-\epsilon)^p \cdot |V(G)|^{\bigO(1)}$, unless the SETH fails.
\end{thm}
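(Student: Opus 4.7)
The plan is to mimic the reduction in \cref{thm:domsetSCC}, but starting from $\leq d$-\textsc{Hitting Set} and invoking \cref{thm:hittingset} instead of the SCC. Given $\eps > 0$, let $d$ be the constant provided by \cref{thm:hittingset} for this $\eps$, and consider an instance $(U, \calF)$ of $\leq d$-\textsc{Hitting Set} with $|U| = n$; since the sets of $\calF$ can be assumed distinct, $|\calF| \leq n^{d}$.

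The construction is even simpler than in \cref{thm:domsetSCC}. I would introduce a vertex $y_i$ for each $i \in U$, turn $Y := \{y_i : i \in U\}$ into a clique, and, for every $F \in \calF$, add one further vertex $v_F$ whose neighborhood is exactly $\{y_i : i \in F\}$. The set $Q := Y$ is then a hub of size $n$ whose every component of $G - Q$ is a single vertex $v_F$ having at most $|F| \leq d$ neighbors in $Q$; hence $Q$ is a \core{1}{d}, and in particular a \core{2}{d}. Moreover $|V(G)| = n + |\calF| = n^{\bigO(1)}$.

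Next, I would argue that, for any $k \geq 1$, $G$ admits a dominating set of size $k$ if and only if $(U, \calF)$ admits a hitting set of size $k$. The forward direction is immediate: given a hitting set $H$, take $D := \{y_i : i \in H\}$; since $Y$ is a clique and $D \neq \emptyset$, the set $D$ dominates all of $Y$, and any $i \in H \cap F$ provides a neighbor $y_i \in D$ of $v_F$. Conversely, given a dominating set $D$, form $H$ by including $i$ for every $y_i \in D$ and, for every $v_F \in D$, one arbitrary element of $F$; then $|H| \leq |D|$, and $H$ hits every $F$ because the domination of $v_F$ forces either $v_F \in D$ (so some element of $F$ is placed into $H$ by construction) or some $y_i \in D$ with $i \in F$ (so $i \in H$).

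Concluding, a hypothetical algorithm solving \textsc{Dominating Set} in time $(2-\eps)^{p} \cdot |V(G)|^{\bigO(1)}$ on instances equipped with a \core{2}{\delta} of size $p$ would, by taking $\delta := d$, solve the given Hitting Set instance in time $(2-\eps)^n \cdot n^{\bigO(1)}$, contradicting \cref{thm:hittingset}. I do not expect any serious technical obstacle; the main design choice is turning $Y$ into a clique, which replaces the auxiliary $a_F b_F c_F$ path of \cref{thm:domsetSCC} by a single attachment vertex $v_F$ and drops the component size from $3$ to $1$ (well within the \core{2}{\delta} bound required by the statement).
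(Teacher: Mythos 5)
Your proposal is correct and follows essentially the same route as the paper: a reduction from $\leq d$-\textsc{Hitting Set} via \cref{thm:hittingset}, with the element vertices $Y$ forming the hub of size $n$ and one attachment vertex per set $F$ adjacent to exactly the $y_i$ with $i\in F$, so that dominating the set-vertices corresponds to hitting the sets. The only difference is how domination of $Y$ itself is ensured --- the paper keeps $Y$ independent and attaches a pendant path $y_ia_ib_i$ to each element (forcing the $a_i$'s into any solution, shifting the budget to $n+k$, and giving size-$2$ components), whereas you make $Y$ a clique so that solution sizes are preserved exactly and the components of $G-Y$ are single vertices; both constructions yield a \core{2}{d} and the same lower bound.
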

\begin{proof}
Let $\epsilon >0$ and let $d$ be the constant given for that $\epsilon$ by \cref{thm:hittingset}.
We reduce from  $\leq d$-\textsc{Hitting Set}. Let $(U,\mathcal{F})$ be a corresponding instance with $|U|=n$.

We construct $G$ as follows. First, we introduce three sets $Y = \{y_i ~|~ i \in U\}$, $A = \{a_i ~|~ i \in U\}$, and $B = \{b_i ~|~ i \in U\}$. For each $i \in U$ we add edges $y_ia_i$ and $a_ib_i$.
Next, for each $F \in \mathcal{F}$, we add a vertex $z_F$. We add the edge $y_iz_F$ if and only if $i \in F$.
This completes the construction of $G$. Note that $|V(G)| = \Oh(n^d)$ and $Y$ is a \core{2}{d}.

We claim that $G$ has a dominating set of size at most $n+k$ if and only if $(U,\mathcal{F})$ admits a hitting set of size at most $k$.
Let $U' \subseteq U$ be a hitting set of size at most $k$.
We define $X = A \cup \{y_i ~|~ i \in U'\}$. Clearly $X$ dominates $X \cup A \cup B$.
Suppose that there is some $z_F$ which is not adjacent to any vertex in $X$. This means that $F$ does not intersect $U'$, a contradiction.

Now suppose that $G$ has a dominating set $X$ of size at most $n+k$.
Similarly as in the proof of \cref{thm:domsetSCC} we can assume that $A \subseteq X \subseteq A \cup Y$.
Define $U' = \{i ~|~y_i \in X\}$; clearly $|U'| \leq k$. We claim that $U'$ is a hitting set.
Indeed, if there is some $F \in \mathcal{F}$ which is not intersected by $U'$, its corresponding vertex $z_F$ is not dominated by $X$.

Thus a hypothetical algorithm that solves \textsc{Dominating Set} on $G$ in time $(2-\epsilon)^n \cdot |V(G)|^{\bigO(1)}$ can be used to solve the instance $(U, \mathcal{F})$ of $\leq d$-\textsc{Hitting Set} in time $(2-\epsilon)^n \cdot n^{\bigO(1)}$. By \cref{thm:hittingset}, this contradicts the SETH.
\end{proof}

\addcontentsline{toc}{section}{References}
\bibliography{main}

\end{document}